\theoremstyle{plain} 
\newtheorem{theorem}{Theorem}[section] 
\newtheorem{corollary}[theorem]{Corollary} 
\newtheorem{lemma}[theorem]{Lemma} 
\newtheorem{proposition}[theorem]{Proposition} 
\theoremstyle{definition} 
\newtheorem{definition}{Definition}[section] 
\theoremstyle{remark} 
\newtheorem{remark}{Remark} 
\newtheorem{example}{Example}[section] 
\numberwithin{equation}{section}
\newcommand{\E}{\mathbb{E}} 
\newcommand{\Q}{\mathbb{Q}} 
\newcommand{\R}{\mathbb{R}} 
\newcommand{\F}{\mathcal{F}} 
\newcommand{\G}{\mathcal{G}} 
\author{Silvia Lavagnini \\ \small{Department of Mathematics, University of Oslo}\\ \small{silval@math.uio.no}}
\title{Pricing Asian Options with Correlators
\footnotetext{The author would like to thank Fred Espen Benth, Vegard Antun and Salvador Ortiz-Latorre for discussions.}}
\newlength{\bibitemsep}\setlength{\bibitemsep}{.35\baselineskip plus .05\baselineskip minus .05\baselineskip}
\newlength{\bibparskip}\setlength{\bibparskip}{0pt}
\let\oldthebibliography\thebibliography
\renewcommand\thebibliography[1]{%
	\oldthebibliography{#1}%
	\setlength{\parskip}{\bibitemsep}%
	\setlength{\itemsep}{\bibparskip}%
}
\begin{document}
\maketitle

\begin{abstract}
We derive a series expansion by Hermite polynomials for the price of an arithmetic Asian option. This series requires the computation of moments and correlators of the underlying price process, but for a polynomial jump-diffusion, these are given in closed form, hence no numerical simulation is required to evaluate the series. This allows, for example, for the explicit computation of Greeks. The weight function defining the Hermite polynomials is a Gaussian density with scale $b$. 
We find that the rate of convergence for the series depends on $b$, for which we prove a lower bound to guarantee convergence. Numerical examples show that the series expansion is accurate but unstable for initial values of the underlying process far from zero, mainly due to rounding errors. 

\end{abstract}

\paragraph*{Keywords} Asian option; Option pricing; Greeks; Orthogonal polynomials; Generalized Hermite polynomials; Polynomial jump-diffusion process; Correlators.

\section{Introduction}
\label{intro}
Asian options are path-dependent options whose payoff is based on the (discrete or continuous) average underlying price. This kind of derivatives has application in fields like currency, interest rate, energy and insurance markets, among others. Within the energy markets, for example, Asian options were traded a decade ago at Nord Pool, the Nordic commodity market for electricity \cite{weron2007}. However, because of their path-dependent nature, their valuation is not straightforward, and, in particular, no closed pricing formula is available in general. Possible approaches to evaluate Asian options are, e.g., via Monte Carlo simulations \cite{kemna1990, lapeyre2001} or Fourier transform \cite{fusai2008}. Other authors have derived exact representations for the pricing functional, for example as a triple integral to be evaluated numerically \cite{yor1992}, or by the Laplace transform \cite{geman1993}. Alternatively, one can approximate the unknown distribution of the average price \cite{fusai2002, li2016, turnbull1991}. Recently, new pricing approaches for Asian options have been considered in relation to polynomial processes and orthogonal polynomials \cite{dufresne2000, filipovic2020, willems2019}.

Our approach is partly similar to \cite{willems2019}, in the sense that we derive a series representation for the option price functional with orthogonal polynomials and we work with polynomial processes. However, we extend their result in at least two directions. First of all, in \cite{willems2019} the underlying spot price is considered to follow a geometric Brownian motion, hence, in particular, the jump behavior of spot prices is not taken into consideration. Then, by the time-reversal property of Brownian motions (\cite{carmona1997}), they derive a stochastic differential equation (SDE) whose solution process has the same distribution of the average price process (scaled by the terminal time). Specifically, the SDE defines a polynomial diffusion, so that the moments of the average price process can be computed in closed form by the moment formula for polynomial processes. Our approach is different because we model directly the underlying spot price with a polynomial jump-diffusion. Hence, on the one hand, we allow for discontinuities in the spot price paths, and, on the other hand, the moment formula for polynomial processes can be used in this case for computing the moments of the underlying spot price, and not of the average price, as in \cite{willems2019}. However, 
we still need to compute moments of the average price process. We do this by the multinomial theorem and the correlator formula for polynomial processes derived in \cite{benth2019}. 

We fix a stochastic basis $(\Omega, \F, \{\F_t\}_{t\ge 0},\Q)$, with $\Q$ a risk-neural measure. We want to price the fixed-strike call-style Asian option defined by
\begin{equation}
	\label{payoff}
	\Pi_K(t):=e^{-r(T-t)}\E\left[ \varphi_K(X(T))\left.\right|\F_t\right] \qquad \mbox{ with } \qquad \varphi_K(x) := \max(x-K, 0),
\end{equation}  
where $K>0$ is the strike price, $r\ge0$ the risk-free interest rate and $X$ is the discrete average of a stochastic process $Y$ over the period $(t,T]$, namely
\begin{equation}
	\label{X}
	X(T) = \frac{1}{m+1}\sum_{j=0}^m Y(s_j) \qquad \mbox{ for } t < s_0<s_1 <\dots< s_m = T \mbox{ and } m\ge 0.
\end{equation}
We point out that one can similarly consider $X$ to be the continuous average $X(T)=\int_{t}^{T}Y(s)ds$. In this case, if $t<s_0 <s_1<\dots < s_m = T$ is a discrete sampling with time steps $\Delta_{j}:=s_j-{s_{j-1}}$ small enough, $j=1,\dots,m$, then the integral $\int_{t}^{T}Y(s)ds$ can be reasonably well approximated with the sum $\sum_{j=1}^{m}Y(s_j)\Delta_{j}$. This coincides with equation \eqref{X} for $\Delta_j=\frac{1}{m+1}$, $j=1,\dots,m$.

We consider $Y$ to be a polynomial process in the sense introduced \cite{filipovic2020}. The idea is to derive the series representation of the payoff function $\varphi_K$ in terms of Hermite polynomials. More precisely, we shall introduce the so-called \emph{generalized Hermite polynomials} that form a basis for the space 
\begin{equation*}
	L^2\left(\R, \omega_{a,b}(x)dx\right) \qquad  \mbox{ with } \qquad \omega_{a,b}(x)= \exp\left(-\frac{(x-a)^2}{2b^2}\right), \;a, b\in \R, b>0. 
\end{equation*}
After evaluating the series at $x = X(T)$ in equation \eqref{X}, we obtain an infinite sum of polynomial functions in $X(T)$. The price of the Asian option is then given by the discounted expected value of this infinite sum. By the multinomial theorem, we rewrite the terms of the sum as a linear combination of correlator-type terms in the sense of \cite{benth2019}, that is, terms of the form
\begin{equation}
	\label{corr}
	\E\left[\left.Y(s_0)^{k_0}Y(s_1)^{k_1}\cdots Y(s_{m})^{k_{m}}\right|\F_t\right],
\end{equation}
which we compute by the closed formula for correlators \cite[Theorem 4.5]{benth2019}.

This procedure gives an exact formula for pricing discrete Asian options. However, for numerical purposes the infinite summation must be truncated to a certain $N>0$, leading to an approximation of the price. We study the behaviour of the approximation error with respect to the three parameters involved, namely $N$, $a$ and $b$, and we confirm  our findings with numerical examples. We also compare the results with a Monte-Carlo-simulation approach. This shows that the Hermite series can reach much higher accuracies than Monte Carlo. However, numerical instabilities are observed, mainly due to the intrinsic exploding nature of polynomial functions of high order. In particular, these are more likely to happen when the initial point of the underlying spot price process $Y$ is far out from $0$.

The rest of the paper is organized as follows. In Section \ref{hermitesection} we introduce the family of generalized Hermite polynomials and we derive the series expansion for a call-payoff function, also studying the approximation error as a function of the truncation number. In Section \ref{pricingsection} we derive the option price approximation, first for a European-style option  and then for an Asian option with discrete sampling. In Section \ref{polpross} we briefly introduce polynomial processes and recall the moment and correlator formulas, and we derive explicit representations for two of the Greeks of the option.  Finally in Section \ref{numerics} we show some numerical examples and in Section \ref{conclusions} we summarize the findings. Appendix \ref{proof} contains the proofs of the principal results and Appendix \ref{appA} some definitions for understanding the correlator formula.

\section{Payoff representation with Hermite polynomials}
\label{hermitesection}
We shall construct in this section a polynomial approximation for the payoff function $\varphi_K$ in equation \eqref{payoff}. Let $\mathrm{Pol}_n(\R)$ be the space of all polynomials on $\R$ with degree less than or equal to $n$, and let $q_0(x), q_1(x), \dots$ be orthogonal polynomial functions with values in $\R$, such that the family $\{q_0(x), q_1(x), \dots, q_n(x)\}$ forms a basis for $\mathrm{Pol}_n(\R)$. We then introduce the vector valued function
\begin{equation*}
	Q_n:\R \longrightarrow \R^{n+1}, \quad Q_n(x) = (q_0(x), q_1(x), \dots, q_n(x))^{\top}.
\end{equation*}
Similarly, we consider the family of monomials $\{1,x,\cdots,x^n\}$ that also forms a basis for $\mathrm{Pol}_n(\R)$, and introduce the vector valued function 
\begin{equation*}
	H_n:\R \longrightarrow \R^{n+1}, \quad H_n(x) = (1,x,x^2,\dots,x^n)^{\top}.
\end{equation*}
The reason for considering two basis vectors is because the basis of monomials $H_n$ is practical and allows to obtain explicit formulas. However, when it comes to applications, such as polynomial approximation, one needs an orthogonal or orthonormal basis. From classical linear algebra, there exists a matrix
\begin{equation}
	\label{M} 
M_n\in \R^{(n+1)\times (n+1)} \quad \mbox{ such that } \quad M_nH_n(x) = Q_n(x) \quad \mbox{ and } \quad M_n^{-1}Q_n(x)=H_n(x).
\end{equation}
By equation \eqref{M} we can exploit both the readability of $H_n$ and the orthogonality of $Q_n$.

\subsection{Generalized Hermite polynomials}
We restrict our attention to the so-called \emph{probabilistic Hermite polynomials} (which we shall refer to simply as \emph{Hermite polynomials}) defined by
\begin{equation*}
	q_n(x) := (-1)^ne^{\frac{x^2}{2}}\frac{d^n}{dx^n}e^{-\frac{x^2}{2}}, \quad n \ge 0.
\end{equation*}
The family $\{q_n\}_{n\ge0}$ forms an orthogonal basis for the Hilbert space $L^2(\R, w(x)dx)$  with weight function $w(x):=e^{-\frac{x^2}{2}}$. Moreover, the norm of $q_n$ in $L^2(\R, w(x)dx)$ is given by
\begin{equation}
	\label{norm}
	\left \| q_n\right \| ^2_{L^2(\R, w(x)dx)} = \int_{-\infty}^{\infty}q_n^2(x)w(x)dx = \sqrt{2\pi}n!\,.
\end{equation}
It is easy to check that $\varphi_K \in L^2(\R, w(x)dx)$. However, the weight function $w$ is centred in $x=0$, which means that an approximation with Hermite polynomials will have the main focus in a neighbourhood of $x=0$ and will potentially not be good for points far from it. Since we want to approximate the payoff function $\varphi_K$ whose most interesting point is $x=K$, we thus need a weight function possibly centred in $x=K$. Alternatively, in view of option pricing where $\varphi_K$ is evaluated on a random variable $X$, one might want to focus around the mean of $X$. We should then consider a weight function that allows to shift the focus of the approximation to the area of greatest interest.

To keep it general, for $a, b\in \R$, $b>0$, we introduce a family of weight functions and the corresponding orthogonal polynomials by
\begin{equation*}
	w_{a,b}(x):= e^{-\frac{(x-a)^2}{2b^2}}\qquad \mbox{ and } \qquad q_n^{a,b}(x) := (-1)^ne^{\frac{(x-a)^2}{2b^2}}\frac{d^n}{dx^n}e^{-\frac{(x-a)^2}{2b^2}}.
\end{equation*}
The family $\{q_n^{a,b}\}_{n\ge0}$ forms an orthogonal basis for the Hilbert space $L^2(\R, w_{a,b}(x)dx)$ equipped with the norm $$\| f\|^2_{L^2(\R, w_{a,b}(x)dx)}:= \int_{\R}f(x)^2 w_{a,b}(x)dx\quad  \mbox{ for } \quad f \in L^2(\R, w_{a,b}(x)dx).$$
The norm of $q_n^{a,b}$ in $L^2(\R, w_{a,b}(x)dx)$ is given in the following lemma.
\begin{lemma}
	\label{prop}
	For every $n\ge 0$, the norm of $q_n^{a,b}$ in $L^2(\R, w_{a,b}(x)dx)$ is
	$\left \| q_n^{a,b}\right \| ^2_{L^2(\R, w_{a,b}(x)dx)} = \frac{\sqrt{2\pi}n!}{b^{2n-1}}.$
\end{lemma}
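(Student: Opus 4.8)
The plan is to reduce everything to the already-known norm \eqref{norm} of the standard Hermite polynomials $q_n$ by means of an affine change of variable. The key observation is that $q_n^{a,b}$ is, up to an explicit scalar factor, simply a rescaled and shifted copy of $q_n$, so that once this link is made precise the computation collapses onto \eqref{norm}.

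First I would establish the scaling identity
\[
q_n^{a,b}(x) = \frac{1}{b^n}\, q_n\!\left(\frac{x-a}{b}\right), \qquad n \ge 0.
\]
To see this, set $y = (x-a)/b$, so that $\frac{d}{dx} = \frac{1}{b}\frac{d}{dy}$ and, applying the chain rule $n$ times, $\frac{d^n}{dx^n} = b^{-n}\frac{d^n}{dy^n}$. Since $e^{-(x-a)^2/(2b^2)} = e^{-y^2/2}$ and the prefactor $e^{(x-a)^2/(2b^2)} = e^{y^2/2}$, substituting into the definition of $q_n^{a,b}$ gives $q_n^{a,b}(x) = (-1)^n e^{y^2/2}\, b^{-n}\frac{d^n}{dy^n} e^{-y^2/2}$, which is $b^{-n} q_n(y)$ by the definition of $q_n$.

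Next I would insert this identity into the definition of the norm and perform the substitution $y=(x-a)/b$, with $dx = b\,dy$ and $w_{a,b}(x) = e^{-y^2/2} = w(y)$. This yields
\[
\left\| q_n^{a,b}\right\|^2_{L^2(\R, w_{a,b}(x)dx)} = \int_{\R} b^{-2n}\, q_n(y)^2\, w(y)\, b\, dy = b^{1-2n}\int_{\R} q_n(y)^2\, w(y)\, dy.
\]
By \eqref{norm} the remaining integral equals $\sqrt{2\pi}\,n!$, so the norm squared is $\sqrt{2\pi}\,n!\,/\,b^{2n-1}$, as claimed.

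The only genuine subtlety — and the step I would double-check — is the bookkeeping of the powers of $b$: a factor $b^{-n}$ arises from each of the two copies of $q_n$ in $q_n^{a,b}(x)^2$ (contributing $b^{-2n}$), while the Jacobian of the substitution contributes a single factor $b$, giving the net exponent $1-2n = -(2n-1)$. Everything else is a direct appeal to the established norm formula for the standard Hermite polynomials, so no new integration is needed.
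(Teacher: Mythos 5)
Your proof is correct and follows essentially the same route as the paper: both reduce the computation to the standard Hermite norm \eqref{norm} via the scaling identity $q_n^{a,b}(x) = b^{-n}\,q_n\!\left(\frac{x-a}{b}\right)$ followed by the substitution $y = (x-a)/b$, with the same bookkeeping $b^{-2n}\cdot b = b^{1-2n}$. The only (immaterial) difference is that you obtain the scaling identity directly from the chain rule, whereas the paper's Lemma \ref{lem2} proves it by induction using the Hermite recurrence $q_{n+1}(y) = y\,q_n(y) - q_n'(y)$.
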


We shall from now on refer to $a$ as the \emph{drift} and to $b$ as the \emph{scale}, while to $\{q_n^{a,b}\}_{n\ge0}$ as \emph{generalized Hermite polynomials} (GHPs). We also introduce the notation $L^2_{a,b}:=L^2(\R, w_{a,b}(x)dx)$, where $L^2_{0,1}=L^2(\R, w(x)dx)$. We point out that, while we shall use the terminology \enquote{weight function}, \enquote{approximating series}, etc., we deal in practice with a family of weight functions, a family of approximating series, etc., depending on the choice of the parameters $a$ and $b$. 


\subsection{Series expansion for the call-payoff function}
We introduce $\varphi_K^{a,b}$ as the series representation of $\varphi_K$ in terms of the GHPs $\{q_n^{a,b}\}_{n\ge0}$, namely
\begin{equation}
	\varphi_K^{a,b}(x):=\sum_{n=0}^{\infty} \frac{\left\langle \varphi_K, q_n^{a,b}\right\rangle_{L^2_{a,b}}}{\left \| q_n^{a,b}\right \| ^2_{L^2_{a,b}}} \; q_n^{a,b}(x)=  \sum_{n=0}^{\infty} \frac{b^{2n-1}}{\sqrt{2\pi}n!} \int_{-\infty}^{\infty}\varphi_K(y)q_n^{a,b}(y)w_{a,b}(y)dy \; q_n^{a,b}(x),\label{phiK}
\end{equation}
which we shall compute explicitly. From now on, we denote with $\phi$ and $\Phi$, respectively, the probability density function and the cumulative distribution function of a standard Gaussian random variable. 

\begin{proposition}
\label{propphi}
	The series $\varphi^{a,b}_K$ can be written in terms of the Hermite polynomials $\{q_n\}_{n\ge 0}$ by
	\begin{equation*}
		\varphi^{a,b}_K(x) = \sum_{n=0}^{\infty}
		\beta_n^{a,b} q_n\left(\frac{x-a}{b}\right) \quad \mbox{ with } \beta_n^{a,b} :=\begin{cases}
			b\,\phi\left(\frac{K-a}{b}\right)+\left(a-K\right)\left(1-\Phi\left(\frac{K-a}{b}\right)\right) & \mbox{ for } n = 0\\
			b\left(1-\Phi\left(\frac{K-a}{b}\right)\right) & \mbox{ for } n = 1\\
			\frac{b}{n!} \phi\left(\frac{K-a}{b}\right)q_{n-2}\left(\frac{K-a}{b}\right)& \mbox{ for } n\ge 2
		\end{cases}.
	\end{equation*}
\end{proposition}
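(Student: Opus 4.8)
The plan is to reduce everything to the standard probabilistic Hermite polynomials $\{q_n\}_{n\ge0}$ via a rescaling, and then to evaluate the resulting one–dimensional integrals by integration by parts exploiting the Rodrigues-type structure. The starting observation is that the change of variables $u=(x-a)/b$ in the defining formula for $q_n^{a,b}$ yields the scaling relation $q_n^{a,b}(x)=b^{-n}q_n\!\left(\frac{x-a}{b}\right)$, since $\frac{d}{dx}=\frac{1}{b}\frac{d}{du}$ and $e^{-(x-a)^2/(2b^2)}=e^{-u^2/2}$. Substituting this identity together with the norm from Lemma \ref{prop} into the definition \eqref{phiK} of $\varphi_K^{a,b}$ immediately produces the announced expansion $\sum_n\beta_n^{a,b}q_n\!\left(\frac{x-a}{b}\right)$ and reduces the task to identifying the coefficient $\beta_n^{a,b}$.

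First I would compute $\beta_n^{a,b}$ explicitly. Applying the scaling relation inside the integral in \eqref{phiK} and substituting $z=(y-a)/b$ collapses the powers of $b$ and gives $\beta_n^{a,b}=\frac{1}{\sqrt{2\pi}\,n!}\int_{c}^{\infty}(a+bz-K)\,q_n(z)\,e^{-z^2/2}\,dz$, where $c:=\frac{K-a}{b}$ is the point at which the payoff kink $\max(a+bz-K,0)$ activates (using $b>0$). The core tool is the identity $\frac{d}{dz}\!\left(q_n(z)e^{-z^2/2}\right)=-q_{n+1}(z)e^{-z^2/2}$, which follows from the standard facts $q_n'=n\,q_{n-1}$ and the three-term recurrence $q_{n+1}(z)=z\,q_n(z)-n\,q_{n-1}(z)$. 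Integrating this from $c$ to $\infty$ and noting that the boundary term vanishes at $+\infty$ (Gaussian decay beats polynomial growth) yields $\int_c^{\infty}q_n(z)e^{-z^2/2}\,dz=\sqrt{2\pi}\,\phi(c)\,q_{n-1}(c)$ for $n\ge 1$, while the base case $n=0$ gives $\sqrt{2\pi}(1-\Phi(c))$.

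With these ingredients I would split the integrand into the constant part $(a-K)q_n(z)$ and the linear part $b\,z\,q_n(z)$, treating the latter via $z\,q_n(z)=q_{n+1}(z)+n\,q_{n-1}(z)$, and then handle the three ranges $n=0$, $n=1$, $n\ge 2$ separately, because the fundamental integral has a different base case for the low indices. The final simplification rests on two cancellations: the relation $bc=K-a$ makes the surviving $(a-K)q_{n-1}(c)$ and $b\,c\,q_{n-1}(c)$ terms cancel, and the three-term recurrence in the form $q_n(c)=c\,q_{n-1}(c)-(n-1)q_{n-2}(c)$ collapses the remaining $q_n(c)$ and $q_{n-2}(c)$ contributions into the single term $\frac{b}{n!}\phi(c)\,q_{n-2}(c)$ for $n\ge 2$; the $n=0$ and $n=1$ formulas drop out of the same computation, the latter because the two $\phi(c)$ terms annihilate each other.

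I expect the main obstacle to be organizational rather than conceptual: keeping the three index regimes aligned with the correct base cases of the fundamental integral, and verifying that every boundary contribution at $+\infty$ indeed vanishes so that the repeated integration by parts is legitimate. Justifying the termwise manipulation of the series is routine, since $\varphi_K\in L^2_{a,b}$ and $\{q_n^{a,b}\}_{n\ge0}$ is an orthogonal basis, so the expansion converges in $L^2_{a,b}$ and the coefficient identification is unambiguous.
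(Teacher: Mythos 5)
Your proposal is correct and follows essentially the same route as the paper: both rest on the scaling identity $q_n^{a,b}(x)=b^{-n}q_n\left(\frac{x-a}{b}\right)$ (the paper's Lemma \ref{lem2}) and on integration by parts against the Rodrigues formula, reducing the coefficient integrals to boundary evaluations expressed through $\phi$ and $\Phi$ at $(K-a)/b$. The only difference is bookkeeping: the paper integrates by parts twice in the original variable, letting the first integration kill the factor $(y-K)$ outright, whereas you rescale first and expand $z\,q_n(z)$ via the three-term recurrence, which produces the extra cancellations you describe — both yield the same coefficients.
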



\begin{example}
	\label{exmusigma}
	Let X be a random variable with mean and variance denoted respectively with $\mu$ and $\sigma^2$. We then consider the drift $a=\mu$ and the scale $b=\sigma$. From Proposition \ref{propphi} we get
	\begin{equation*}
		\varphi^{\mu,\sigma}_K(x) = \sigma\,\phi\left(\frac{K-\mu}{\sigma}\right) + \left(1-\Phi\left(\frac{K-\mu}{\sigma}\right)\right)\left(x-K\right)  + \sum_{n=2}^{\infty}
		\beta_n^{\mu,\sigma} q_n\left(\frac{x-\mu}{\sigma}\right).
	\end{equation*}
	Moreover, if $X$ follows a Gaussian distribution, then by computing the expectation of $\varphi^{\mu,\sigma}_K(X)$ we get
	\begin{align*}
		\E\left[\varphi^{\mu,\sigma}_K(X)\right] 
		&= \sigma\,\phi\left(\frac{K-\mu}{\sigma}\right) + \left(1-\Phi\left(\frac{K-\mu}{\sigma}\right)\right)\left(\mu-K\right)  +\E\left[\sum_{n=2}^{\infty}
		\beta_n^{\mu,\sigma} q_n\left(\frac{X-\mu}{\sigma}\right)\right]\\
		& = \E \left[\varphi_K(X)\right] + \E\left[\sum_{n=2}^{\infty}
		\beta_n^{\mu,\sigma} q_n\left(\frac{X-\mu}{\sigma}\right)\right].
	\end{align*}
	This means that if the drift is $a=\mu$ and the scale is $b=\sigma$, then the weight function $\omega_{\mu,\sigma}$ coincides with the density function of the random variable $X$. Hence, calculating the expectation $\E \left[\varphi_K(X)\right]$ by computing $\E\left[\varphi^{\mu,\sigma}_K(X)\right]$ might add uncertainty to the result, unless the coefficients are non significant.
\end{example}

\begin{example}
	For X as in Example \ref{exmusigma}, we let the drift be $a=K$ and the scale be $b=\sigma$. We get
	\begin{equation*}
		\varphi^{K,\sigma}_K(x) = \frac{\sigma}{\sqrt{2\pi}} + \frac{x-K}{2}  + \sum_{n=2}^{\infty}
		\beta_n^{K,\sigma} q_n\left(\frac{x-K}{\sigma}\right) \qquad \mbox{ with } \beta_n^{K,\sigma} = \frac{\sigma}{\sqrt{2\pi}n!}q_{n-2}(0).
	\end{equation*}
	In particular, if $n$ is an odd number then $\beta_n^{K,\sigma}=0$ because the Hermite polynomials of odd orders have no constant term. More precisely, for every $n\ge2$, we introduce $k\ge1$ as the integer such that either $n = 2k$ or $n = 2k+1$. Then, the coefficients $\beta_n^{K,\sigma}$ are given by
	\begin{equation*}
		\beta_n^{K,\sigma} = \begin{cases}
			\frac{(-1)^{k-1}\sigma}{\sqrt{2\pi}k!(2k-1)2^k}  & \mbox{ for } n = 2k\\
			0 & \mbox{ for } n = 2k+1\\
		\end{cases}.
	\end{equation*}
	This is obtained by observing that $q_{n}(0) = (-1)^{\frac{n}{2}}\frac{n!}{2^{n/2}k!}$
	for $n = 2k$ even, and $q_{n}(0) = 0$ for $n$ odd, 
	where $!!$ denotes the double factorial. Then $\beta_n^{K,\sigma}=0$ for $n$ odd, while for $n$ even we write that 
	\begin{equation*}
		\beta_n^{K,\sigma} = \frac{\sigma}{\sqrt{2\pi}n!}q_{n-2}(0)=\frac{(-1)^{\frac{n-2}{2}}(n-3)!!\,\sigma}{\sqrt{2\pi}n!} = \frac{(-1)^{\frac{2k-2}{2}}(2k-3)!!\,\sigma}{\sqrt{2\pi}(2k)!} = \frac{(-1)^{k-1}(2k-1)!!\,\sigma}{\sqrt{2\pi}(2k)!(2k-1)}.
	\end{equation*}
	In particular, $(2k-1)!!= \frac{(2k-1)!}{2^{k-1}(k-1)!}$, so that, after simplification, we obtain the formula above.
\end{example}

\subsection{Error analysis}
For computational purposes, the summation in Proposition \ref{propphi} must be truncated to a certain $N$ big enough so that the resulting series well approximates the original payoff function $\varphi_K$. This leads to
\begin{equation} 
	\label{phiNK}
	\varphi_{K,N}^{a,b}(x) := \sum_{n=0}^{N}
	\beta_n^{a,b} q_n\!\left(\frac{x-a}{b}\right)=
	 \boldsymbol{\beta}^{a,b \,\top}_N Q_N\!\left(\frac{x-a}{b}\right) = \boldsymbol{\beta}^{a,b \,\top}_NM_NH_N\!\left(\frac{x-a}{b}\right),
\end{equation}
where $\boldsymbol{\beta}^{a,b}_N := (\beta^{a,b}_0, \beta^{a,b}_1, \dots, \beta^{a,b}_N)^{\top}$ and $M_N$ is the matrix for the change of basis with respect to $H_N$ in equation \eqref{M}. We point out that $\varphi_{K,N}^{a,b}$ can be computed for any choice of the orthogonal basis $\{q_n\}_{n\ge 0}$. Then equation \eqref{phiNK} holds with the obvious modifications for  $M_N$ and $\boldsymbol{\beta}^{a,b}_N$.

In Figure \ref{hermite} we observe the behaviour of $\varphi_{K,N}^{a,b}$ for different values of $N\in \{5, 15, 30, 100\}$ and $b\in \{0.5, 1.0, 2.0, 3.0\}$. In particular, we fix the drift to $a=K$ so that the approximations are centred around the strike price value $K=5.0$. The area where the Hermite series well approximates the payoff function gets wider when increasing the scale $b$. Similarly, increasing the truncation number $N$ gives better performances, but this is more evident for bigger values of $b$. Moreover, including higher order polynomials in the series adds oscillations to the approximation. We point out that the value of the drift $a$ is kept constant: the only effect of changing the drift is a shift in the focus of the approximation, meaning that, since the GHPs are centred around $a$, then moving $a$ from $K$ implies a move of the centre of the approximation, which is not particularly interesting for this experiment.

\begin{figure}[tp]
	\setlength{\tabcolsep}{2pt}
	\resizebox{1\textwidth}{!}{
		\begin{tabular}{@{}>{\centering\arraybackslash}m{0.04\textwidth}@{}>{\centering\arraybackslash}m{0.24\textwidth}@{}>{\centering\arraybackslash}m{0.24\textwidth}@{}>{\centering\arraybackslash}m{0.24\textwidth}@{}>{\centering\arraybackslash}m{0.24\textwidth}@{}}
			&$\boldsymbol{N = 5}$ & $\boldsymbol{N = 15}$& $\boldsymbol{N = 30}$ & $\boldsymbol{N = 100}$\\
			\begin{turn}{90}$\boldsymbol{b =0.5}$\end{turn}
			&\includegraphics[width=0.23\textwidth]{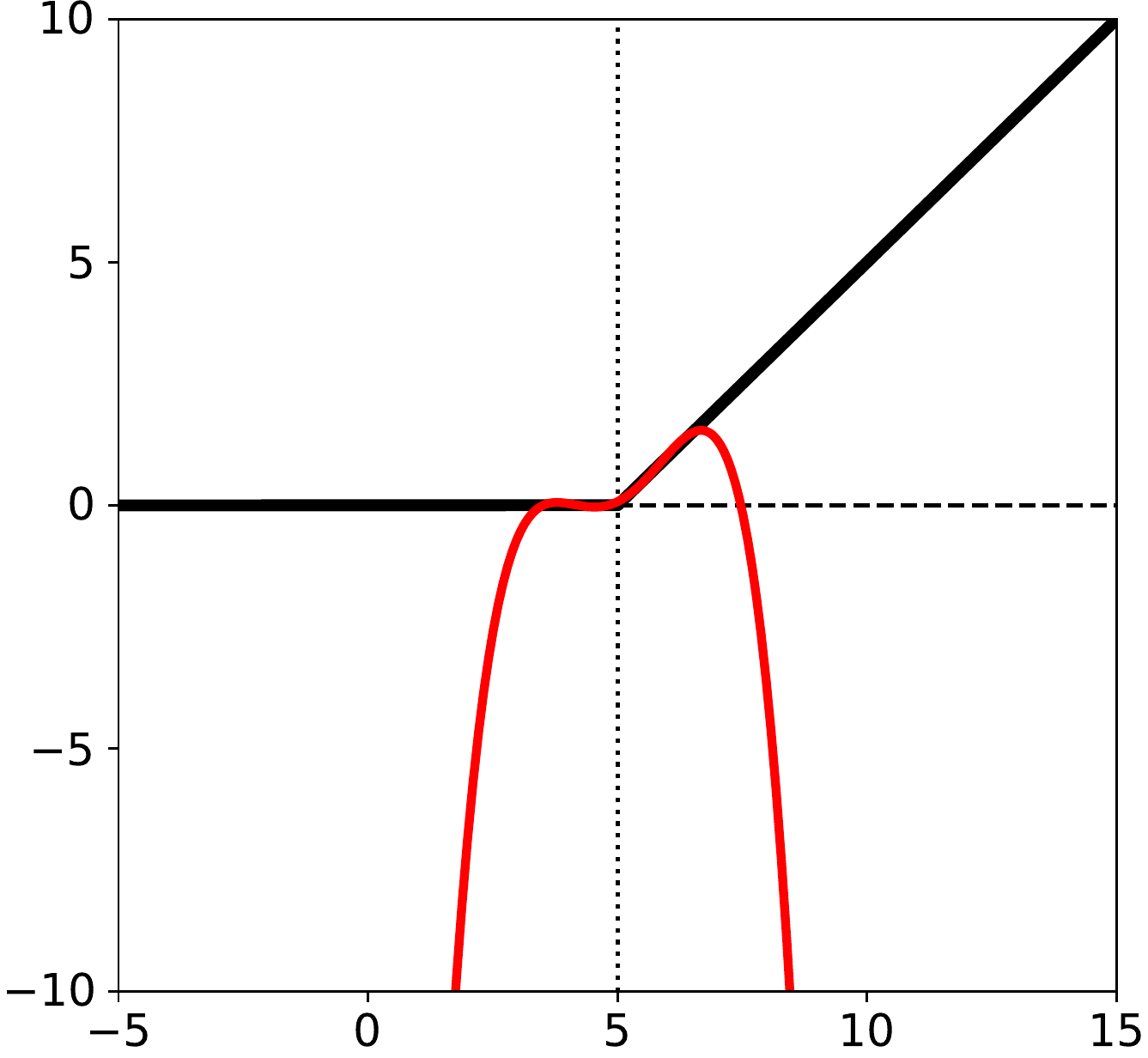} &	\includegraphics[width=0.23\textwidth]{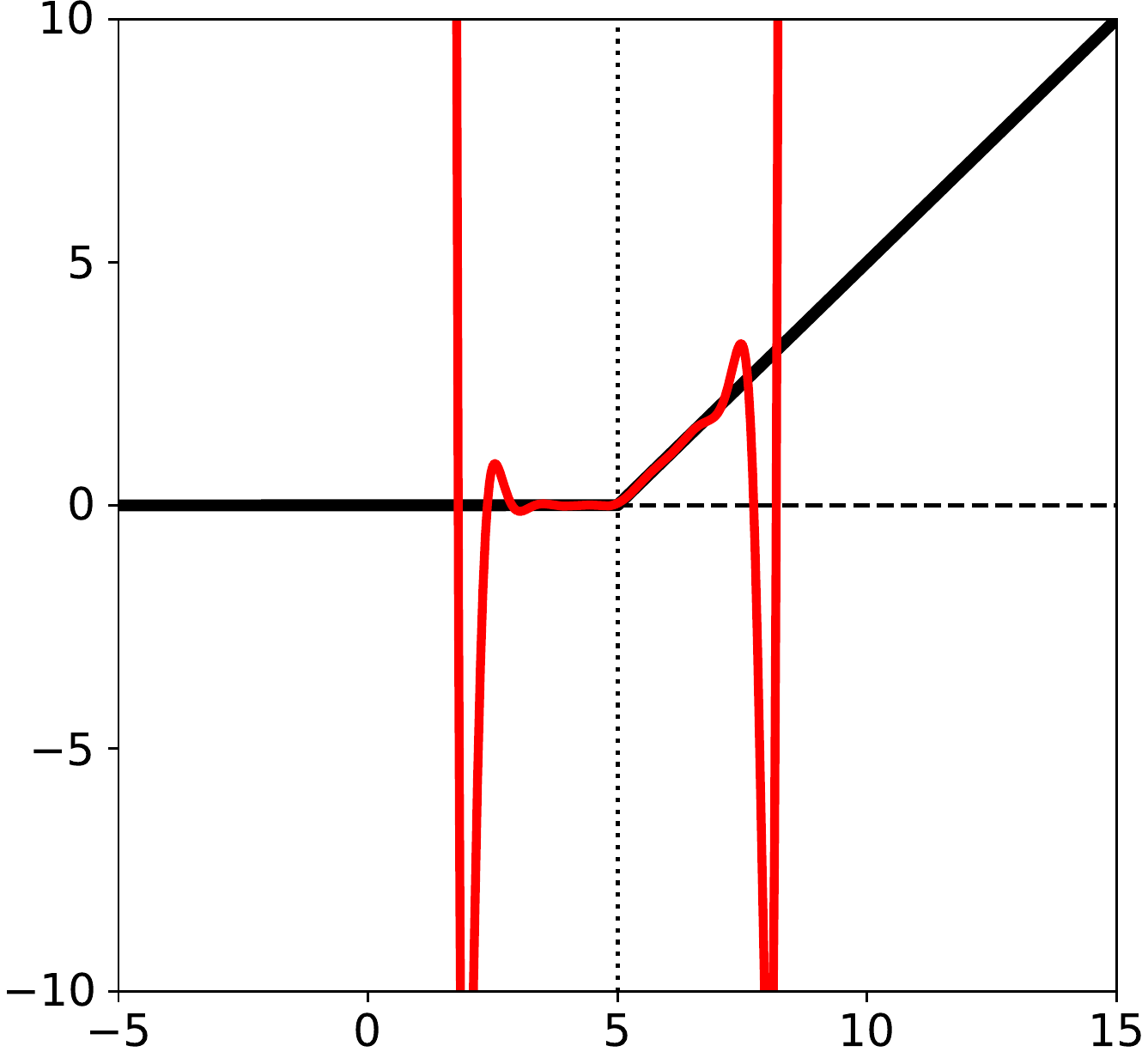} &
			\includegraphics[width=0.23\textwidth]{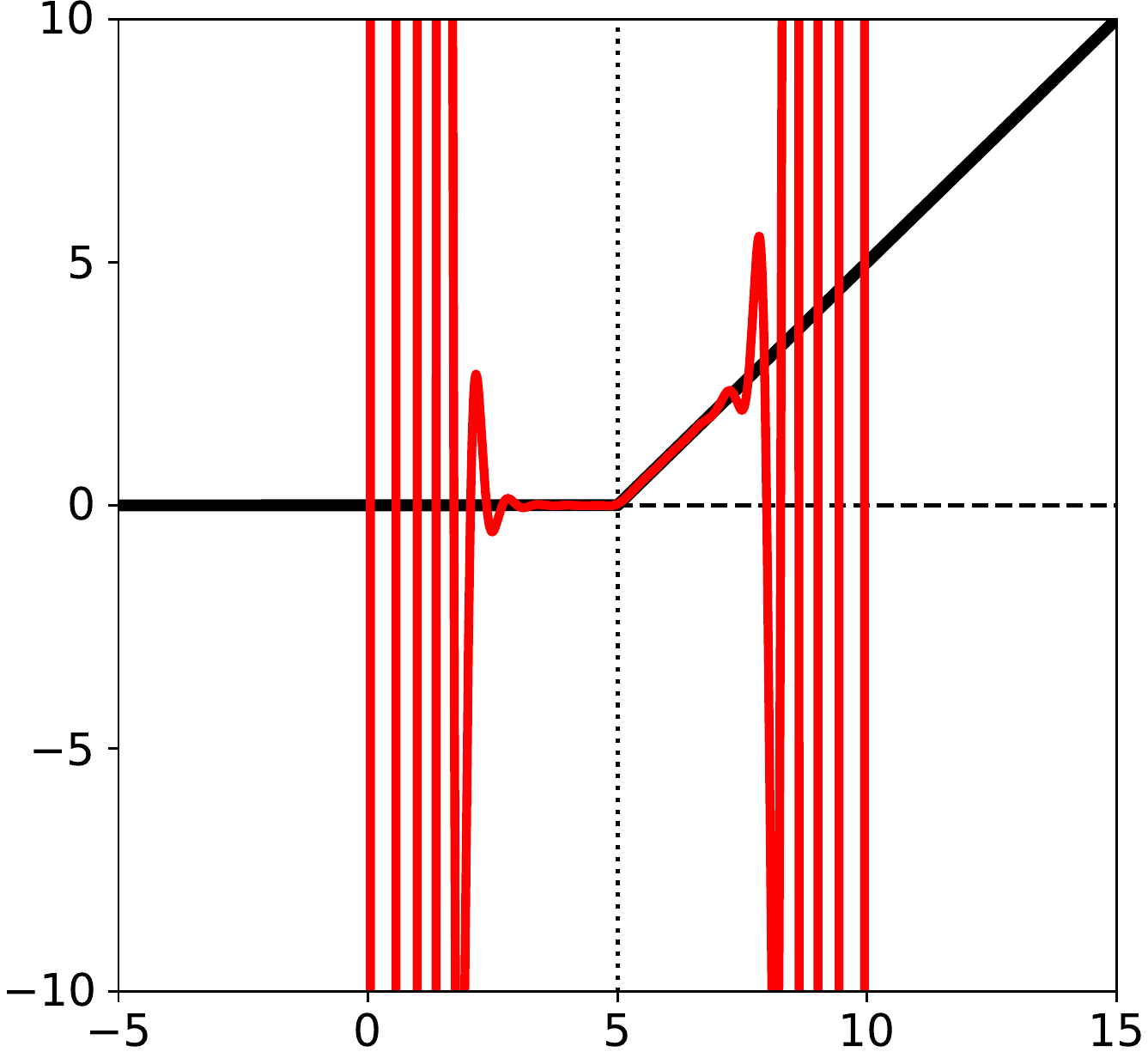}&
			\includegraphics[width=0.23\textwidth]{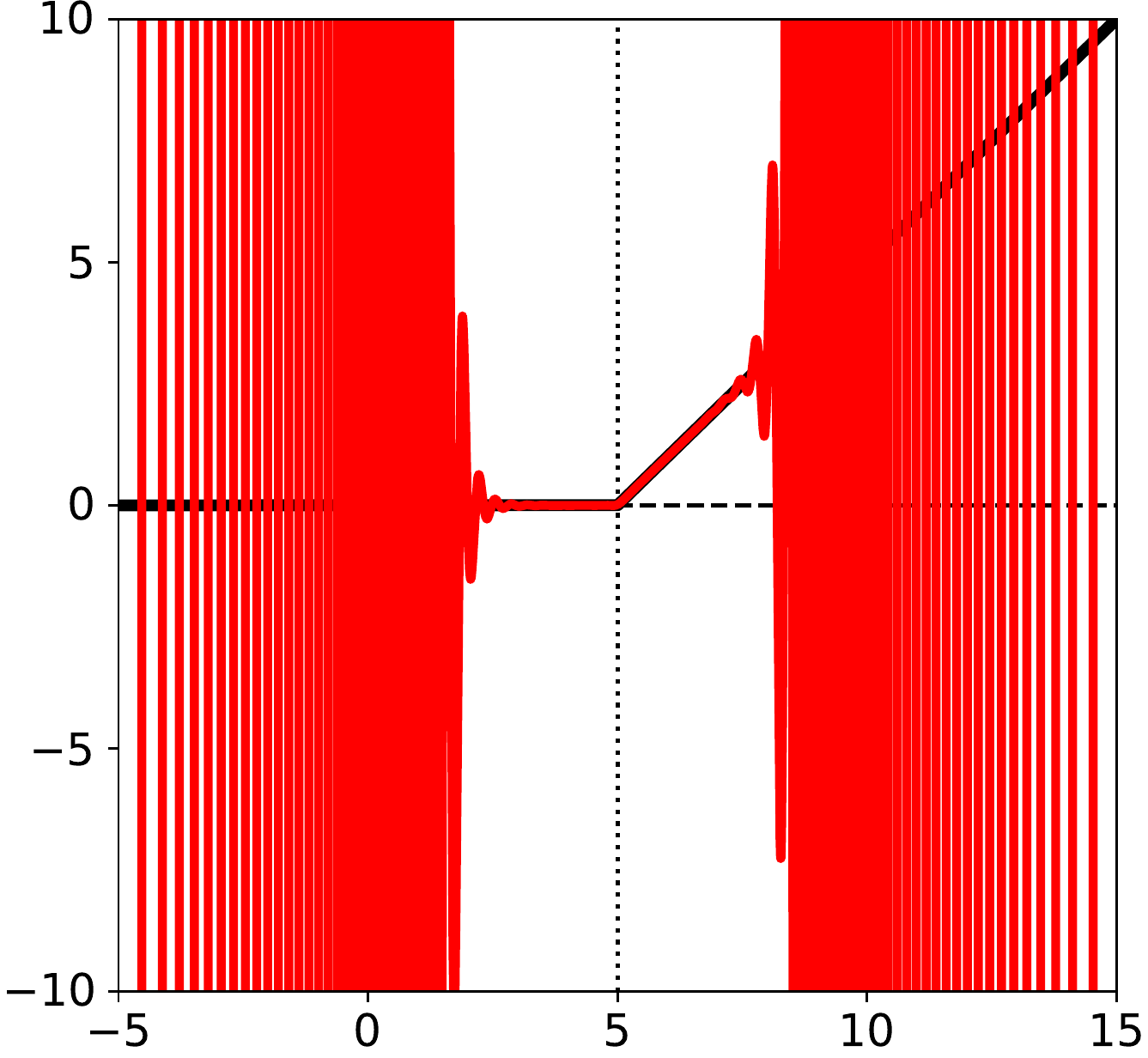}\\
			\begin{turn}{90}$\boldsymbol{b =1.0}$\end{turn}&\includegraphics[width=0.23\textwidth]{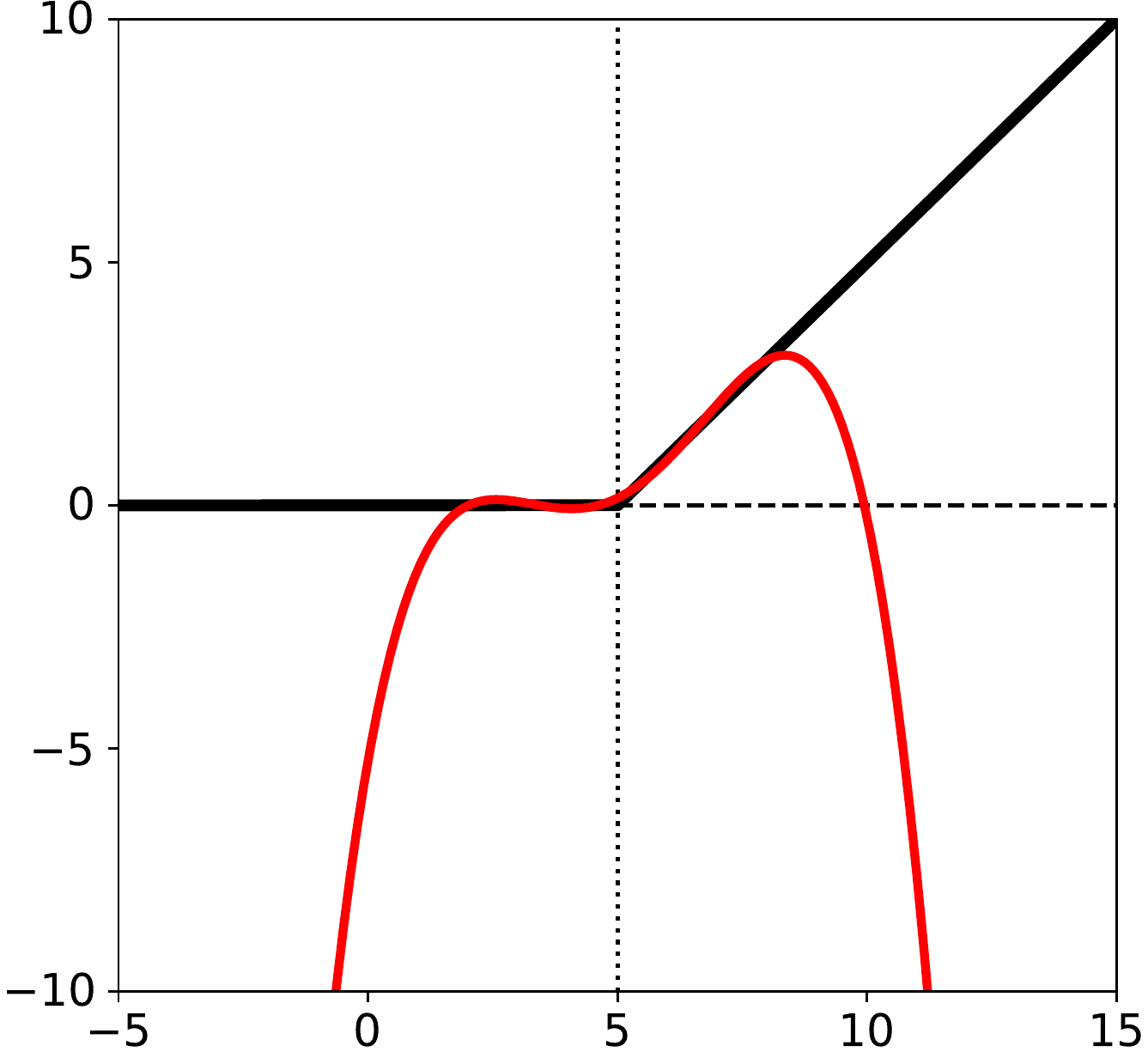} &	\includegraphics[width=0.23\textwidth]{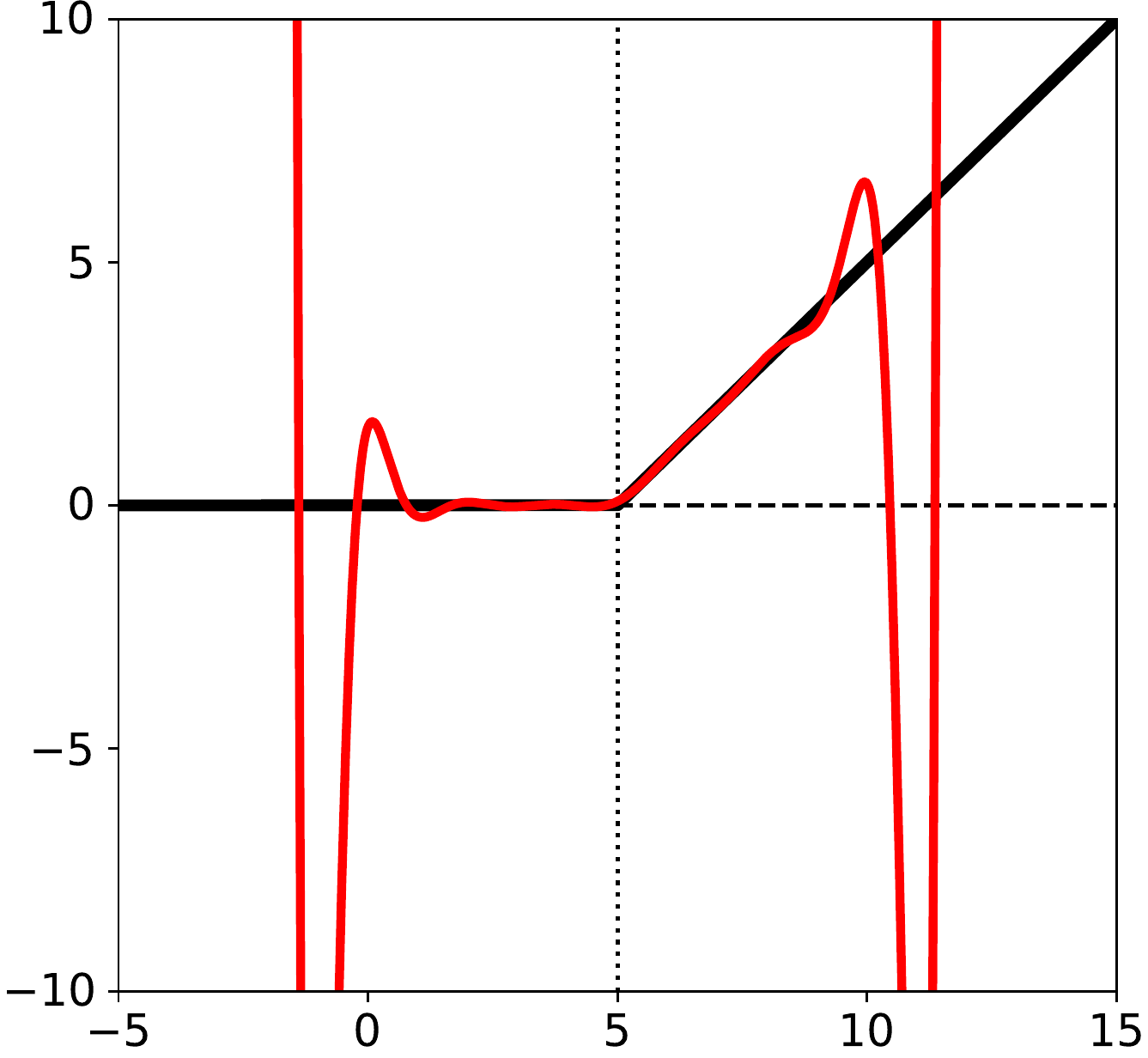} &
			\includegraphics[width=0.23\textwidth]{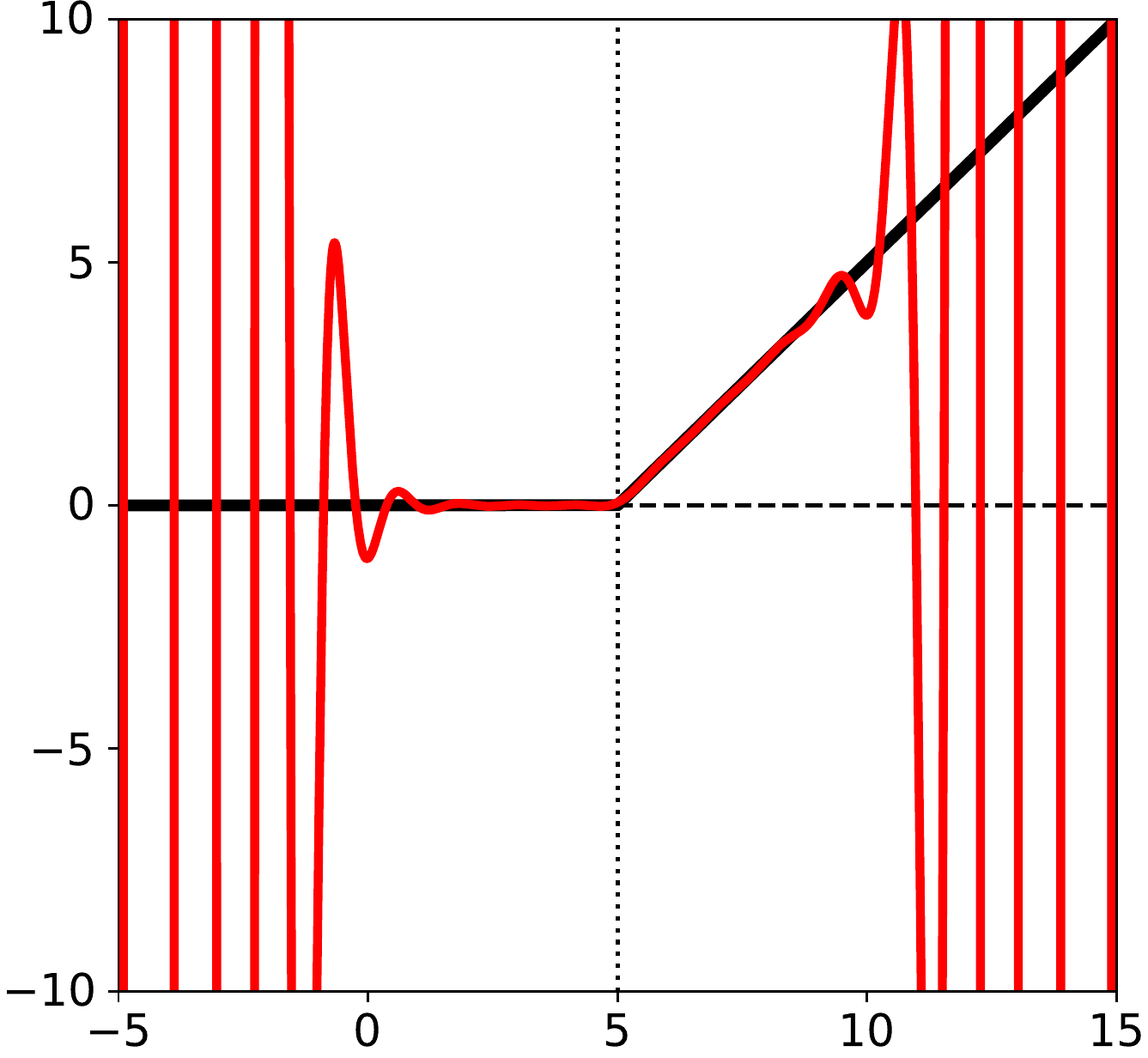}&
			\includegraphics[width=0.23\textwidth]{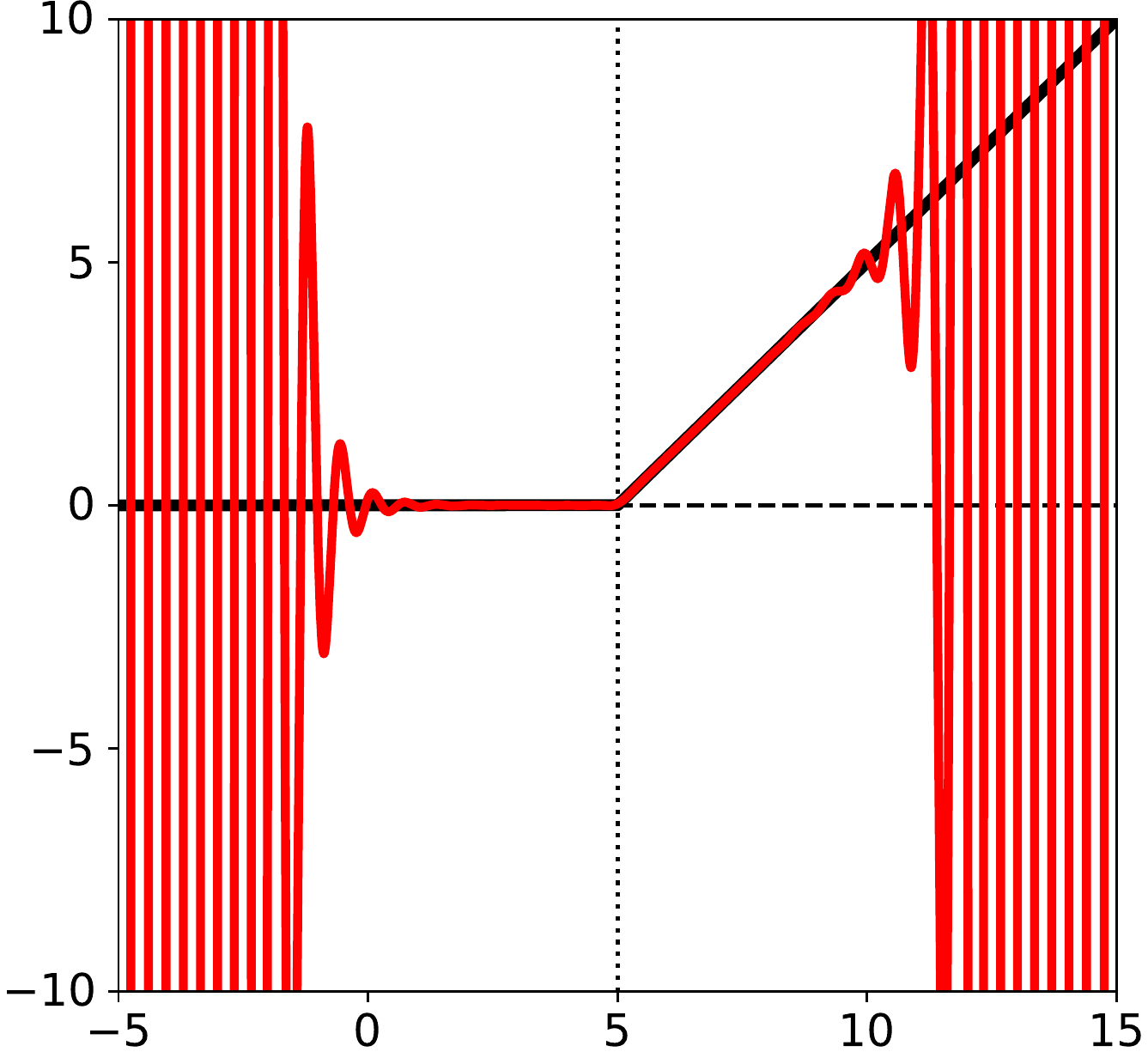}\\
			\begin{turn}{90}$\boldsymbol{b =2.0}$\end{turn}&\includegraphics[width=0.23\textwidth]{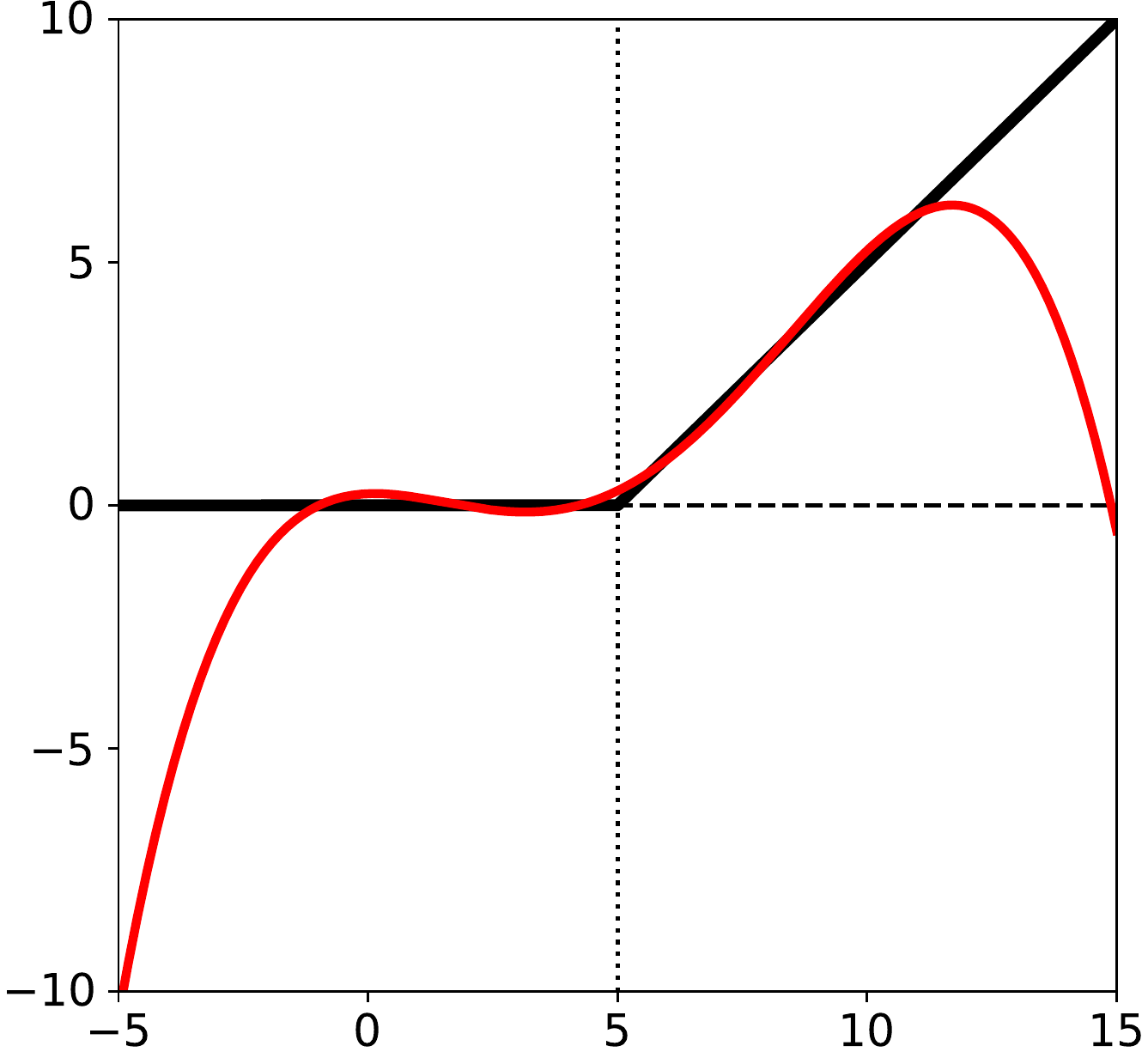} &	\includegraphics[width=0.23\textwidth]{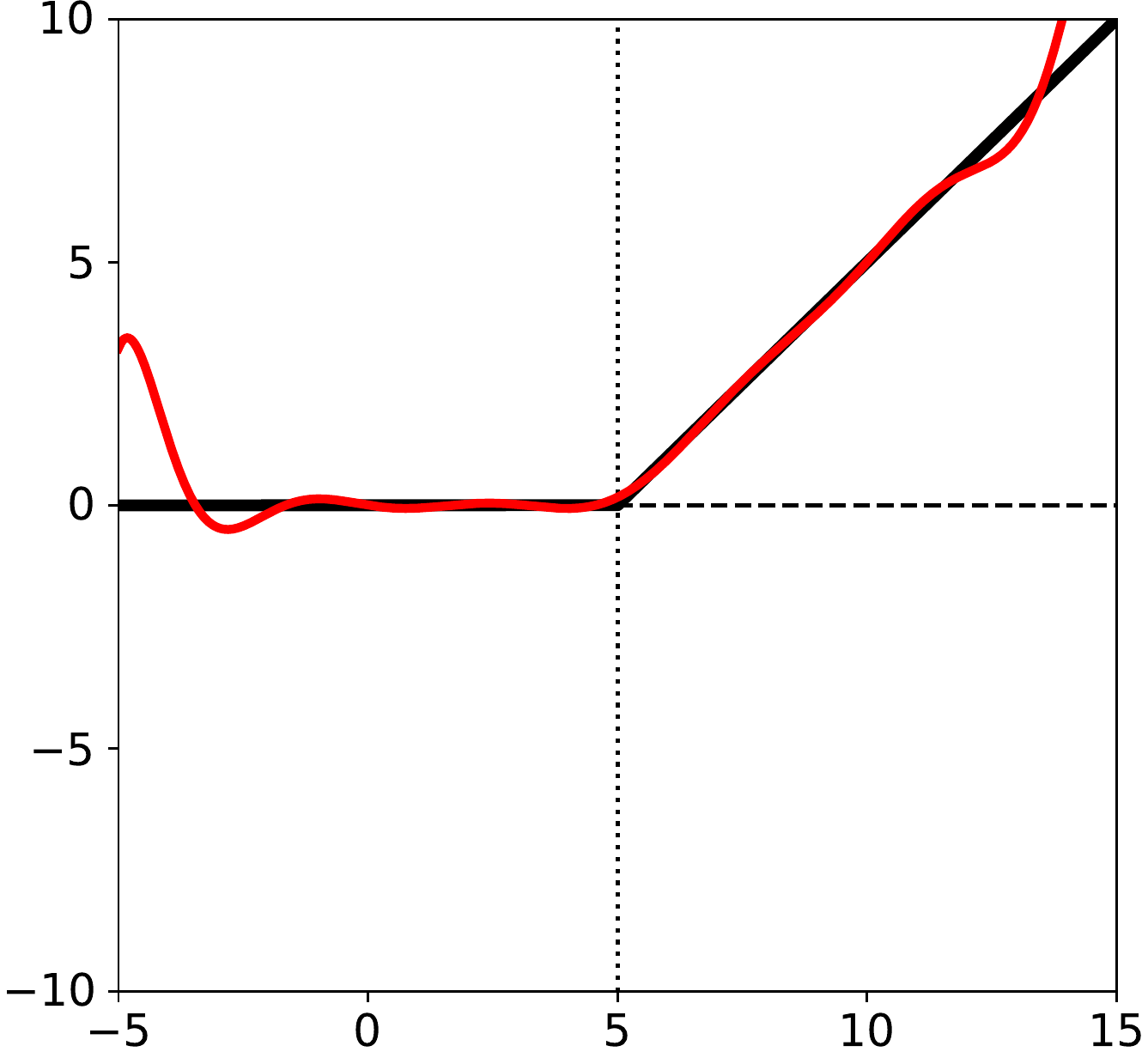} &
			\includegraphics[width=0.23\textwidth]{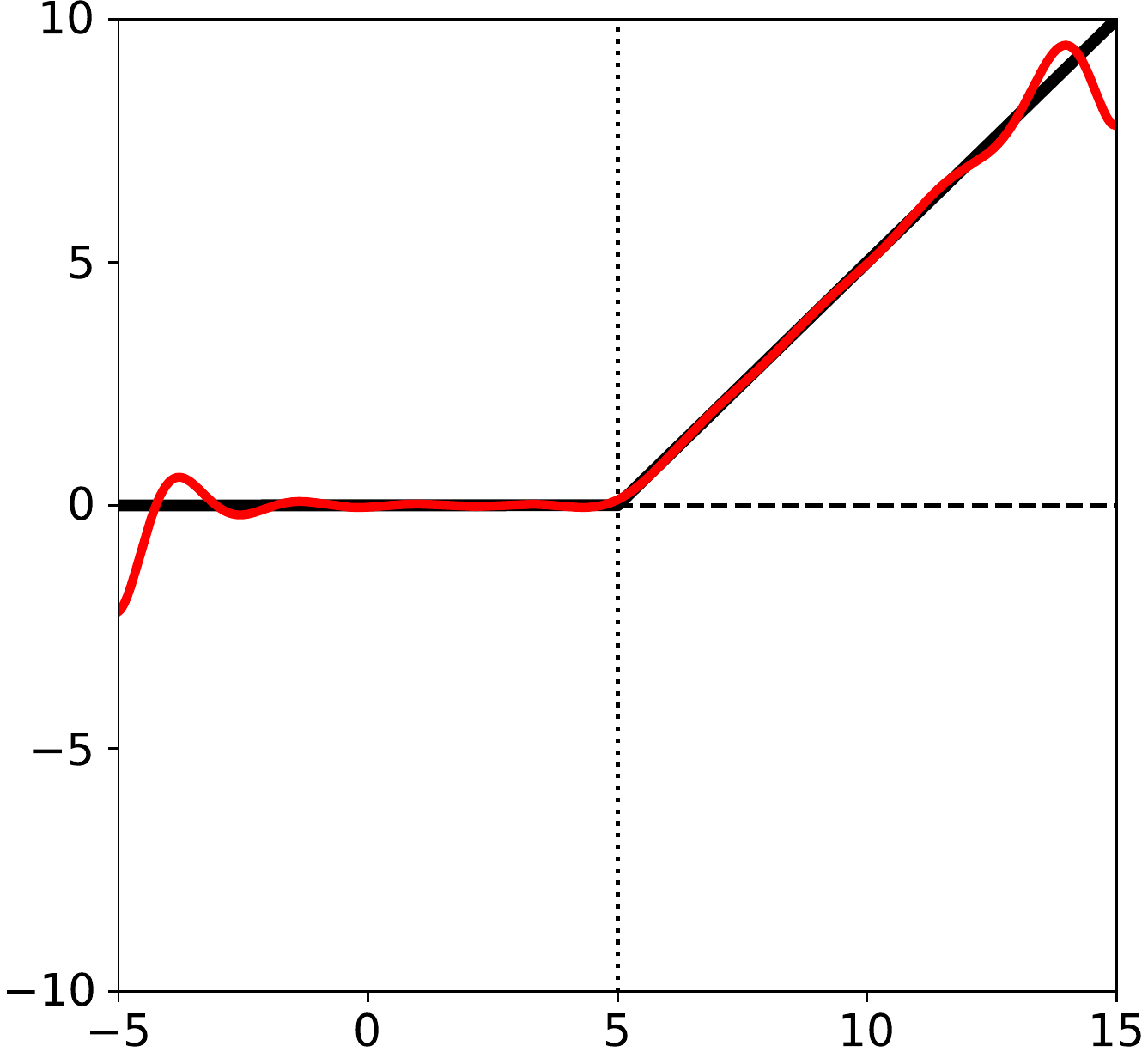}&
			\includegraphics[width=0.23\textwidth]{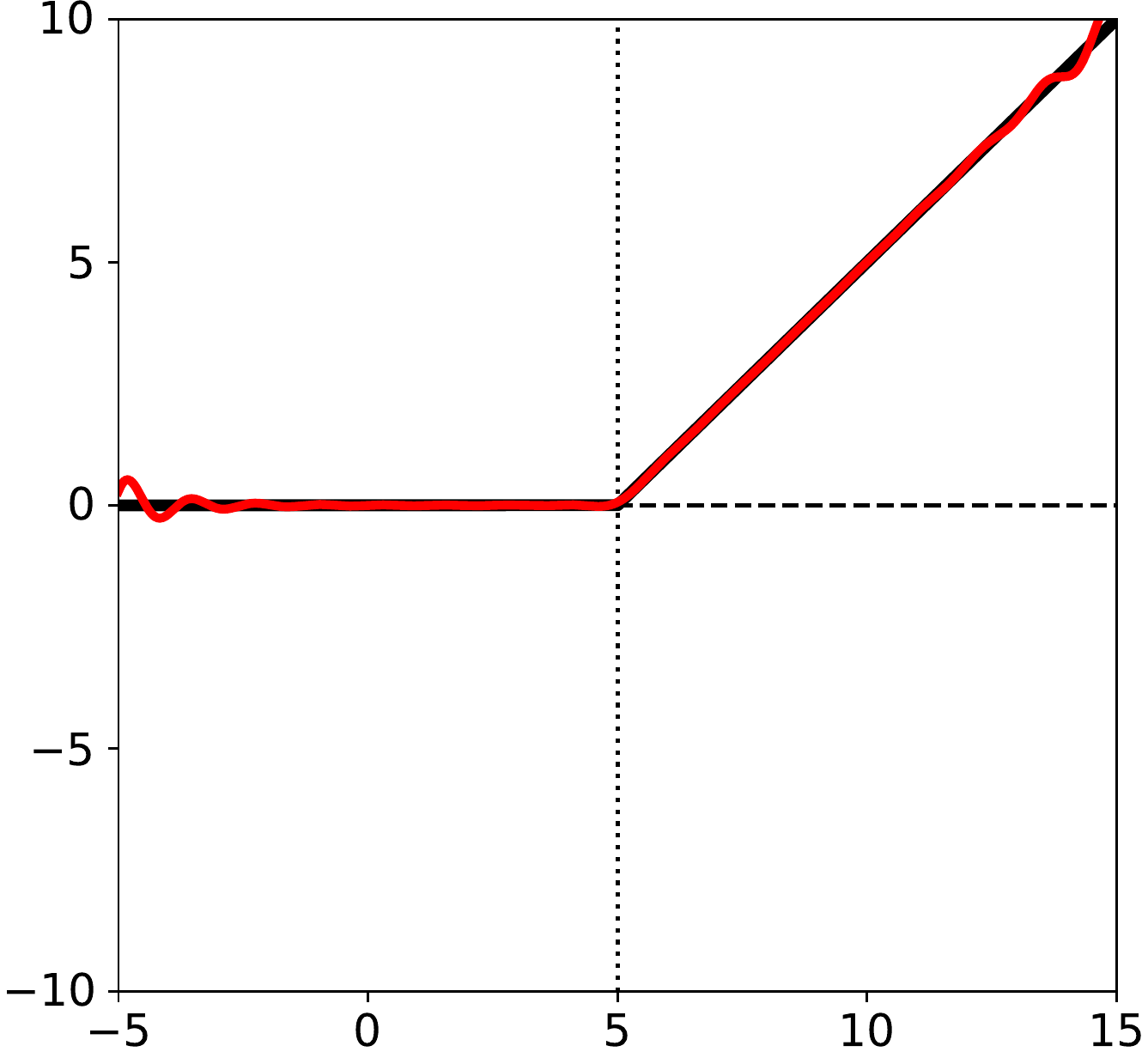}\\
			\begin{turn}{90}$\boldsymbol{b =3.0}$\end{turn}&\includegraphics[width=0.23\textwidth]{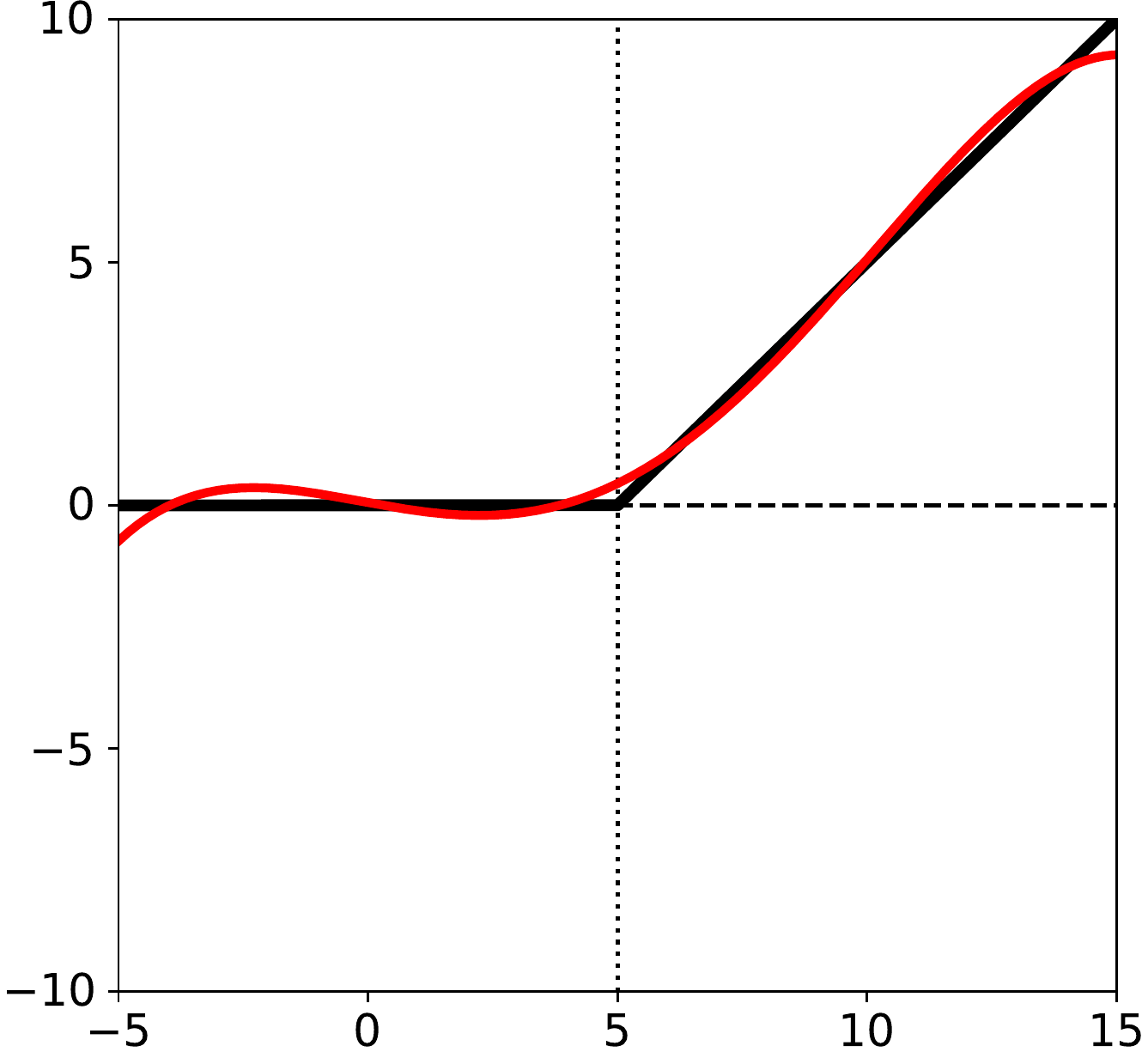} &	\includegraphics[width=0.23\textwidth]{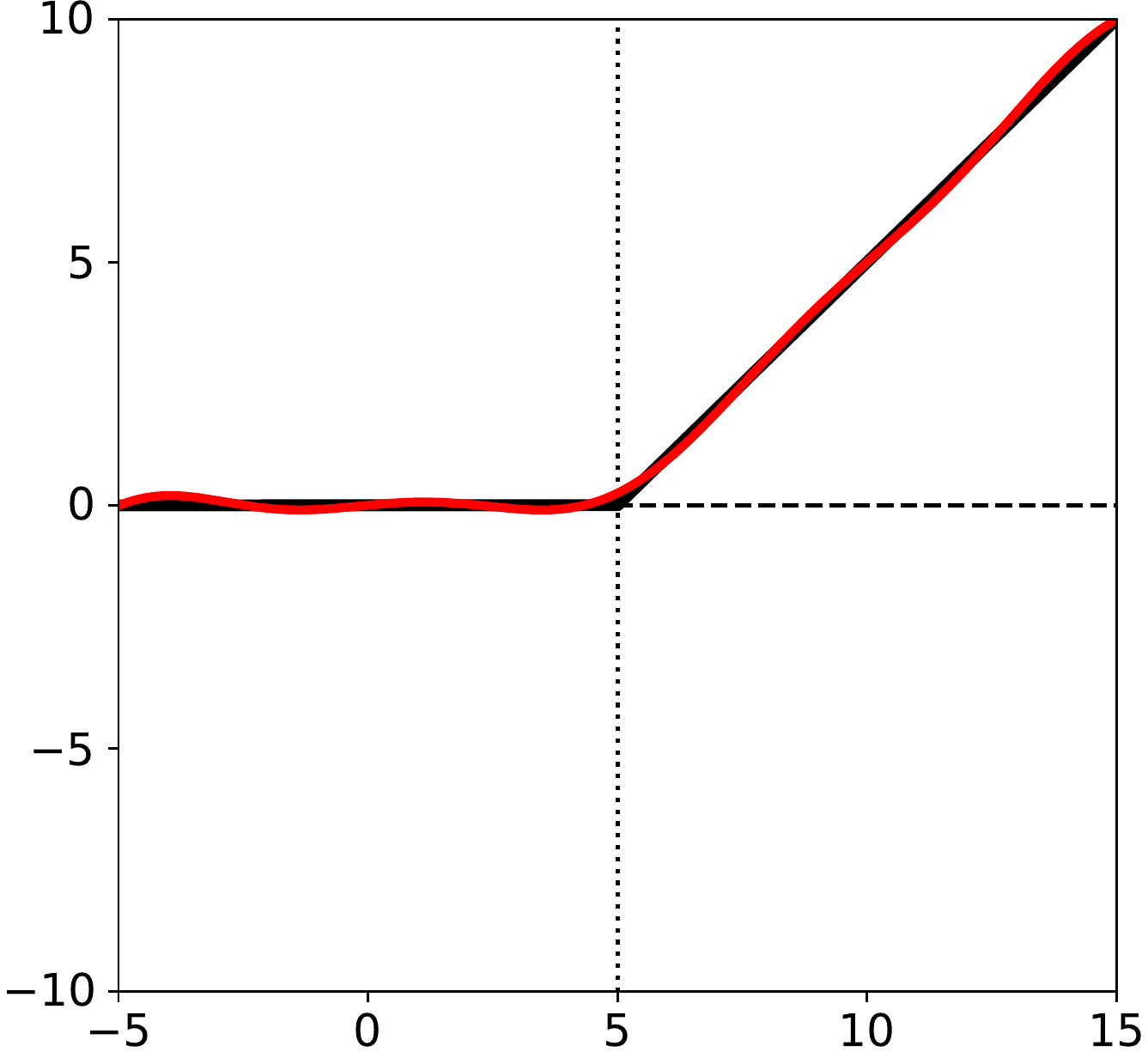} &
			\includegraphics[width=0.23\textwidth]{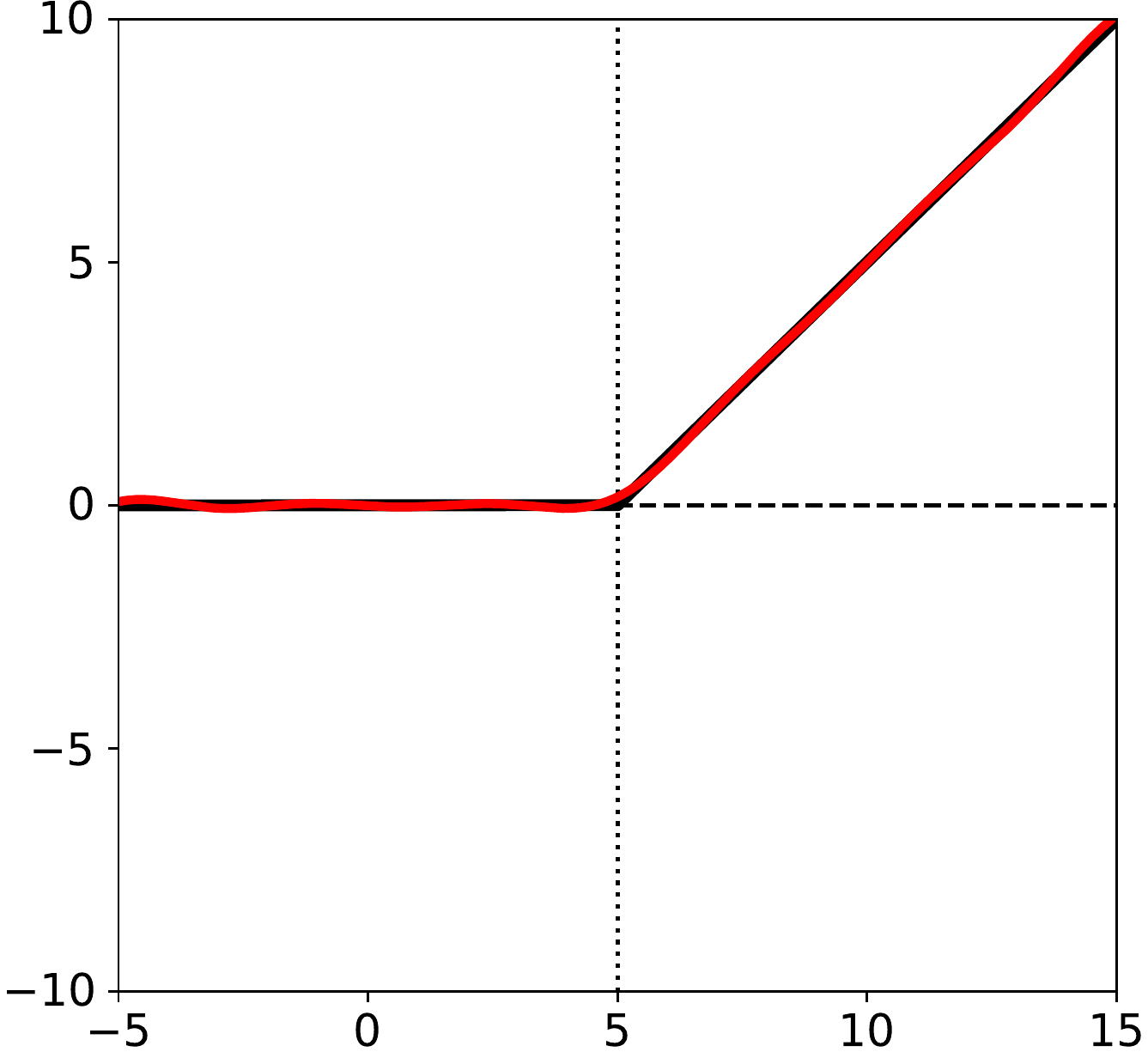}&
			\includegraphics[width=0.23\textwidth]{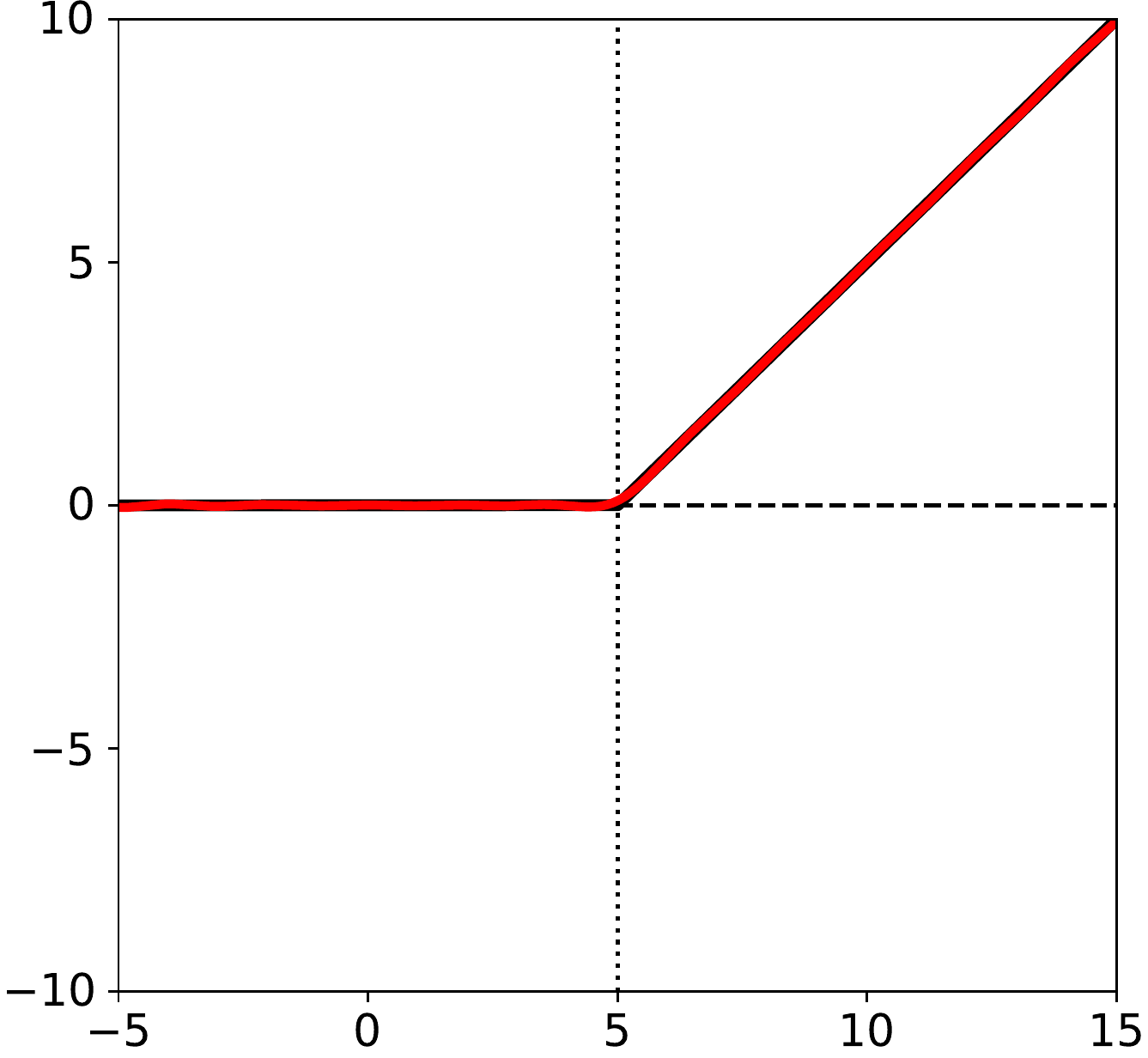}\\
		\end{tabular}
	}
	\caption{Approximation of the call payoff function $\varphi_K$ with strike price $K=5.0$ (black line) by generalized Hermite polynomials of different order $N$ and scale $b$ (red line). The drift is fixed to $a=K$.\label{hermite}}
\end{figure}

We stress the fact that the Hermite series $\varphi_{K,N}^{a,b}$ converges to $\varphi_K$ in the norm of $L^2_{a,b}$, so that we cannot expect convergence in the supremum norm, as it can also be observed in Figure \ref{hermite}. We now give a semi-explicit formula for the $L^2_{a,b}$-norm of the approximation error.
\begin{proposition}
	\label{errorphi}
	By the Parseval identity, the norm in $L^2_{a,b}$ of the approximation error is
	\begin{equation}
		\label{errorphieq}
		\left \|\varphi_K-\varphi_{K,N}^{a,b}\right \|_{L^2_{a,b}} = b\,\phi\left(\frac{K-a}{b}\right)\sqrt{\sum_{n=N+1}^{\infty}\left(\frac{1}{n!}q_{n-2}\left(\frac{K-a}{b}\right)\right)^2}.
	\end{equation}
\end{proposition}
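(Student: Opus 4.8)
The plan is to read the truncation error off as the tail of a convergent orthogonal series and then square it using orthogonality. Because the generalized Hermite polynomials form a complete orthogonal basis of $L^2_{a,b}$, the expansion \eqref{phiK} converges to $\varphi_K$ in $L^2_{a,b}$ and $\varphi_{K,N}^{a,b}$ from \eqref{phiNK} is its $N$-th partial sum; hence the error is exactly the tail
\[
\varphi_K-\varphi_{K,N}^{a,b}=\sum_{n=N+1}^{\infty}\beta_n^{a,b}\,q_n\!\left(\frac{x-a}{b}\right),
\]
convergent in $L^2_{a,b}$. I would first record that the rescaled polynomials $q_n(\tfrac{\cdot-a}{b})$ are orthogonal in $L^2_{a,b}$: this is immediate from the identity $q_n^{a,b}(x)=b^{-n}q_n(\tfrac{x-a}{b})$ together with the orthogonality of the $q_n^{a,b}$ already invoked to define the series.

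Next I would apply the Parseval identity to this tail. Orthogonality kills the cross terms, so only the diagonal survives and
\[
\left\|\varphi_K-\varphi_{K,N}^{a,b}\right\|_{L^2_{a,b}}^2=\sum_{n=N+1}^{\infty}\bigl(\beta_n^{a,b}\bigr)^2\,\left\|q_n\!\left(\tfrac{\cdot-a}{b}\right)\right\|_{L^2_{a,b}}^2 .
\]
The remaining norm is computed by the change of variables $u=(x-a)/b$, which turns the integral of $q_n(\tfrac{x-a}{b})^2$ against $w_{a,b}$ into the standard Hermite integral \eqref{norm}; equivalently it follows from Lemma \ref{prop} combined with $q_n^{a,b}=b^{-n}q_n(\tfrac{\cdot-a}{b})$.

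It then remains to substitute the explicit coefficients of Proposition \ref{propphi}. Since $n\ge N+1$ and (for the stated form to involve only $q_{n-2}$) we take $N\ge1$, every term of the tail falls in the regime $n\ge2$, where $\beta_n^{a,b}=\tfrac{b}{n!}\,\phi(\tfrac{K-a}{b})\,q_{n-2}(\tfrac{K-a}{b})$. The factor $\phi(\tfrac{K-a}{b})$ and a common power of $b$ are shared by every summand and factor out of the root, leaving inside the root only the values $q_{n-2}(\tfrac{K-a}{b})^2$ weighted by the surviving factorial factors, which reproduces the structure of \eqref{errorphieq} — a single prefactor times the square root of a tail sum of $q_{n-2}^2$ terms.

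The step I expect to be delicate is the constant bookkeeping in the second display: the coefficients $\beta_n^{a,b}$ are defined relative to the non-normalized polynomials $q_n(\tfrac{\cdot-a}{b})$, whose $L^2_{a,b}$-norms are not equal to one, so one must not simply sum $(\beta_n^{a,b})^2$ but must carry the norm factor $\|q_n(\tfrac{\cdot-a}{b})\|_{L^2_{a,b}}^2$ through the calculation. Tracking the three sources of $b$-powers and factorials — the coefficients $\beta_n^{a,b}$, the rescaling $q_n^{a,b}=b^{-n}q_n(\tfrac{\cdot-a}{b})$, and the Hermite norm \eqref{norm} — and checking that they collapse to the compact prefactor and factorial weight displayed in \eqref{errorphieq} is the crux of the argument. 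I would also verify the boundary convention $N\ge1$, since for $N=0$ the tail additionally retains the $n=1$ term, whose coefficient is not of the $q_{n-2}$ form.
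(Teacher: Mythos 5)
Your strategy coincides with the paper's (the paper gives no separate proof of this proposition beyond invoking Parseval), and your identification of the norm bookkeeping as the crux is exactly right --- but you stop before carrying it out, and carrying it out does not land on \eqref{errorphieq}. For a non-normalized orthogonal system Parseval reads $\bigl\|\sum_n c_n e_n\bigr\|^2=\sum_n c_n^2\|e_n\|^2$. Here $e_n=q_n\bigl(\tfrac{\cdot-a}{b}\bigr)$ has $\|e_n\|^2_{L^2_{a,b}}=b\sqrt{2\pi}\,n!$ (substitute $u=(x-a)/b$ and use \eqref{norm}), and $\beta_n^{a,b}=\tfrac{b}{n!}\phi\bigl(\tfrac{K-a}{b}\bigr)q_{n-2}\bigl(\tfrac{K-a}{b}\bigr)$ for $n\ge2$, so each tail term equals
\begin{equation*}
\bigl(\beta_n^{a,b}\bigr)^2\, b\sqrt{2\pi}\,n!=\sqrt{2\pi}\,b^3\,\phi^2\!\left(\frac{K-a}{b}\right)\frac{1}{n!}\,q_{n-2}^2\!\left(\frac{K-a}{b}\right),
\end{equation*}
which gives
\begin{equation*}
\left\|\varphi_K-\varphi_{K,N}^{a,b}\right\|_{L^2_{a,b}}=(2\pi)^{1/4}\,b^{3/2}\,\phi\!\left(\frac{K-a}{b}\right)\left(\sum_{n=N+1}^{\infty}\frac{1}{n!}\,q_{n-2}^2\!\left(\frac{K-a}{b}\right)\right)^{1/2}.
\end{equation*}
By contrast, the right-hand side of \eqref{errorphieq} is exactly $\bigl(\sum_{n>N}(\beta_n^{a,b})^2\bigr)^{1/2}$, i.e.\ what one obtains by treating the $q_n\bigl(\tfrac{\cdot-a}{b}\bigr)$ as orthonormal and dropping the factors $\|e_n\|^2=b\sqrt{2\pi}\,n!$: the factorial weight is $1/(n!)^2$ instead of $1/n!$ and the prefactor is $b$ instead of $(2\pi)^{1/4}b^{3/2}$.

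So the step you yourself flag as delicate is precisely where the claimed identity breaks: your (correct) plan proves a formula with a different prefactor and a different factorial weight, and the assertion that the bookkeeping ``reproduces the structure of \eqref{errorphieq}'' cannot be completed into the stated equality. To close the argument you must either exhibit the missing cancellation (there is none with the definitions of $\beta_n^{a,b}$, $w_{a,b}$ and $\|\cdot\|_{L^2_{a,b}}$ as given) or conclude that \eqref{errorphieq} should be corrected to the display above; the qualitative discussion following the proposition (dependence on $b$ through $h^{a,b}_K$, monotonicity near $a=K$) survives either way. Your side remark that one needs $N\ge1$ so that the tail contains only indices $n\ge2$ is correct.
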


From Proposition \ref{errorphi} we do not get an intuition on the behaviour of the approximation error as a function of $N$. We however observe that $$h^{a,b}_K:=b\,\phi\left(\frac{K-a}{b}\right) = \frac{b}{\sqrt{2\pi}}\exp\left(-\frac{(K-a)^2}{2b^2}\right)$$ does not depend on $N$ but it does depend on $b$. More precisely, if ignoring the dependence of the squared root in equation \eqref{errorphieq} on $b$ (as this is not straightforward) the approximation error is an increasing function in $b$. This means that, despite Figure \ref{hermite} shows an improving in the approximation for larger values of the scale, the $L^2_{a,b}$-norm of the approximation error might grow with the scale $b$. 

To analyse this further, in Figure \ref{approx_error} we plot the $L^2_{a,b}$-norm of the approximation error for different values of $N$, $b$ and $a$, with $K=5.0$. In the first row, the norm is a function of $N$ for three different cases, namely $a=5.0$, $a=7.0$ and $a=10.0$.  Here $b \in \{0.5, 1.0, 2.0, 3.0, 6.0, 10.0\}$, however the lines are not distinguishable. In the second row we report a zoom of the three previous plots, where we focus on $10 \le N \le 30$. Here we distinguish the six different lines and observe in particular that the approximation error is smaller for smaller values of $b$. Finally, in the last row we plot the error and the coefficient $h^{a,b}_K$ as a function of $b$. Here we fix $N=20$ and consider many values for the scale in the interval $0.5 \le b \le 10.0$. To calculate the infinite sum in equation \eqref{errorphieq}, we truncate it at $n=160$, since for bigger values the factorial cannot be converted to a floating-point number.

For the plots in the first column, since $a=K = 5.0$, the coefficient $h^{a,b}_K =\frac{b}{\sqrt{2\pi}}$ is proportional to $b$ and the squared root in equation \eqref{errorphieq} does not depend on $b$. We see indeed in the last plot of the first column that 
the approximation error is proportional to $h^{a,b}_K$ (proportional to $b$ in fact). In the second and third plots of the last row, since  $a \ne K$, the behaviour of the approximation error diverges from the one of $h^{a,b}_K$. In particular, it is not monotone in $b$. However, since the main interest for the polynomial approximation is around $K$, we shall mostly deal with configurations that we can approximatively consider increasing functions of the scale parameter. 

\begin{figure}[tp]
	\setlength{\tabcolsep}{2pt}
	\resizebox{1\textwidth}{!}{
		\begin{tabular}{@{}>{\centering\arraybackslash}m{0.33\textwidth}@{}>{\centering\arraybackslash}m{0.33\textwidth}@{}>{\centering\arraybackslash}m{0.33\textwidth}@{}}
			$\boldsymbol{a = 5.0}$&$\boldsymbol{a = 7.0}$&$\boldsymbol{a = 10.0}$\\
			\includegraphics[width=0.32\textwidth]{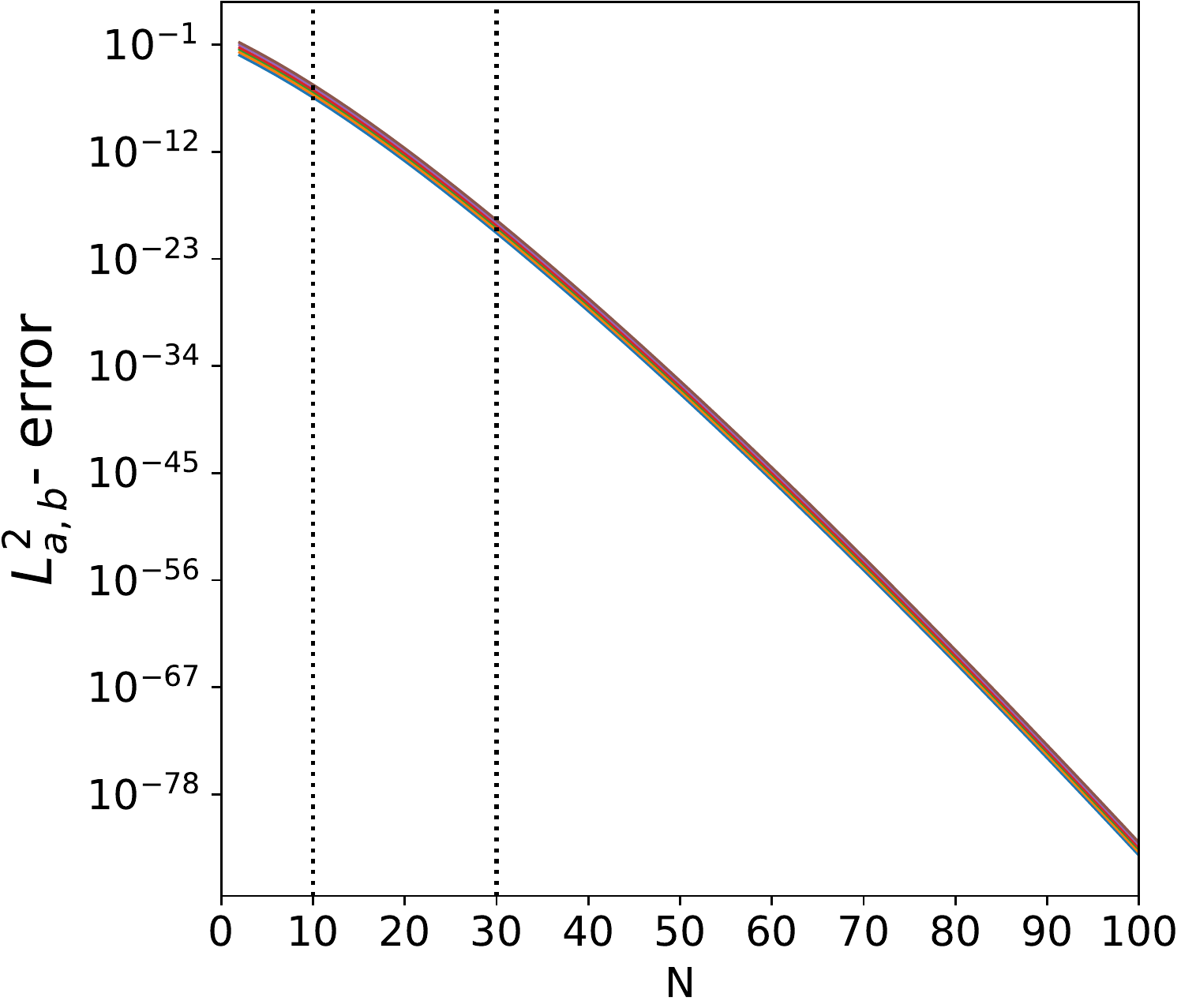}&
			\includegraphics[width=0.32\textwidth]{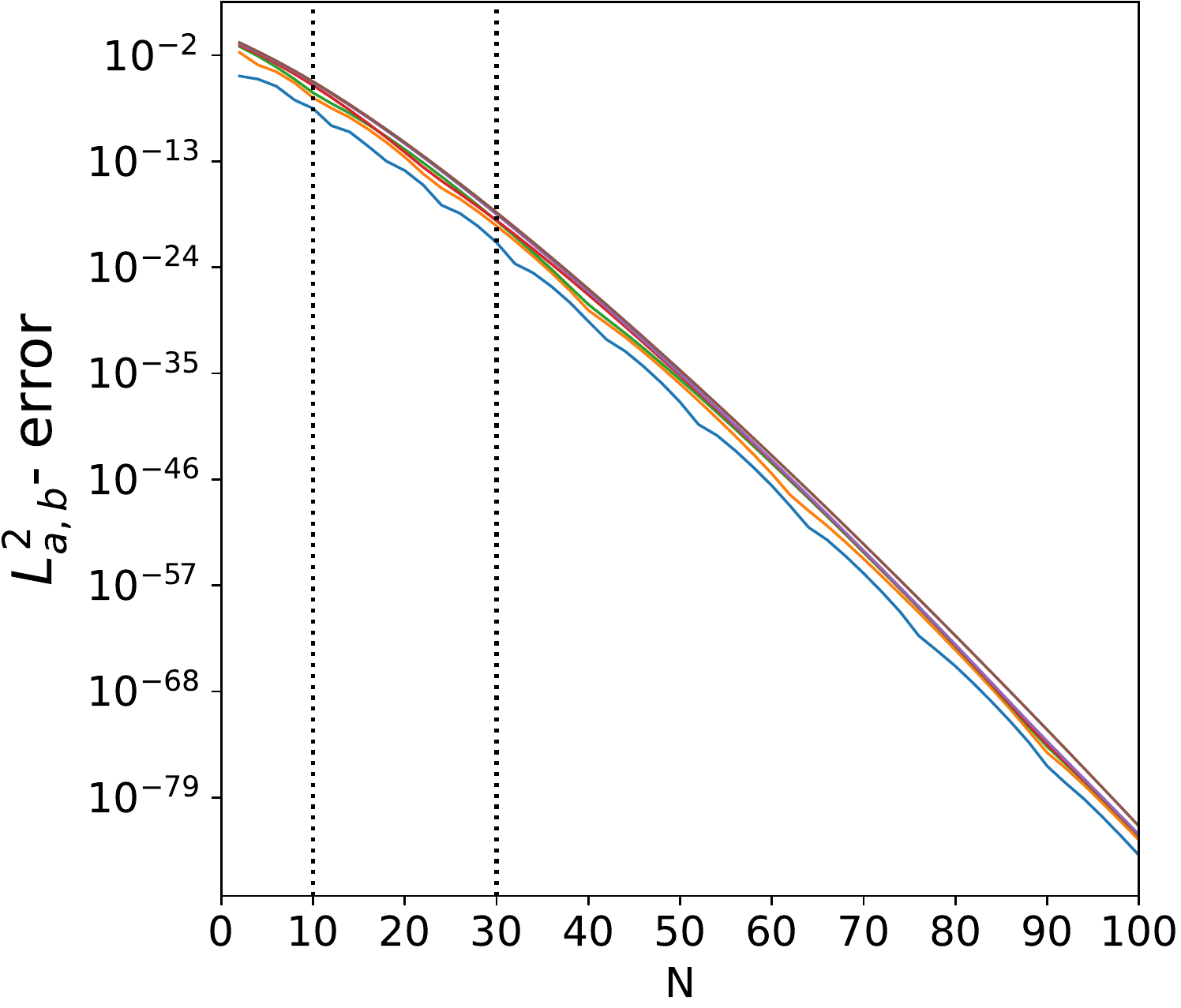}&
			\includegraphics[width=0.32\textwidth]{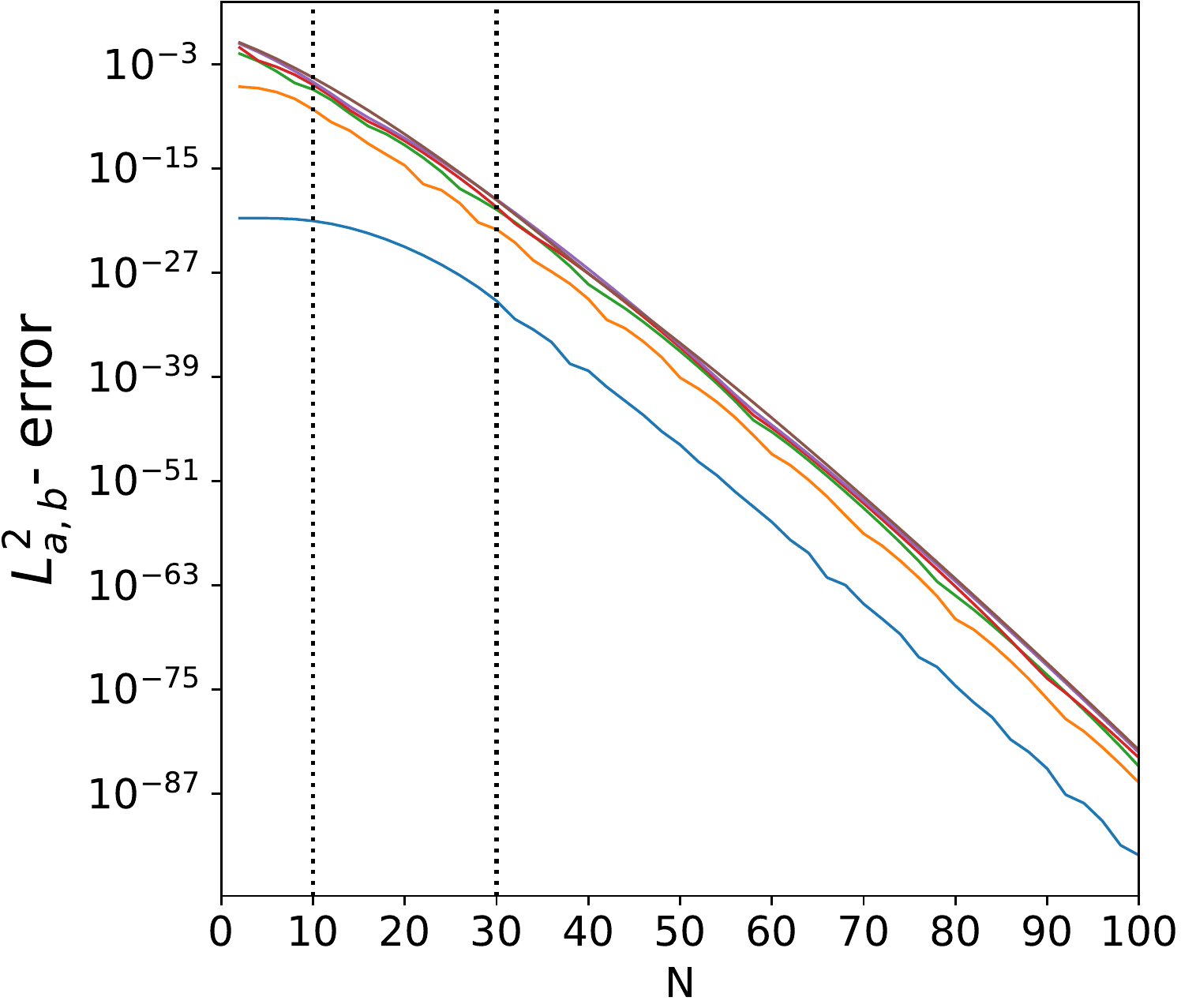}\\
			\includegraphics[width=0.32\textwidth]{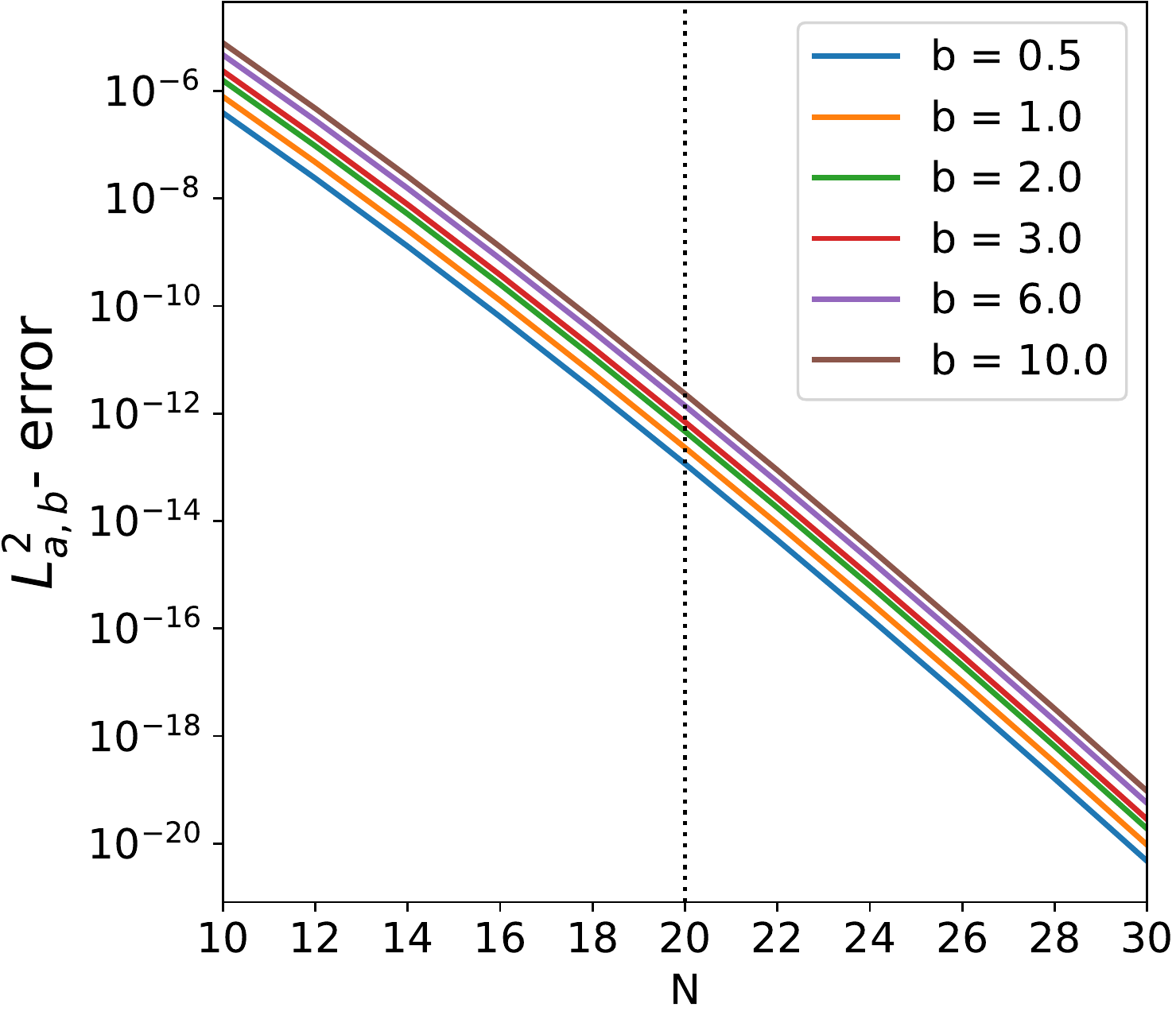}&
			\includegraphics[width=0.32\textwidth]{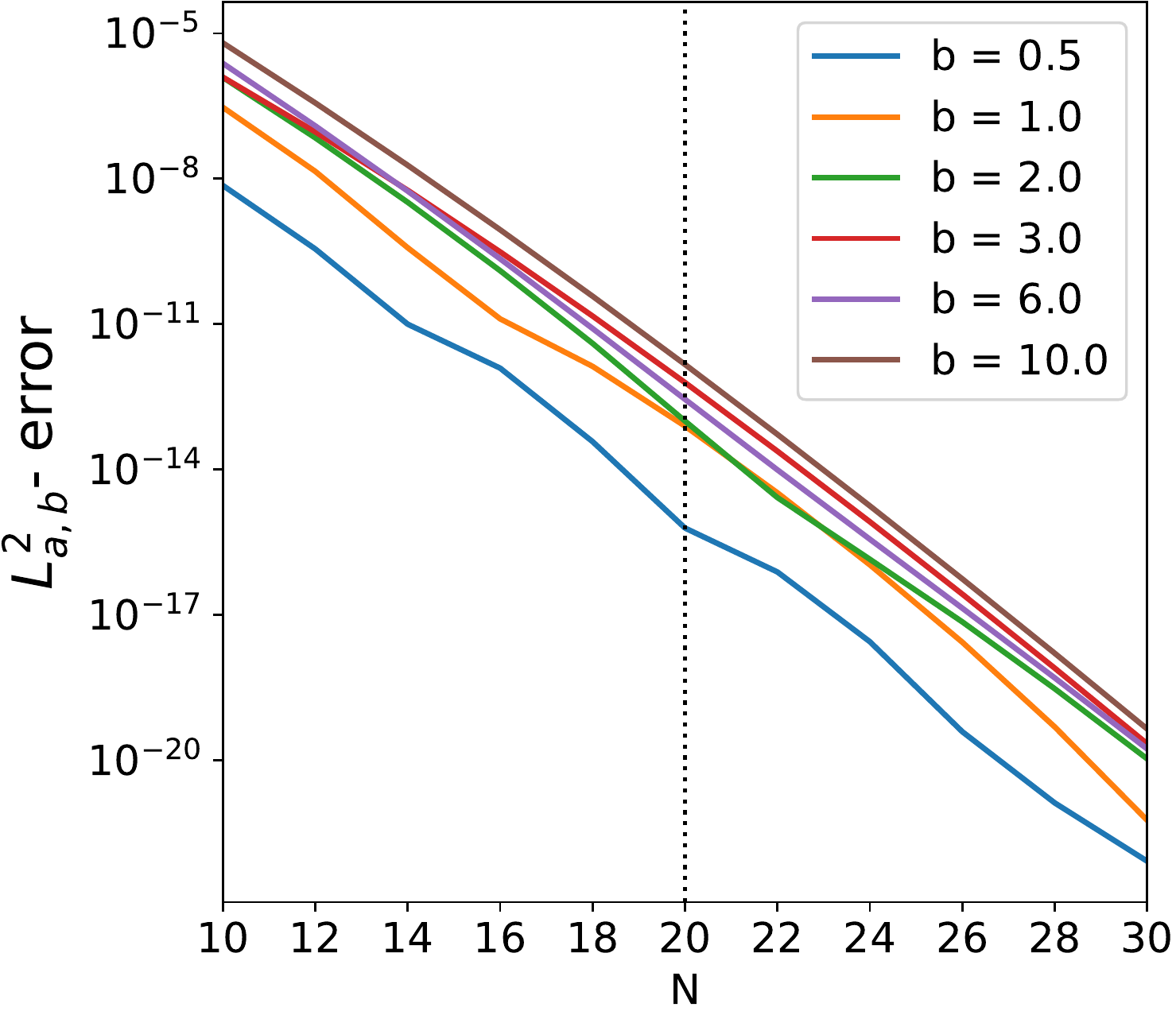}&
			\includegraphics[width=0.32\textwidth]{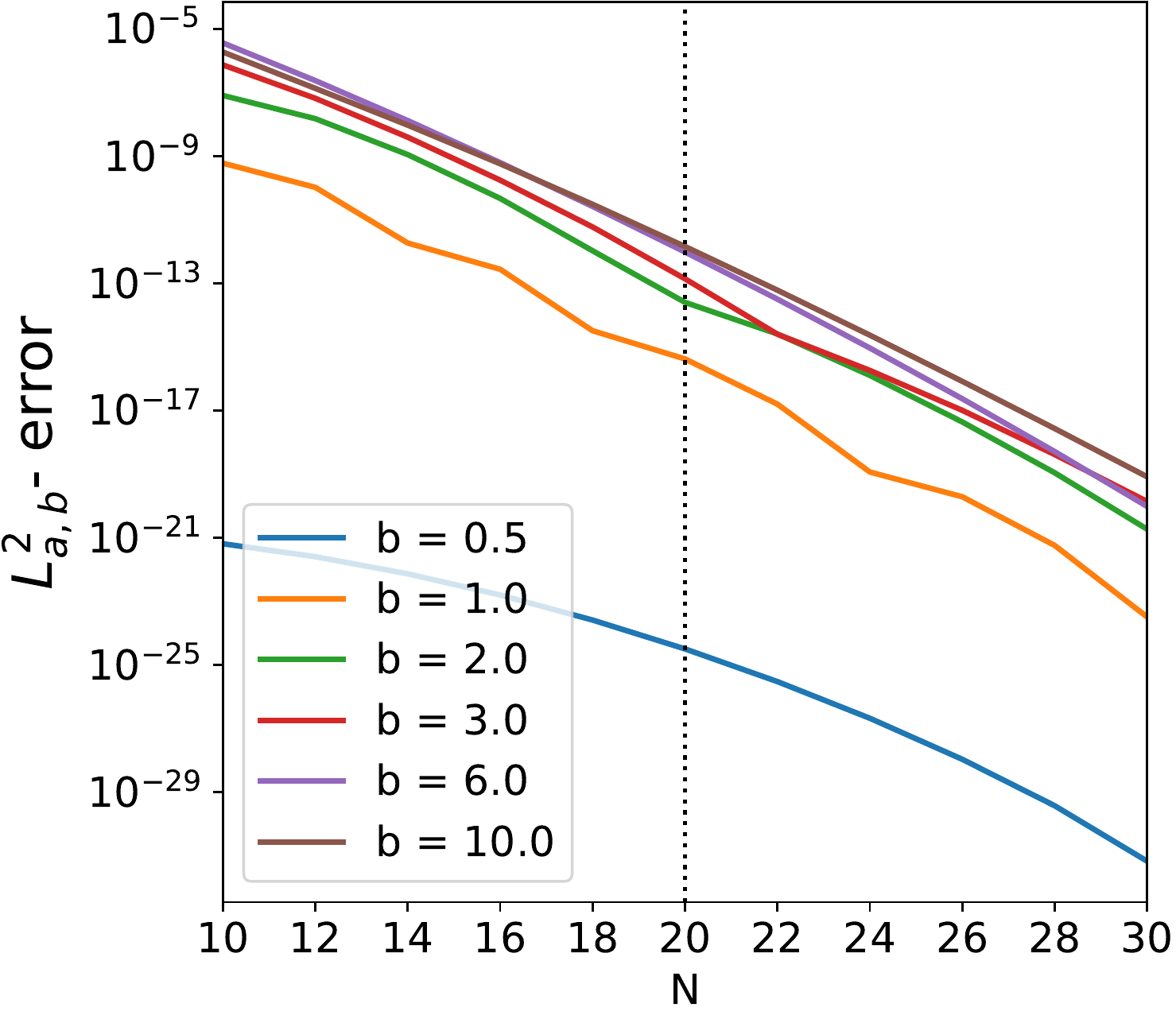}\\
			\includegraphics[width=0.32\textwidth]{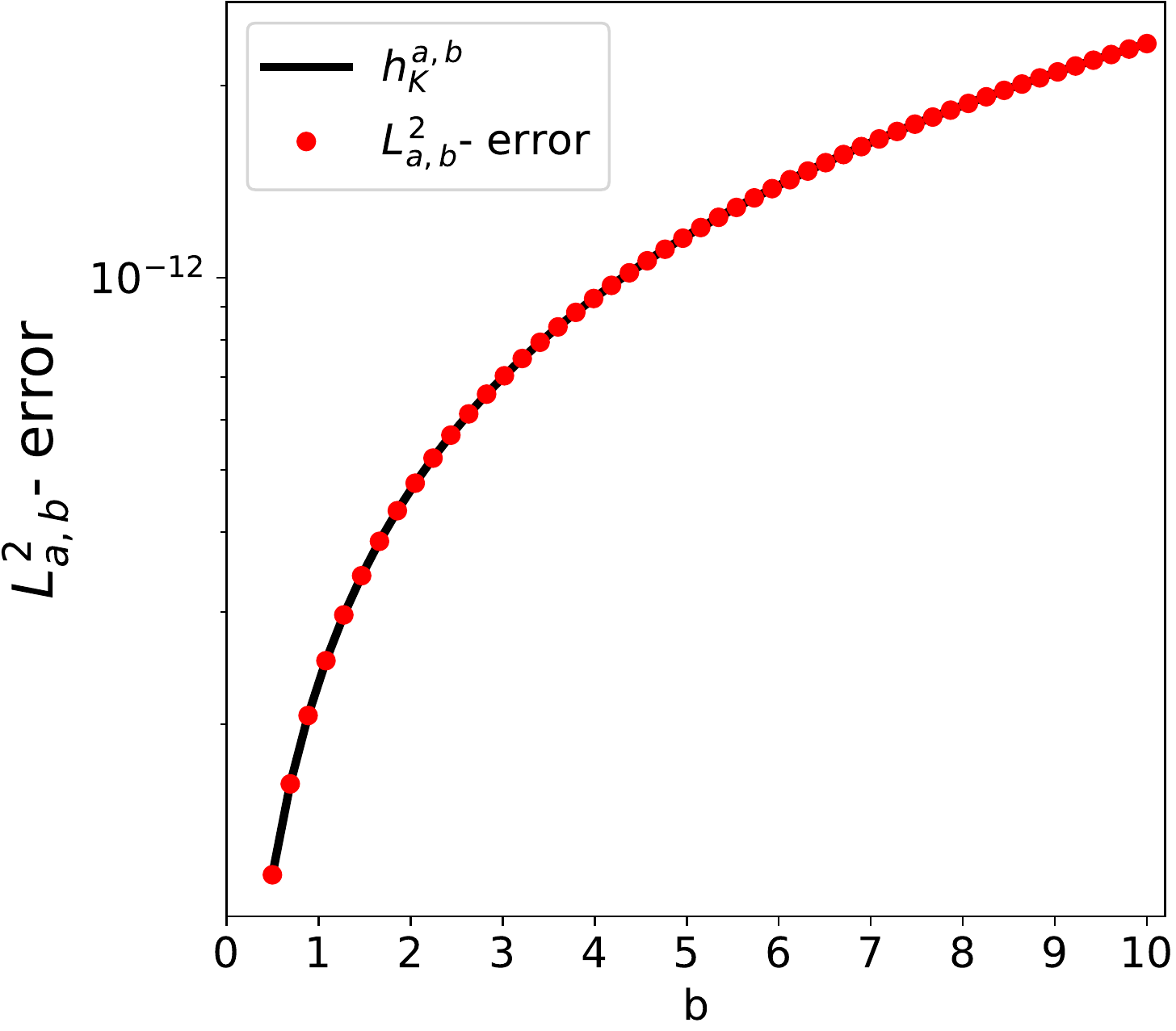}&
			\includegraphics[width=0.32\textwidth]{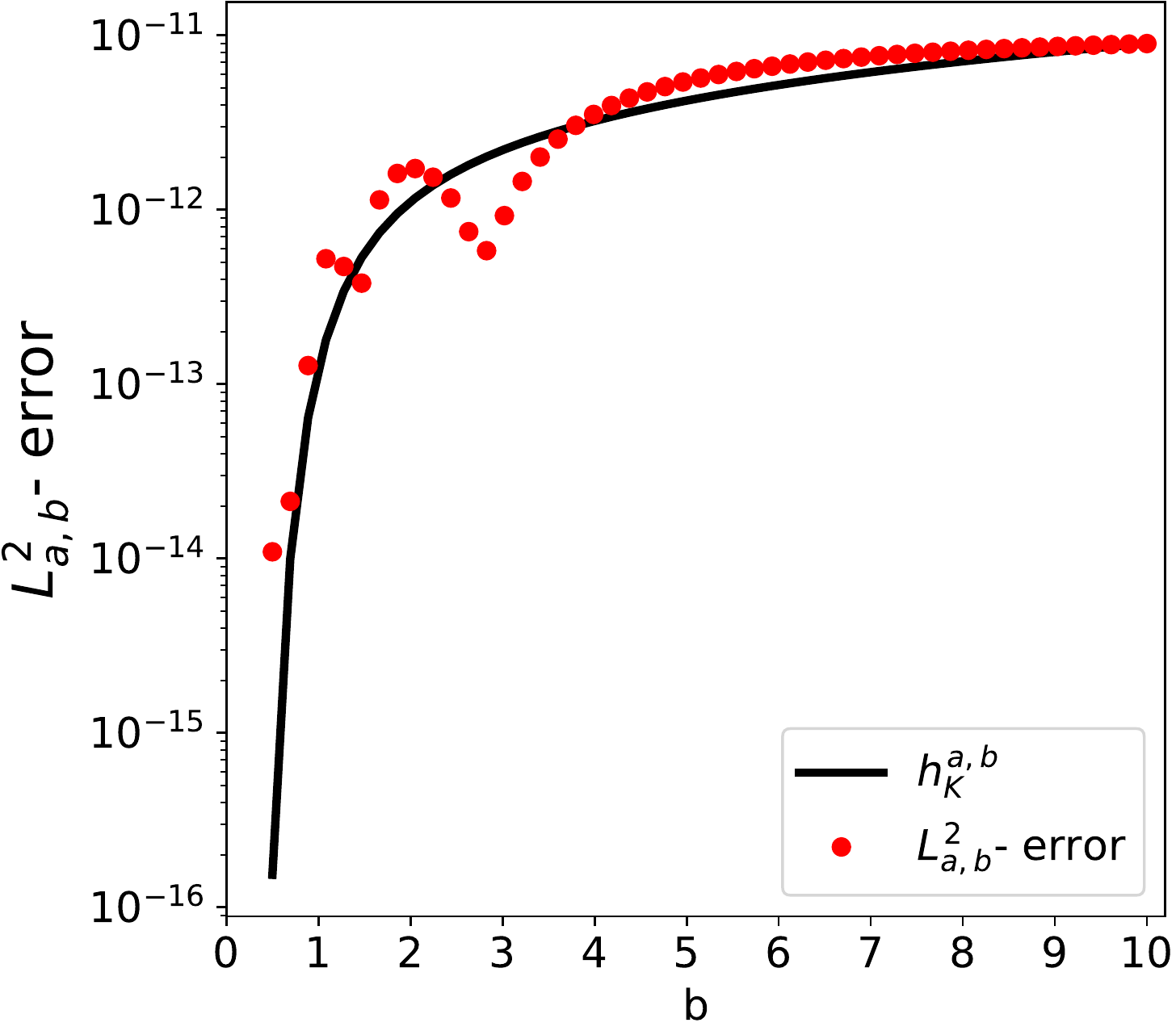}&
			\includegraphics[width=0.32\textwidth]{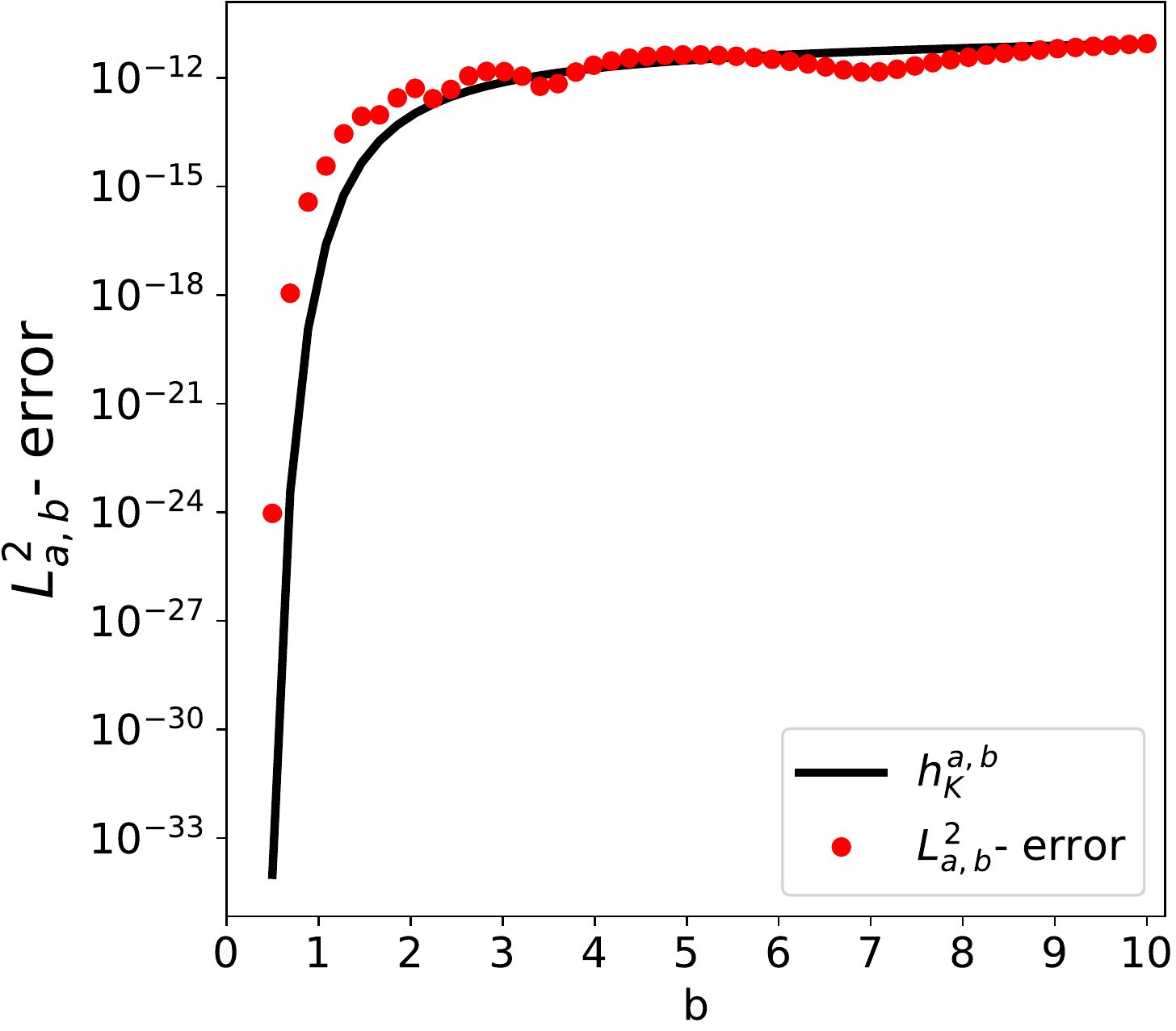}\\
		\end{tabular}
	}
	\caption{$L^2_{a,b}$-norm of the approximation error of $\varphi_{K,N}^{a,b}$ with $K=5.0$. In the first and second row, the error is a function of the truncation number $N$. In the third row, the error is a function of $b$. \label{approx_error}}
\end{figure}

\section{Pricing options with correlators}
\label{pricingsection}
We focus in this section on the pricing of call options. 
Given an $\F_t$-adapted stochastic process $X$ and the payoff function $\varphi_K$ in equation \eqref{payoff}, we want to compute the conditional expectation $\Pi_K(t) = \E\left[ \varphi_K(X(T))\left.\right|\F_t\right]$ with respect to the risk-neutral measure $\Q$. 
Starting from the Hermite series constructed in Section \ref{hermitesection}, we have a family of approximations depending on $a$, $b$ and $N$, namely
\begin{equation*}
	\Pi^{a,b}_{K,N}(t):= \E\left[\left. \varphi_{K,N}^{a,b}(X(T))\right|\F_t\right] \approx  
	\E\left[\left. \varphi_K(X(T))\right|\F_t\right]= \Pi_K(t).
\end{equation*}
These give an approximation for the price of a European-style call option with underlying process $X$\footnote{Notice that at this point the process $X$ is a generic $\F_t$-adapted stochastic process, and not the discrete average defined in equation \eqref{X}. Thus, in practice, Theorem \ref{price} gives an approximation for the price of a European-style call option with underlying process $X$. In Section \ref{asiansection}, this result will then be extended to Asian options (hence for $X$ as in equation \eqref{X}) by the multinomial theorem.}.
\begin{theorem}
	\label{price}
	The approximation by Hermite polynomials of the price of a call option is given by
	\begin{equation*}
		\Pi^{a,b}_{K,N}(t) = 
		 \sum_{k=0}^N\hat{\boldsymbol{\beta}}_{N,k+1}^{a,b}\frac{1}{b^k}\sum_{i=0}^k \binom{k}{i} (-a)^{k-i}\, \E\left[\left.X(T)^{i}\right|\F_t\right],
	\end{equation*}
	where $\hat{\boldsymbol{\beta}}_N^{a,b} := \boldsymbol{\beta}^{a,b \,\top}_N M_N$ with components $ \hat{\boldsymbol{\beta}}_{N,k+1}^{a,b}= \boldsymbol{\beta}^{a, b\,\top}_N\left(M_N\right)_{:,(k+1)}$ for $k=0,\dots, N$.
\end{theorem}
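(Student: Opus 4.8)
The plan is to start from the monomial representation of the truncated payoff series already established in equation \eqref{phiNK}. Writing $\hat{\boldsymbol{\beta}}_N^{a,b} := \boldsymbol{\beta}^{a,b\,\top}_N M_N$, that equation reads
\begin{equation*}
	\varphi_{K,N}^{a,b}(x) = \hat{\boldsymbol{\beta}}_N^{a,b}\,H_N\!\left(\frac{x-a}{b}\right) = \sum_{k=0}^N \hat{\boldsymbol{\beta}}_{N,k+1}^{a,b}\left(\frac{x-a}{b}\right)^k,
\end{equation*}
since the $(k+1)$-th entry of $H_N(y)$ is the monomial $y^k$. The first step is simply to evaluate this identity at the random point $x = X(T)$ and take the conditional expectation $\E[\,\cdot\mid\F_t]$ of both sides. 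Because the sum is \emph{finite} (it runs only up to the truncation level $N$), linearity of the conditional expectation applies immediately and lets me push $\E[\,\cdot\mid\F_t]$ inside the sum with no integrability or interchange-of-limit concerns; this yields $\Pi^{a,b}_{K,N}(t) = \sum_{k=0}^N \hat{\boldsymbol{\beta}}_{N,k+1}^{a,b}\,\E[\,((X(T)-a)/b)^k\mid\F_t]$.

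The second and only substantive step is to reduce each conditional expectation of a shifted, scaled power to conditional moments of $X(T)$ itself. Here I factor out the deterministic scale as $((X(T)-a)/b)^k = b^{-k}(X(T)-a)^k$ and expand $(X(T)-a)^k$ by the binomial theorem as $\sum_{i=0}^k \binom{k}{i} X(T)^i (-a)^{k-i}$. Taking conditional expectation term by term (again a finite sum, so linearity suffices) converts this into $b^{-k}\sum_{i=0}^k \binom{k}{i}(-a)^{k-i}\,\E[X(T)^i\mid\F_t]$, which is precisely the inner sum appearing in the statement.

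Substituting this expression back and reassembling gives the claimed formula verbatim. There is essentially no analytic obstacle, as everything is a finite linear combination; the only thing to be careful about is bookkeeping of indices, namely that the $(k+1)$-th component $\hat{\boldsymbol{\beta}}_{N,k+1}^{a,b} = \boldsymbol{\beta}^{a,b\,\top}_N (M_N)_{:,(k+1)}$ correctly pairs the change-of-basis matrix $M_N$ from \eqref{M} with the monomial of degree $k$. I would therefore state explicitly the convention that entry $k+1$ of $H_N$ (and hence of $\hat{\boldsymbol{\beta}}_N^{a,b}$) corresponds to degree $k$, so that the two finite summations over $k$ and over $i$ line up exactly as written.
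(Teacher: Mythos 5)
Your argument is correct and follows essentially the same route as the paper: both start from the monomial representation in equation \eqref{phiNK}, apply linearity of conditional expectation to the finite sum, and expand $(X(T)-a)^k$ by the binomial theorem to obtain the conditional moments of $X(T)$. The only cosmetic difference is that the paper phrases the first step in matrix form via the vector $\E[H_N((X(T)-a)/b)\mid\F_t]$ before expanding componentwise, which is exactly your componentwise computation written compactly.
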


We notice in Theorem \ref{price} that the vector $\boldsymbol{\beta}^{a,b}_N$ and the matrix $M_N$ depend on the choice of the orthogonal basis, the GHPs in our case, while the moments $\E\left[\left.X(T)^{i}\right|\F_t\right]$ only depend on 
the distribution of the random variable $X(T)$. Then, once the orthogonal basis is chosen, the approximation of the expected payoff $\Pi_{K,N}^{a,b}(t)$ is fully determined by the conditional moments of $X(T)$. 

\subsection{Asian options}
\label{asiansection}
We now consider Asian-style options as introduced at the beginning of Section \ref{intro}. For $m\ge 0$, $X$ is the discrete average of an $\F_t$-adapted stochastic process $Y$ over the period $(t,T]$, namely
\begin{equation}
	\label{X2}
	X(T) = \frac{1}{m+1}\sum_{j=0}^m Y(s_j) \qquad \mbox{ for } \qquad t < s_0<s_1 <\dots< s_m = T.
\end{equation}
From Theorem \ref{price}, 
we need the conditional moments $\E\left[\left.X(T)^{i}\right|\F_t\right]$, for $i=1, \dots, N$, $N$ being the truncation number for the Hermite series \eqref{phiNK}. These can be rewritten in terms of correlator-type expectations (as defined in equation \eqref{corr}) by means of the multinomial theorem.
\begin{proposition}
	\label{nth_power2}
	For every $1\le i \le N$, the conditional moments of the process $X$ in equation \eqref{X2} can be rewritten as a linear combination of correlator terms for the process $Y$, namely
	\begin{equation*}
		\E\left[\left.X(T)^{i}\right|\F_t\right]
		\!=\! \frac{1}{(m+1)^{i}} \sum_{|\boldsymbol{k}|=i} \frac{i!}{k_0!k_1!\cdots k_{m}!}  \E\left[\left.Y(s_0)^{k_0}Y(s_1)^{k_1}\cdots Y(s_{m})^{k_{m}}\right|\F_t\right]
	\end{equation*}	
	where the summation is over the multi-indexes $\boldsymbol{k}=(k_0, \dots, k_m)$ with $|\boldsymbol{k}| = k_0+k_1+\dots+k_{m}$.
\end{proposition}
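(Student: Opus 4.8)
The plan is to reduce the statement to a direct application of the multinomial theorem followed by the linearity of conditional expectation. First I would substitute the definition \eqref{X2} of $X(T)$ and factor out the deterministic scaling constant, writing
\begin{equation*}
	X(T)^i = \frac{1}{(m+1)^i}\left(\sum_{j=0}^m Y(s_j)\right)^{\!i}.
\end{equation*}

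Next I would apply the multinomial theorem to the $(m+1)$-fold sum raised to the power $i$. This expands $\left(\sum_{j=0}^m Y(s_j)\right)^i$ into the finite sum $\sum_{|\boldsymbol{k}|=i}\frac{i!}{k_0!\cdots k_m!}\,Y(s_0)^{k_0}\cdots Y(s_m)^{k_m}$, where the index ranges over the multi-indices $\boldsymbol{k}=(k_0,\dots,k_m)$ with $|\boldsymbol{k}|=k_0+\dots+k_m=i$. The number of such multi-indices is $\binom{i+m}{m}$, so the expansion is in particular a finite linear combination of products of powers of $Y$ evaluated at the sampling times $s_0,\dots,s_m$.

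Finally I would take the conditional expectation $\E[\,\cdot\,|\F_t]$ of both sides. Since the expansion is a finite sum and each summand $Y(s_0)^{k_0}\cdots Y(s_m)^{k_m}$ is integrable, I may interchange the expectation with the summation by linearity of the conditional expectation, with the multinomial coefficients $\frac{i!}{k_0!\cdots k_m!}$ and the factor $\frac{1}{(m+1)^i}$ carried through unchanged. This produces exactly the claimed identity and identifies each summand as a correlator term in the sense of \eqref{corr}.

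I do not expect any genuine obstacle here: the result is essentially a bookkeeping identity. The only points requiring a word of care are that the multinomial expansion is finite, so that termwise linearity applies without any convergence argument, and that each correlator term is well defined and integrable, the latter being guaranteed by the polynomial structure of $Y$ recalled in Section \ref{polpross}, under which mixed moments of every order are finite.
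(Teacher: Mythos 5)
Your proposal is correct and follows exactly the route the paper intends: the paper states that the identity follows "by means of the multinomial theorem" and does not even include a separate proof in the appendix, treating it as the immediate bookkeeping consequence of expanding $\bigl(\sum_{j=0}^m Y(s_j)\bigr)^i$ and applying linearity of conditional expectation to the finite sum. Your added remarks on finiteness of the expansion and integrability of the mixed moments (guaranteed by the polynomial structure of $Y$) are appropriate and do not change the argument.
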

This allows to state the pricing formula for Asian options.
\begin{theorem}
	\label{price2}
	For every $N\ge 0$, the price of a discretely sampled arithmetic Asian option can be approximated with generalized Hermite polynomials by 
	\begin{equation*}
		\Pi_{K,N}^{a,b}(t)=\sum_{k=0}^N\sum_{i=0}^k\sum_{|\boldsymbol{k}|=i}  \binom{k}{i} \frac{\hat{\boldsymbol{\beta}}_{N,k+1}^{a,b}(-a)^{k-i}}{(m+1)^{i}b^k}\,  \frac{i!}{k_0!k_1!\cdots k_{m}!}  \E\left[\left.Y(s_0)^{k_0}Y(s_1)^{k_1}\cdots Y(s_{m})^{k_{m}}\right|\F_t\right],
	\end{equation*}
	where $\hat{\boldsymbol{\beta}}_N^{a,b} := \boldsymbol{\beta}^{a,b \,\top}_N M_N$ with components $ \hat{\boldsymbol{\beta}}_{N,k+1}^{a,b}= \boldsymbol{\beta}^{a, b\,\top}_N\left(M_N\right)_{:,(k+1)}$, $k=0,\dots, N$.
\end{theorem}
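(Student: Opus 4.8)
The plan is to derive Theorem \ref{price2} by simply combining the two preceding results: the European-style pricing formula of Theorem \ref{price}, which is valid for any $\F_t$-adapted underlying, and the multinomial expansion of the conditional moments in Proposition \ref{nth_power2}, which is specific to the discrete average $X$ in equation \eqref{X2}. Theorem \ref{price} already writes the approximate price $\Pi_{K,N}^{a,b}(t)$ as a double sum of the conditional moments $\E[X(T)^i \mid \F_t]$, with coefficients built only from $\hat{\boldsymbol{\beta}}_N^{a,b}$, the scale $b$, and the drift $a$. The sole remaining task is to substitute, for each $i$, the correlator representation of these moments.

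Concretely, I would take the formula
\begin{equation*}
\Pi_{K,N}^{a,b}(t) = \sum_{k=0}^N\hat{\boldsymbol{\beta}}_{N,k+1}^{a,b}\frac{1}{b^k}\sum_{i=0}^k \binom{k}{i} (-a)^{k-i}\, \E\left[\left.X(T)^{i}\right|\F_t\right]
\end{equation*}
from Theorem \ref{price} and replace each $\E[X(T)^i \mid \F_t]$ by the expression given in Proposition \ref{nth_power2}, namely $(m+1)^{-i}\sum_{|\boldsymbol{k}|=i}\frac{i!}{k_0!\cdots k_m!}\E[Y(s_0)^{k_0}\cdots Y(s_m)^{k_m}\mid\F_t]$. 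For $i=0$ both sides equal $1$ (the single multi-index $\boldsymbol{k}=\boldsymbol{0}$), so the substitution is valid across the whole range $0\le i\le k$ appearing in Theorem \ref{price}.

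After substitution I would collect all scalar factors carrying the same indices: the component $\hat{\boldsymbol{\beta}}_{N,k+1}^{a,b}$, the binomial $\binom{k}{i}$, the powers $(-a)^{k-i}$ and $b^{-k}$, the averaging factor $(m+1)^{-i}$, and the multinomial weight $i!/(k_0!\cdots k_m!)$. Writing the three summations in the nested order $\sum_{k=0}^N\sum_{i=0}^k\sum_{|\boldsymbol{k}|=i}$ then reproduces the stated closed form verbatim. As every sum is finite, the interchange and grouping of terms is merely a rearrangement and raises no convergence concern.

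There is no genuine analytic difficulty here; the entire content is bookkeeping. The one point needing care is the placement of the averaging factor and the innermost sum: the $(m+1)^{-i}$ and the constrained sum $\sum_{|\boldsymbol{k}|=i}$ must be attached to the $i$-indexed layer, since they arise from expanding $X(T)^i$, and must not be confused with the $k$-indexed layer inherited from Theorem \ref{price}. Checking that the indices of the two input formulas line up in exactly this way is effectively the whole proof.
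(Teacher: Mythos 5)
Your proposal is correct and matches the paper's own (implicit) argument: the paper presents Theorem \ref{price2} as an immediate consequence of substituting the correlator expansion of Proposition \ref{nth_power2} into the moment-based formula of Theorem \ref{price}, with no further content beyond the index bookkeeping you describe. Your remark that the $i=0$ term is consistently handled by the single zero multi-index is a sensible check the paper leaves unstated.
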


From Theorem \ref{price} and \ref{price2}, we have explicit approximation formulas for the price of European and Asian options with call-type payoff function. However, these formulas require the computation of the conditional moments of the underlying stochastic process in the case of Theorem \ref{price}, and the computation of conditional correlators in the case of Theorem \ref{price2}. For a jump-diffusion polynomial process, both conditional moments and correlators admit a closed formulation.

\subsection{Error analysis and scale criterion}
Let $\psi_{X(T)}$ be the density function of the process $X$ at time $T$. We then estimate the error in approximating $\Pi_K(t)$ with generalized Hermite polynomials.
\begin{theorem}
	\label{error}
	If the  density function $\psi_{X(T)}$ satisfies the condition
	\begin{equation}
		\label{psicond}
		\int_{\R}    \psi_{X(T)}^2(x)\omega_{a,b}^{-1}(x)dx < \infty,
	\end{equation}
	then the absolute error in approximating the option price $\Pi_K(t)$ with $\Pi_{K,N}^{a,b}(t)$ is bounded by the $L^2_{a,b}$-norm of the error in approximating the payoff function $\varphi_K$ with $\varphi_{K,N}^{a,b}$, namely
	\begin{equation*}
		\left| \Pi_K(t)-\Pi_{K,N}^{a,b}(t)\right|\le C_{a,b} \left \|\varphi_K-\varphi_{K,N}^{a,b}\right \|_{L^2_{a,b}},
	\end{equation*}
	where $C_{a,b}:=\left( \int_{\R}    \psi_{X(T)}^2(x)\omega_{a,b}^{-1}(x)dx\right)^{\frac{1}{2}}$.
\end{theorem}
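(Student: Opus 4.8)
The plan is to recognise the pricing error as a single inner product in $L^2_{a,b}$ and then estimate it by the Cauchy--Schwarz inequality, with the weight function $\omega_{a,b}$ inserted artificially so that the density $\psi_{X(T)}$ is absorbed into the inner product. First I would use the definitions $\Pi_K(t)=\E[\varphi_K(X(T))\,|\,\F_t]$ and $\Pi_{K,N}^{a,b}(t)=\E[\varphi_{K,N}^{a,b}(X(T))\,|\,\F_t]$ together with linearity of the conditional expectation to write
\[
\Pi_K(t)-\Pi_{K,N}^{a,b}(t)=\E\left[\left.\left(\varphi_K-\varphi_{K,N}^{a,b}\right)(X(T))\right|\F_t\right]=\int_{\R}\left(\varphi_K-\varphi_{K,N}^{a,b}\right)(x)\,\psi_{X(T)}(x)\,dx,
\]
where the second equality expresses the conditional expectation as an integral against the (conditional) density $\psi_{X(T)}$ of $X(T)$. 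If one keeps the discount factor $e^{-r(T-t)}$ from \eqref{payoff} it simply multiplies the whole expression, and since $e^{-r(T-t)}\le 1$ it can only tighten the bound; I would therefore work, as in Section \ref{pricingsection}, with the undiscounted expectation.

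Next comes the key manoeuvre. Setting $g:=\varphi_K-\varphi_{K,N}^{a,b}$ and $h:=\psi_{X(T)}/\omega_{a,b}$, I would rewrite the integrand as $g(x)\,h(x)\,\omega_{a,b}(x)$, so that the integral is exactly the $L^2_{a,b}$ inner product $\langle g,h\rangle_{L^2_{a,b}}$. Applying Cauchy--Schwarz in $L^2_{a,b}$ then yields
\[
\left|\Pi_K(t)-\Pi_{K,N}^{a,b}(t)\right|=\left|\langle g,h\rangle_{L^2_{a,b}}\right|\le \|g\|_{L^2_{a,b}}\,\|h\|_{L^2_{a,b}}.
\]

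It remains to identify the two norms. The first is $\|g\|_{L^2_{a,b}}=\|\varphi_K-\varphi_{K,N}^{a,b}\|_{L^2_{a,b}}$, precisely the quantity on the right-hand side of the claim (and for which Proposition \ref{errorphi} already supplies a closed form). For the second, a direct computation using $\omega_{a,b}^{-1}\cdot\omega_{a,b}=1$ gives
\[
\|h\|_{L^2_{a,b}}^2=\int_{\R}\frac{\psi_{X(T)}(x)^2}{\omega_{a,b}(x)^2}\,\omega_{a,b}(x)\,dx=\int_{\R}\psi_{X(T)}^2(x)\,\omega_{a,b}^{-1}(x)\,dx=C_{a,b}^2,
\]
so that $\|h\|_{L^2_{a,b}}=C_{a,b}$ and the stated bound follows. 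The role of hypothesis \eqref{psicond} is exactly to make this last integral finite, i.e.\ to ensure $h\in L^2_{a,b}$; without it the right-hand side of Cauchy--Schwarz would be infinite and the estimate vacuous. That $g\in L^2_{a,b}$ is automatic, since $\varphi_K$ has linear growth and $\varphi_{K,N}^{a,b}$ is a polynomial, both square-integrable against the Gaussian weight $\omega_{a,b}$.

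I do not expect a serious analytic obstacle: once the weight is inserted, the proof is essentially a one-line application of Cauchy--Schwarz. The only point demanding care is the conditioning on $\F_t$, since $\psi_{X(T)}$ must be read as a regular conditional density given $\F_t$; the inequality is then understood to hold realisation by realisation (equivalently, one may simply take $t=0$ with $\F_0$ trivial, so that $\psi_{X(T)}$ is the ordinary density and $C_{a,b}$ is deterministic). Verifying that this conditional density exists and that the interchange of conditional expectation and integration is legitimate is the one place where a little measure-theoretic bookkeeping is needed.
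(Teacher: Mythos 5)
Your proof is correct and follows essentially the same route as the paper's: both express the pricing error as an integral of $\varphi_K-\varphi_{K,N}^{a,b}$ against the density $\psi_{X(T)}$, insert the weight $\omega_{a,b}$, and apply the Cauchy--Schwarz inequality to split off the factor $C_{a,b}$. The only cosmetic difference is that the paper first moves the absolute value inside the expectation via Jensen's inequality before applying Cauchy--Schwarz, whereas you apply Cauchy--Schwarz directly to the signed inner product $\langle g,h\rangle_{L^2_{a,b}}$; the two are equivalent.
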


\begin{remark}
	To have convergence of the price approximation, by Theorem \ref{error} we need $\psi_{X(T)}$ to satisfy condition \eqref{psicond}. It is obvious that this condition is quite restrictive, since
	it basically asks the tails of $\psi_{X(T)}$ to vanish faster than the tails of a Gaussian density function, which is not true in general for Lévy processes, characterized by heavy tails. However, we point out that condition \eqref{psicond} is only a sufficient condition for proving the error bound in Theorem \ref{error}, and not a necessary condition for convergence. We shall come back to this later.
\end{remark}
If $X(T)$ follows a Gaussian distribution with mean $\mu$ and variance $\sigma^2$, by direct computation one obtains that
\begin{equation}
	\label{Cab}
	C_{a,b}^2 = \int_{\R}    \psi_{X(T)}^2(x)\omega_{a,b}^{-1}(x)dx = \frac{b}{\sqrt{2\pi\sigma^2}}\frac{e^{\frac{(a-\mu)^2}{(2b^2-\sigma^2)}}}{\sqrt{2b^2-\sigma^2}},
\end{equation}
which leads to a more explicit formulation for condition \eqref{psicond}.
\begin{proposition}
	\label{condb}
	If the random variable $X(T)$ follows a Gaussian distribution with mean $\mu$ and variance $\sigma^2$, then condition \eqref{psicond} is equivalent to
	\begin{equation}
		\label{psicondG}
		b > \frac{\sigma}{\sqrt{2}}=: \underline{b}_{\sigma}
	\end{equation}
	where $b$ is the scale for the GHPs, and for $b = \underline{b}_{\sigma}$ we expect instabilities in the approximation.
\end{proposition}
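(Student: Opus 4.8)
The plan is to reduce the finiteness of the integral in \eqref{psicond} to a sign condition on the leading coefficient of a quadratic exponent, which translates directly into the bound on $b$. Since $X(T)$ is Gaussian with mean $\mu$ and variance $\sigma^2$, its density is $\psi_{X(T)}(x) = (2\pi\sigma^2)^{-1/2}\exp\!\left(-(x-\mu)^2/(2\sigma^2)\right)$, while $\omega_{a,b}^{-1}(x) = \exp\!\left((x-a)^2/(2b^2)\right)$. First I would write the integrand of \eqref{psicond} as $(2\pi\sigma^2)^{-1}e^{g(x)}$ with
\[
g(x) := -\frac{(x-\mu)^2}{\sigma^2} + \frac{(x-a)^2}{2b^2},
\]
a quadratic in $x$ whose leading coefficient is $\alpha := \frac{1}{2b^2}-\frac{1}{\sigma^2} = \frac{\sigma^2-2b^2}{2b^2\sigma^2}$.

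The key step is the elementary fact that $\int_{\R}e^{g(x)}\,dx$ converges if and only if $\alpha<0$. Since $b>0$, we have $\alpha<0 \Leftrightarrow \sigma^2-2b^2<0 \Leftrightarrow b>\sigma/\sqrt{2}=\underline{b}_\sigma$, which is exactly \eqref{psicondG}. In the convergent regime $b>\underline{b}_\sigma$ I would complete the square in $g$ and evaluate the resulting Gaussian integral; this recovers the closed form \eqref{Cab}, and the factor $\sqrt{2b^2-\sigma^2}$ in its denominator is proportional to $\sqrt{-\alpha}$, so that $C_{a,b}^2$ carries the prefactor $(-\alpha)^{-1/2}$.

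It remains to handle $b\le\underline{b}_\sigma$. For $b<\underline{b}_\sigma$ one has $\alpha>0$, so $g(x)\to+\infty$ as $|x|\to\infty$ and the integral diverges. At the critical value $b=\underline{b}_\sigma$ one has $\alpha=0$ and $g$ degenerates to the affine function $g(x)=\frac{2(\mu-a)}{\sigma^2}x + \frac{a^2-\mu^2}{\sigma^2}$: if $a\neq\mu$ the integrand grows like $e^{cx}$ with $c\neq0$, and if $a=\mu$ it is constant in $x$, so in either case $\int_\R e^{g(x)}\,dx=+\infty$. Thus \eqref{psicond} fails for every $b\le\underline{b}_\sigma$, and as $b\downarrow\underline{b}_\sigma$ the quantity $C_{a,b}^2$ blows up, since its prefactor $(2b^2-\sigma^2)^{-1/2}\to+\infty$; this is the announced source of instability. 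I expect no essential difficulty in the argument; the only point requiring care is the boundary case $b=\underline{b}_\sigma$, where the quadratic degenerates and divergence must be argued directly rather than read off from \eqref{Cab}, whose derivation presupposes $2b^2-\sigma^2>0$.
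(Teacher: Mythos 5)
Your proof is correct, and it is in fact more complete than the one in the paper. The paper's own argument is a one-line read-off: it points at the closed form \eqref{Cab}, observes that the factor $\sqrt{2b^2-\sigma^2}$ forces $2b^2-\sigma^2>0$, and notes the singularity at $b=\underline{b}_{\sigma}$. Strictly speaking that is slightly circular, since \eqref{Cab} is itself derived under the assumption that the integral converges, so the formula "not making sense" for $b\le\underline{b}_{\sigma}$ does not by itself prove divergence there. Your route repairs this: by writing the integrand as $(2\pi\sigma^2)^{-1}e^{g(x)}$ with $g$ quadratic of leading coefficient $\alpha=\frac{1}{2b^2}-\frac{1}{\sigma^2}$, you get the genuine equivalence (convergence iff $\alpha<0$ iff $b>\underline{b}_{\sigma}$), you recover \eqref{Cab} by completing the square in the convergent regime, and you treat the degenerate boundary case $\alpha=0$ explicitly, showing the integrand reduces to a constant or a one-sided exponential and hence still diverges. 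What the paper's approach buys is brevity, leaning on the already-displayed formula \eqref{Cab}; what yours buys is a self-contained and logically airtight proof of the "only if" direction, including the critical value $b=\underline{b}_{\sigma}$ where the paper merely asserts singularity.
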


\begin{remark}
	We point out that condition \eqref{psicond} extends \cite[Proposition 3.1]{willems2019}, since it basically coincides with asking that the likelihood ratio function $\eta$ defined by $\psi_{X(T)}(x)= \eta(x)\omega_{a,b}(x)$, is such that $\eta \in L^2_{a,b}$, but for any generic density function $\psi_{X(T)}$ (and not only for a log-normal density function). Indeed, if considering $\psi_{X(T)}$ and $\omega_{a,b}$ to be the density functions of two log-normal distributions with mean $\mu$ and variance $\sigma^2$, respectively, with mean $a$ and variance $b^2$, then $C_{a,b}$ takes a similar form as in equation \eqref{Cab}. In particular, the squared root $\sqrt{2b^2-\sigma^2}$ still appears. Then, with abuse of notation, by setting $\sigma^2 = \sigma^2T$, we recover \cite[Proposition 3.1]{willems2019} which now coincides with condition \eqref{psicondG}\footnote{To be more precise, the setting in \cite{willems2019} defines a likelihood ratio function $\ell$ in terms a the weight $\omega$ and a density $g$, where the latter one is the density function of the average price process defined in a continuos manner starting from a log-normally distributed underlying spot price. Then $g$ does not define a log-normal distribution. However, its tails are dominated by the tails of a log-normal density function, as proved in \cite{willems2019}.}.
\end{remark}

\begin{corollary}
	\label{cor}
	In the same setting of Proposition \ref{condb}, if $a=\mu$ then $C_{a,b}$ is a monotone decreasing function of the scale $b$ with limit $\frac{1}{\sqrt[4]{4\pi\sigma^2}}$.
\end{corollary}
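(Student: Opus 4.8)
The plan is to specialize the explicit formula \eqref{Cab} to the case $a=\mu$ and then study the resulting one-variable function of $b$ by elementary calculus. First I would set $a=\mu$ in \eqref{Cab}, which makes the exponential factor $e^{(a-\mu)^2/(2b^2-\sigma^2)}$ collapse to $1$, leaving
\[
C_{\mu,b}^2 = \frac{1}{\sqrt{2\pi\sigma^2}}\,\frac{b}{\sqrt{2b^2-\sigma^2}}.
\]
By Proposition \ref{condb} the relevant range is $b>\underline{b}_\sigma=\sigma/\sqrt{2}$, on which $2b^2-\sigma^2>0$ and the expression is well defined and strictly positive.

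Next I would establish monotonicity. Since $C_{\mu,b}>0$, it suffices to show that $b\mapsto C_{\mu,b}^2$ is strictly decreasing, and since the prefactor $(2\pi\sigma^2)^{-1/2}$ is constant, it further suffices to analyse $g(b):=b^2/(2b^2-\sigma^2)$, the square of $b/\sqrt{2b^2-\sigma^2}$. A direct differentiation gives
\[
g'(b)=\frac{2b(2b^2-\sigma^2)-b^2\cdot 4b}{(2b^2-\sigma^2)^2}=\frac{-2b\sigma^2}{(2b^2-\sigma^2)^2},
\]
which is strictly negative for $b>\underline{b}_\sigma$. Hence $g$, and therefore both $C_{\mu,b}^2$ and $C_{\mu,b}$, are strictly decreasing on $(\underline{b}_\sigma,\infty)$.

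Finally I would compute the limit. As $b\to\infty$ we have $g(b)=b^2/(2b^2-\sigma^2)\to 1/2$, so
\[
\lim_{b\to\infty} C_{\mu,b}^2=\frac{1}{\sqrt{2\pi\sigma^2}}\cdot\frac{1}{\sqrt{2}}=\frac{1}{\sqrt{4\pi\sigma^2}},
\]
and taking square roots yields $\lim_{b\to\infty} C_{\mu,b}=1/\sqrt[4]{4\pi\sigma^2}$, as claimed. I do not anticipate a genuine obstacle: the whole argument is a routine calculus computation once $a=\mu$ removes the exponential term. The only points requiring a little care are restricting attention to the admissible range $b>\sigma/\sqrt{2}$ where the formula is valid, and observing that strict monotonicity of the positive square $C_{\mu,b}^2$ transfers to $C_{\mu,b}$ itself.
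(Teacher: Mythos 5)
Your proposal is correct and follows essentially the same route as the paper: specialize equation \eqref{Cab} to $a=\mu$, verify that the resulting expression is strictly decreasing by differentiation (the paper differentiates $C_{a,b}^2$ directly, obtaining $-\frac{1}{\sqrt{2\pi\sigma^2}}\frac{\sigma^2}{(2b^2-\sigma^2)^{3/2}}<0$, while you pass to the square $g(b)=b^2/(2b^2-\sigma^2)$ first, which is an immaterial difference), and then compute the limit as $b\to\infty$. No gaps.
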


\begin{remark}
	Theorem \ref{error} shows that the absolute error in approximating the option price $\Pi_K(t)$ with generalized Hermite polynomials is bounded by the product of $C_{a,b}$ with the $L^2_{a,b}$-norm of the error in approximating the payoff function $\varphi_K$. In particular, due to Proposition \ref{errorphi}, this last term is an increasing function of $b$ (if $a$ is in a neighbour of $K$), while, according to Corollary \ref{cor}, $C_{a,b}$ is a decreasing function of $b$, at least in the Gaussian case. 
\end{remark}

\section{Polynomial processes and correlator formula}
\label{polpross}
Following \cite{filipovic2020}, we consider a jump-diffusion operator on $\R$ of the form
\begin{equation*}
	\G f(x) = b(x)f'(x) + \frac{1}{2}\sigma^2(x)f''(x)+\int_{\R}\left( f(x+z)-f(x)-f'(x)z \right)\ell(x,dz),
\end{equation*} 
for some measurable maps $b:\R \to \R$ and $\sigma:\R \to \R$, and a transition kernel $\ell:\R\times \R \to \R$, such that the conditions in \cite[Lemma 1]{filipovic2020} are satisfied, namely
\begin{equation}
\label{poldefcond_FL}
b\in \mathrm{Pol}_1(\R),\qquad
\sigma^2 +\int_{\R}z^2\ell(\cdot,dz)\in \mathrm{Pol}_2(\R) \quad \mbox{and}\quad
\int_{\R}z^m\ell(\cdot,dz)\in \mathrm{Pol}_m(\R)  \mbox{ for all } m\ge 3.
\end{equation}
The operator $\G$ is then called \emph{polynomial} in the sense of \cite[Definition 1]{filipovic2020}, and the process $Y$ having $\G$ as extended generator is a \emph{polynomial jump-diffusion process}.

As a consequence of condition \eqref{poldefcond_FL}, a polynomial generator $\G$ can be expressed in matrix form. This leads to the so-called \emph{generator matrix}, which strictly depends on the polynomial basis of choice. By considering the vector valued function $H_n$ introduced in Section \ref{hermitesection}, for every $n\ge1$, the generator matrix associated with $\G$ is the matrix $G_n \in \R^{(n+1)\times(n+1)}$ satisfying $\G H_n(x) = G_nH_n(x).$ For $Y$ polynomial process and $p \in  \mathrm{Pol}_n(\R)$, the conditional expectation $\E\left[p(Y(T))\left.\right|\F_t\right]$ is then a polynomial function in $Y(t)$, $0 \le t\le T$, and is given in closed form in  \cite[Theorem 2.5]{filipovic2020} and in the following theorem..
\begin{theorem}
	\label{propmoments}
	For $Y$ polynomial process and $p\in \mathrm{Pol}_n(\R)$, the following moment formula holds
	\begin{equation*}
		\E\left[p(Y(T))\left.\right|\F_t\right] =  \vec{p}_n^{\top}e^{G_n(T-t)}H_n(Y(t)), \qquad 0\le t\le T,
	\end{equation*}
	with $\vec{p}_n\in\R^{n+1}$ vector of coefficients for $p$ with respect to $H_n$ and $G_n \in \R^{(n+1)\times(n+1)}$ generator matrix.
\end{theorem}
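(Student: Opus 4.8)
The plan is to prove the moment formula by exhibiting an explicit martingale and applying the tower property. First I would record the structural consequence of condition \eqref{poldefcond_FL}: the generator $\G$ maps $\mathrm{Pol}_n(\R)$ into itself, which is exactly what makes the generator matrix $G_n$ well defined through $\G H_n(x) = G_n H_n(x)$. Consequently, for any constant coefficient vector $c\in\R^{n+1}$ the polynomial $x\mapsto c^{\top}H_n(x)$ satisfies $\G\!\left(c^{\top}H_n\right)(x) = c^{\top}G_n H_n(x)$, since $\G$ acts componentwise and linearly on $H_n$. This is the algebraic fact driving the whole argument.

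Next I would introduce the candidate martingale
$$M(s) := \vec{p}_n^{\top}e^{G_n(T-s)}H_n(Y(s)), \qquad t\le s\le T,$$
and compute its dynamics via the Dynkin (Itô) formula for jump-diffusions. Writing $c(s)^{\top}:=\vec{p}_n^{\top}e^{G_n(T-s)}$, the time derivative of the deterministic prefactor produces the finite-variation contribution $-c(s)^{\top}G_n H_n(Y(s))$, while the generator applied to the spatial part produces $c(s)^{\top}\G H_n(Y(s)) = c(s)^{\top}G_n H_n(Y(s))$. These two ``$ds$'' terms cancel exactly (the commutation of $G_n$ with $e^{G_n(T-s)}$ is what aligns them), so $M$ has no drift and is a local martingale.

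The main obstacle is promoting this local martingale to a genuine martingale, which is what licenses taking conditional expectations. This requires integrability, i.e. moment bounds on $Y$ ensuring, say, $\sup_{t\le s\le T}\E\!\left[|M(s)|\right]<\infty$ together with a uniform-integrability or square-integrability estimate over $[t,T]$. For polynomial jump-diffusions this is the delicate point, and here I would invoke the moment-existence results for polynomial processes (as in \cite{filipovic2020}), under which $M$ is a true martingale on $[t,T]$. Granting this, the tower property yields $M(t)=\E\!\left[M(T)\mid\F_t\right]$; evaluating the endpoints gives $M(T)=\vec{p}_n^{\top}H_n(Y(T))=p(Y(T))$ (since $e^{G_n\cdot 0}$ is the identity) and $M(t)=\vec{p}_n^{\top}e^{G_n(T-t)}H_n(Y(t))$, which is the claimed identity.

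As a cross-check I would also derive the formula analytically, bypassing integrability subtleties. Seeking the conditional expectation inside $\mathrm{Pol}_n(\R)$ as $u(s,x)=c(s)^{\top}H_n(x)$, the backward Kolmogorov equation $\partial_s u + \G u = 0$ with terminal data $u(T,\cdot)=p$ reduces, using the linear independence of the components of $H_n$, to the linear matrix ODE $c'(s) = -G_n^{\top}c(s)$ with $c(T)=\vec{p}_n$. Its solution $c(s)=e^{G_n^{\top}(T-s)}\vec{p}_n$ gives $u(s,x)=\vec{p}_n^{\top}e^{G_n(T-s)}H_n(x)$, reproducing the same expression and confirming the result.
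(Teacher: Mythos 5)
Your argument is correct, but note that the paper itself offers no proof of this statement: it is imported verbatim from the cited reference (Theorem 2.5 of Filipovi\'c--Larsson), so there is nothing in Appendix \ref{proof} to compare against. Your martingale construction $M(s) = \vec{p}_n^{\top}e^{G_n(T-s)}H_n(Y(s))$ with the drift cancellation coming from the commutation of $G_n$ with its own exponential, together with the Kolmogorov-backward-equation cross-check, is precisely the standard proof underlying that citation; you correctly identify the only delicate step (upgrading the local martingale to a true martingale via moment bounds on $H_n(Y(s))$) and appropriately defer it to the moment-existence results for polynomial jump-diffusions rather than glossing over it.
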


\begin{corollary}
	\label{prop2}
	For every $n\ge 0$, we then get that $\E\left[\left.Y(T)^{n}\right|\F_t\right] = \boldsymbol{e}_{n+1,n+1}^{\top}\, e^{G_n(T-t)}H_n(Y(t))$.
\end{corollary}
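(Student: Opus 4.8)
The plan is to obtain this corollary as an immediate specialization of the moment formula in Theorem~\ref{propmoments}. The only substantive work is to choose the correct polynomial $p$ and to identify its coefficient vector with respect to the monomial basis $H_n$.

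First I would fix $n\ge 0$ and apply Theorem~\ref{propmoments} with $p(x):=x^n$. This is legitimate because $x^n\in\mathrm{Pol}_n(\R)$, so the hypotheses of the theorem hold and we may write
$$\E\left[\left.Y(T)^n\right|\F_t\right]=\E\left[p(Y(T))\left.\right|\F_t\right]=\vec{p}_n^{\top}e^{G_n(T-t)}H_n(Y(t)).$$
Next I would determine $\vec{p}_n$, the vector of coefficients of $p$ relative to $H_n(x)=(1,x,x^2,\dots,x^n)^{\top}$. Since $p(x)=x^n$ is precisely the last component of $H_n$, its expansion in the monomial basis carries a $1$ in the final slot and zeros elsewhere; that is, $\vec{p}_n=\boldsymbol{e}_{n+1,n+1}$, the standard basis vector of $\R^{n+1}$ supported on its $(n+1)$-th entry. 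Substituting this into the displayed identity yields exactly the claimed formula.

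There is essentially no obstacle here beyond bookkeeping: the entire content is the observation that the monomial $x^n$ occupies the last coordinate of $H_n$, so that reading off its coefficient vector is immediate. The only point requiring care is the indexing convention for $\boldsymbol{e}_{n+1,n+1}$, namely that it denotes a length-$(n+1)$ vector whose single nonzero entry sits in position $n+1$; once this is fixed, the corollary follows in one line from Theorem~\ref{propmoments}.
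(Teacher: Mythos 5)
Your proof is correct and matches the paper's (implicit) reasoning: the corollary is stated as an immediate specialization of Theorem~\ref{propmoments} to $p(x)=x^n$, whose coefficient vector in the monomial basis $H_n$ is $\boldsymbol{e}_{n+1,n+1}$, exactly as you argue. The paper gives no separate proof, so your one-line derivation is precisely what is intended.
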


In \cite[Theorem 4.5]{benth2019}, the moment formula is extended to correlators, namely conditional expectations of products of polynomial functions in the polynomial process $Y$ evaluated at different time points. We shall not discuss here the meaning of the symbols appearing in the formula. However the main details can be found in Appendix \ref{appA}. For more interested readers, we refer to \cite{benth2019}.

\begin{theorem}
	\label{theorem}
	For $m\ge1$, we consider $m+1$ polynomial functions $p_k\in \mathrm{Pol}_{n_k}(\R)$, $k=0,\dots,m$, in the polynomial process $Y$, evaluated at different time points, $t < s_0 < s_1 < \dots < s_m$, and with vector of coefficients $\vec{p}_{k,n}\in \R^{n+1}$, for $n= \max\{n_0,\dots,n_m\}$. Then, there exist $m+1$ matrices $\tilde{G}_n^{(r)}\in \R^{(n+1)^{r+1}\times(n+1)^{r+1}}$, $r=0, \dots, m$, 
	such that the following expectation formula holds:
	\begin{multline*}
		\E\left[p_m\left(Y(s_0)\right)p_{m-1}\left(Y(s_1)\right)\cdot \dots \cdot p_0\left(Y(s_m)\right)\left.\right|\F_t\right] \\
		= \vec{p}_{m,n}^{\top}\left\{ vec^{-1} \circ e^{\tilde{G}_n^{(m)}(s_0-t)}\circ vec\left( H_n(Y(t))^{\top}\otimes^{m} H_n(Y(t))\right)\right\} \cdot \\  \cdot \prod_{k=1}^{m}e^{\tilde{G}_n^{(m-k)\top}(s_k-s_{k-1})}\left\{I_{n+1}\otimes^{m-k}\vec{p}_{m-k,n} \right\}
	\end{multline*}
where $\prod_{k=1}^{m}$ is the product obtained starting with the matrix corresponding to $k = 1$ and multiplying on the right by the following matrices until the matrix corresponding to $k = m$. In particular, $\tilde{G}_n^{(r)} = D_{n+1}^{(r)}G_{n(r+1)}E_{n+1}^{(r)}$
and $e^{\tilde{G}_n^{(r)}t} = D_{n+1}^{(r)}e^{G_{n(r+1)}t}E_{n+1}^{(r)}$, with $\tilde{G}_n^{(0)}= G_n^{(0)}= G_n$.
\end{theorem}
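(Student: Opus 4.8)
The plan is to prove the formula by induction on $m$, using the Markov property of the polynomial jump-diffusion $Y$, the tower property of conditional expectation, and the moment formula of Theorem \ref{propmoments}, with all the tensor bookkeeping absorbed into a single auxiliary identity for Kronecker powers of the monomial vector $H_n$. First I would establish, for every $r\ge 0$, the lemma
\[ \E\left[\left.H_n(Y(T))^{\otimes(r+1)}\right|\F_t\right] = e^{\tilde{G}_n^{(r)}(T-t)}\,H_n(Y(t))^{\otimes(r+1)}, \qquad 0\le t\le T. \]
Each entry of $H_n(y)^{\otimes(r+1)}$ is a product $y^{i_0}\cdots y^{i_r}=y^{i_0+\cdots+i_r}$, hence a monomial of degree at most $n(r+1)$, so that $H_n(y)^{\otimes(r+1)} = D_{n+1}^{(r)}H_{n(r+1)}(y)$ for the selection matrix $D_{n+1}^{(r)}$ of Appendix \ref{appA}, while $E_{n+1}^{(r)}$ gives the reverse identification $E_{n+1}^{(r)}H_n(y)^{\otimes(r+1)} = H_{n(r+1)}(y)$, so that $E_{n+1}^{(r)}D_{n+1}^{(r)} = I$. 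Applying the scalar moment formula componentwise in the basis $H_{n(r+1)}$ then gives $\E[H_n(Y(T))^{\otimes(r+1)}|\F_t] = D_{n+1}^{(r)}e^{G_{n(r+1)}(T-t)}H_{n(r+1)}(Y(t))$, and substituting $H_{n(r+1)}(Y(t)) = E_{n+1}^{(r)}H_n(Y(t))^{\otimes(r+1)}$ together with $e^{\tilde{G}_n^{(r)}t}=D_{n+1}^{(r)}e^{G_{n(r+1)}t}E_{n+1}^{(r)}$ yields the lemma. The exponential identity I would in turn derive from $E_{n+1}^{(r)}D_{n+1}^{(r)}=I$: on the image of $D_{n+1}^{(r)}$ the operators $(\tilde{G}_n^{(r)})^{k}$ and $D_{n+1}^{(r)}G_{n(r+1)}^{k}E_{n+1}^{(r)}$ agree for every $k\ge 0$, and since every vector to which the exponential is applied lies in that image (being of the form $D_{n+1}^{(r)}H_{n(r+1)}$), termwise summation of the exponential series is legitimate.

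With the lemma in hand I would run the induction by peeling off the latest sampling time. Conditioning on $\F_{s_{m-1}}$ and using the tower property,
\[ \E\left[\left.p_m(Y(s_0))\cdots p_0(Y(s_m))\right|\F_t\right] = \E\left[\left.p_m(Y(s_0))\cdots p_1(Y(s_{m-1}))\,\E\left[\left.p_0(Y(s_m))\right|\F_{s_{m-1}}\right]\right|\F_t\right], \]
where the inner expectation equals $\vec{p}_{0,n}^{\top}e^{G_n(s_m-s_{m-1})}H_n(Y(s_{m-1}))$ by the moment formula, a polynomial of degree at most $n$ in $Y(s_{m-1})$. The crucial step is to merge this factor with $p_1(Y(s_{m-1}))=\vec{p}_{1,n}^{\top}H_n(Y(s_{m-1}))$: writing the product of the two scalar linear forms as a bilinear form in $H_n(Y(s_{m-1}))\otimes H_n(Y(s_{m-1}))$ raises the Kronecker level by one and produces exactly the factor $e^{\tilde{G}_n^{(0)\top}(s_m-s_{m-1})}\{I_{n+1}\otimes^{0}\vec{p}_{0,n}\}=e^{G_n^{\top}(s_m-s_{m-1})}\vec{p}_{0,n}$ of the statement. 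Applying the induction hypothesis to the remaining $m$-point correlator, which now carries one extra tensor slot at $s_{m-1}$, reproduces the product $\prod_{k=1}^{m}$, and the lemma applied on $(t,s_0]$ at level $m$ supplies the leading factor $\vec{p}_{m,n}^{\top}\{vec^{-1}\circ e^{\tilde{G}_n^{(m)}(s_0-t)}\circ vec(\,\cdot\,)\}$ acting on the outer-product object $H_n(Y(t))^{\top}\otimes^{m}H_n(Y(t))$.

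I expect the main obstacle to be the bookkeeping of the Kronecker indices and transposes rather than any single analytic estimate. One must check that each merge inserts precisely the block $\{I_{n+1}\otimes^{m-k}\vec{p}_{m-k,n}\}$ on the correct side, that the $vec$/$vec^{-1}$ reshaping at the node $s_0$ matches $H_n(Y(t))^{\top}\otimes^{m}H_n(Y(t))$, and that the tensor level drops by one at each of the times $s_1,\dots,s_m$ exactly as encoded by the superscripts $(m-k)$ on $\tilde{G}_n^{(m-k)}$. The transposes on the propagators $e^{\tilde{G}_n^{(m-k)\top}(s_k-s_{k-1})}$ arise because at each merge the coefficient vector of the polynomial being absorbed is contracted on the right, so that the generator of the corresponding conditional expectation appears transposed; verifying this side convention consistently across the induction, and matching the base case (where the formula reduces to the moment formula of Theorem \ref{propmoments}) with the inductive pattern, is the delicate part.
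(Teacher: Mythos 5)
First, a point of reference: this paper does not actually prove Theorem \ref{theorem} --- it is imported verbatim from \cite[Theorem 4.5]{benth2019}, and Appendix \ref{proof} contains no argument for it, only the definitions of $vec$, $vec^{-1}$, $D_{n+1}^{(r)}$ and $E_{n+1}^{(r)}$ needed to read the statement. So there is no in-paper proof to compare against; I can only judge your outline on its own terms and against the strategy of the cited reference, which it does broadly follow (tower property from the latest node backward, the moment formula applied at the level of the monomial basis $H_{n(r+1)}$, and the L-duplicating/eliminating matrices to pass between $H_{n(r+1)}(y)$ and the Kronecker object $vec\left(H_n(y)^{\top}\otimes^{r}H_n(y)\right)$). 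Your auxiliary lemma is the right key identity, and your justification of $e^{\tilde{G}_n^{(r)}t}=D_{n+1}^{(r)}e^{G_{n(r+1)}t}E_{n+1}^{(r)}$ via $E_{n+1}^{(r)}D_{n+1}^{(r)}=I$ and restriction to the image of $D_{n+1}^{(r)}$ is correct (note that as a full matrix identity the two sides differ by $I-D_{n+1}^{(r)}E_{n+1}^{(r)}$ in the zeroth-order term, so the restriction is genuinely needed).

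The genuine gap is in the induction itself: the scalar statement of the theorem cannot serve as its own induction hypothesis. After you compute $\E\left[p_0(Y(s_m))\,\middle|\,\F_{s_{m-1}}\right]=\vec{p}_{0,n}^{\top}e^{G_n(s_m-s_{m-1})}H_n(Y(s_{m-1}))$ and multiply by $p_1(Y(s_{m-1}))$, the merged factor is a polynomial of degree up to $2n$ in $Y(s_{m-1})$; one step later it is degree up to $3n$, and at the earliest node degree up to $n(m+1)$. Applying the theorem's statement for $m-1$ time points with $n$ replaced by $2n$ does not reproduce the claimed formula, whose matrices all live on the fixed lattice $(n+1)^{r+1}$. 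What must actually be propagated through the induction is a vector-valued (tensorized) hypothesis of the form $\E\left[H_n(Y(s_{k}))\otimes v\,\middle|\,\F_{s_{k-1}}\right]$, i.e.\ a formula for the conditional expectation of $vec\left(H_n(Y(s))^{\top}\otimes^{r}H_n(Y(s))\right)$ with one slot left uncontracted at each stage --- this is exactly what the increasing superscripts on $\tilde{G}_n^{(r)}$ encode, and it is why the factor $\left\{I_{n+1}\otimes^{m-k}\vec{p}_{m-k,n}\right\}$ contracts only one tensor slot per time step rather than collapsing the whole object to a scalar. You gesture at this (``one extra tensor slot at $s_{m-1}$''), but as written the inductive step is not closed: you need to state and prove the strengthened, uncontracted version of the formula and only contract with $\vec{p}_{m,n}^{\top}$ at the very end. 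With that reformulation the rest of your bookkeeping (the side on which each $\vec{p}_{m-k,n}$ is inserted, the transposes on the propagators, and the $vec$/$vec^{-1}$ reshaping at $s_0$) goes through as you describe.
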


\begin{corollary}
	\label{corrformula}
	For every $n\ge 0$ and $0 \le k_0, k_1, \dots, k_m\le n$, the following formula holds
	\begin{align*}
		\label{generalP}
		&\E\left[\left.Y(s_0)^{k_0}Y(s_1)^{k_1}\cdots Y(s_{m})^{k_{m}}\right|\F_t\right]\\ 
		&= \boldsymbol{e}_{n+1,k_0+1}^{\top}\left\{ vec^{-1} \circ e^{\tilde{G}_n^{(m)}(s_0-t)}\circ vec\left( H_n(Y(t))^{\top}\otimes^{m} H_n(Y(t))\right)\right\} \cdot \\ &\qquad  \qquad \qquad \qquad \qquad \qquad\qquad  \cdot \prod_{j=1}^{m}e^{\tilde{G}_n^{(m-j)\top}(s_j-s_{j-1})}\left\{I_{n+1}\otimes^{m-j}\boldsymbol{e}_{n+1, k_{j}+1} \right\}.
	\end{align*}
\end{corollary}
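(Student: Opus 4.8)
The plan is to derive Corollary~\ref{corrformula} as the immediate specialization of Theorem~\ref{theorem} to the case in which each of the $m+1$ polynomials is a single monomial. The correlator $\E\left[\left.Y(s_0)^{k_0}Y(s_1)^{k_1}\cdots Y(s_m)^{k_m}\right|\F_t\right]$ is precisely an expectation of the type handled by Theorem~\ref{theorem}, so no new estimate is needed: the whole task reduces to choosing the right polynomials, reading off their coefficient vectors, and substituting into the formula already established.

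First I would reconcile the indexing conventions. In Theorem~\ref{theorem} the factor $p_{m-j}$ is the one evaluated at the time $s_j$, for $j=0,\dots,m$. To reproduce the product $Y(s_0)^{k_0}\cdots Y(s_m)^{k_m}$ I therefore set $p_{m-j}(y):=y^{k_j}$; in particular $p_m(y)=y^{k_0}$, and in the product over $k=1,\dots,m$ the relevant factor is $p_{m-k}(y)=y^{k_k}$. Because each exponent satisfies $0\le k_j\le n$, every such monomial lies in $\mathrm{Pol}_n(\R)$, so the common degree bound $n=\max\{n_0,\dots,n_m\}$ of the theorem is respected and all coefficient vectors belong to $\R^{n+1}$.

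Next I would invoke the key observation that, with respect to the monomial basis encoded by $H_n(y)=(1,y,y^2,\dots,y^n)^{\top}$, the coefficient vector of the monomial $y^k$ is exactly the standard basis vector $\boldsymbol{e}_{n+1,k+1}$, since $y^k$ is the $(k+1)$-th entry of $H_n(y)$. This identification yields $\vec{p}_{m,n}=\boldsymbol{e}_{n+1,k_0+1}$ and $\vec{p}_{m-k,n}=\boldsymbol{e}_{n+1,k_k+1}$ for $k=1,\dots,m$, while the leading bracket $\left\{vec^{-1}\circ e^{\tilde{G}_n^{(m)}(s_0-t)}\circ vec\left(H_n(Y(t))^{\top}\otimes^{m}H_n(Y(t))\right)\right\}$ depends only on $Y(t)$ and the sampling times, not on the polynomials, and is therefore carried over verbatim.

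Substituting these coefficient vectors into the formula of Theorem~\ref{theorem} replaces $\vec{p}_{m,n}^{\top}$ by $\boldsymbol{e}_{n+1,k_0+1}^{\top}$ and each factor $\left\{I_{n+1}\otimes^{m-k}\vec{p}_{m-k,n}\right\}$ by $\left\{I_{n+1}\otimes^{m-k}\boldsymbol{e}_{n+1,k_k+1}\right\}$, which after renaming the product index $k$ as $j$ is exactly the claimed expression. There is essentially no computation to perform; the only point demanding care, and hence the main (if modest) obstacle, is keeping the reversed labeling straight, so that the factor attached to the time $s_j$ carries the basis vector indexed by $k_j$ and every exponent ends up in the correct slot of the tensor products and matrix exponentials.
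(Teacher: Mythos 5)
Your proposal is correct and matches the paper's (implicit) argument: the paper states this corollary without a written proof, treating it as the immediate specialization of Theorem \ref{theorem} to monomials $p_{m-j}(y)=y^{k_j}$, whose coefficient vectors with respect to $H_n$ are the standard basis vectors $\boldsymbol{e}_{n+1,k_j+1}$. Your careful handling of the reversed labeling (the factor attached to time $s_j$ being $p_{m-j}$) is exactly the one point that needs checking, and you got it right.
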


\subsection{Greeks for path-dependent options}
\label{greeks}
The pricing formulas of Theorem \ref{price} and \ref{price2}, together with the moment and correlator formulas of Theorem \ref{propmoments} and \ref{theorem}, allows for sensitivity analysis and risk management. Indeed, thanks to the compact and closed formulation, it is possible to differentiate the price functional with respect to various parameters and obtain the so-called \emph{Greeks}. In \cite[Section 6]{benth2019} the authors derive an expression for the Delta and Theta for correlators, namely for the partial derivatives
\begin{equation*}
\Delta^{k_0,\dots, k_m}_{(s_0,\dots, s_m;t)}:= \frac{\partial C^{k_0,\dots, k_m}(s_0,\dots, s_m;t)}{\partial Y(t)} \quad \mbox{and} \quad \Theta j^{k_0,\dots, k_m}_{(s_0,\dots, s_m;t)}:= \frac{\partial C^{k_0,\dots, k_m}(s_0,\dots, s_m;t)}{\partial s_j},\; 0\le j \le m,
\end{equation*}
with $C^{k_0,\dots, k_m}(s_0,\dots, s_m;t) := \E\left[\left.Y(s_0)^{k_0} Y(s_1)^{k_1} \cdots Y(s_m)^{k_m}\right|\F_t\right]$. Starting from their results, we can obtain Delta and Theta for the discretely sampled arithmetic Asian option studied in this article.

\begin{proposition}
For every $N\ge 0$, the Delta of a discretely sampled arithmetic Asian option can be approximated with generalized Hermite polynomials by 
\begin{equation*}
	\frac{\partial\Pi_{K,N}^{a,b}(t)}{\partial Y(t)}=\sum_{k=0}^N\sum_{i=0}^k\sum_{|\boldsymbol{k}|=i}  \binom{k}{i} \frac{\hat{\boldsymbol{\beta}}_{N,k+1}^{a,b}(-a)^{k-i}}{(m+1)^{i}b^k}\,  \frac{i!}{k_0!k_1!\cdots k_{m}!} \; \Delta^{k_0,\dots, k_m}_{(s_0,\dots, s_m;t)}
\end{equation*}
with $\Delta^{k_0,\dots, k_m}_{(s_0,\dots, s_m;t)}$ given in \cite[Proposition 6.1]{benth2019}.
\end{proposition}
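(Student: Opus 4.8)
The plan is to differentiate the pricing formula of Theorem \ref{price2} directly with respect to the current value $Y(t)$, exploiting that the expansion is a \emph{finite} sum in which all dependence on $Y(t)$ is carried by the correlator terms alone. First I would rewrite the truncated price as
\[
\Pi_{K,N}^{a,b}(t)=\sum_{k=0}^N\sum_{i=0}^k\sum_{|\boldsymbol{k}|=i}  \binom{k}{i} \frac{\hat{\boldsymbol{\beta}}_{N,k+1}^{a,b}(-a)^{k-i}}{(m+1)^{i}b^k}\,  \frac{i!}{k_0!k_1!\cdots k_{m}!} \, C^{k_0,\dots,k_m}(s_0,\dots,s_m;t),
\]
where $C^{k_0,\dots,k_m}(s_0,\dots,s_m;t)=\E\left[\left.Y(s_0)^{k_0}\cdots Y(s_m)^{k_m}\right|\F_t\right]$ is exactly the correlator whose Delta is recorded in \cite[Proposition 6.1]{benth2019}.

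The next step is to record that, once $N$, $a$, $b$, $K$, $m$ and the sampling grid $s_0<\dots<s_m$ are fixed, every scalar factor multiplying the correlator is independent of $Y(t)$: this includes the binomial coefficient $\binom{k}{i}$, the multinomial factor $\tfrac{i!}{k_0!\cdots k_m!}$, the powers $(-a)^{k-i}$, $(m+1)^{-i}$, $b^{-k}$, and the entry $\hat{\boldsymbol{\beta}}_{N,k+1}^{a,b}=\boldsymbol{\beta}^{a,b\,\top}_N (M_N)_{:,(k+1)}$, since $\boldsymbol{\beta}^{a,b}_N$ and the change-of-basis matrix $M_N$ are determined solely by $a$, $b$, $K$ and the orthogonal basis. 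Consequently $\frac{\partial}{\partial Y(t)}$ acts only on the correlators.

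Then I would differentiate term by term. Because the triple sum over $k$, $i$ and the multi-indices $\boldsymbol{k}$ with $|\boldsymbol{k}|=i$ has finitely many summands, no interchange of derivative with an infinite series is required: linearity of the partial derivative applies at once. By Corollary \ref{corrformula} each correlator is a polynomial in $Y(t)$ through the vector $H_n(Y(t))$, hence smooth, and its derivative with respect to $Y(t)$ is by definition the quantity $\Delta^{k_0,\dots,k_m}_{(s_0,\dots,s_m;t)}$ supplied in closed form by \cite[Proposition 6.1]{benth2019}. Substituting $\Delta^{k_0,\dots,k_m}_{(s_0,\dots,s_m;t)}$ in place of $C^{k_0,\dots,k_m}(s_0,\dots,s_m;t)$ in the differentiated sum yields precisely the claimed expression.

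The step that requires the most care is not analytic but one of bookkeeping: confirming that the coefficients $\hat{\boldsymbol{\beta}}_{N,k+1}^{a,b}$ carry no hidden dependence on $Y(t)$, so that the derivative truly passes through to the correlators only. Granting this, the proposition is an immediate consequence of the linearity of differentiation over a finite sum together with \cite[Proposition 6.1]{benth2019}; there is no genuine obstacle beyond verifying this separation of the deterministic weights from the $Y(t)$-dependent correlator factors.
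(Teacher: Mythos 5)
Your proof is correct and matches the paper's (implicit) argument: the paper gives no separate proof of this proposition, treating it as an immediate consequence of differentiating the finite sum in Theorem \ref{price2} term by term, since only the correlators $C^{k_0,\dots,k_m}(s_0,\dots,s_m;t)$ depend on $Y(t)$ and their derivatives are supplied by \cite[Proposition 6.1]{benth2019}. Your careful bookkeeping that the coefficients $\hat{\boldsymbol{\beta}}_{N,k+1}^{a,b}$ carry no dependence on $Y(t)$ is exactly the point being taken for granted in the paper.
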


\begin{proposition}
For every $N\ge 0$, the Theta of a discretely sampled arithmetic Asian option can be approximated with generalized Hermite polynomials by 
\begin{equation*}
	\frac{\partial\Pi_{K,N}^{a,b}(t)}{\partial s_j}=\sum_{k=0}^N\sum_{i=0}^k\sum_{|\boldsymbol{k}|=i}  \binom{k}{i} \frac{\hat{\boldsymbol{\beta}}_{N,k+1}^{a,b}(-a)^{k-i}}{(m+1)^{i}b^k}\,  \frac{i!}{k_0!k_1!\cdots k_{m}!} \; \Theta j^{k_0,\dots, k_m}_{(s_0,\dots, s_m;t)} \quad \mbox{for } 0\le j \le m,
\end{equation*}
with $\Theta j^{k_0,\dots, k_m}_{(s_0,\dots, s_m;t)}$ given in \cite[Proposition 6.2]{benth2019}.
\end{proposition}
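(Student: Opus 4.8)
The plan is to differentiate the closed-form pricing formula of Theorem \ref{price2} term by term with respect to $s_j$. The crucial structural observation is that the approximate price $\Pi_{K,N}^{a,b}(t)$ is a \emph{finite} triple sum, so no interchange of a limit with a derivative is at stake: term-by-term differentiation of a finite sum is always legitimate, provided each summand is differentiable in $s_j$.

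First I would isolate the dependence on $s_j$. In each summand of the formula in Theorem \ref{price2}, the prefactor
\[
\binom{k}{i}\frac{\hat{\boldsymbol{\beta}}_{N,k+1}^{a,b}(-a)^{k-i}}{(m+1)^{i}b^k}\,\frac{i!}{k_0!k_1!\cdots k_{m}!}
\]
depends only on the fixed parameters $N,a,b,m$ and on the summation indices $k,i,\boldsymbol{k}$, none of which involve the sampling times. Hence the entire $s_j$-dependence is carried by the correlator, which in the notation of Section \ref{greeks} is exactly $C^{k_0,\dots,k_m}(s_0,\dots,s_m;t)=\E\left[\left.Y(s_0)^{k_0}\cdots Y(s_{m})^{k_{m}}\right|\F_t\right]$. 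By linearity of the derivative over the finite sum I then obtain
\[
\frac{\partial \Pi_{K,N}^{a,b}(t)}{\partial s_j}=\sum_{k=0}^N\sum_{i=0}^k\sum_{|\boldsymbol{k}|=i}\binom{k}{i}\frac{\hat{\boldsymbol{\beta}}_{N,k+1}^{a,b}(-a)^{k-i}}{(m+1)^{i}b^k}\,\frac{i!}{k_0!\cdots k_m!}\,\frac{\partial}{\partial s_j}C^{k_0,\dots,k_m}(s_0,\dots,s_m;t),
\]
and by the definition of the Theta, $\partial_{s_j}C^{k_0,\dots,k_m}=\Theta j^{k_0,\dots,k_m}_{(s_0,\dots,s_m;t)}$, which yields the stated formula.

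The only substantive point to verify is that each correlator is genuinely differentiable in $s_j$ and that its derivative equals the quantity supplied by \cite[Proposition 6.2]{benth2019}. This follows from the closed-form expression in Corollary \ref{corrformula}: the correlator is assembled from matrix exponentials $e^{\tilde{G}_n^{(r)}t}$ evaluated at the time increments $s_0-t$ and $s_j-s_{j-1}$, which are real-analytic in their arguments. Differentiating a factor $e^{\tilde{G}_n^{(r)}(s_j-s_{j-1})}$ with respect to $s_j$ (or $s_{j-1}$) merely inserts the corresponding generator; for an interior index the chain rule produces contributions from both $s_j-s_{j-1}$ and $s_{j+1}-s_j$, while the boundary cases $j=0$ and $j=m$ carry fewer terms, exactly as accounted for in \cite[Proposition 6.2]{benth2019}.

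I expect this smoothness verification to be the only real obstacle, and it is already dispatched by the cited result; everything else is the bookkeeping of differentiating a finite sum in which only the correlator depends on $s_j$. Collecting the prefactors with $\Theta j^{k_0,\dots,k_m}_{(s_0,\dots,s_m;t)}$ in place of the correlator completes the proof.
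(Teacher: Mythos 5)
Your argument is correct and coincides with the paper's own (implicit) justification: the paper states this proposition without a separate proof, treating it exactly as you do --- as term-by-term differentiation of the finite triple sum in Theorem \ref{price2}, where the prefactors are independent of the sampling times and the derivative of each correlator is supplied by \cite[Proposition 6.2]{benth2019}. Your additional remark on the smoothness of the matrix exponentials in Corollary \ref{corrformula} is a sensible (if strictly unnecessary) sanity check.
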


\section{Numerical examples}
\label{numerics}
We shall implement numerically the pricing formulas of Theorem \ref{price} and \ref{price2}. In particular, we shall start by testing the pricing formula with moments (Theorem \ref{price}) for a Brownian motion, a Gaussian Ornstein--Uhlenbeck process and a jump-diffusion process, all of these being polynomial processes as described in Section \ref{polpross}. We shall then test the pricing formula with correlators (Theorem \ref{price2}) for the  Gaussian Ornstein--Uhlenbeck process and the jump-diffusion process.

\subsection{Brownian motion}
We consider $X=B$, where $B$ is a Brownian motion. Then $\left.X(T)\right|\F_t\sim \mathcal{N}(0,T-t)$ 
and the price functional $\Pi_K(t)$ is given in closed form  by
\begin{equation}
	\label{closed_price}
	\Pi_K(t) =\sigma_X(T;t)\;\phi\!\left(\frac{K-\mu_X(T;t)}{\sigma_X(T;t)}\right)-(K-\mu_X(T;t))\left( 1-\Phi\!\left(\frac{K-\mu_X(T;t)}{\sigma_X(T;t)}\right) \right)
\end{equation}
with $\sigma_X(T;t)=\sqrt{T-t}$ and $\mu_X(T;t)=0$, 
so that we can benchmark the price approximation. To do that, we introduce the quantity 
\begin{equation*}
	\gamma_{a,b}^N:=-\log\left(\frac{\left|\Pi_K(t) -\Pi_{K,N}^{a,b}(t) \right|}{\Pi_K(t)} \right) 
\end{equation*}
which measures the accuracy of $\Pi_{K,N}^{a,b}(t)$, namely, the accuracy is of order $10^{-\gamma_{a,b}^N}$. We also compare $\gamma_{a,b}^N$ with the accuracy of a Monte-Carlo-simulation (MC) approach calculated in the same manner. We report $\gamma_{a,b}^N$ in Figure \ref{BMplot} and \ref{BMplot2} as a function of $N$ for different values of $K$ and $b$. Here we draw with a red horizontal dashed line the MC accuracy, and with a red vertical bar the value of $N$ for which the Hermite series reaches the same accuracy as the MC method. This latter one is obtained by averaging over $10^2$ outcomes, each of them coming from $2 \cdot 10^{4}$ simulations.

\begin{figure}[!tp]
	\setlength{\tabcolsep}{2pt}
	\resizebox{1\textwidth}{!}{
		\begin{tabular}{@{}>{\centering\arraybackslash}m{0.04\textwidth}@{}>{\centering\arraybackslash}m{0.24\textwidth}@{}>{\centering\arraybackslash}m{0.24\textwidth}@{}>{\centering\arraybackslash}m{0.24\textwidth}@{}>{\centering\arraybackslash}m{0.24\textwidth}@{}}
			& $\boldsymbol{K = 0.0}$&$\boldsymbol{K = 0.2}$ & $\boldsymbol{K = 0.6}$& $\boldsymbol{K = 1.0}$ \\
			\begin{turn}{90}$\boldsymbol{b =0.5}$\end{turn}&
			\includegraphics[width=0.23\textwidth]{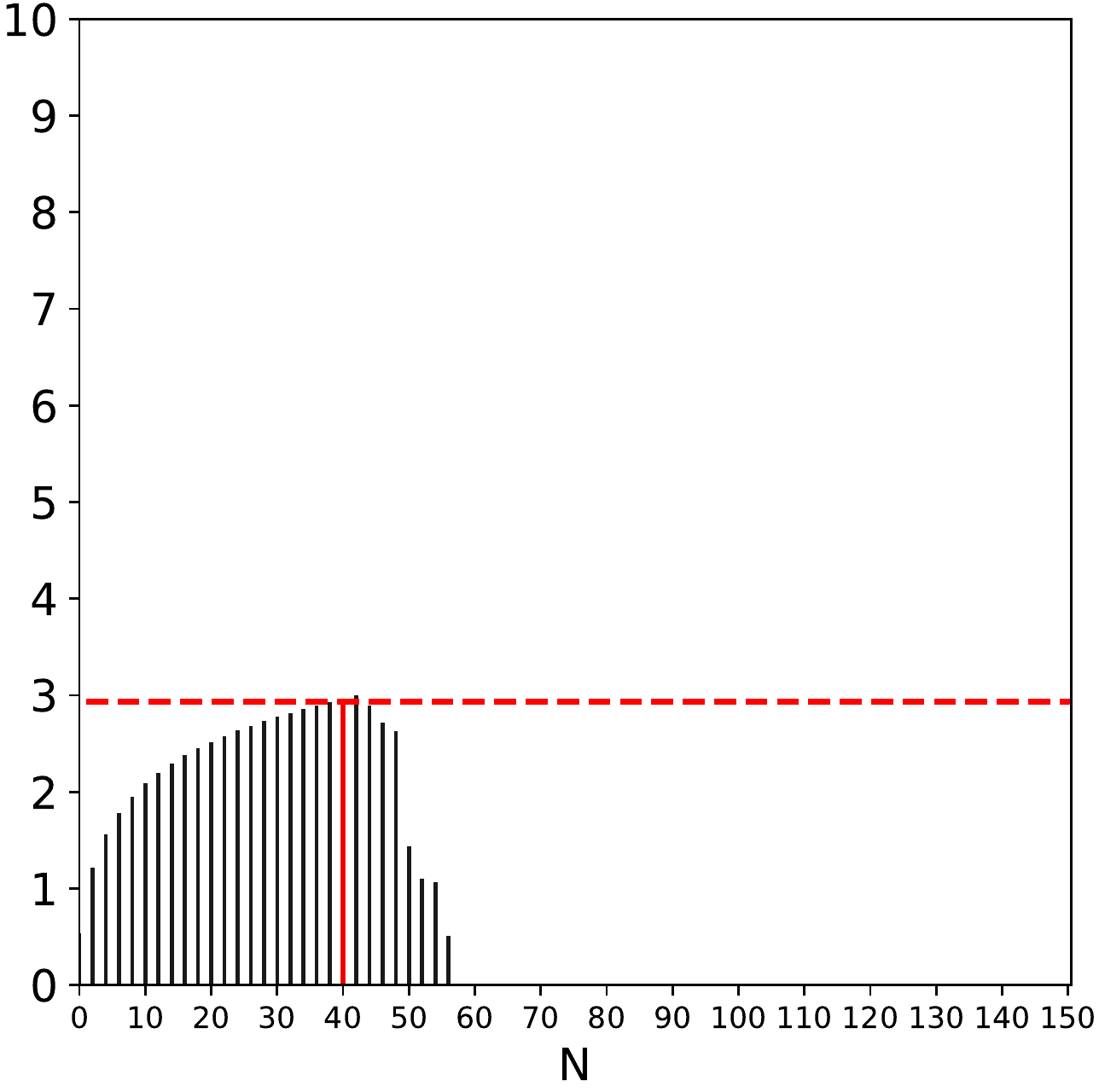}&
			\includegraphics[width=0.23\textwidth]{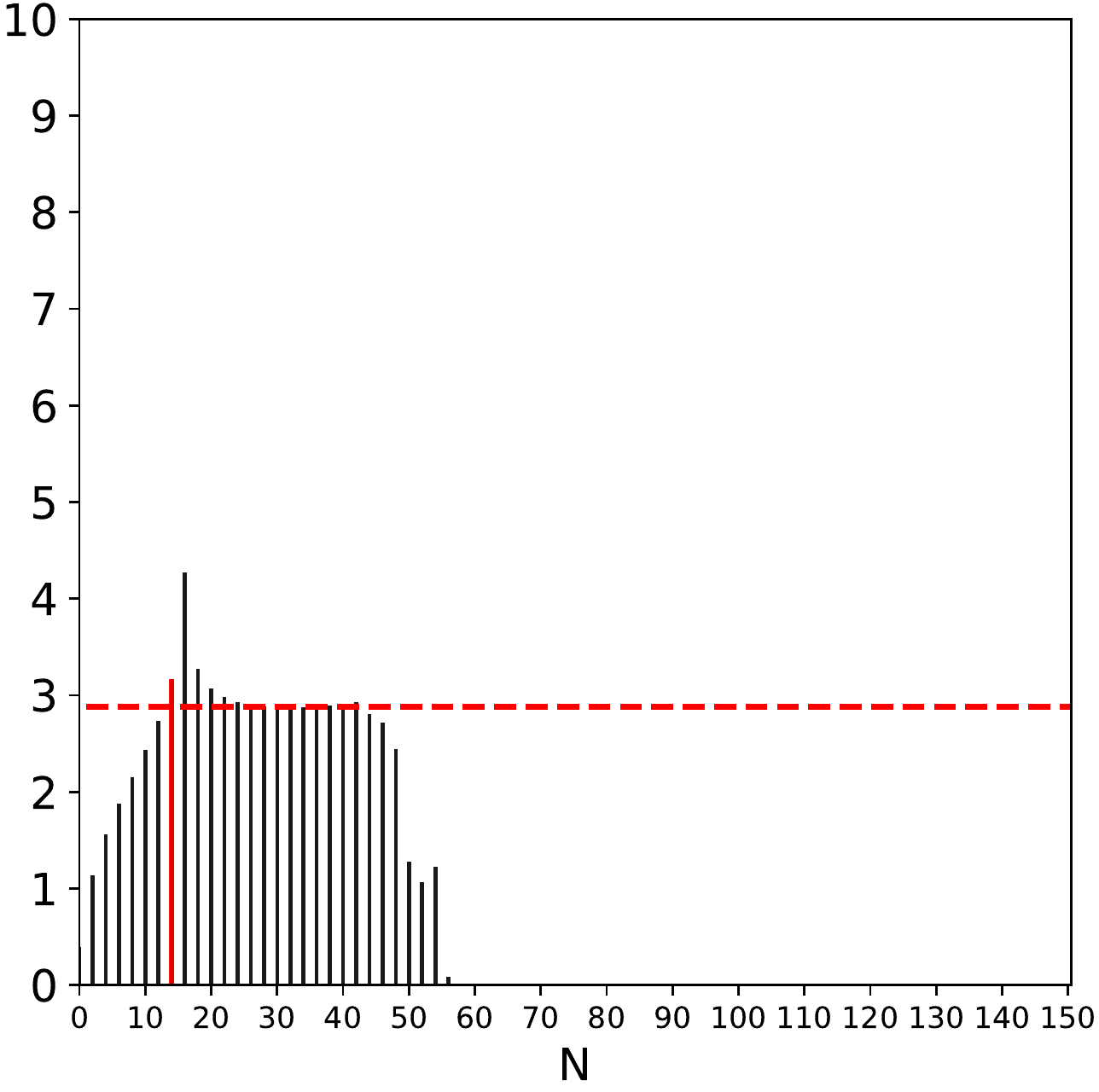} &	\includegraphics[width=0.23\textwidth]{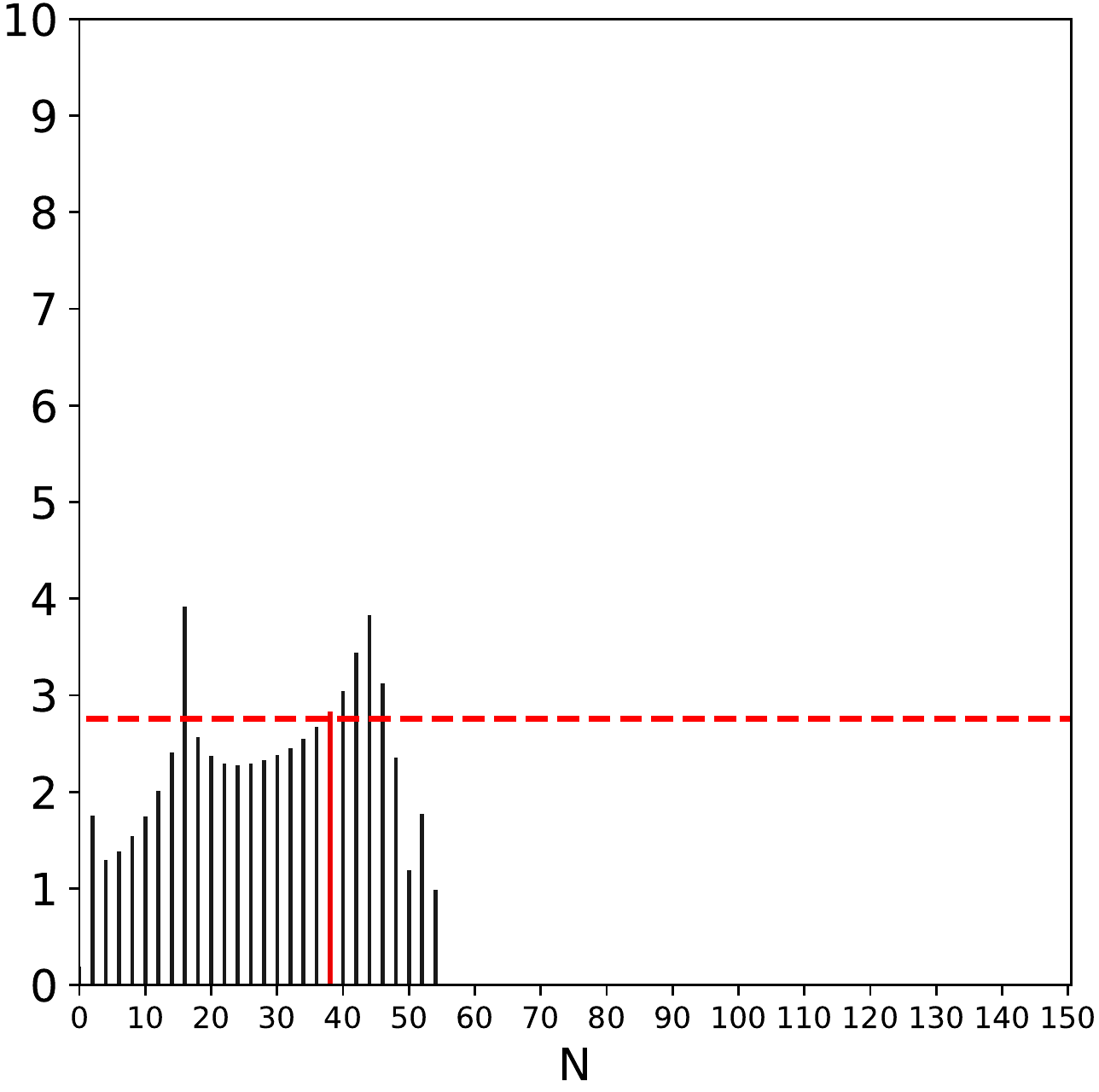} &
			\includegraphics[width=0.23\textwidth]{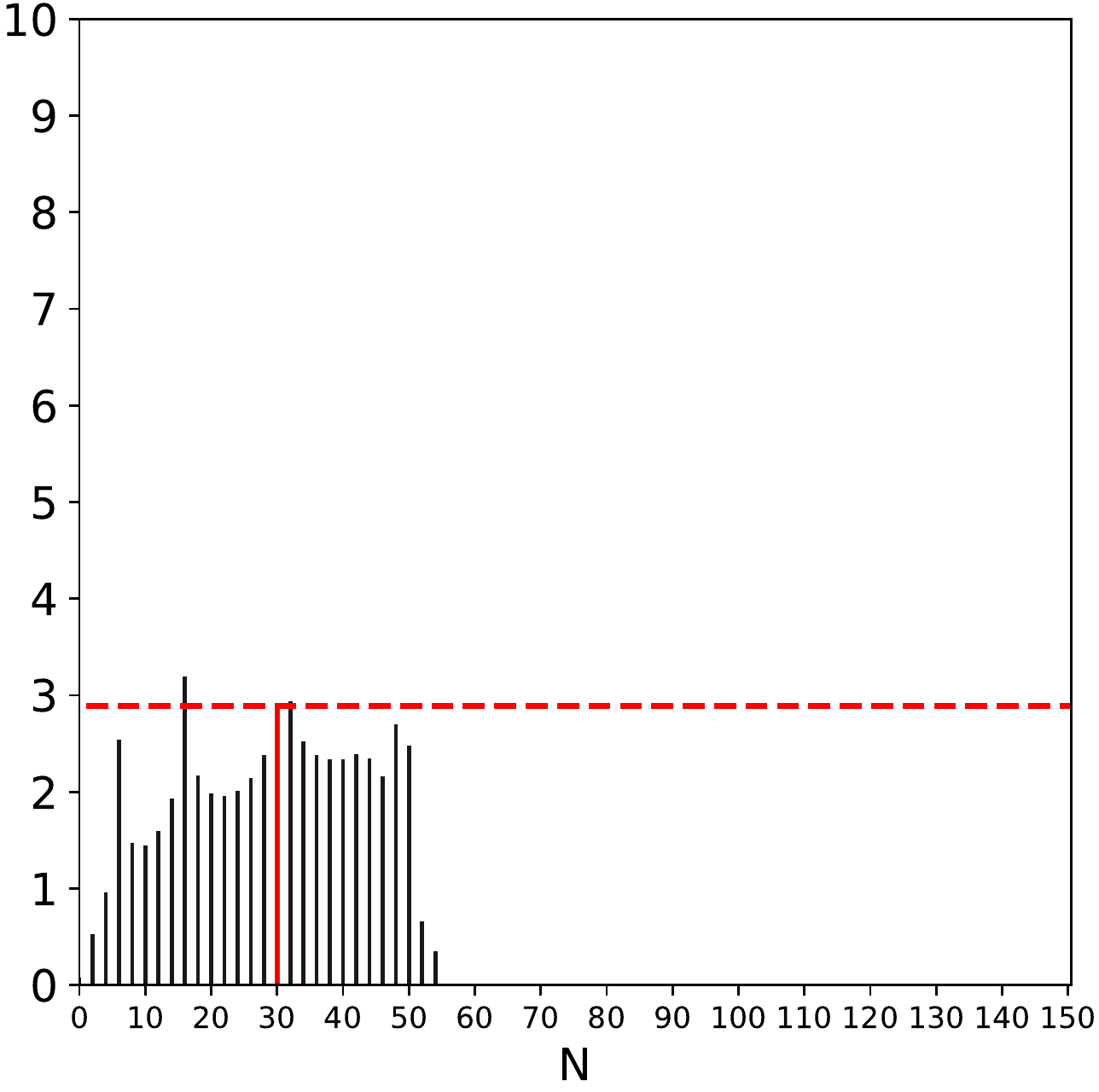}\\
			\begin{turn}{90}$\boldsymbol{b =0.6}$\end{turn}&
			\includegraphics[width=0.23\textwidth]{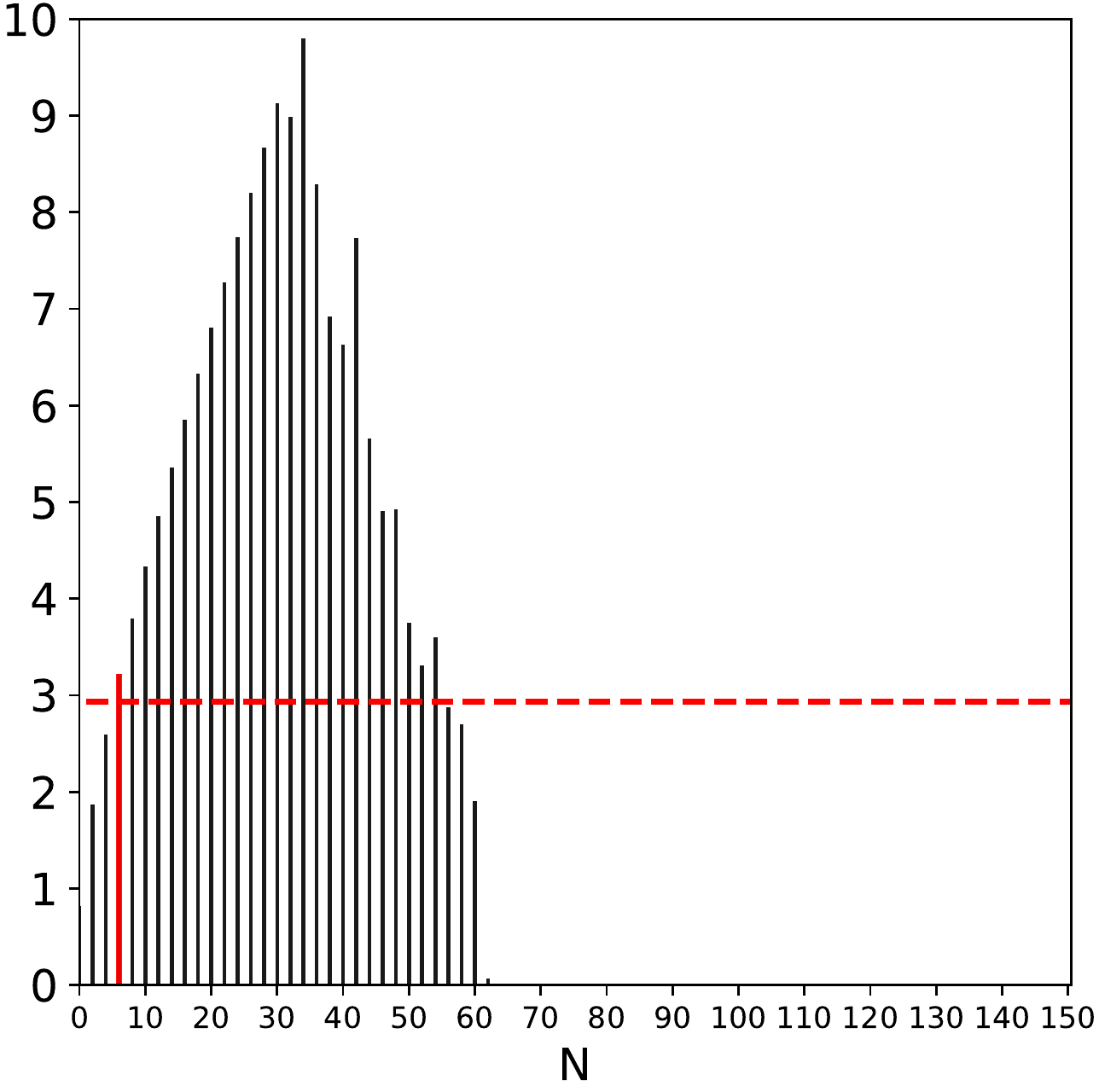}&
			\includegraphics[width=0.23\textwidth]{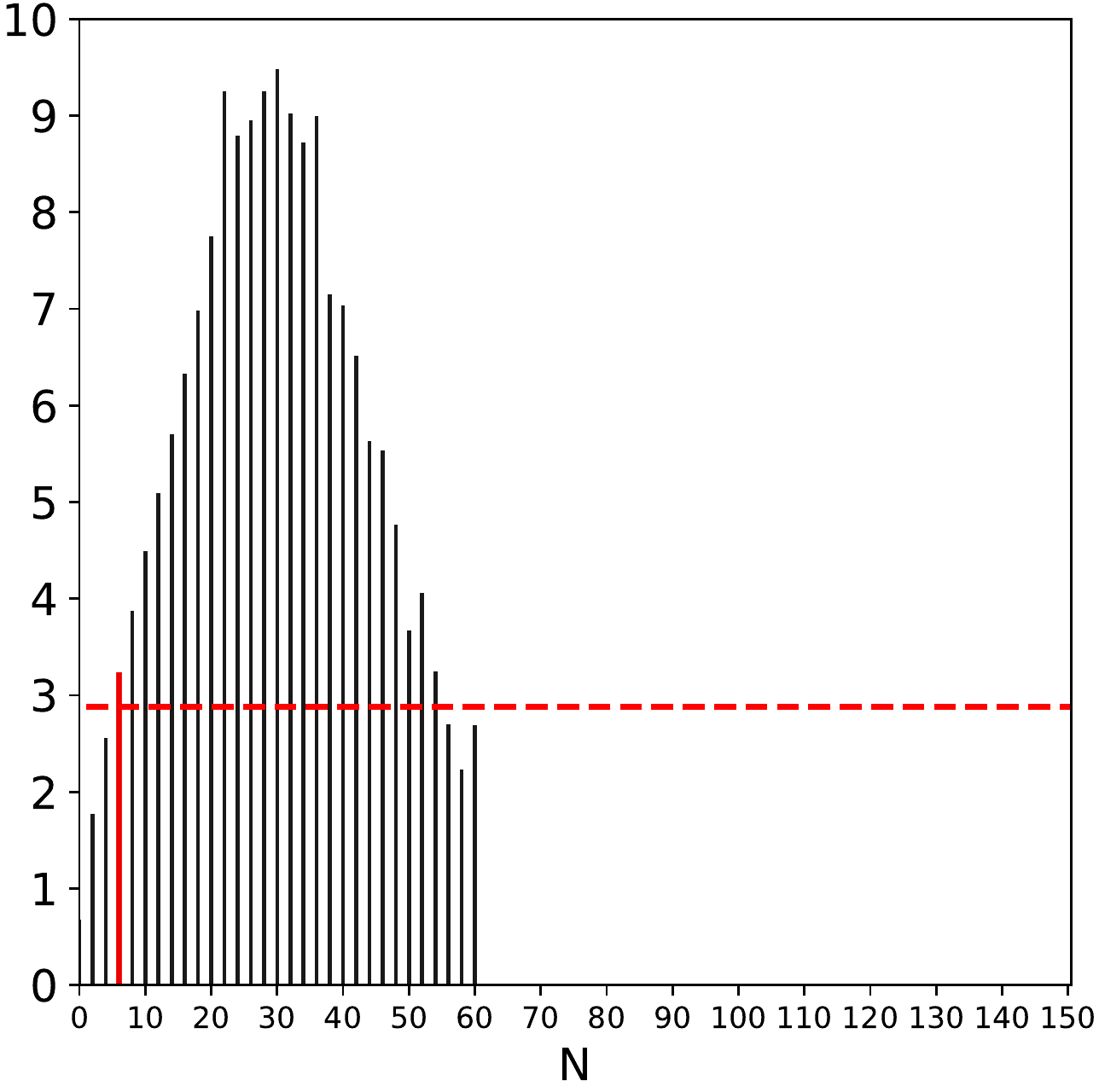} &	\includegraphics[width=0.23\textwidth]{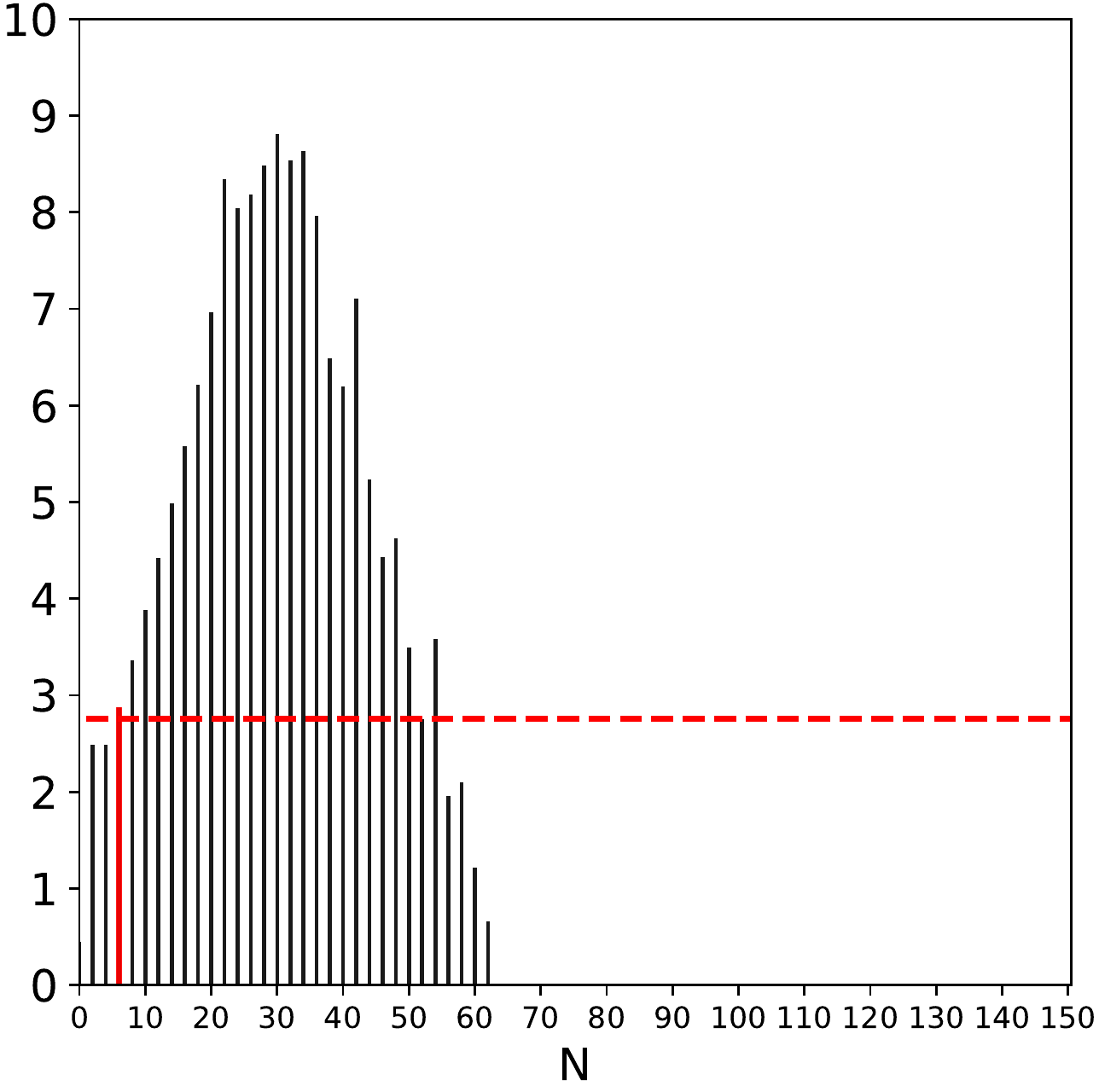} &
			\includegraphics[width=0.23\textwidth]{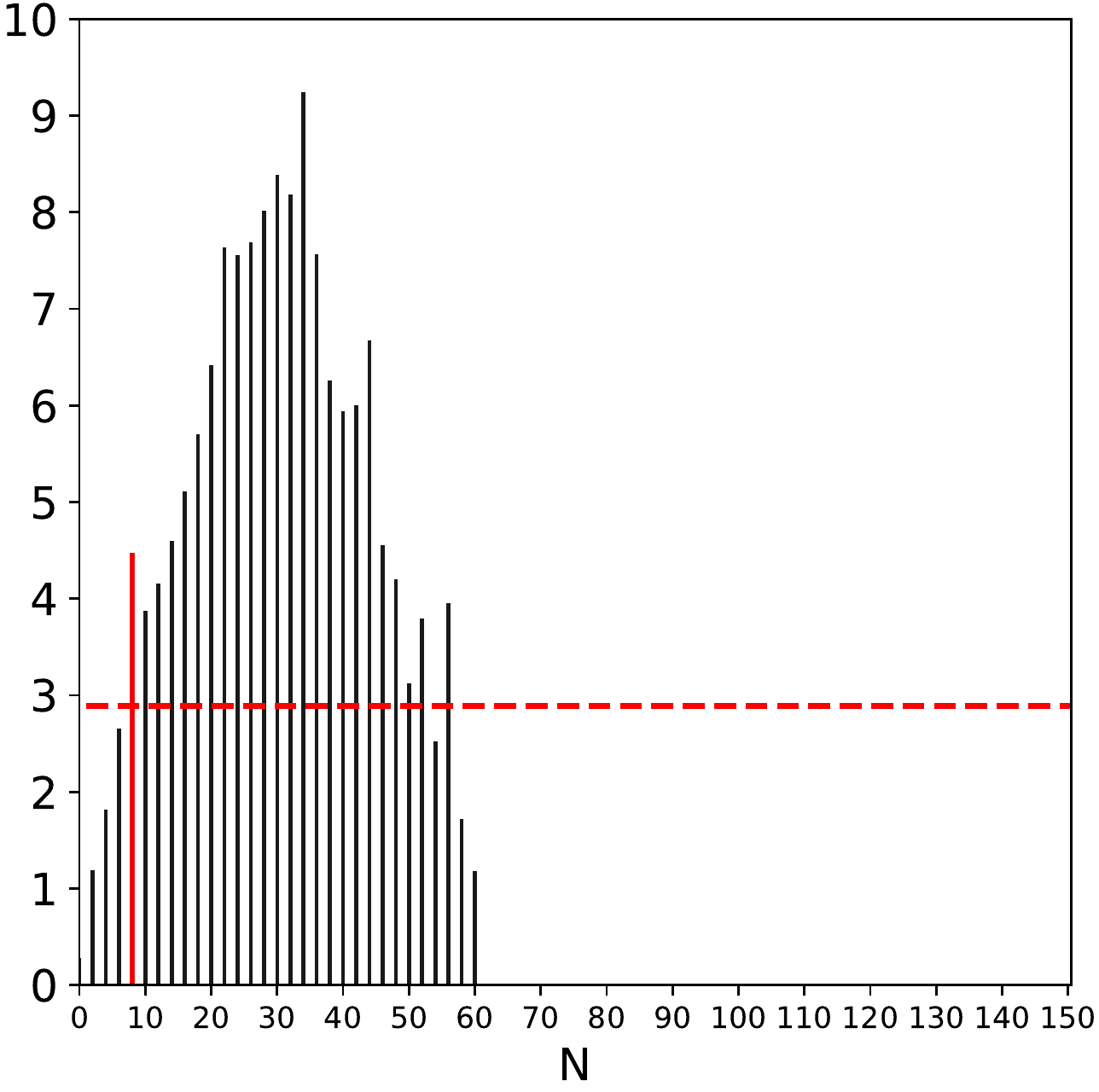}\\
			\begin{turn}{90}$\boldsymbol{b =1.0}$\end{turn}&
			\includegraphics[width=0.23\textwidth]{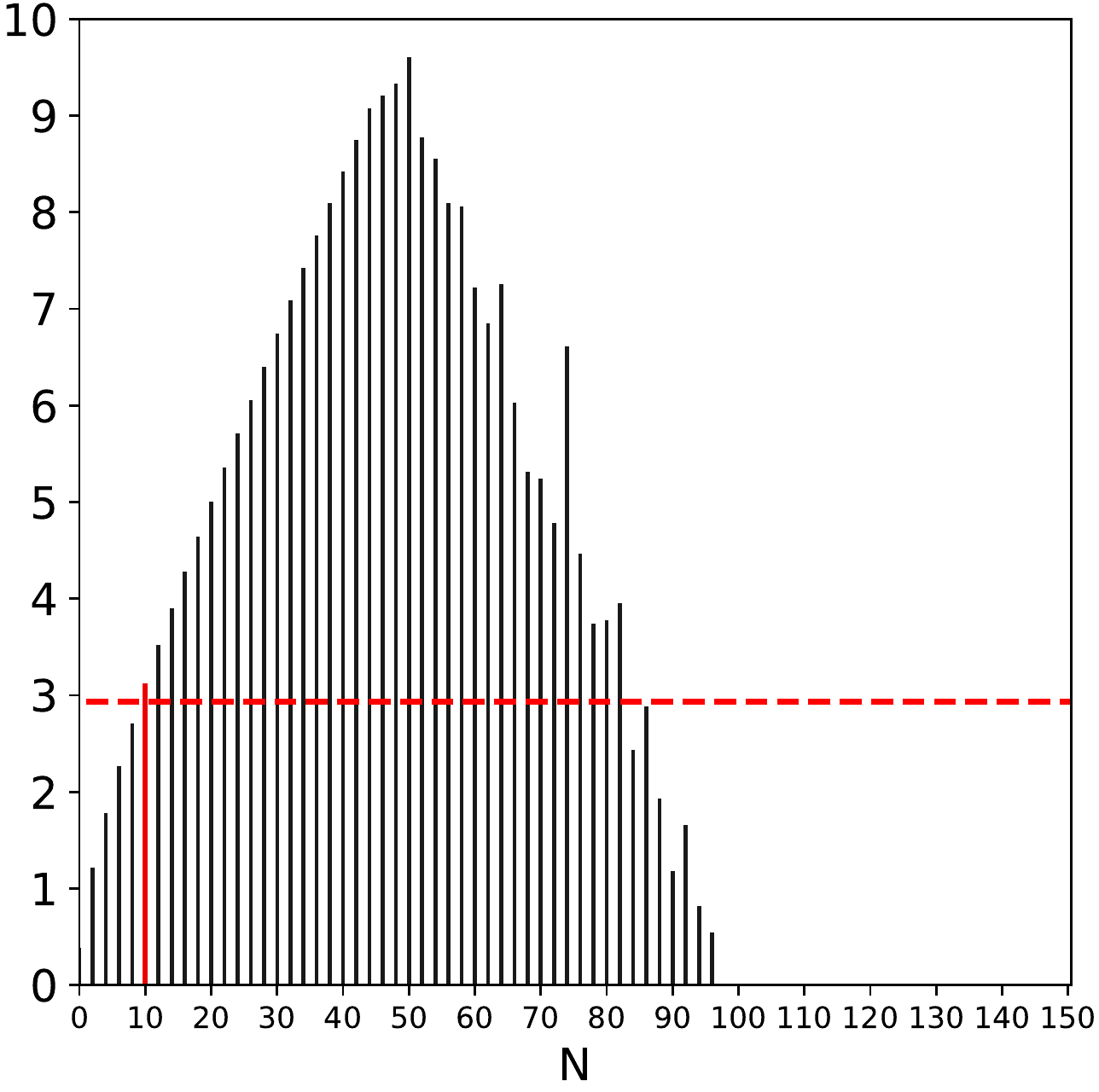}&
			\includegraphics[width=0.23\textwidth]{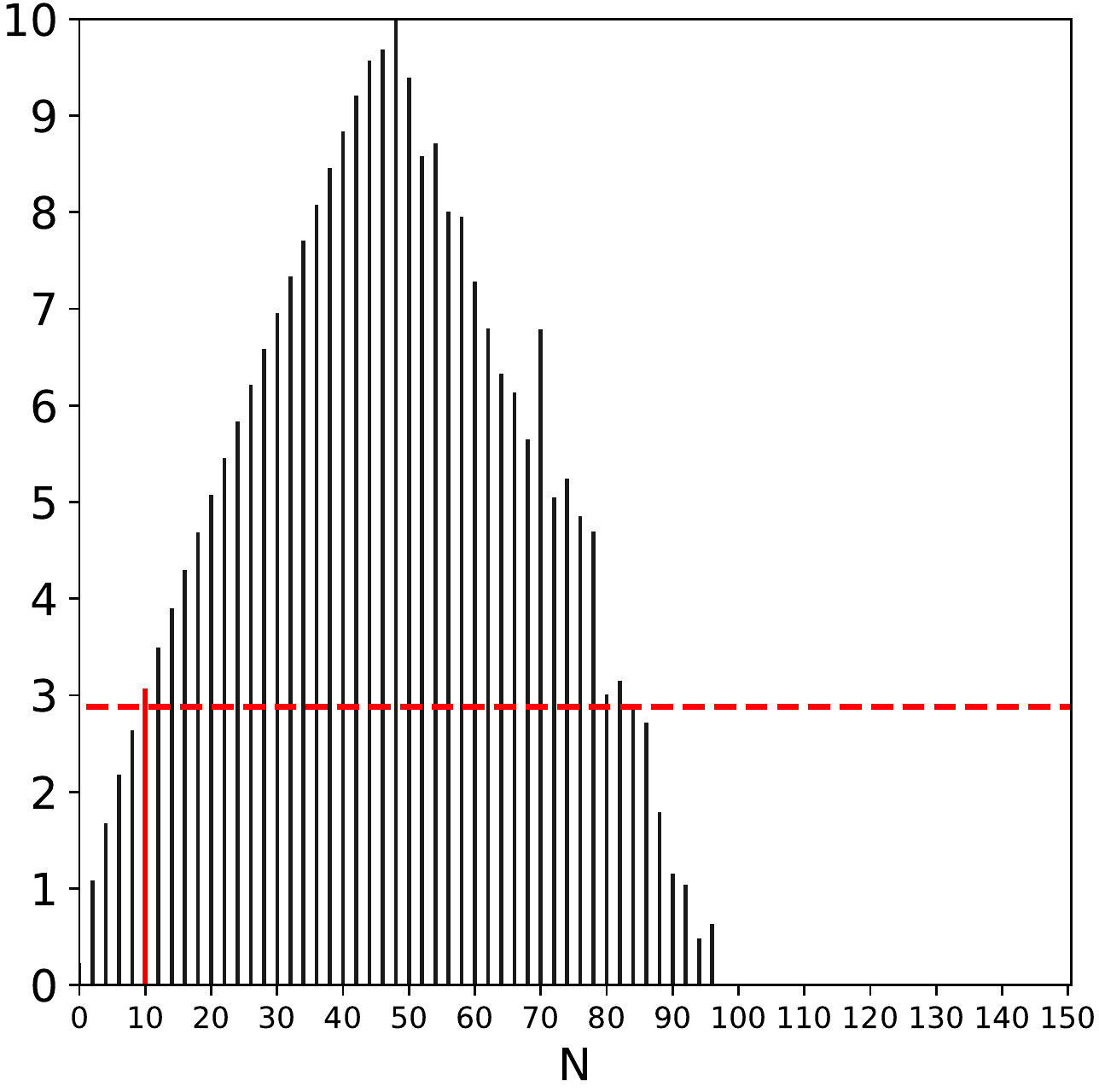} &	
			\includegraphics[width=0.23\textwidth]{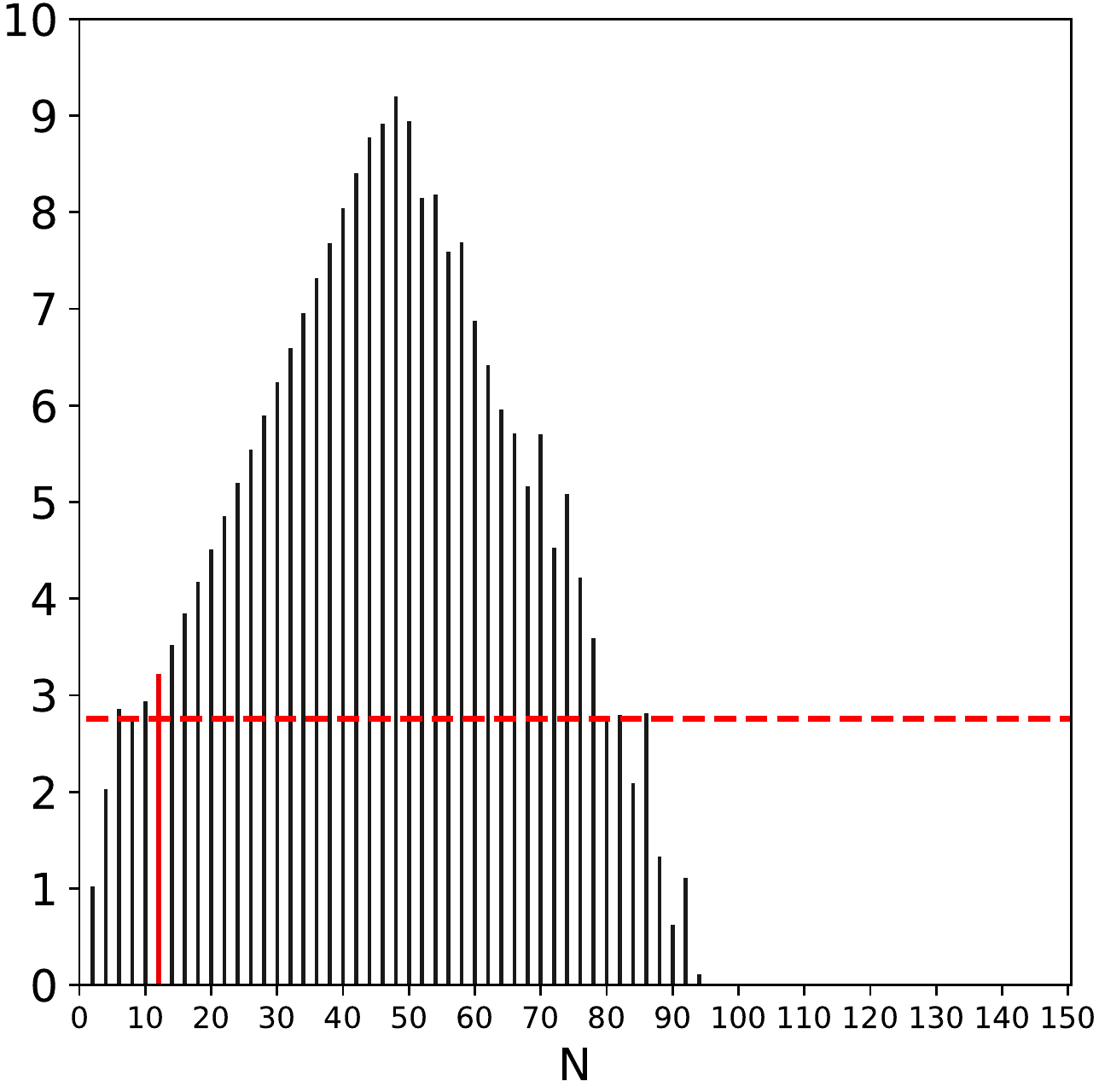} &
			\includegraphics[width=0.23\textwidth]{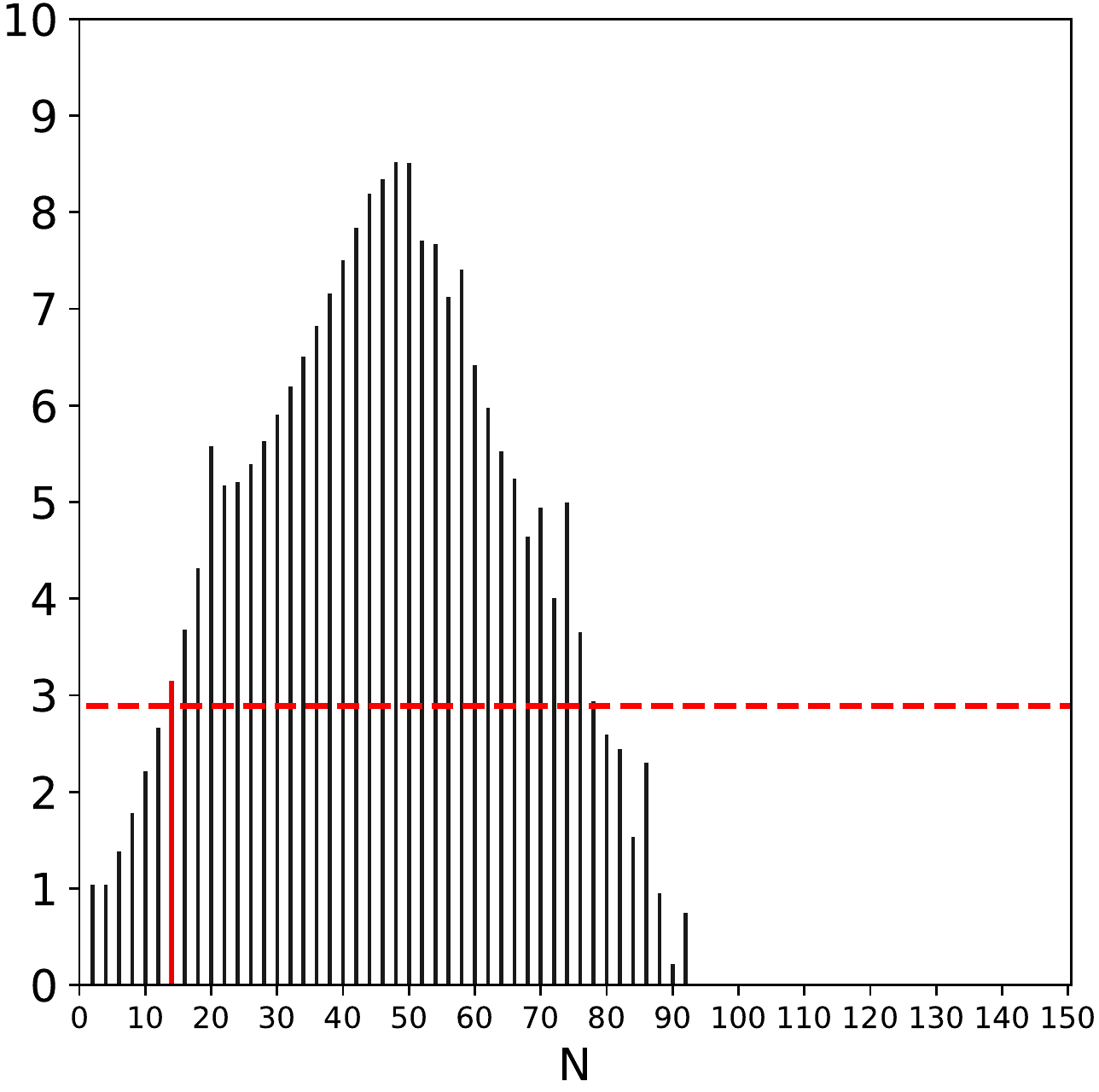}\\
			\begin{turn}{90}$\boldsymbol{b =2.0}$\end{turn}&
			\includegraphics[width=0.23\textwidth]{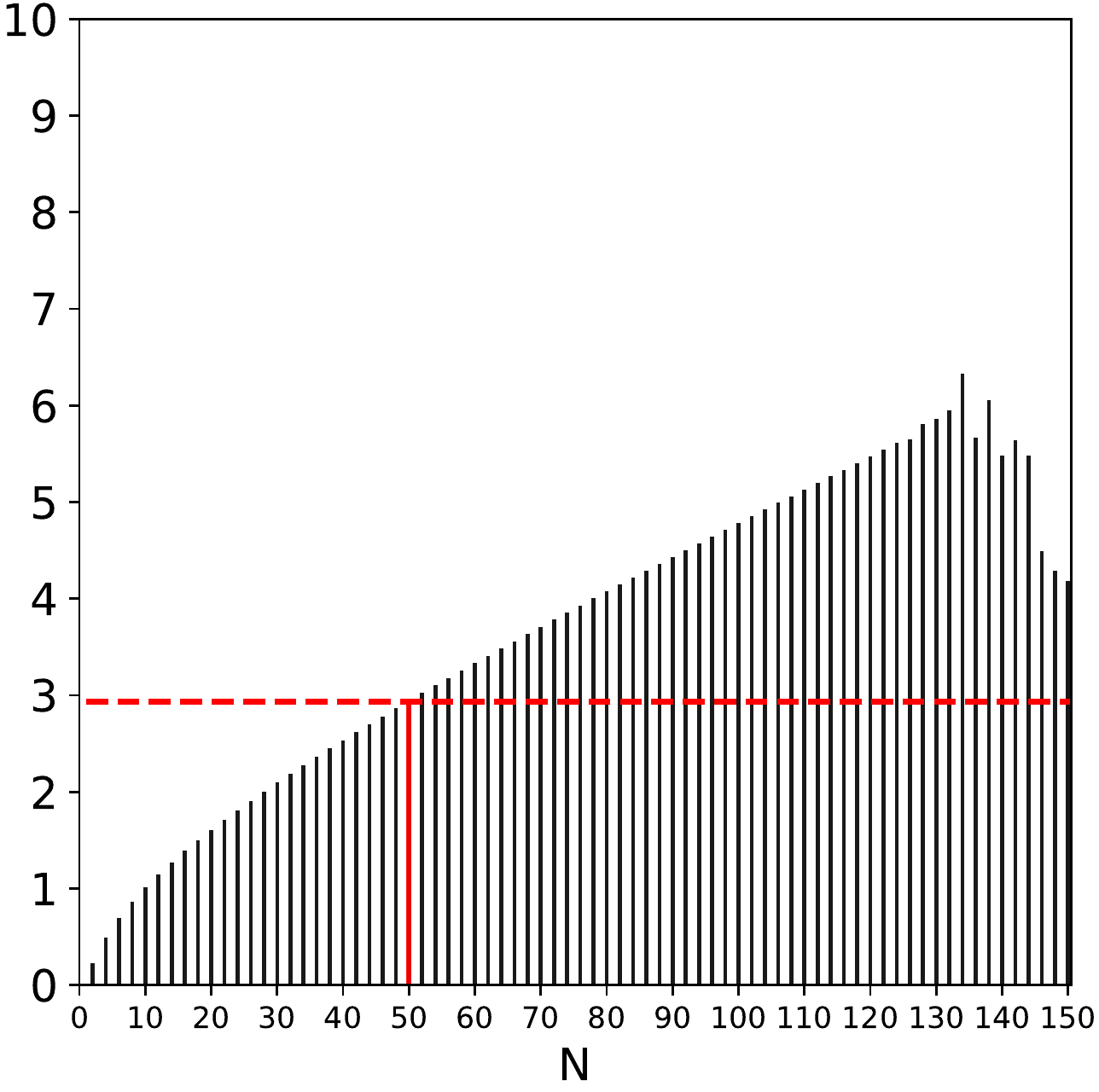}&
			\includegraphics[width=0.23\textwidth]{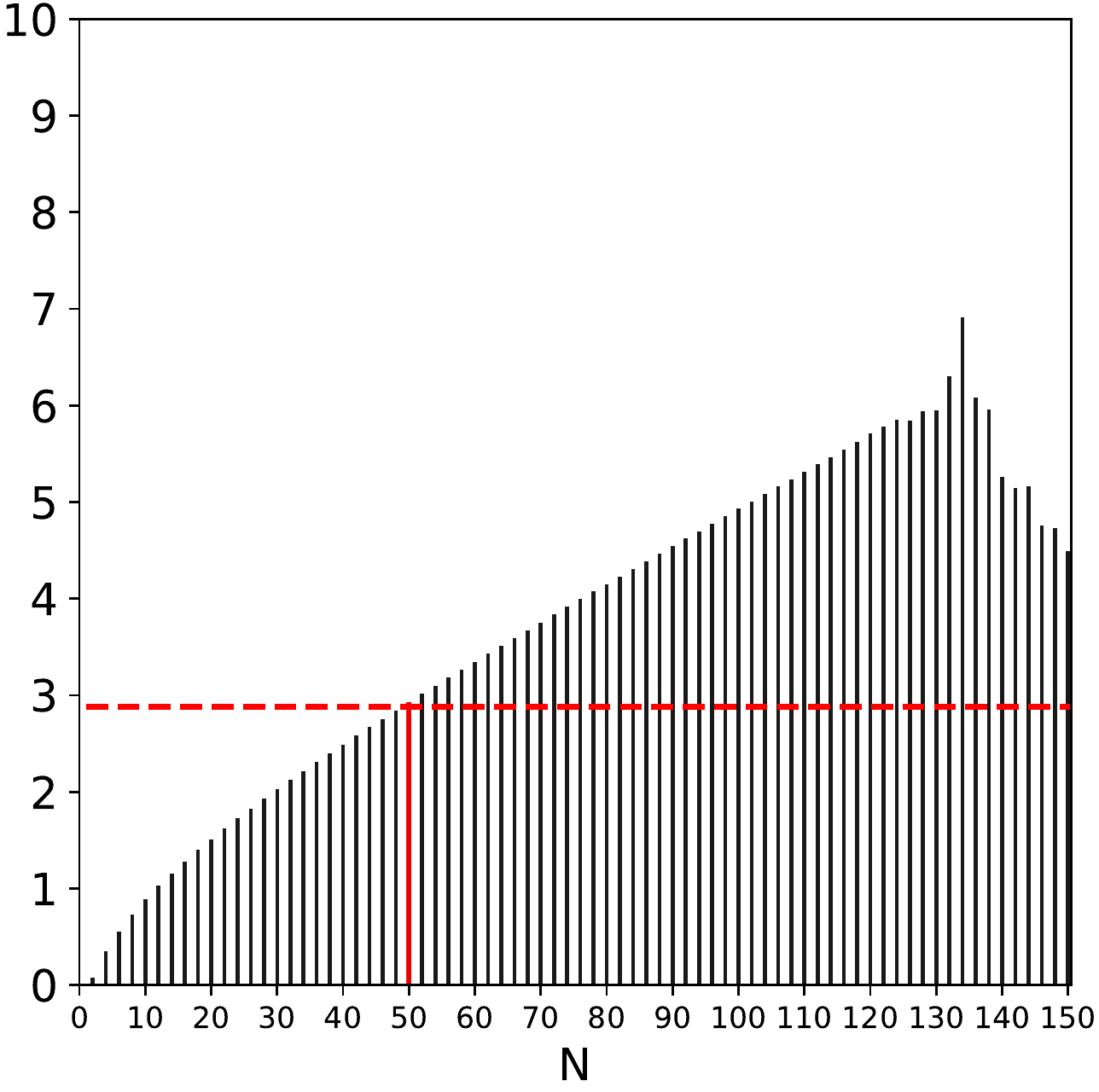} &
			\includegraphics[width=0.23\textwidth]{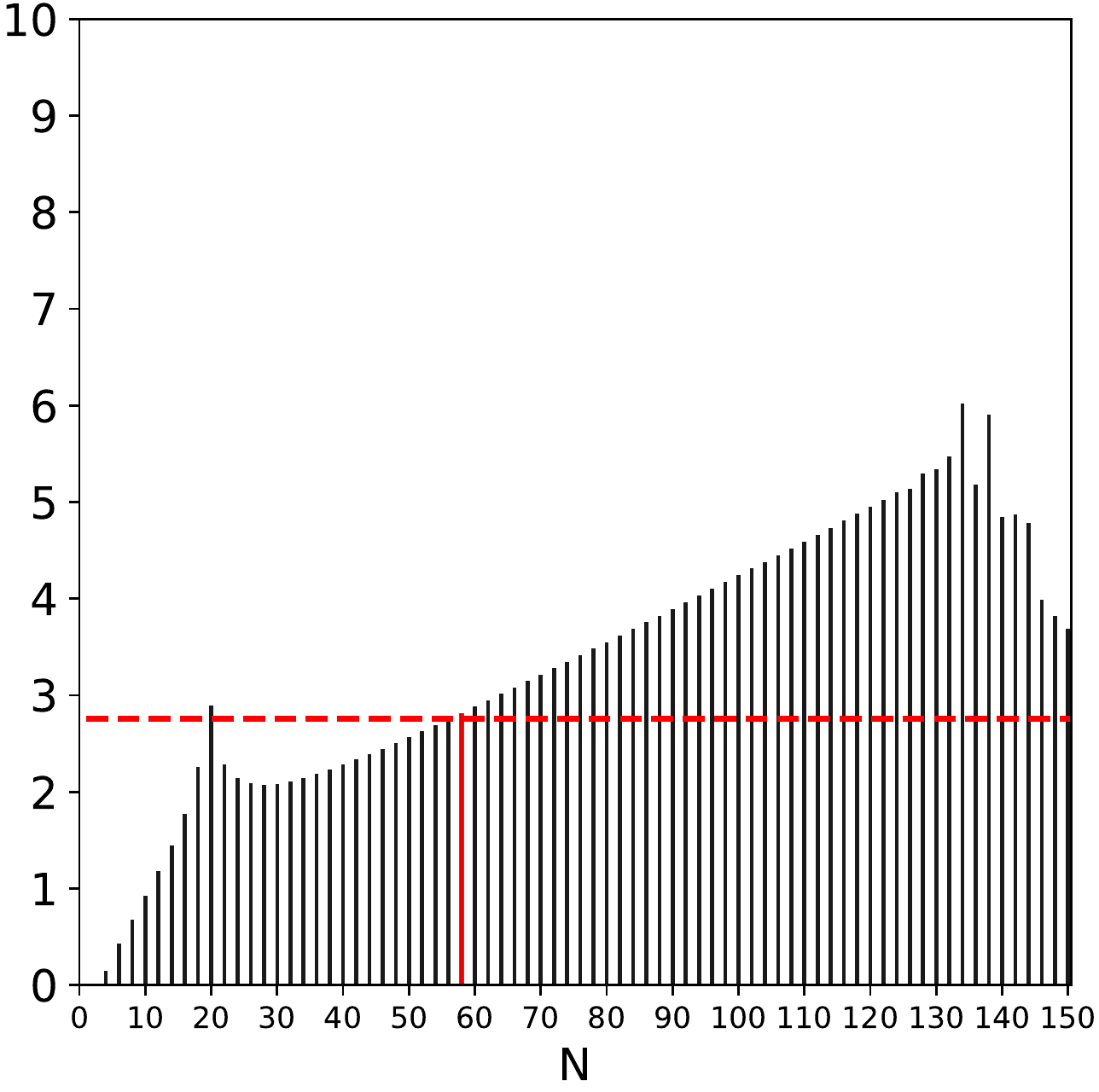} &
			\includegraphics[width=0.23\textwidth]{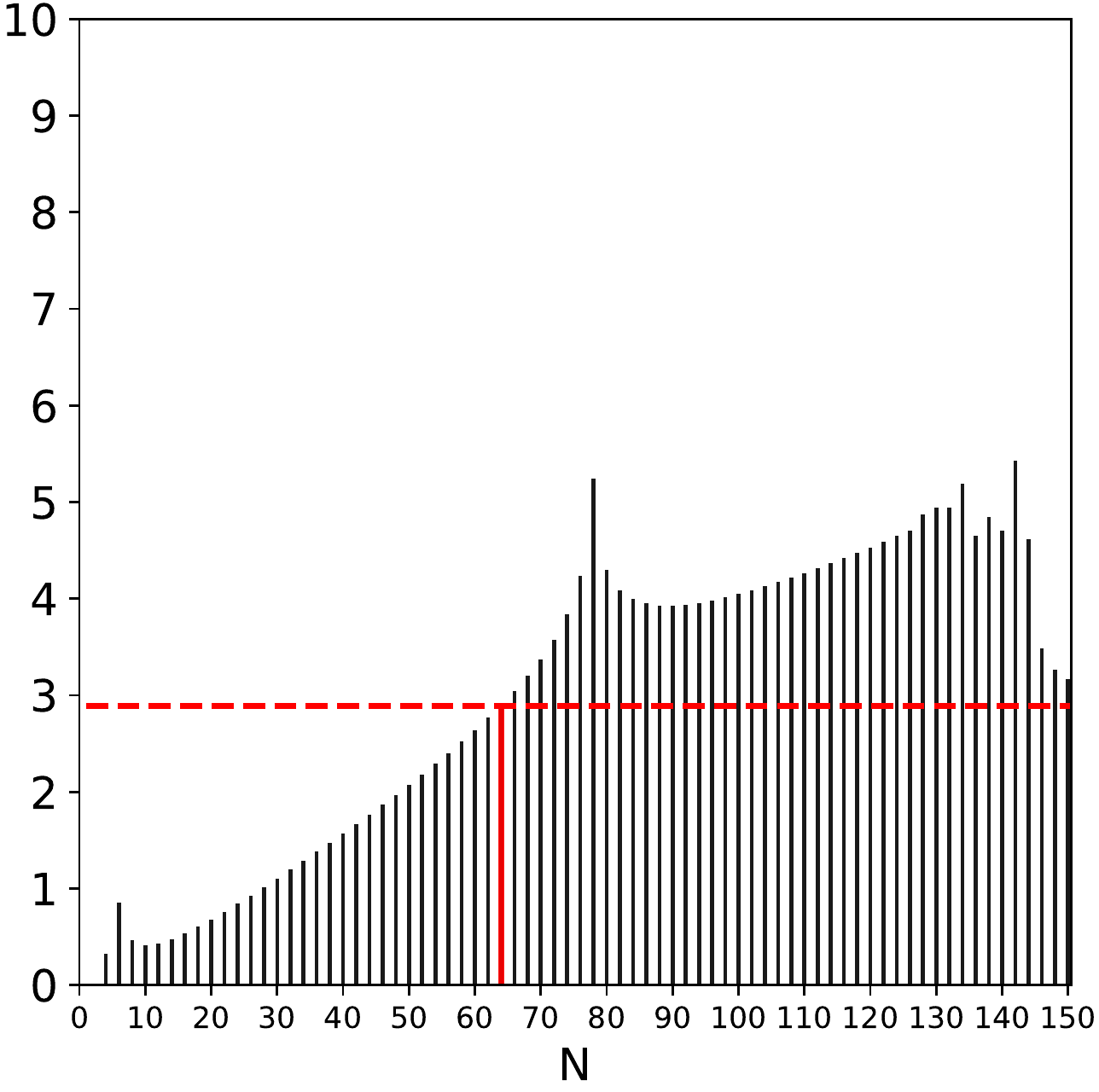}\\
			\begin{turn}{90}$\boldsymbol{b =3.0}$\end{turn}&
			\includegraphics[width=0.23\textwidth]{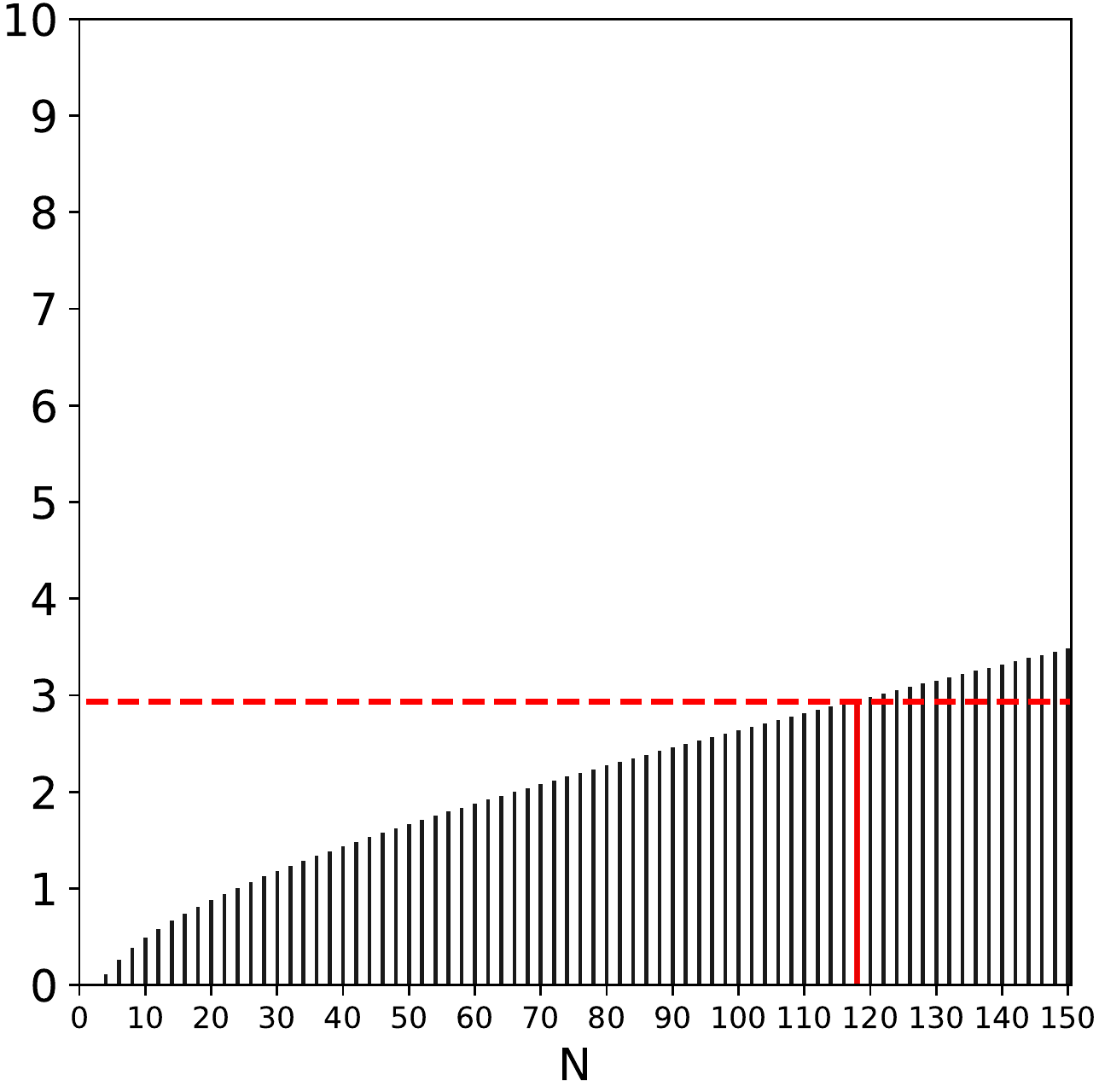}&
			\includegraphics[width=0.23\textwidth]{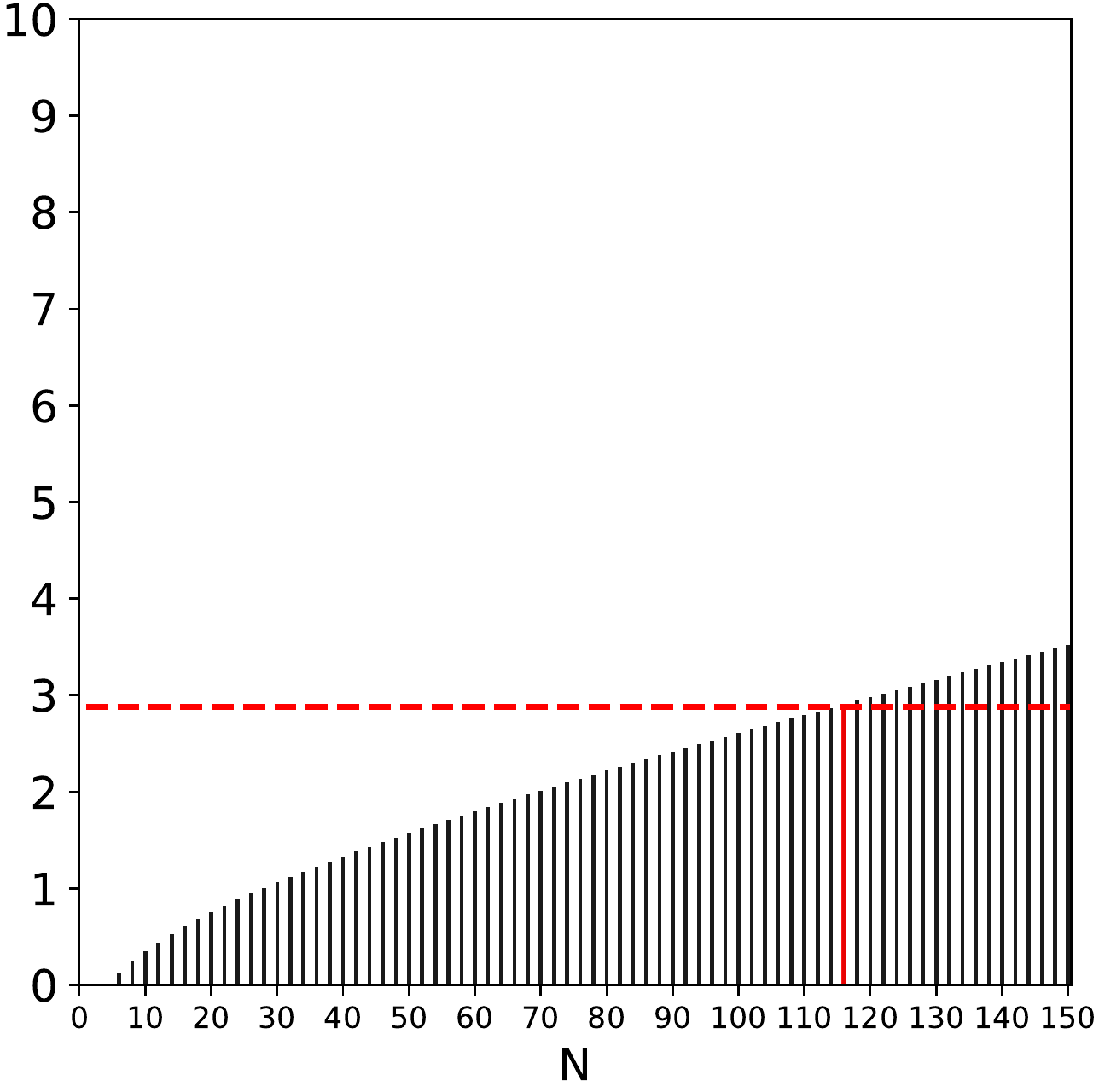} &	
			\includegraphics[width=0.23\textwidth]{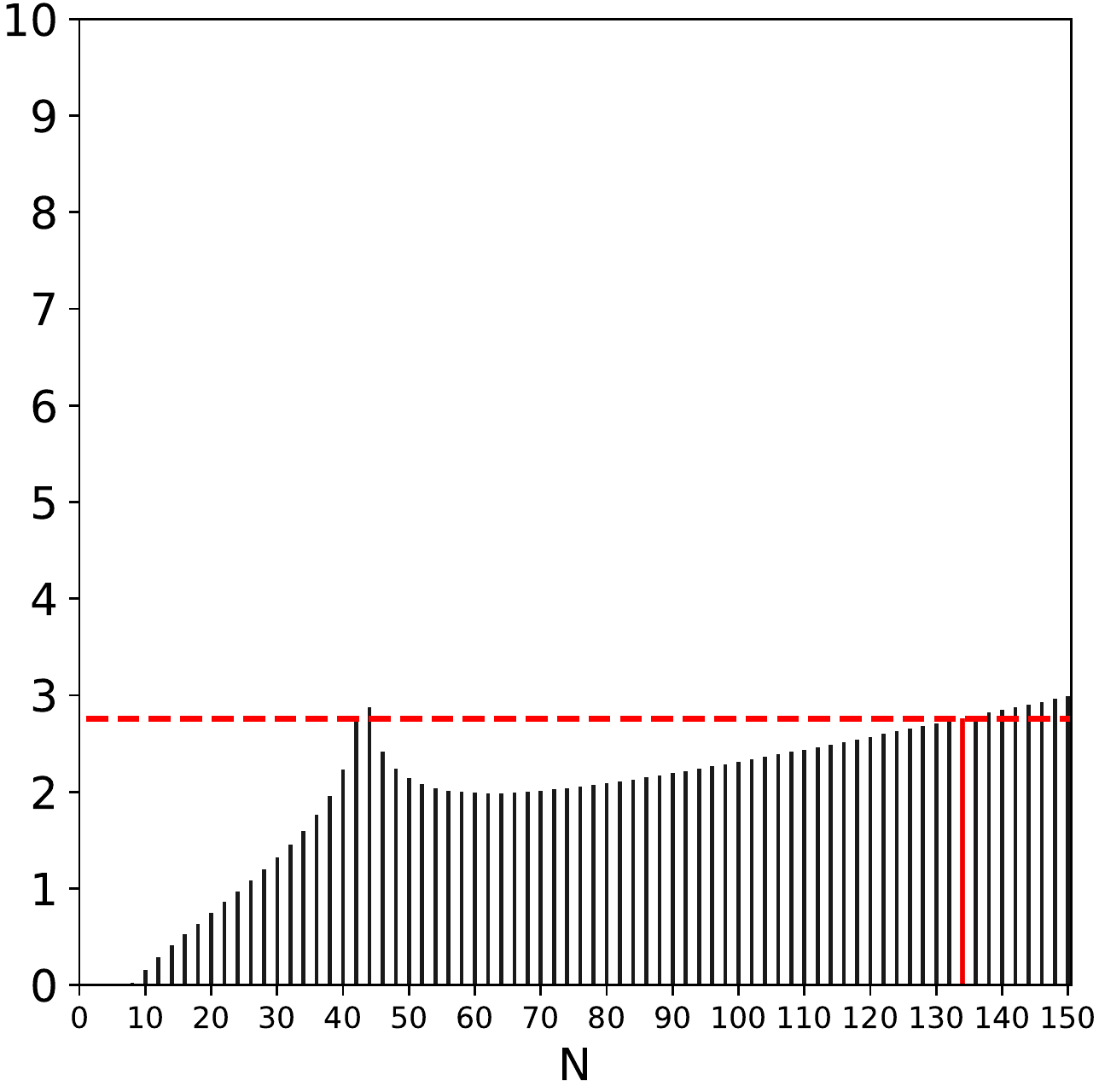} &
			\includegraphics[width=0.23\textwidth]{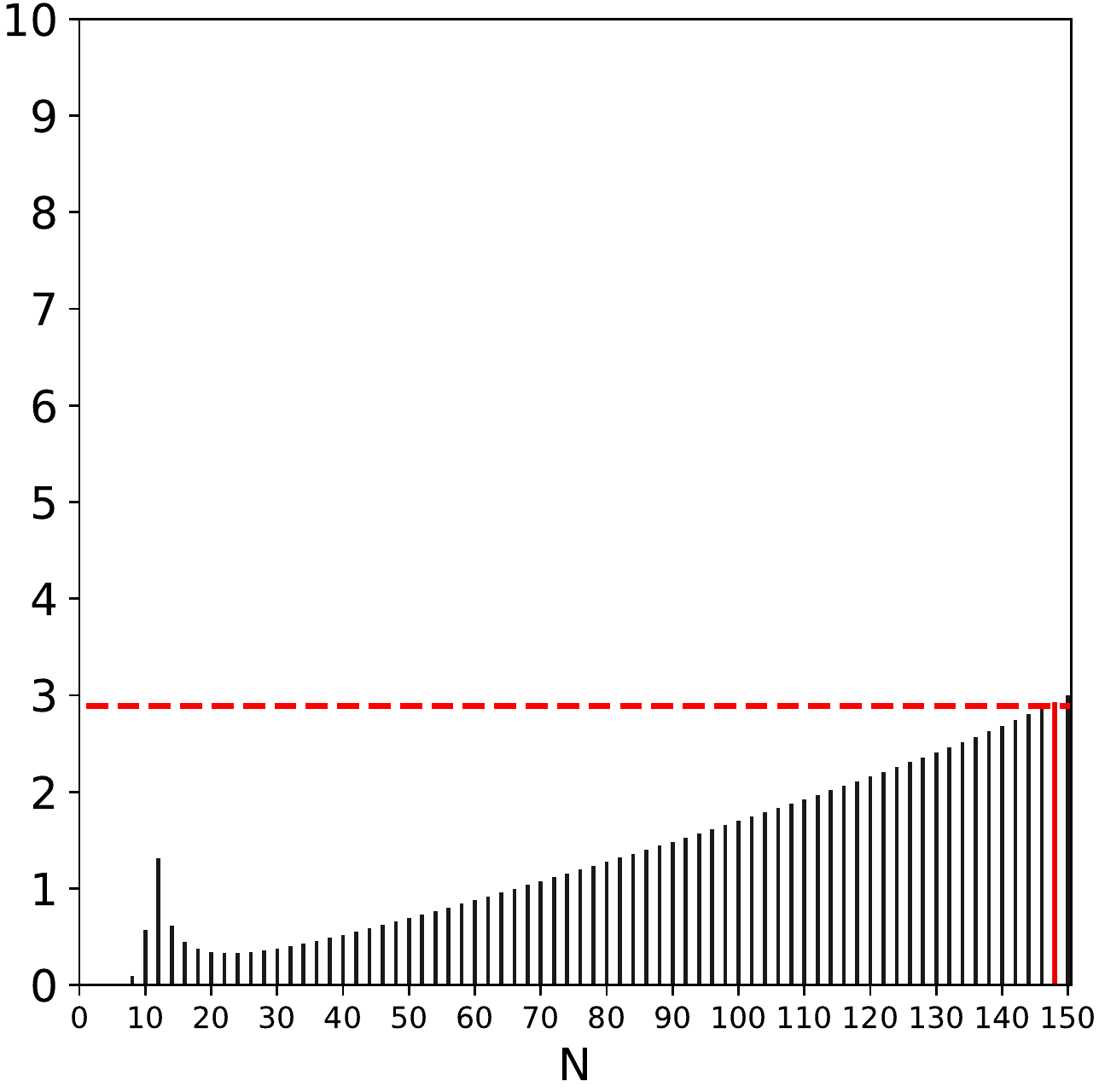}\\
		\end{tabular}
	}
	\caption{$\gamma_{a,b}^N$ as a function of $N$, when the underlying process is a BM with $\sigma_X(T;t)\approx 0.707$. The drift is $a=\mu_X(T;t)=0$. The dashed red horizontal lines indicate the accuracy of the MC method. The red vertical bars indicate when the Hermite series reaches the MC accuracy. \label{BMplot}}
\end{figure}

\begin{figure}[!tp]
	\setlength{\tabcolsep}{2pt}
	\resizebox{1\textwidth}{!}{
		\begin{tabular}{@{}>{\centering\arraybackslash}m{0.04\textwidth}@{}>{\centering\arraybackslash}m{0.24\textwidth}@{}>{\centering\arraybackslash}m{0.24\textwidth}@{}>{\centering\arraybackslash}m{0.24\textwidth}@{}>{\centering\arraybackslash}m{0.24\textwidth}@{}}
			& $\boldsymbol{K = 0.0}$&$\boldsymbol{K = 0.2}$ & $\boldsymbol{K = 0.6}$& $\boldsymbol{K = 1.0}$ \\
			\begin{turn}{90}$\boldsymbol{b =1.0}$\end{turn}&
			\includegraphics[width=0.23\textwidth]{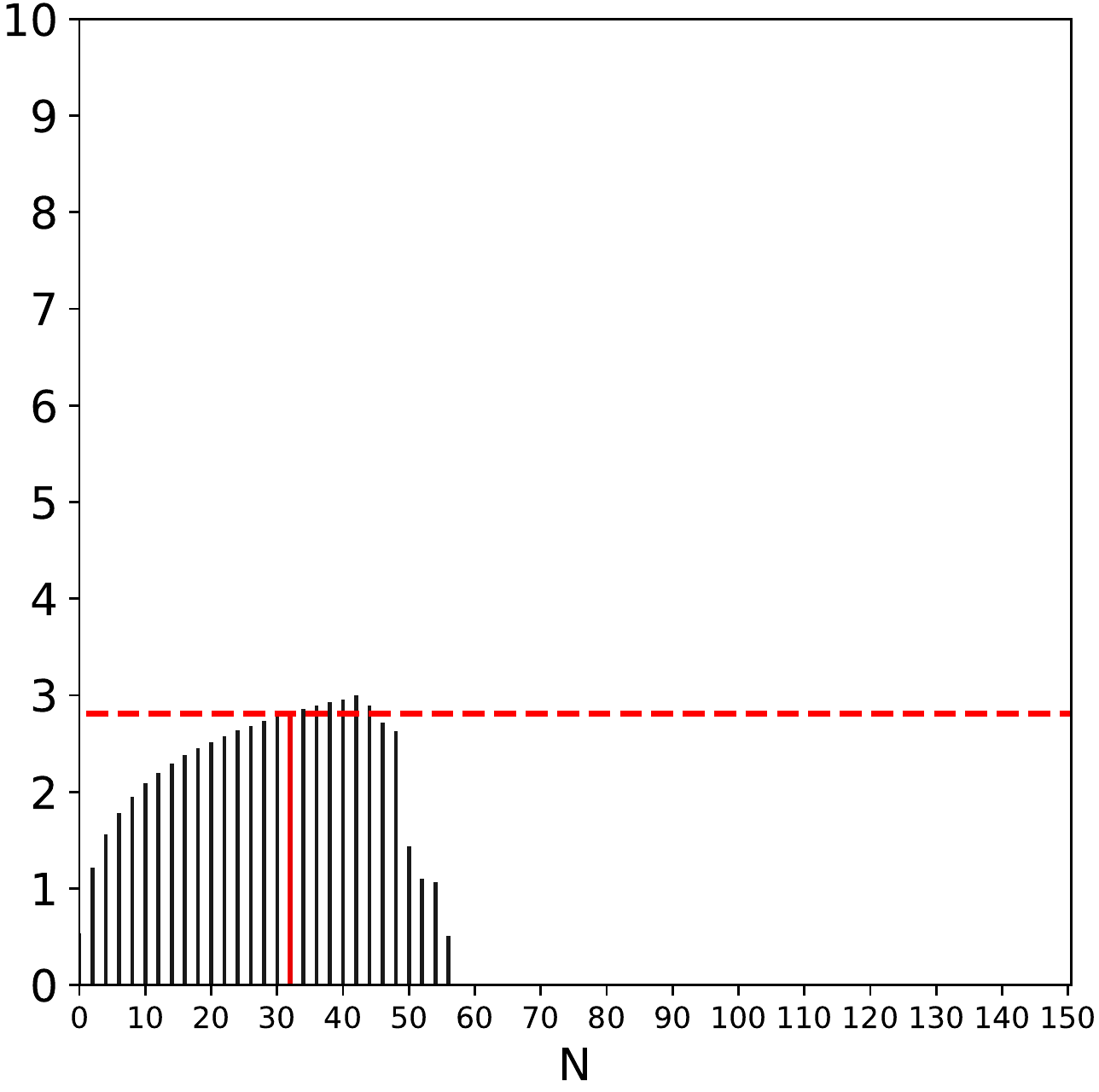}&
			\includegraphics[width=0.23\textwidth]{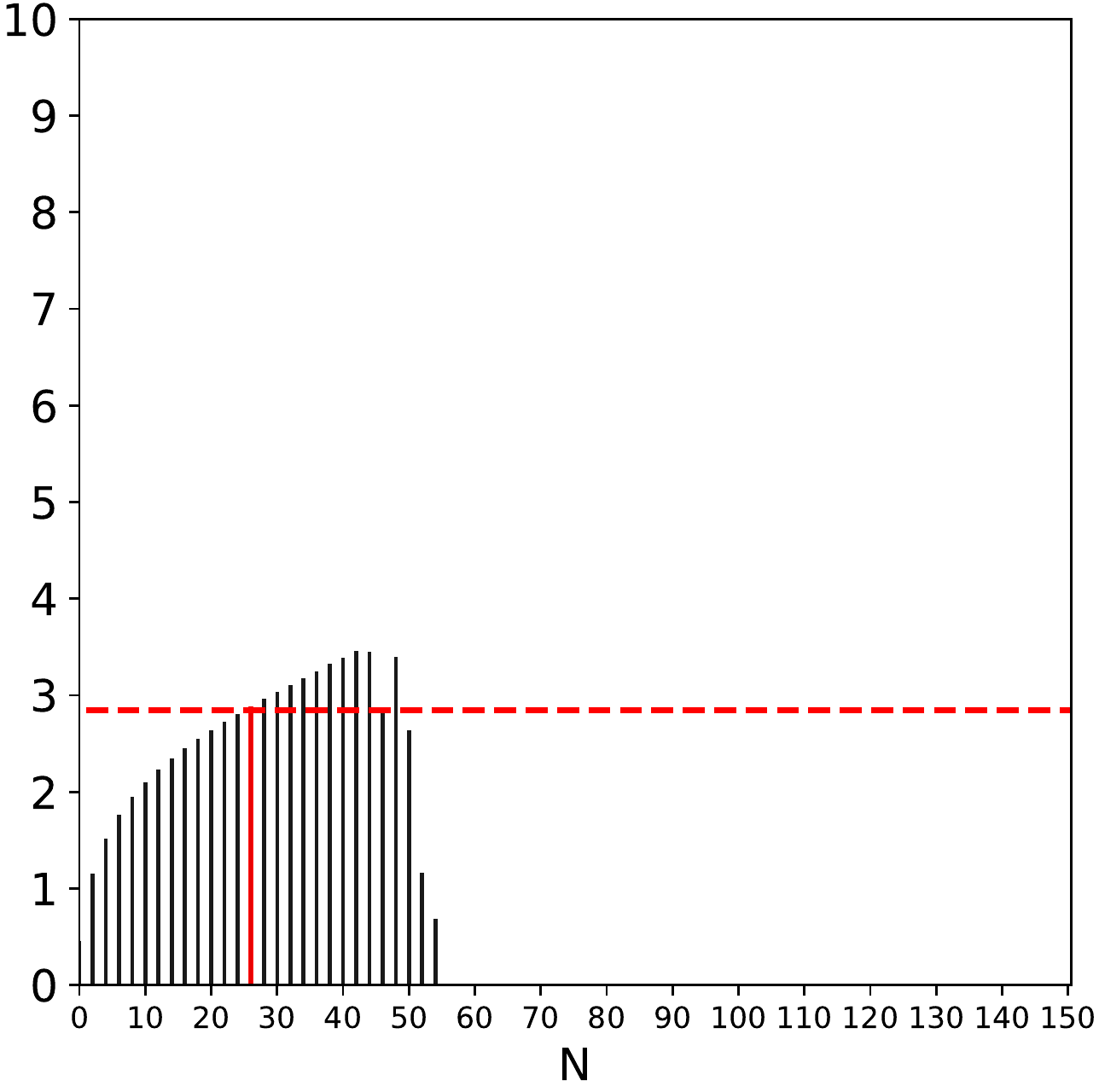} &
			\includegraphics[width=0.23\textwidth]{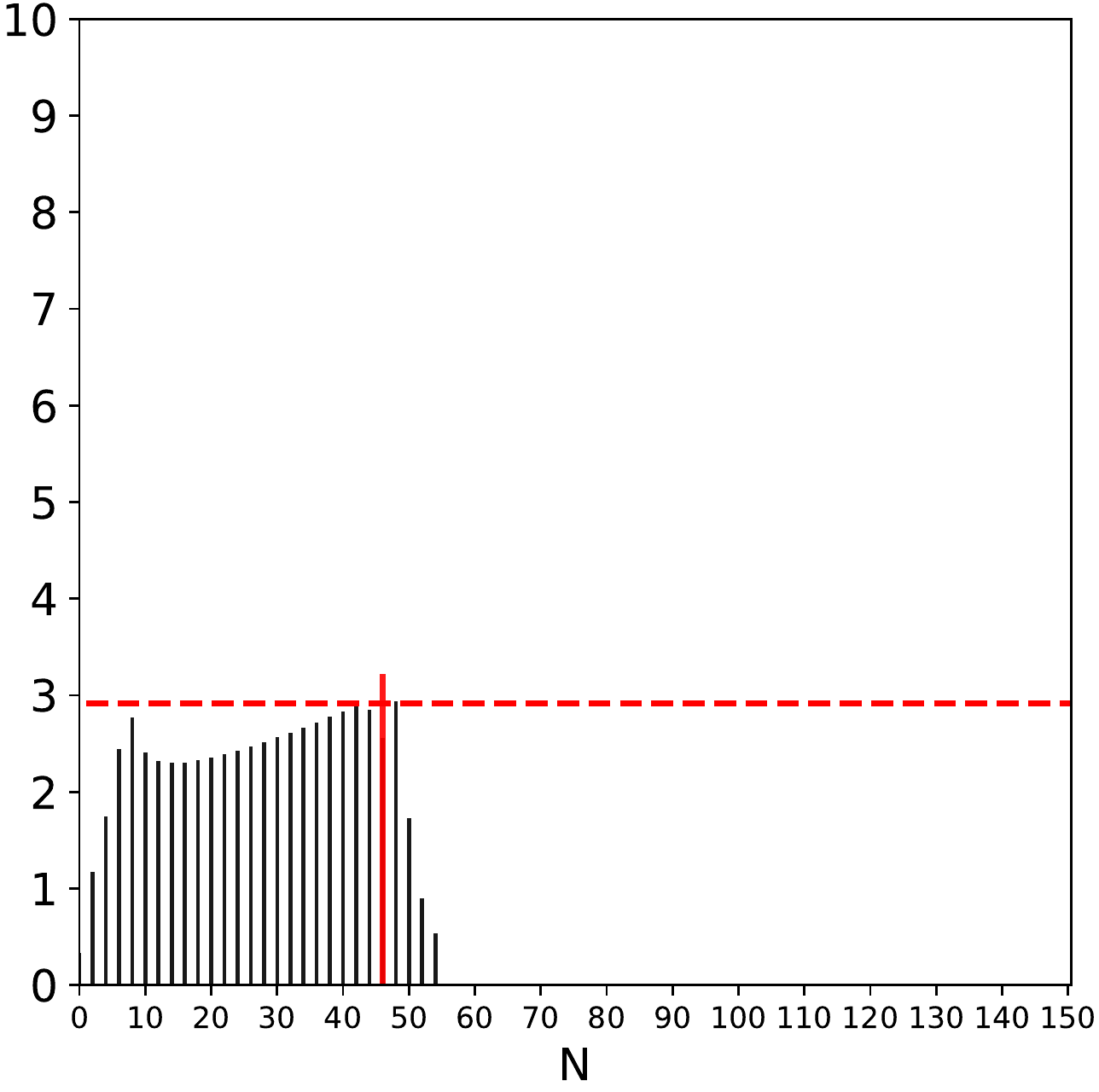} &
			\includegraphics[width=0.23\textwidth]{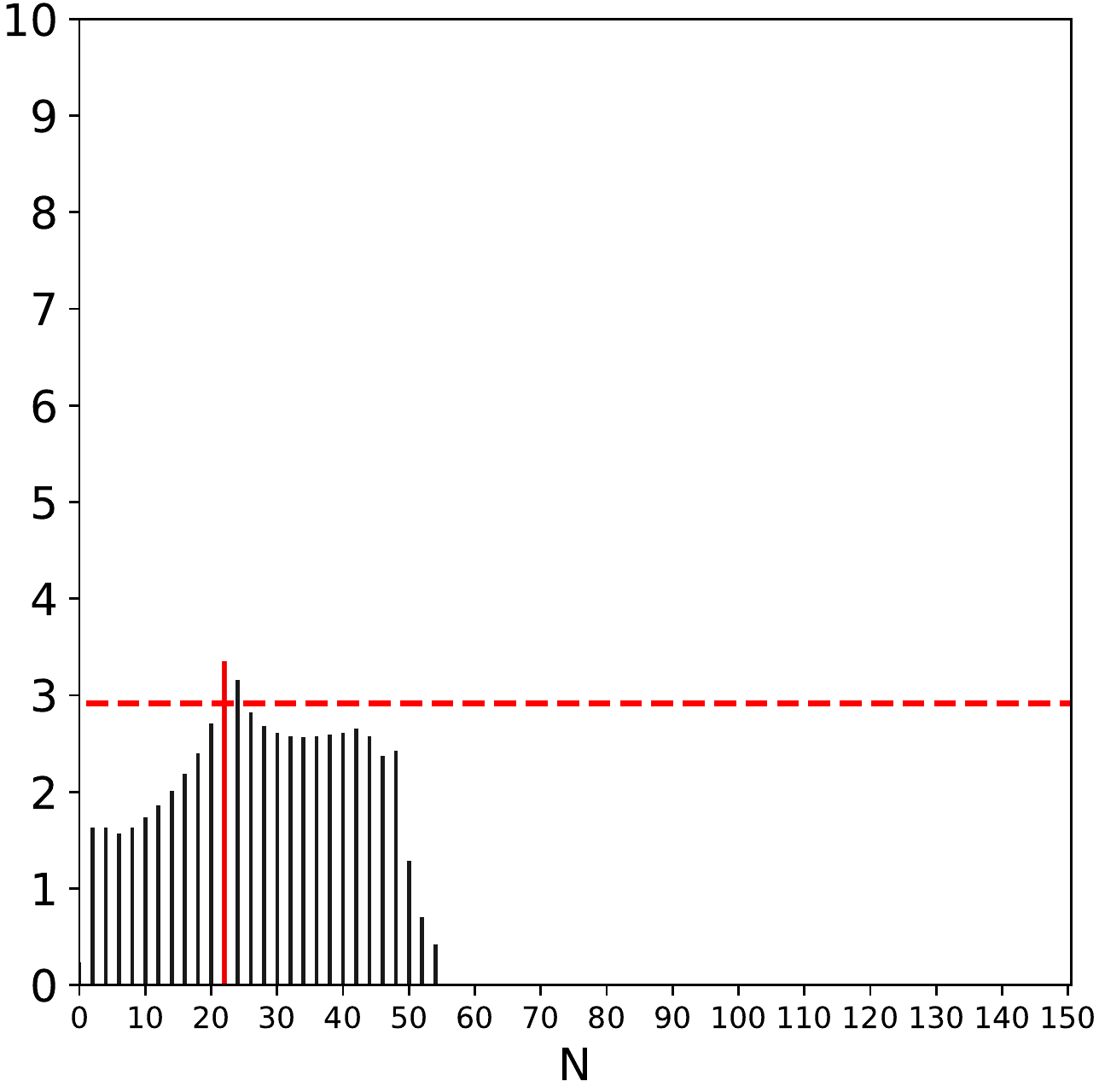}\\
			\begin{turn}{90}$\boldsymbol{b =1.2}$\end{turn}&
			\includegraphics[width=0.23\textwidth]{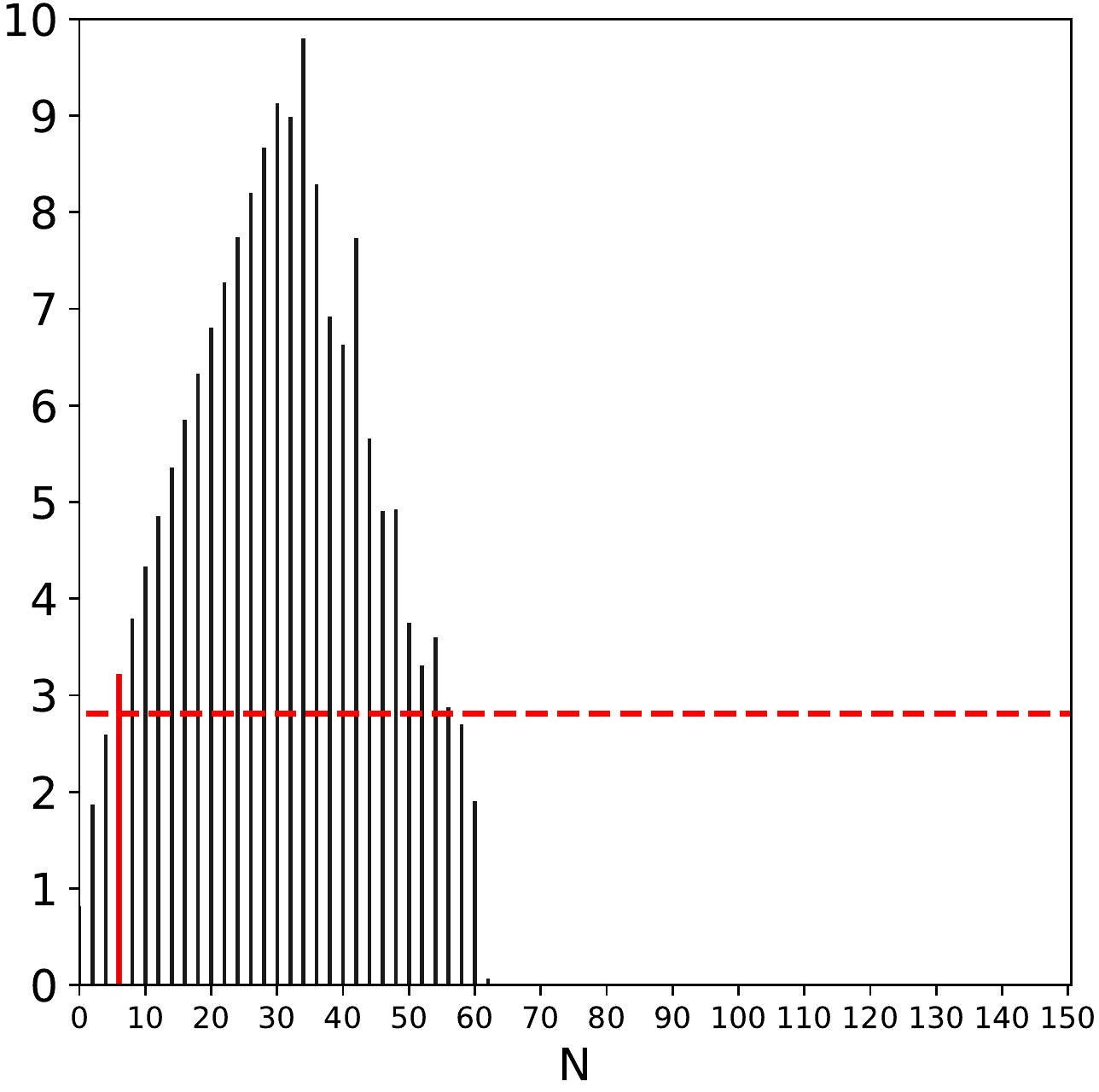}&
			\includegraphics[width=0.23\textwidth]{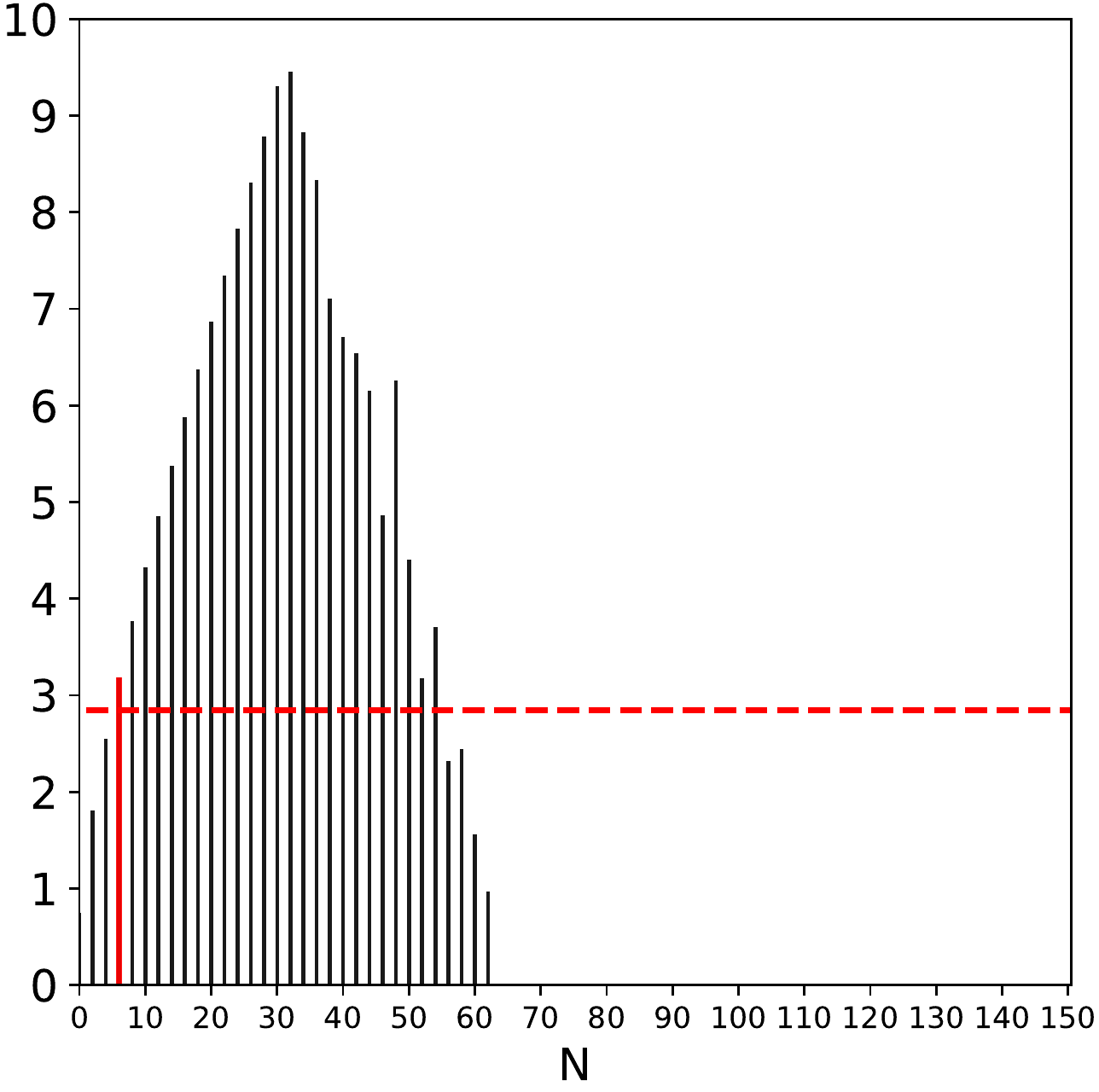} &
			\includegraphics[width=0.23\textwidth]{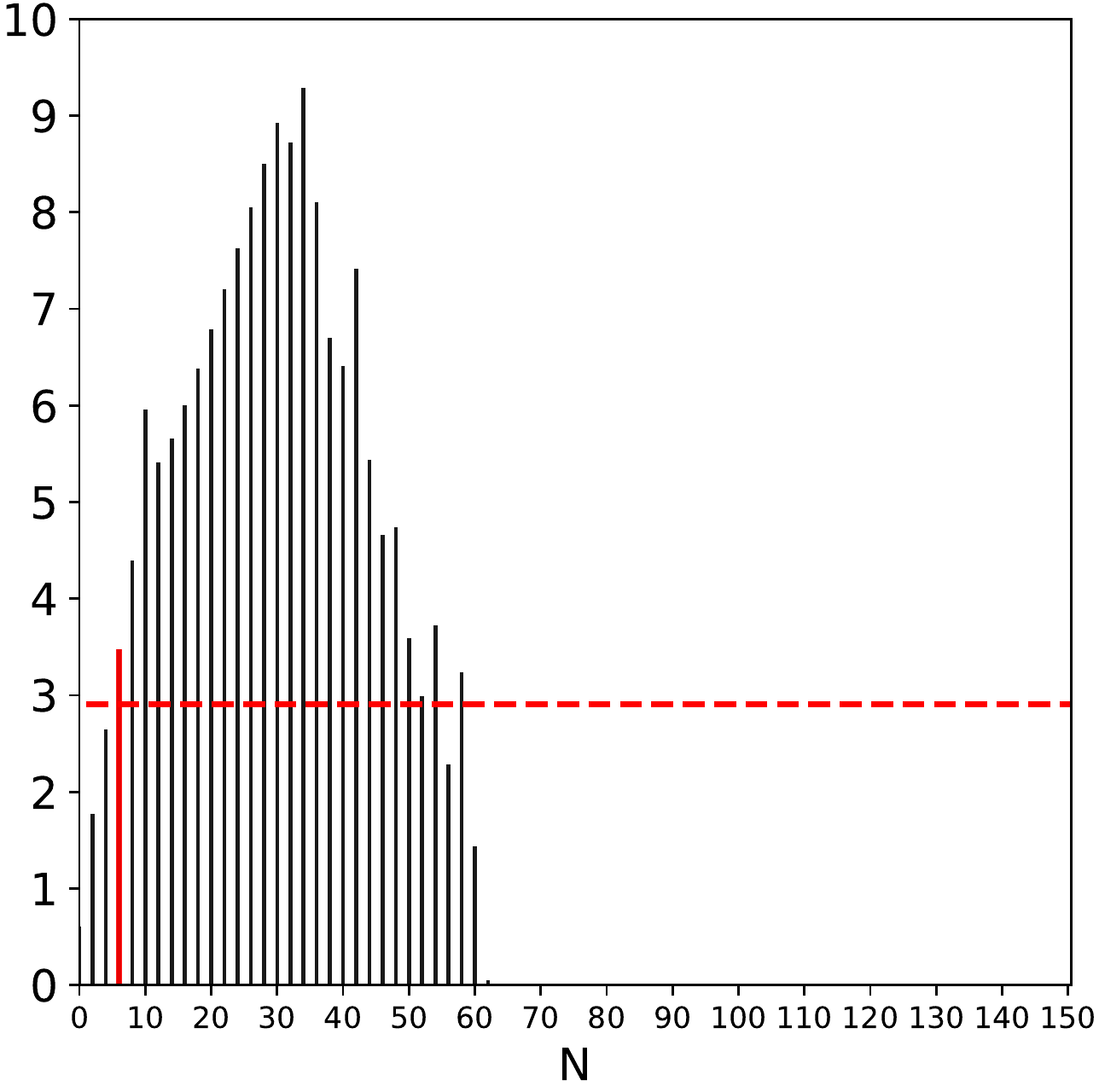} &
			\includegraphics[width=0.23\textwidth]{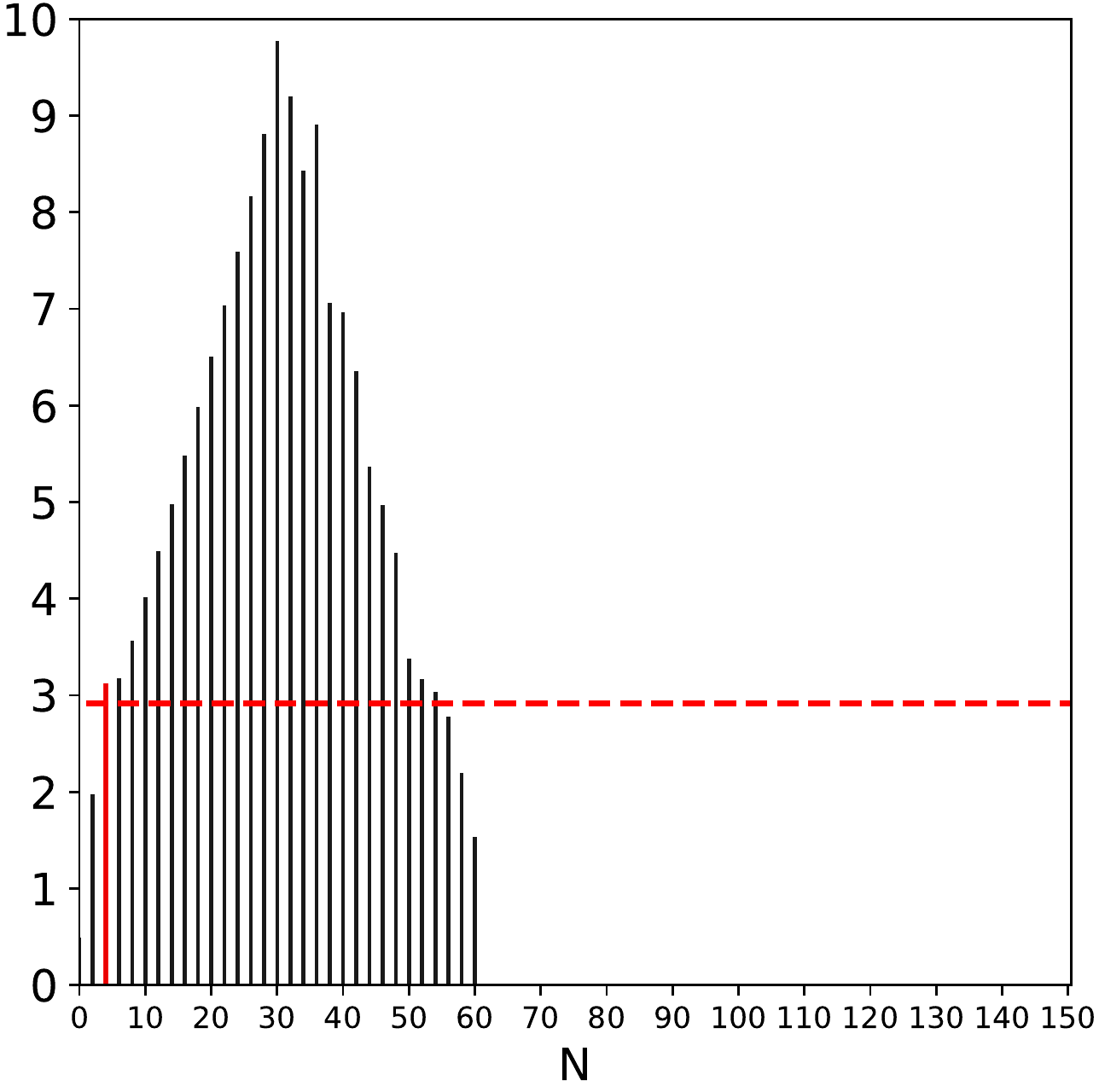}\\
			\begin{turn}{90}$\boldsymbol{b =2.0}$\end{turn}&
			\includegraphics[width=0.23\textwidth]{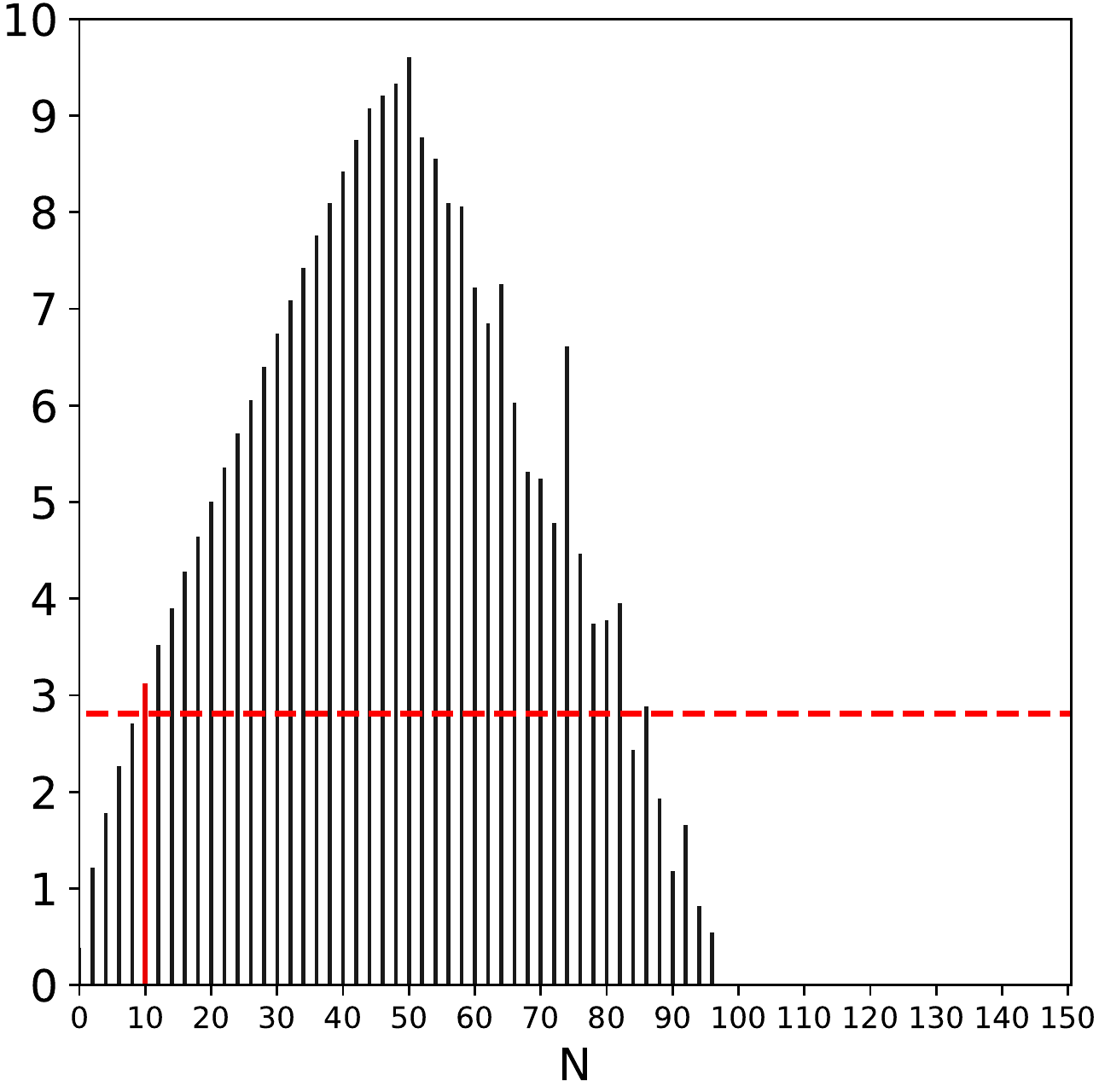}&
			\includegraphics[width=0.23\textwidth]{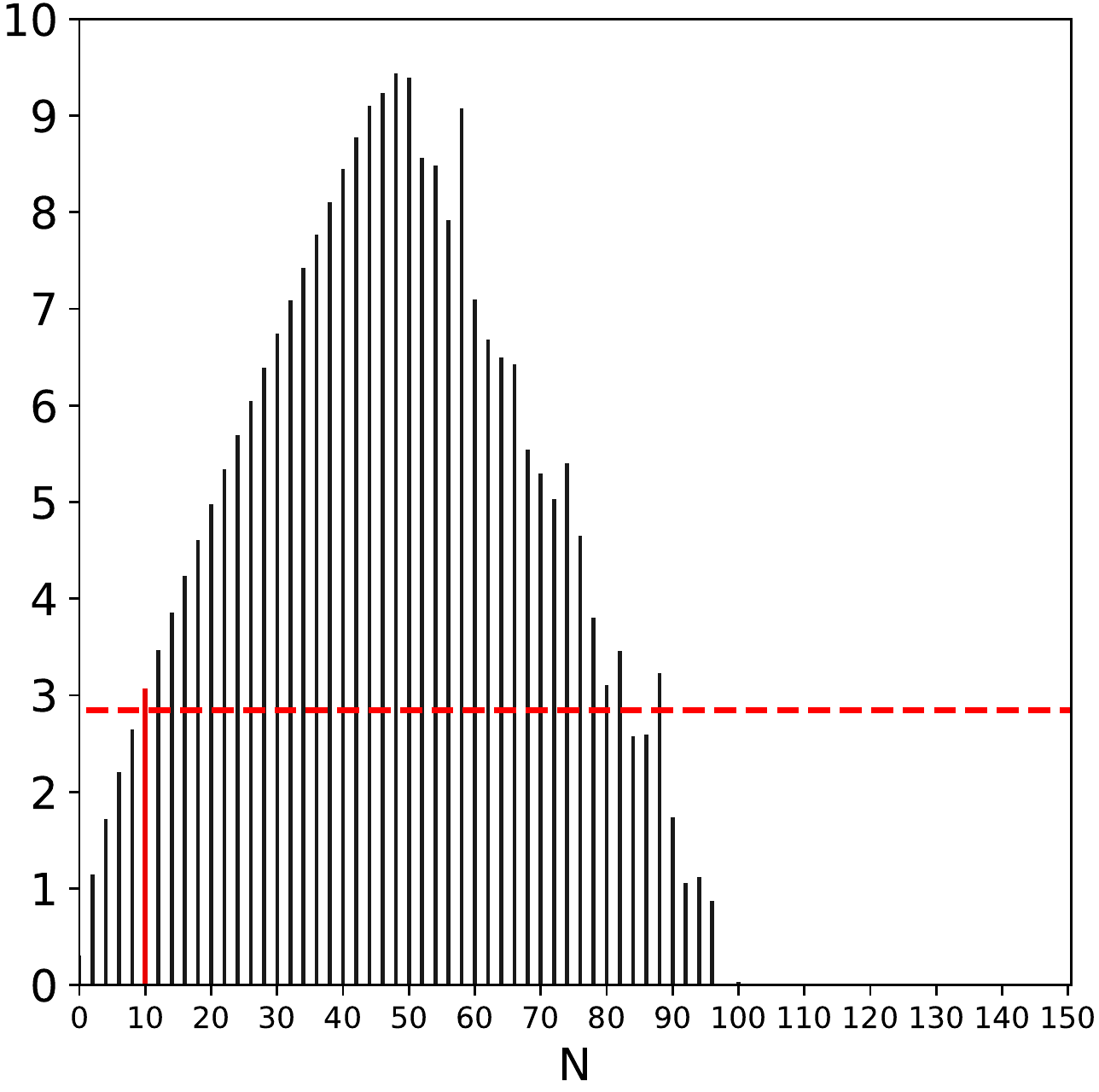} &
			\includegraphics[width=0.23\textwidth]{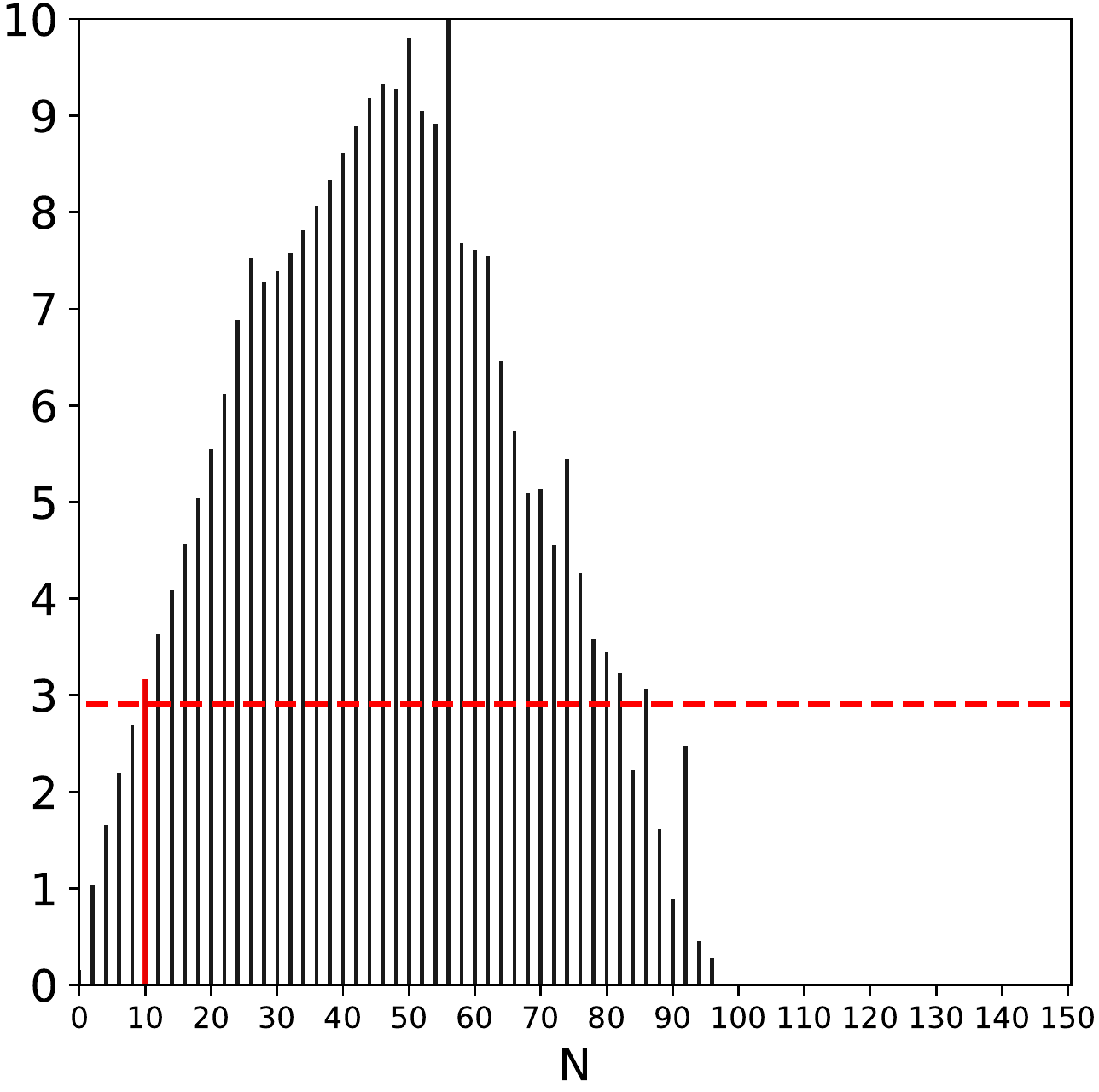} &
			\includegraphics[width=0.23\textwidth]{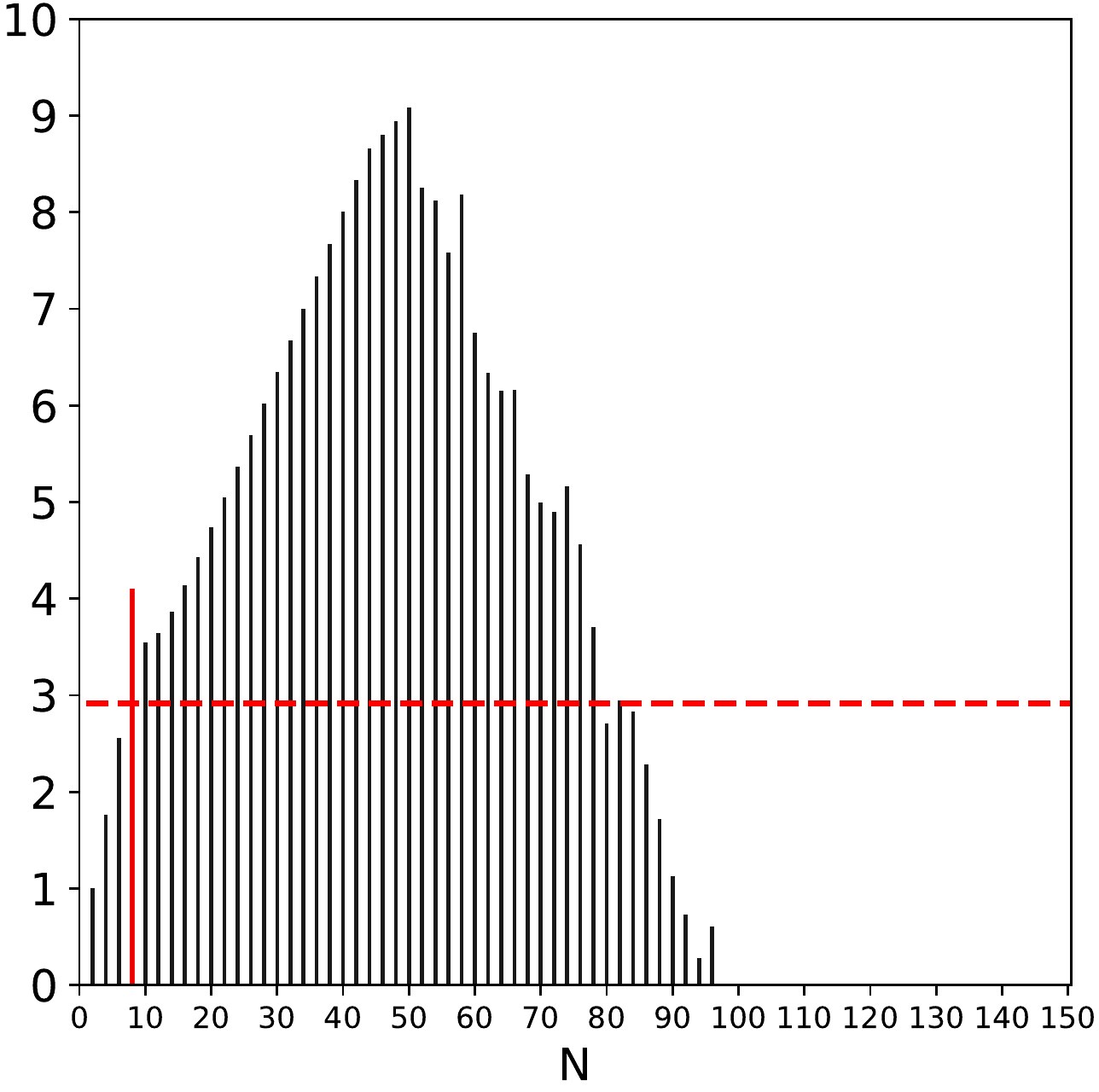}\\
			\begin{turn}{90}$\boldsymbol{b =4.0}$\end{turn}&
			\includegraphics[width=0.23\textwidth]{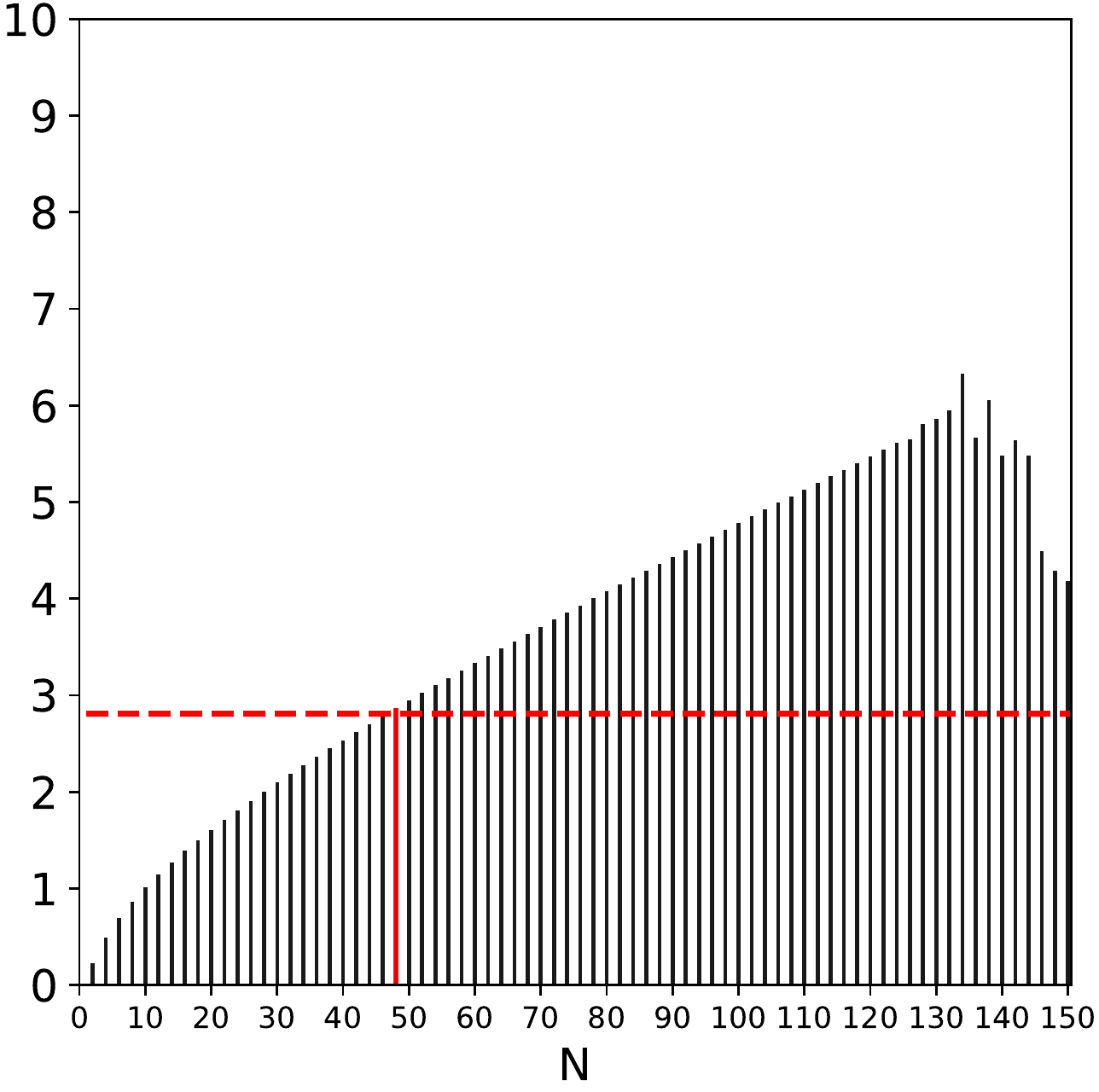}&
			\includegraphics[width=0.23\textwidth]{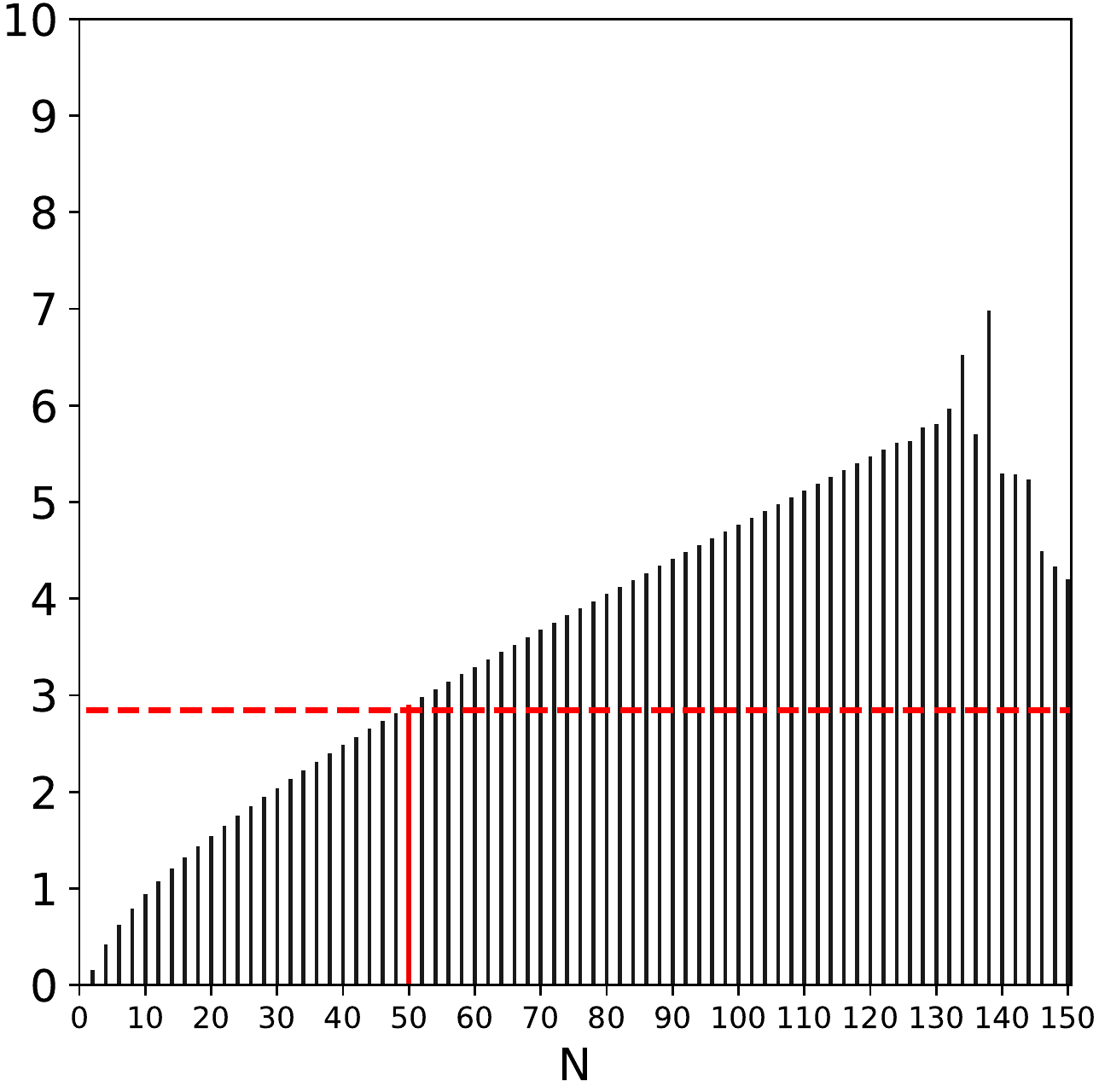} &
			\includegraphics[width=0.23\textwidth]{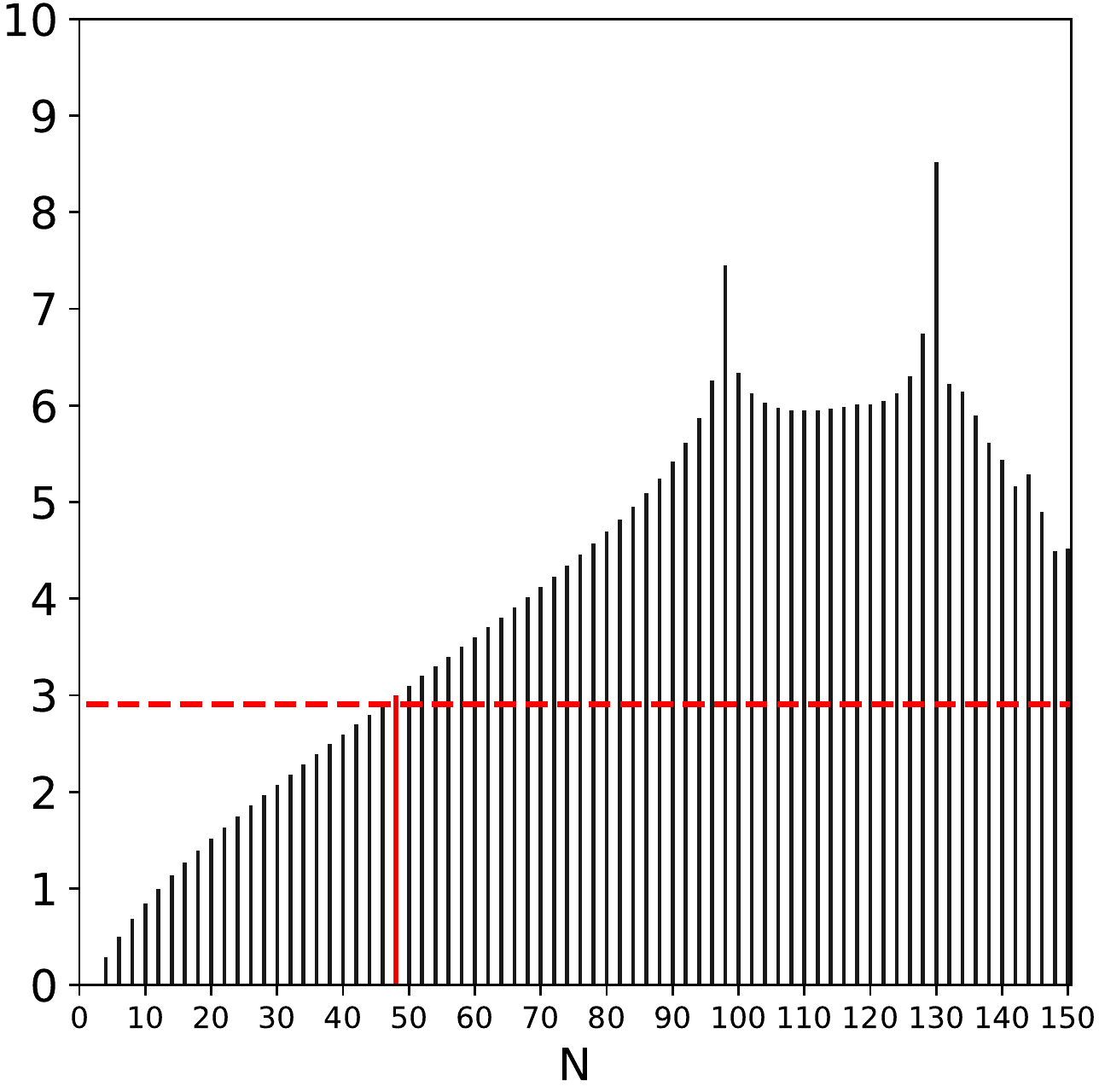} &
			\includegraphics[width=0.23\textwidth]{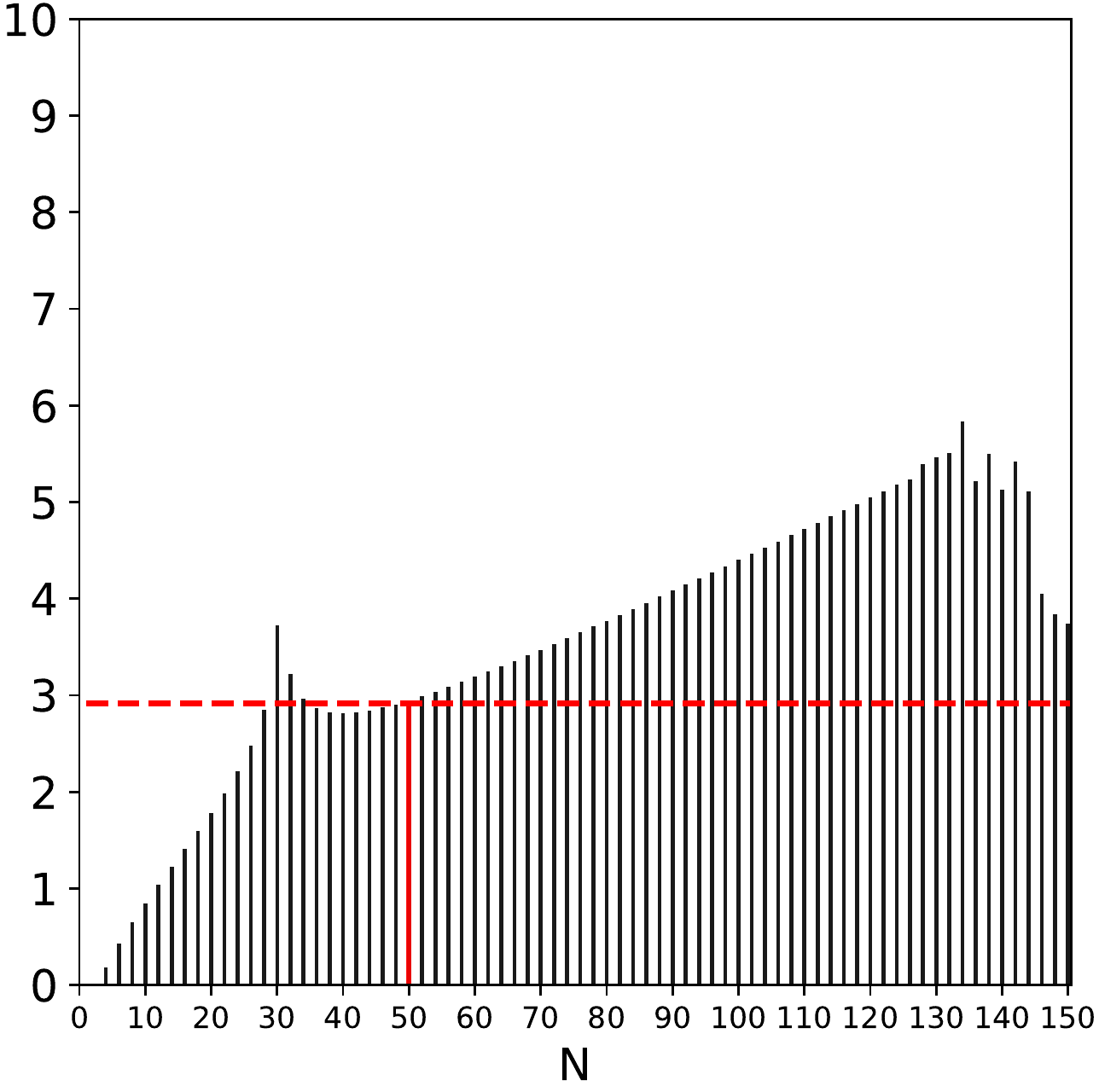}\\
			\begin{turn}{90}$\boldsymbol{b =6.0}$\end{turn}&
			\includegraphics[width=0.23\textwidth]{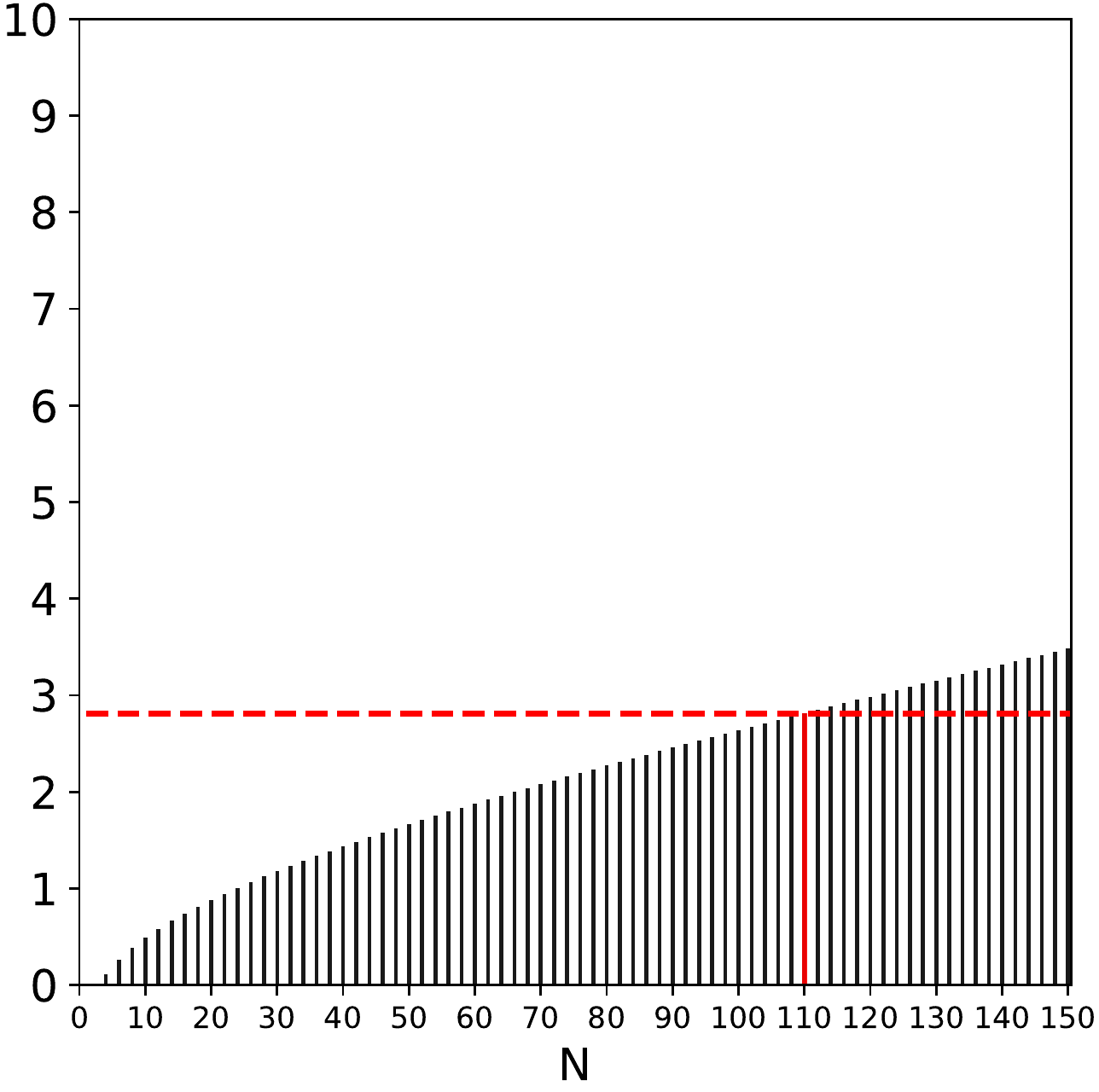}&
			\includegraphics[width=0.23\textwidth]{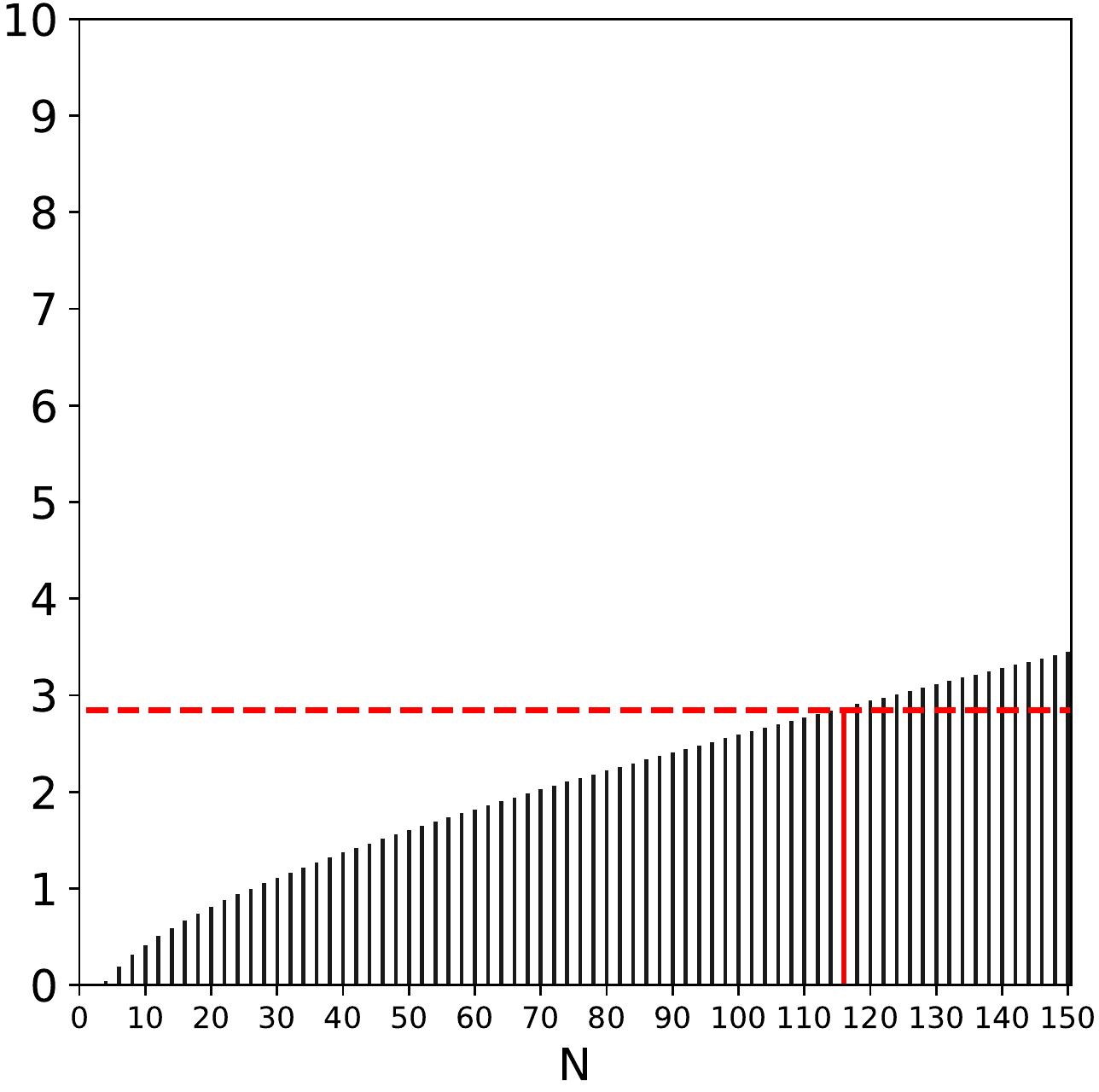} &
			\includegraphics[width=0.23\textwidth]{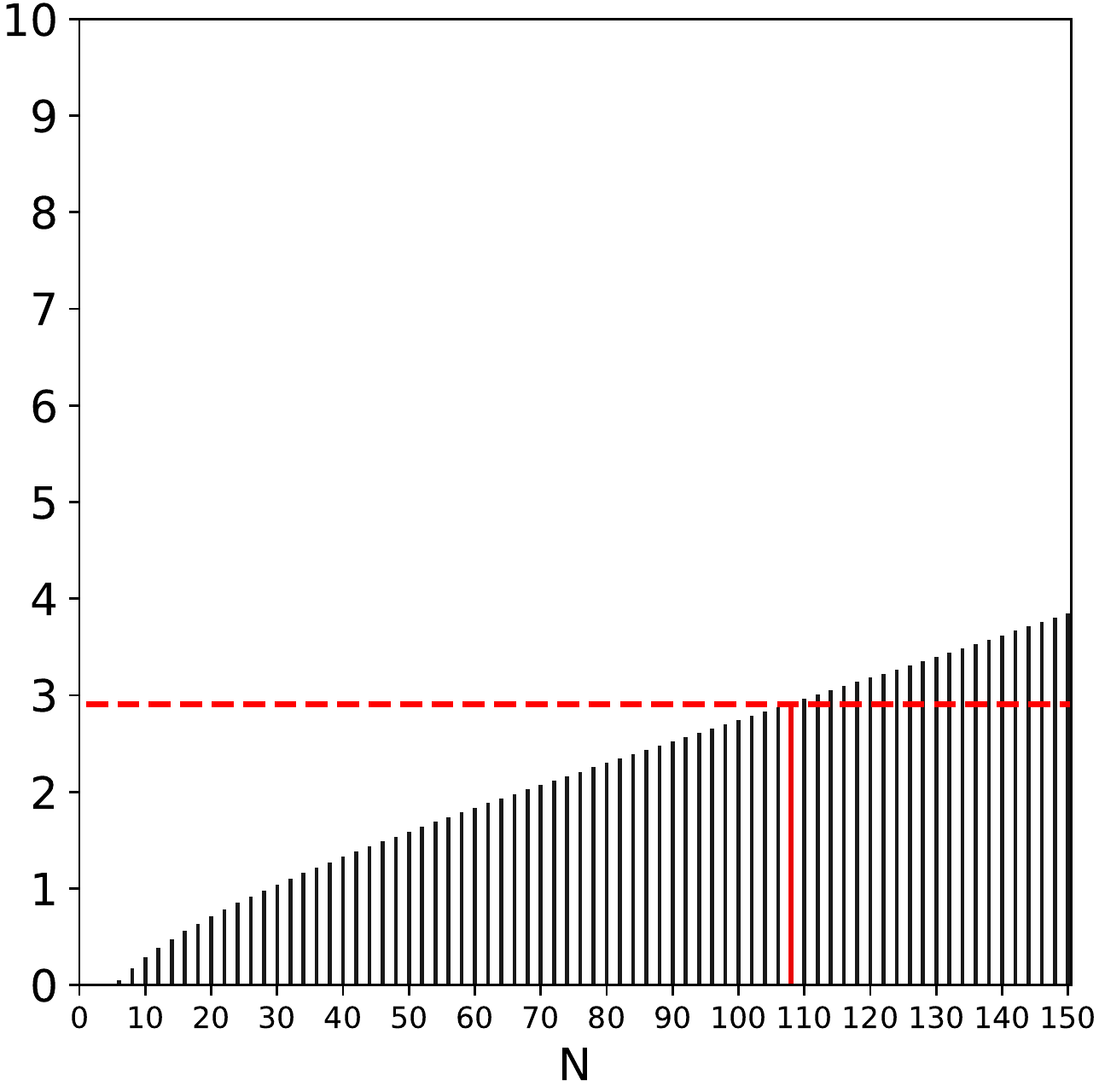} &
			\includegraphics[width=0.23\textwidth]{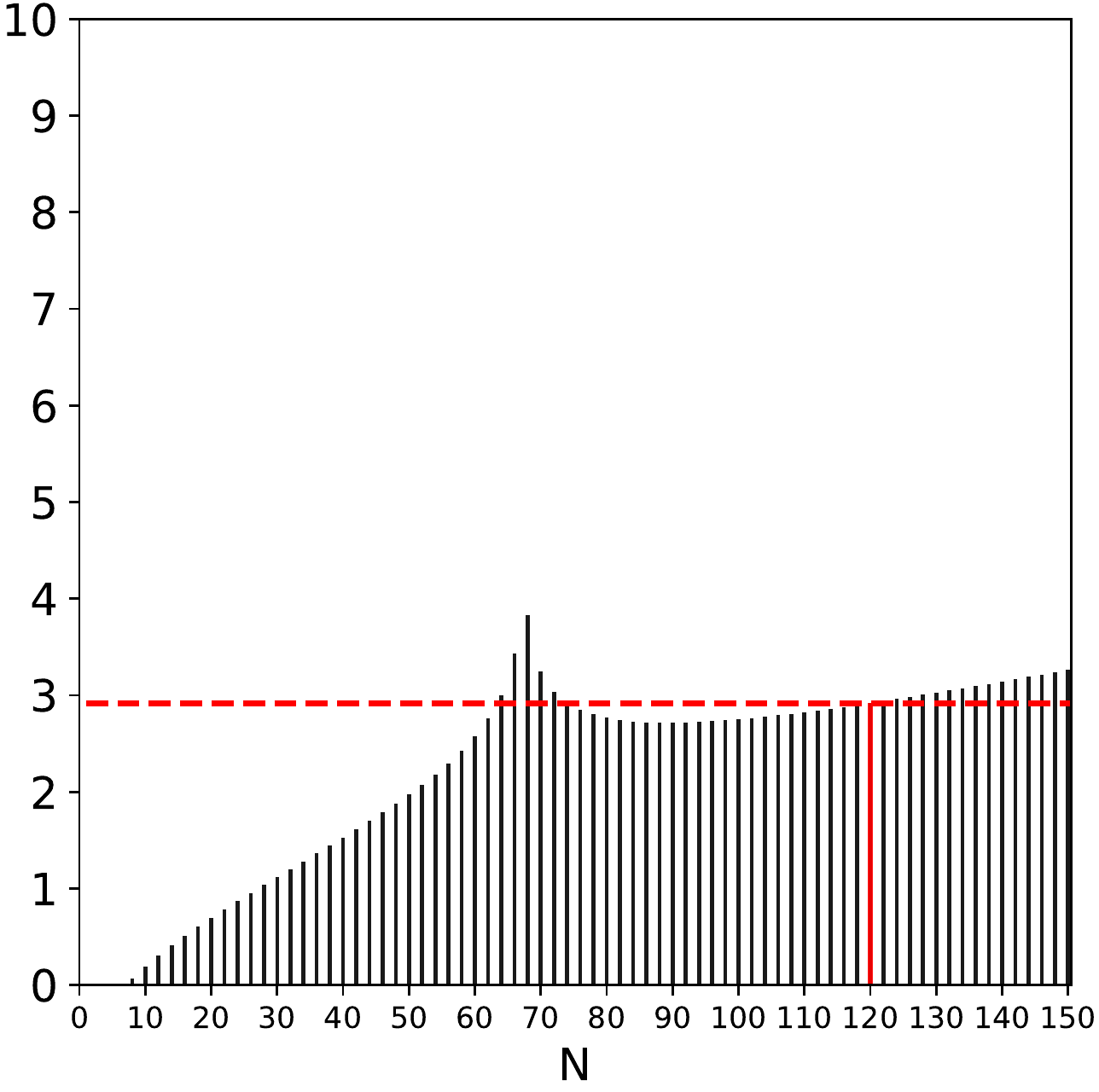}\\
		\end{tabular}
	}
	\caption{$\gamma_{a,b}^N$ as a function of $N$, when the underlying process is a BM with $\sigma_X(T;t)\approx 1.41$. The drift is $a=\mu_X(T;t)=0$. The dashed red horizontal lines indicate the accuracy of the MC method. The red vertical bars indicate when the Hermite series reaches the MC accuracy. \label{BMplot2}}
\end{figure}

In Figure \ref{BMplot} we have $T=1/2$ and $t=0$, hence $\sigma_X(T;t)\approx 0.707$. We observe that for $b=0.5$ the Hermite series is barely able to reach the accuracy of the Monte Carlo simulations, and it is not clear if we can actually consider it converging. This coincides indeed with the case $b = \underline{b}_{\sigma}$ in Proposition \ref{condb} which is expected to show instabilities. Things get better for $b=0.6$ and $b=1.0$: here we observe convergence of the Hermite approximation, reaching a level of accuracy of order $10^{-10}$. This convergence is slower for $b=1.0$ than for $b=0.6$. For $b=2.0$ and $b=3.0$ the convergence is even slower and to reach the same accuracy of the Monte Carlo simulations we need more than $50$ terms in the first case and more than $100$ terms in the second case, meaning that a bigger scale slows down the convergence. 

Very similar comments hold for Figure \ref{BMplot2}, where $T=2$, $t=0$ and $\sigma_X(T;t)\approx 1.41$. Here the plots show a similar behaviour to the ones in Figure  \ref{BMplot}, even if the values of the scale $b$ considered are different. More precisely, for Figure \ref{BMplot2} we consider values for $b$ which are two times (i.e. the double) the values used in Figure \ref{BMplot}. Since from formula  \eqref{Cab}, $C_{a,b}$ is proportional to $\frac{b}{\sigma}$, it seems reasonable to think that this phenomenon is related to the fact that the standard deviation of this second Brownian motion is exactly two times the standard deviation of the Brownian motion in Figure \ref{BMplot}. In other words, because of the ratio $\frac{b}{\sigma}$ that somehow controls the approximation error, in order to get the same accuracy we need to keep this ratio constant. Hence, if $\sigma$ doubles, also $b$ must double. Finally, we notice that the singularity in the sense of Proposition \ref{condb} is here expected for $b=1.0$, as indeed Figure \ref{BMplot2} shows.

Another phenomenon observable in both Figure \ref{BMplot} and \ref{BMplot2} is that, after reaching the best accuracy, the bars in the plots decrease. Moreover, some parts of the plots are empty, as for example, in the plot corresponding to $K=0.0$ and $b=1.0$ of Figure \ref{BMplot}, after $N=60$. This is because, after that, the values of $\gamma_{a,b}^N$ become negative, hence they don't appear in the plot. A negative $\gamma_{a,b}^N$ means in particular that the value of $\Pi_{K,N}^{a,b}(t)$ is completely far away from the true price value. We believe that this is due to numerical instabilities. In computing the approximation $\Pi_{K,N}^{a,b}(t)$ in Theorem \ref{price}, we need indeed the conditional moments of $X(T)$. It is clear that for high values of the truncation number $N$, we need then to calculate high-order moments of $X$, which create numerical instabilities due to rounding errors.

\subsection{Gaussian Ornstein--Uhlenbeck process}
We consider $X=Y$, where $Y$ is the Gaussian Ornstein--Uhlenbeck (OU) process defined by
\begin{equation}
	\label{OU}
	dY(t) = (b_0+b_1Y(t))\,dt + \sqrt{\sigma_0}\,dB(t),
\end{equation}
for $b_0, b_1, \sigma_0\in \R$ and $\sigma_0>0$. Then $\left. X(T)\right|\F_t\sim \mathcal{N}(\mu_X(T;t), \sigma_X^2(T;t) )$ with $$\mu_X(T;t) = X(t) \,e^{b_1(T-t)} + \frac{b_0}{b_1}\left(e^{b_1(T-t)}-1\right) \quad \mbox{ and } \quad \sigma_X^2(T;t) :=  \frac{\sigma_0}{2b_1}\left(e^{2b_1(T-t)}-1\right).$$
Moreover, since $Y$ is a polynomial process (thus $X$ is a polynomial process), the moments of $X(T)$ are given by Corollary \ref{prop2} and the price functional $\Pi_K(t)$ is again given in closed form by equation \eqref{closed_price}.

In Figure \ref{OUplot} and \ref{OUplot2} we report the numerical results for $(b_0, b_1, \sigma_0) = (-0.02,  0.01,  0.98)$, $T=2$ and $t=0$, which have been chosen to get $\mu_X(T;t)=X(0) =2.0$ for Figure \ref{OUplot} and $\mu_X(T;t)=X(0) =20.0$ for Figure \ref{OUplot2}. Moreover $\sigma_X(T;t)\approx 1.41$ for both figures as for the Brownian motion in Figure \ref{BMplot2}. Indeed, Figure \ref{OUplot} looks very similar to Figure \ref{BMplot2} and the behaviour of the approximation with respect to the scale $b$ is similar, as we would expect due to the fact the the volatility is the same. However, the maximum accuracy reached is lower than the one for the Brownian motion ($10^{-8}$ at best). Moreover, the numerical instabilities in the sense discussed above appear earlier, namely, with a smaller $N$. We believe that both these phenomena are related to the fact that $X(0) = 2.0>0$: the high-order moments of the process $X$ that we need to calculate for approximating the price reach here larger values than for the Brownian motion, which has mean zero. Hence the instabilities occur at an earlier stage. This phenomenon is even more emphasised in Figure \ref{OUplot2} where $X(0)=20.0>>0$. Here indeed the numerical instabilities start around $N=20$, which in some cases is not enough to reach a reasonable accuracy.

\begin{figure}[!tp]
	\setlength{\tabcolsep}{2pt}
	\resizebox{1\textwidth}{!}{
		\begin{tabular}{@{}>{\centering\arraybackslash}m{0.04\textwidth}@{}>{\centering\arraybackslash}m{0.24\textwidth}@{}>{\centering\arraybackslash}m{0.24\textwidth}@{}>{\centering\arraybackslash}m{0.24\textwidth}@{}>{\centering\arraybackslash}m{0.24\textwidth}@{}}
			& $\boldsymbol{K = 1.0}$&$\boldsymbol{K = 2.0}$ & $\boldsymbol{K = 3.0}$& $\boldsymbol{K = 4.0}$ \\
			\begin{turn}{90}$\boldsymbol{b =1.0}$\end{turn}&
			\includegraphics[width=0.23\textwidth]{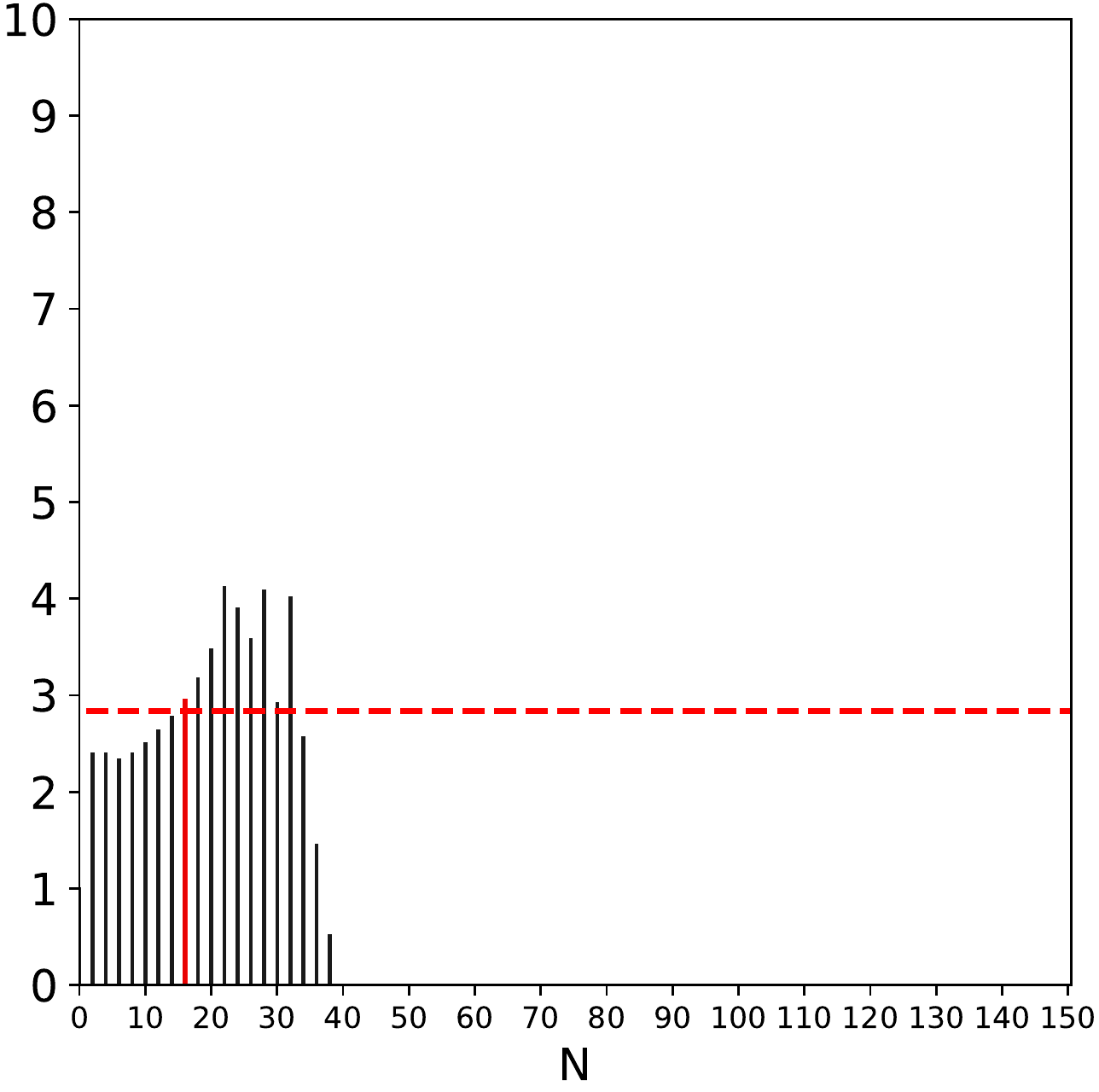}&
			\includegraphics[width=0.23\textwidth]{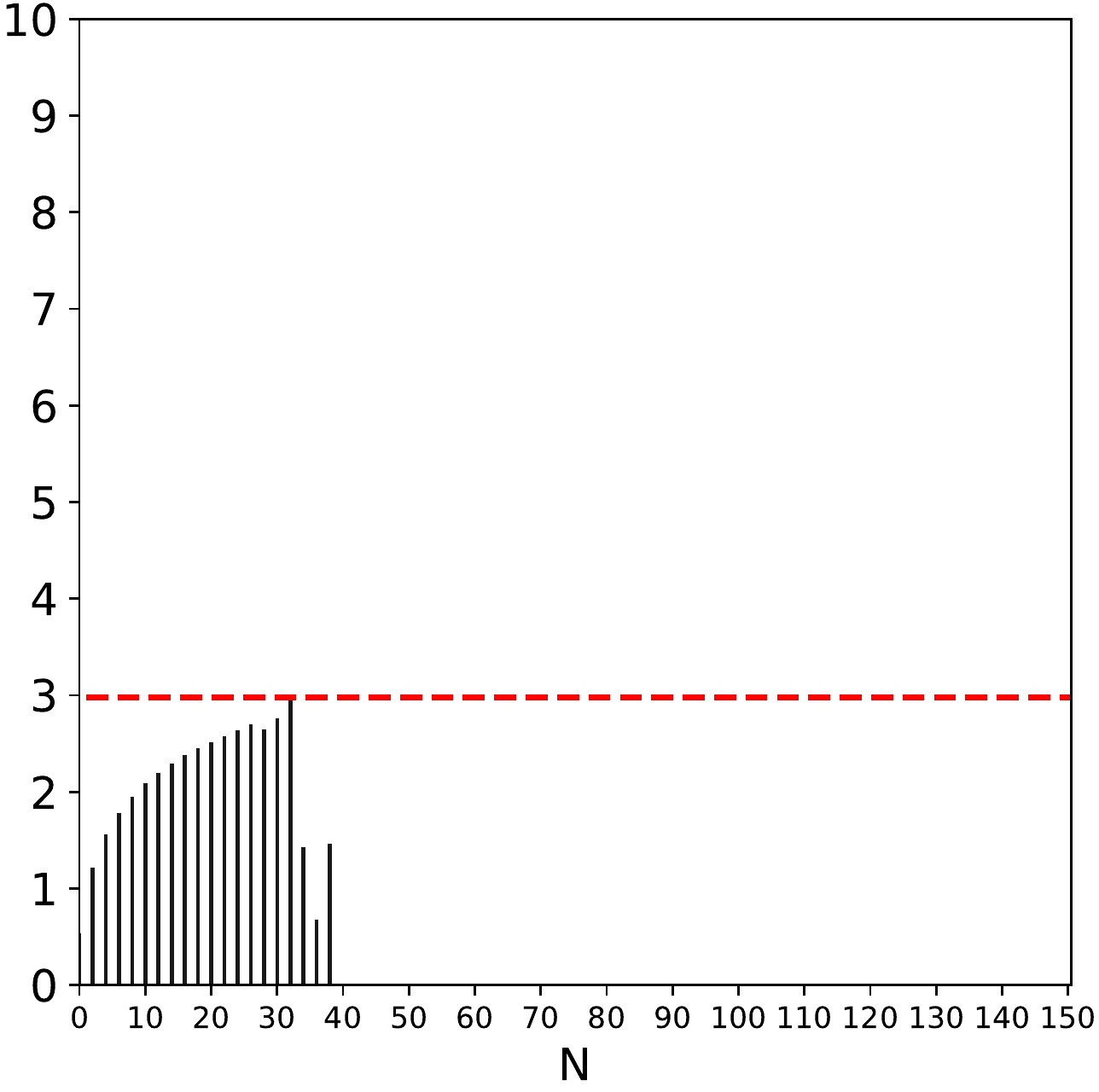} &	\includegraphics[width=0.23\textwidth]{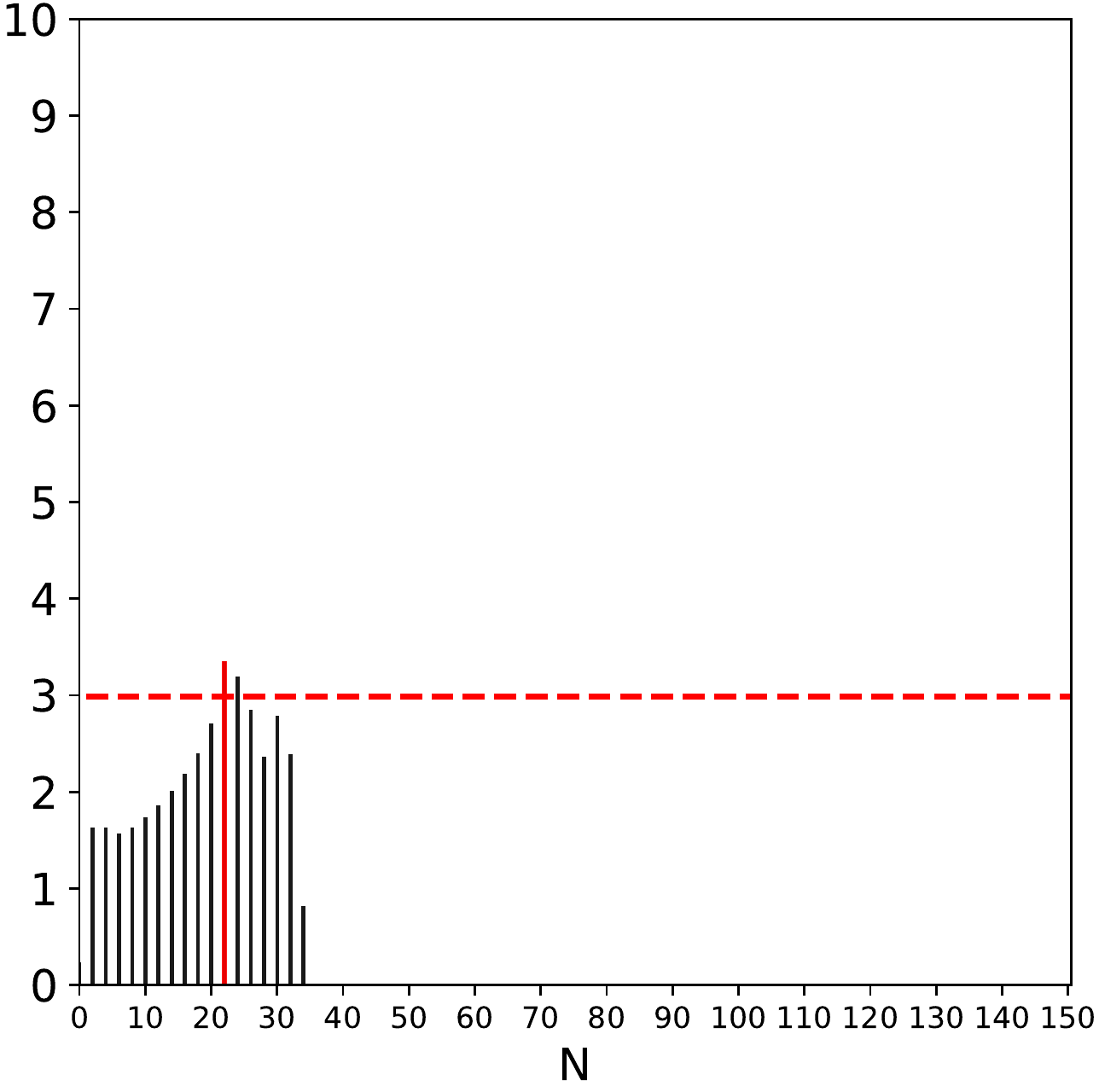} &
			\includegraphics[width=0.23\textwidth]{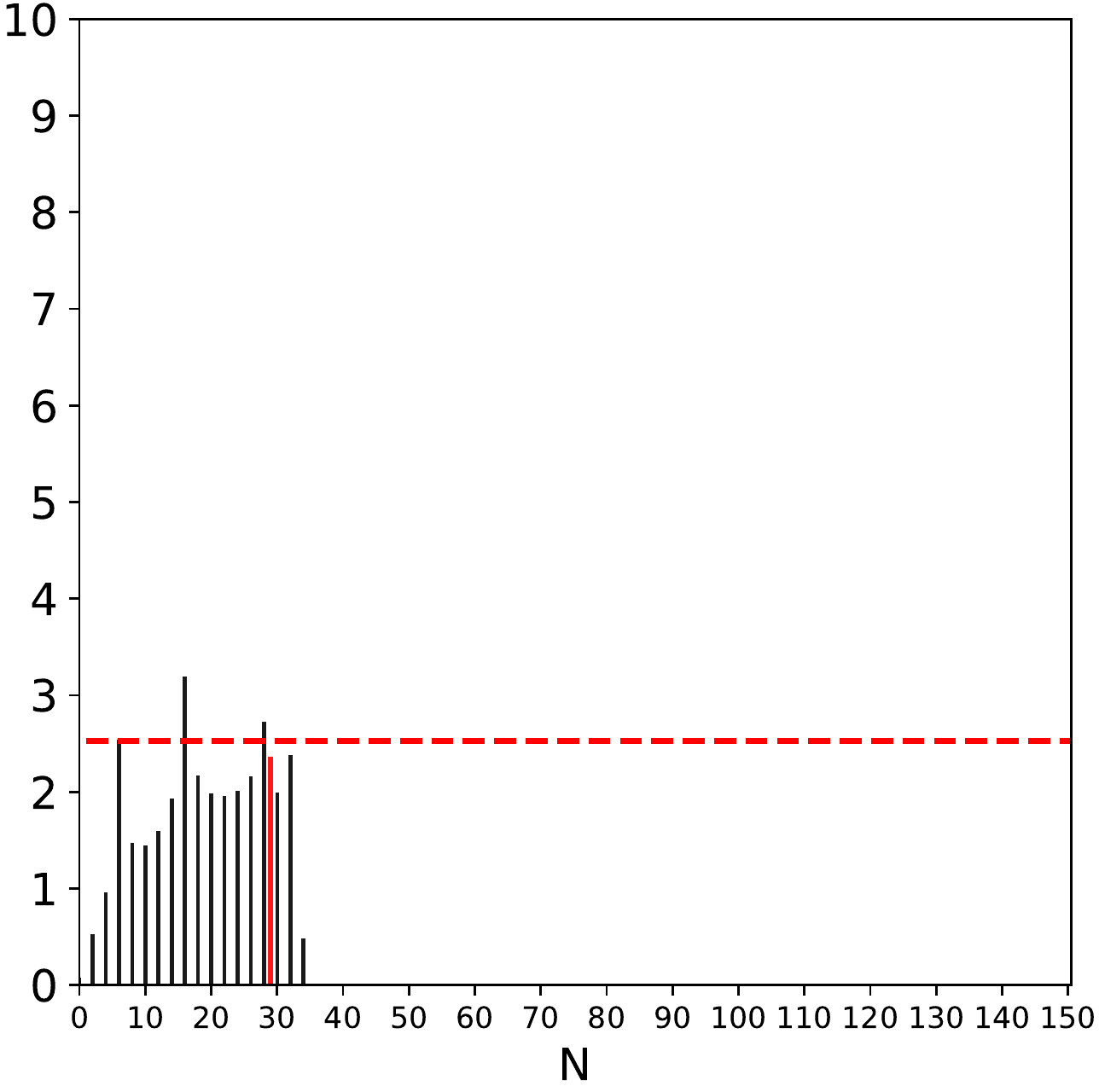}\\
			\begin{turn}{90}$\boldsymbol{b =1.2}$\end{turn}&
			\includegraphics[width=0.23\textwidth]{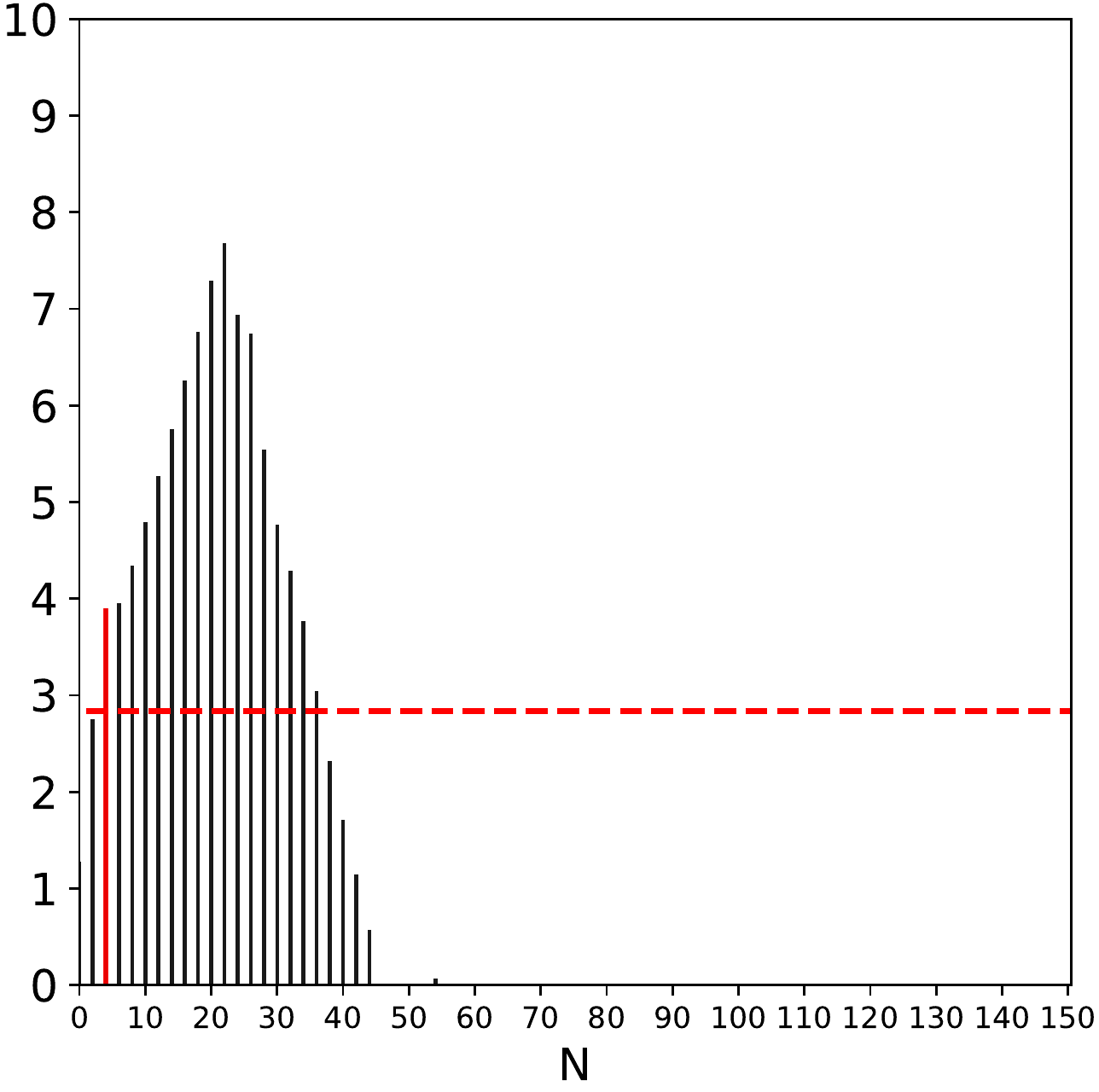}&
			\includegraphics[width=0.23\textwidth]{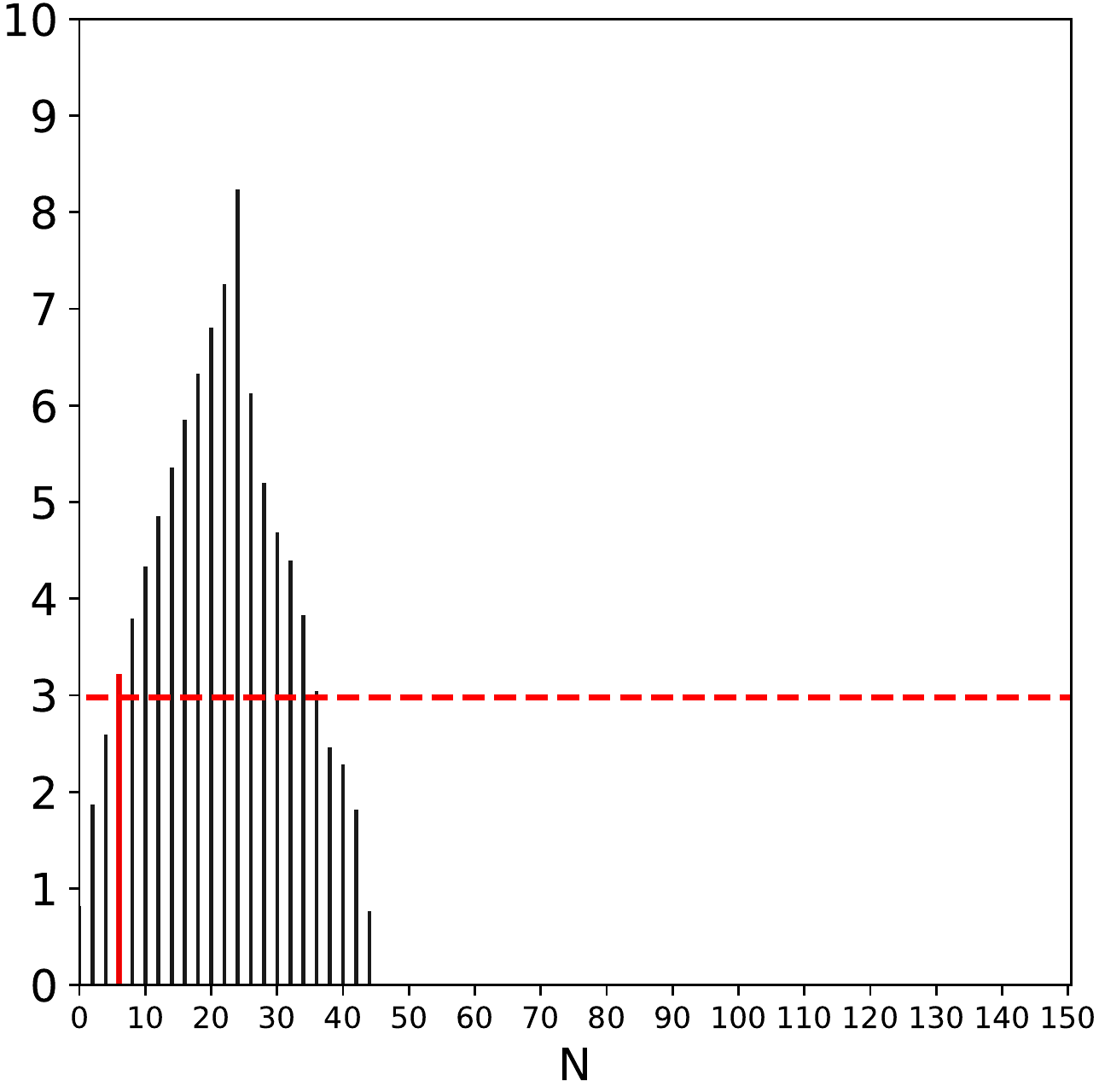} &	\includegraphics[width=0.23\textwidth]{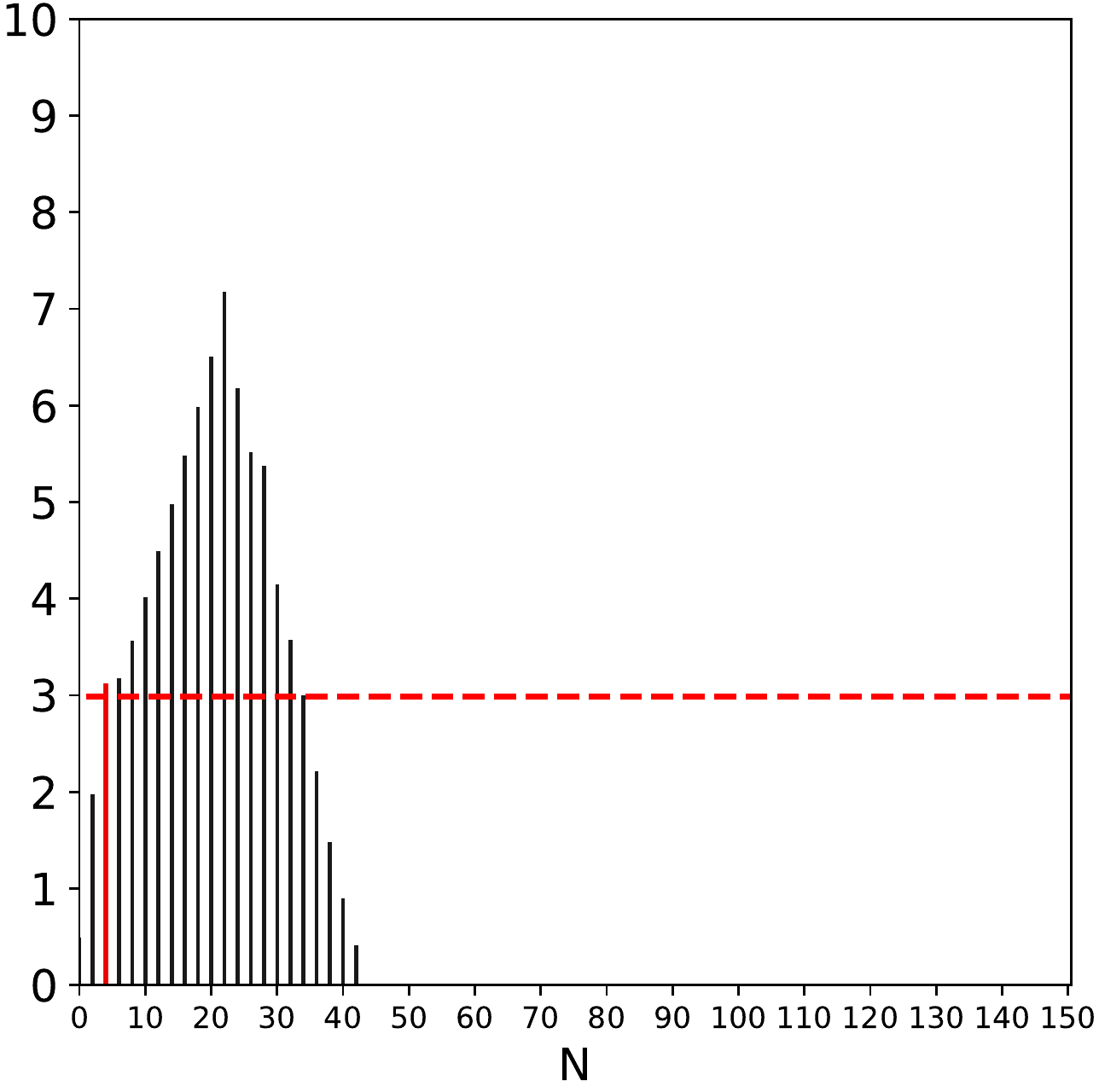} &
			\includegraphics[width=0.23\textwidth]{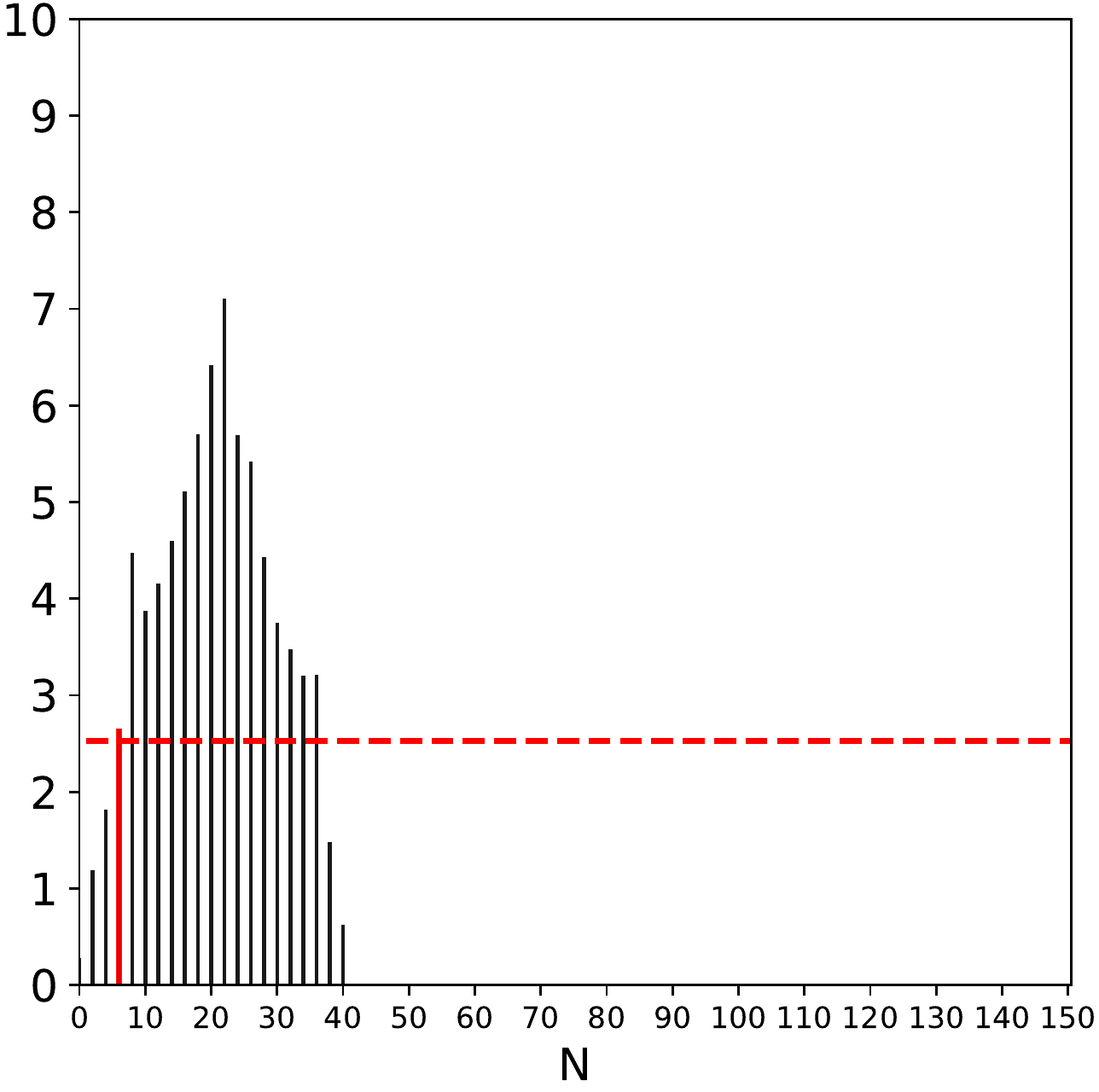}\\
			\begin{turn}{90}$\boldsymbol{b =2.0}$\end{turn}&
			\includegraphics[width=0.23\textwidth]{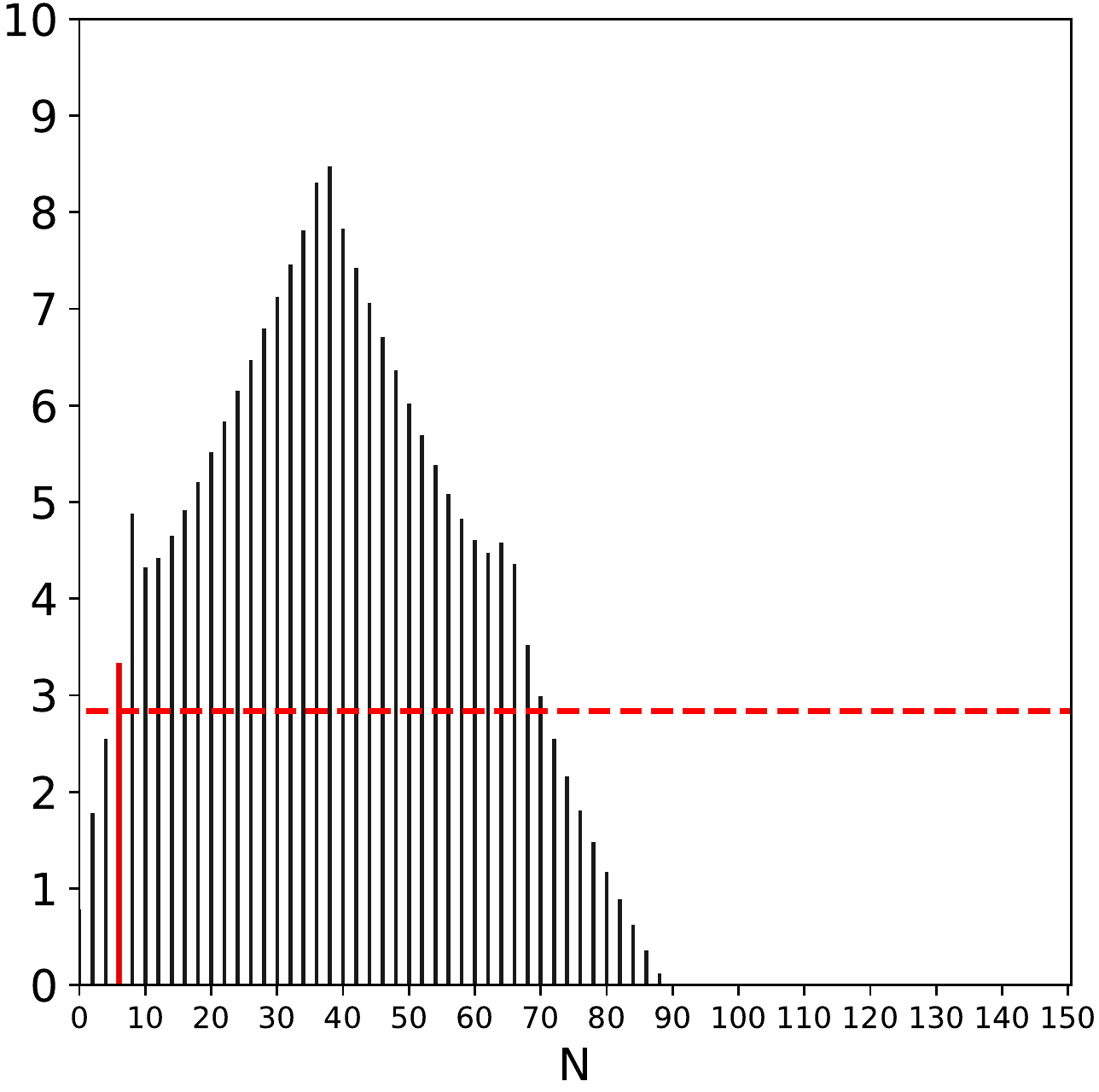}&
			\includegraphics[width=0.23\textwidth]{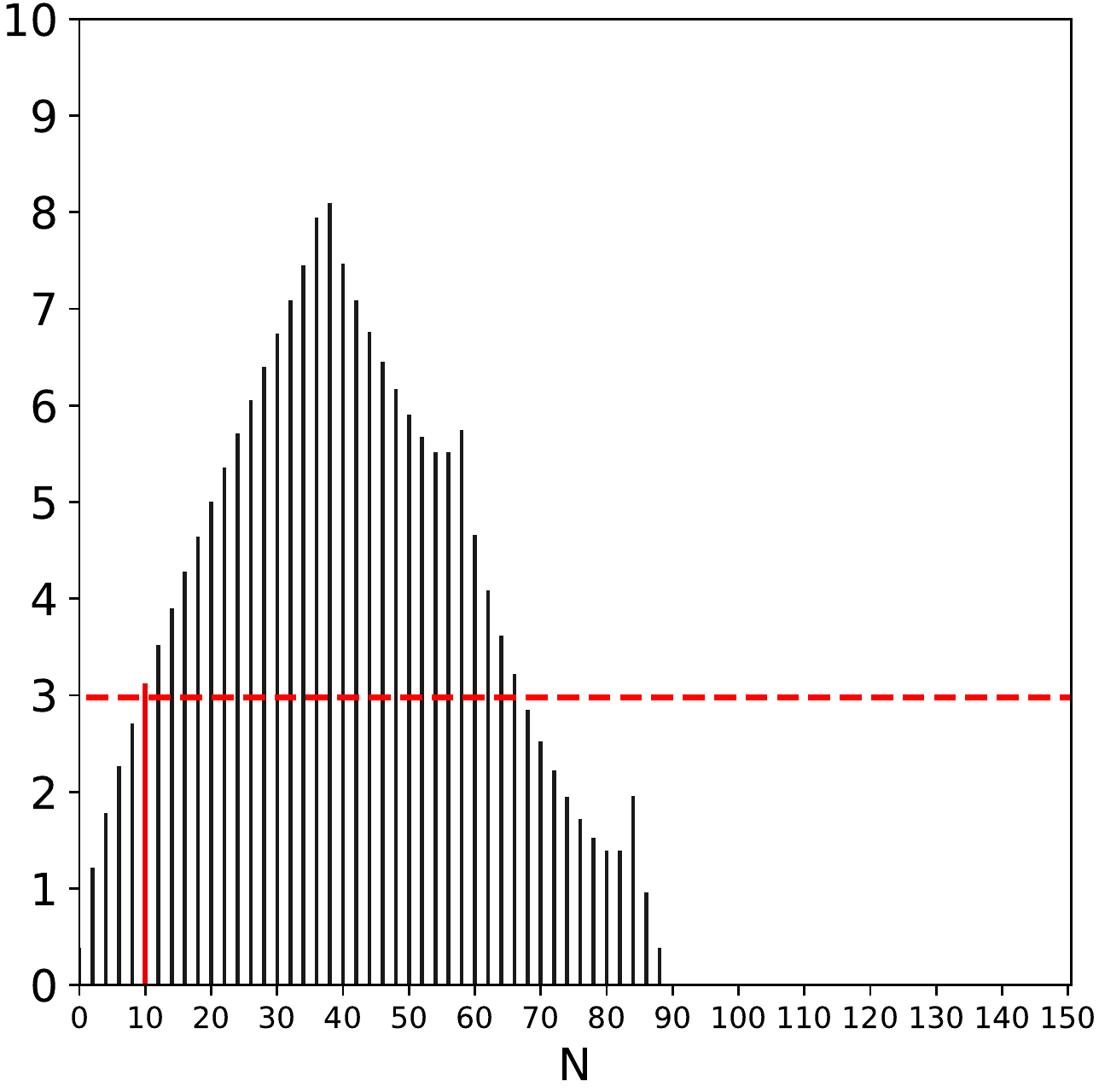} &	\includegraphics[width=0.23\textwidth]{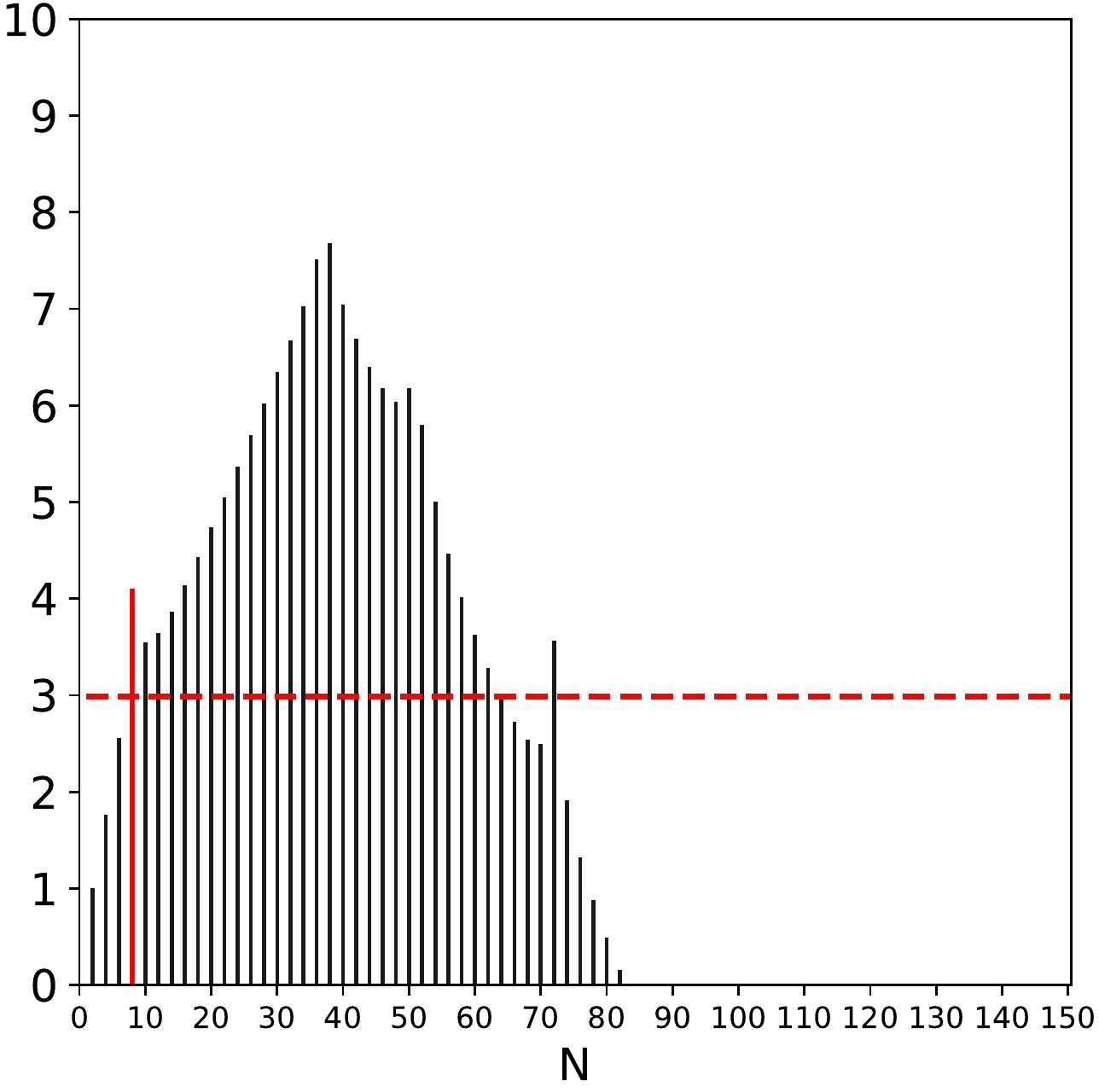} &
			\includegraphics[width=0.23\textwidth]{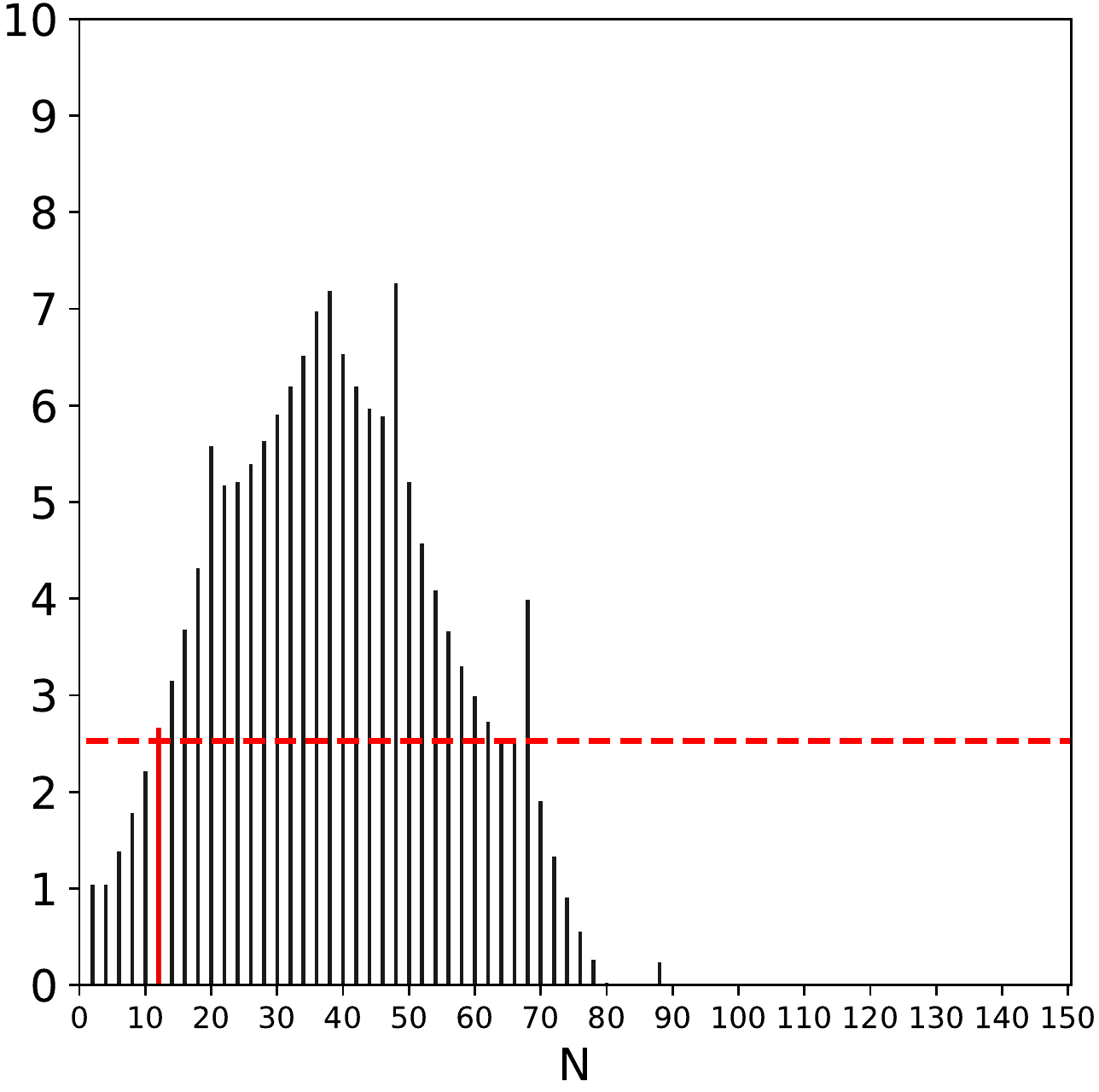}\\
			\begin{turn}{90}$\boldsymbol{b =4.0}$\end{turn}&
			\includegraphics[width=0.23\textwidth]{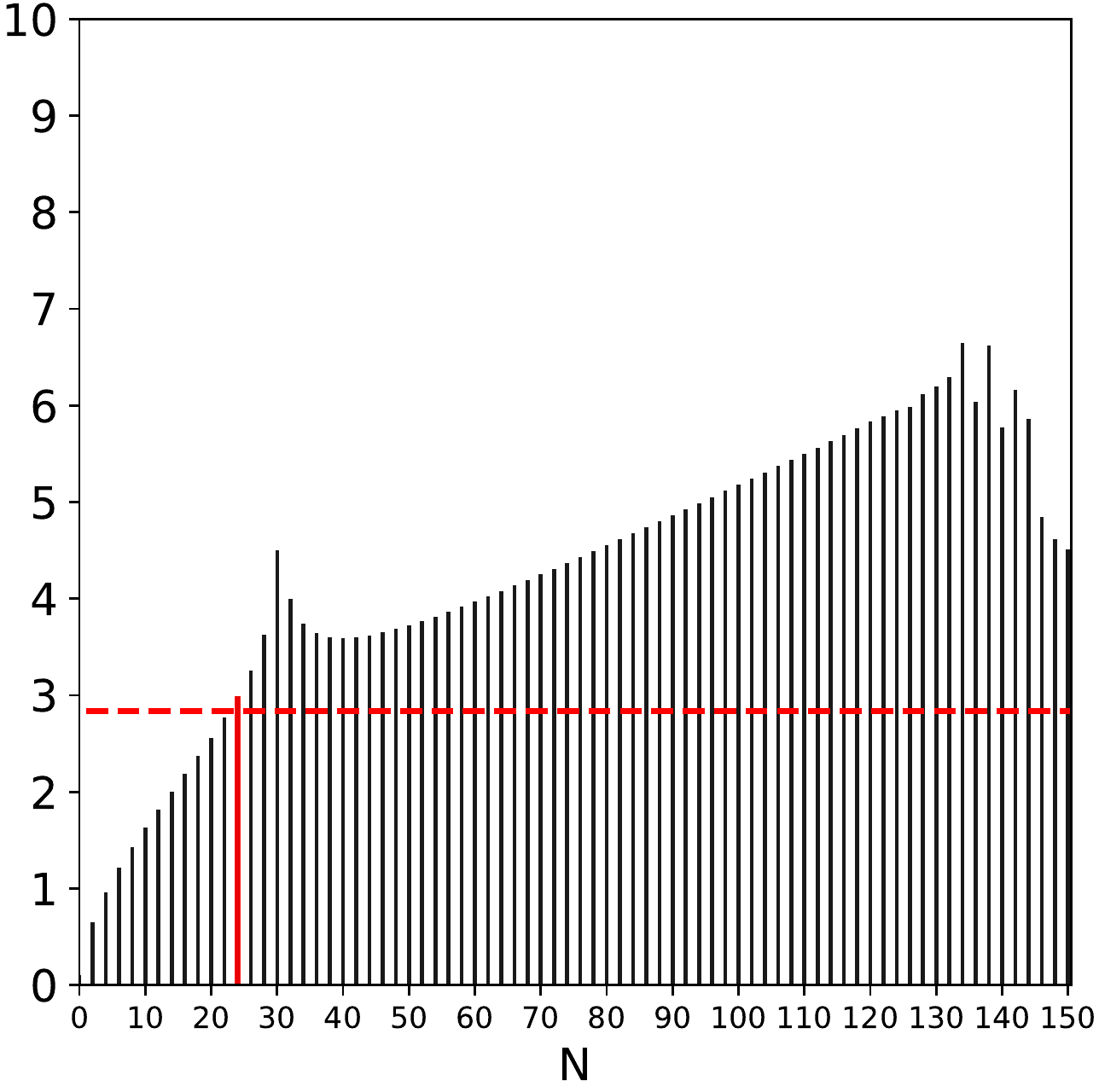}&
			\includegraphics[width=0.23\textwidth]{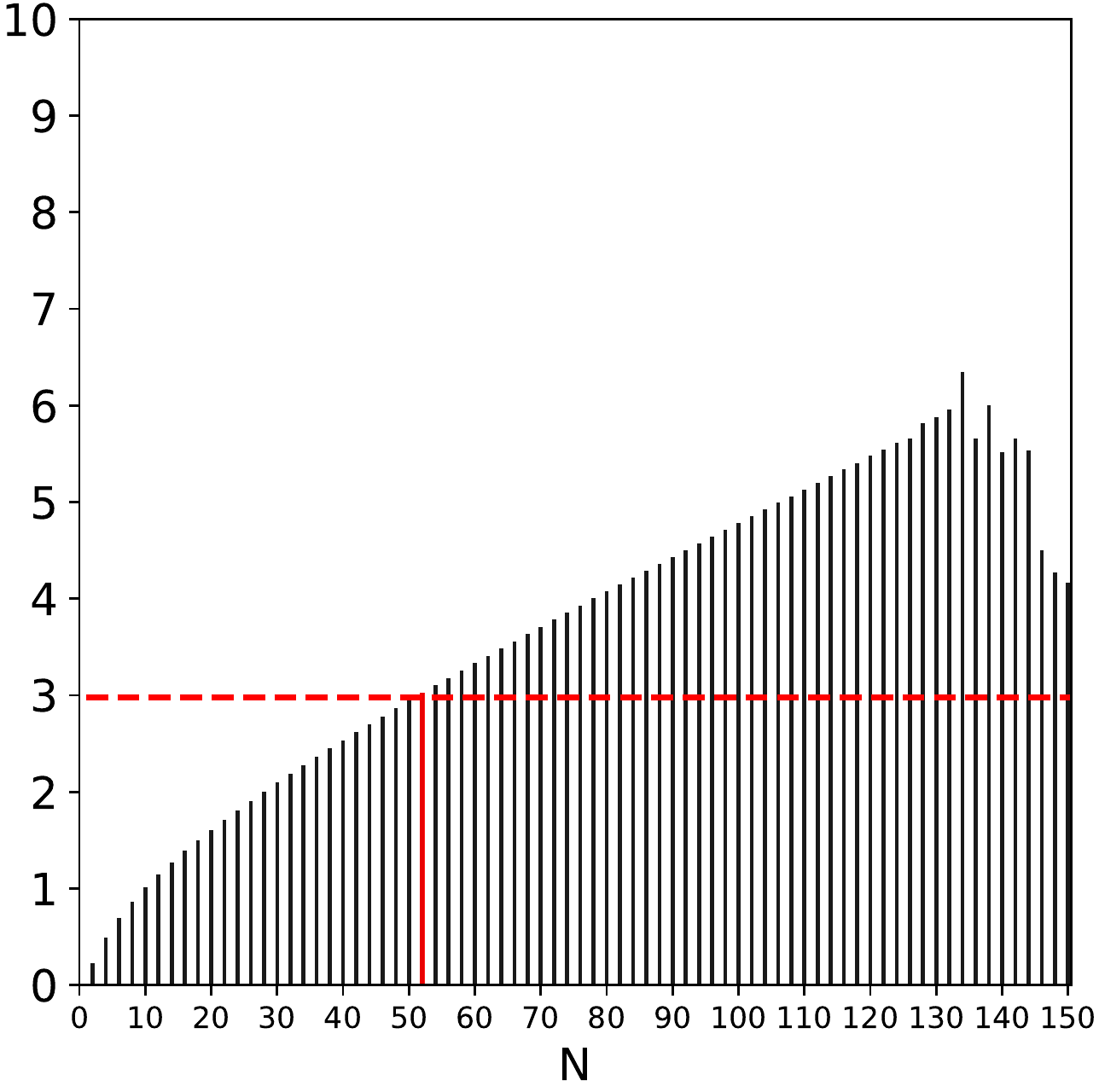} &	\includegraphics[width=0.23\textwidth]{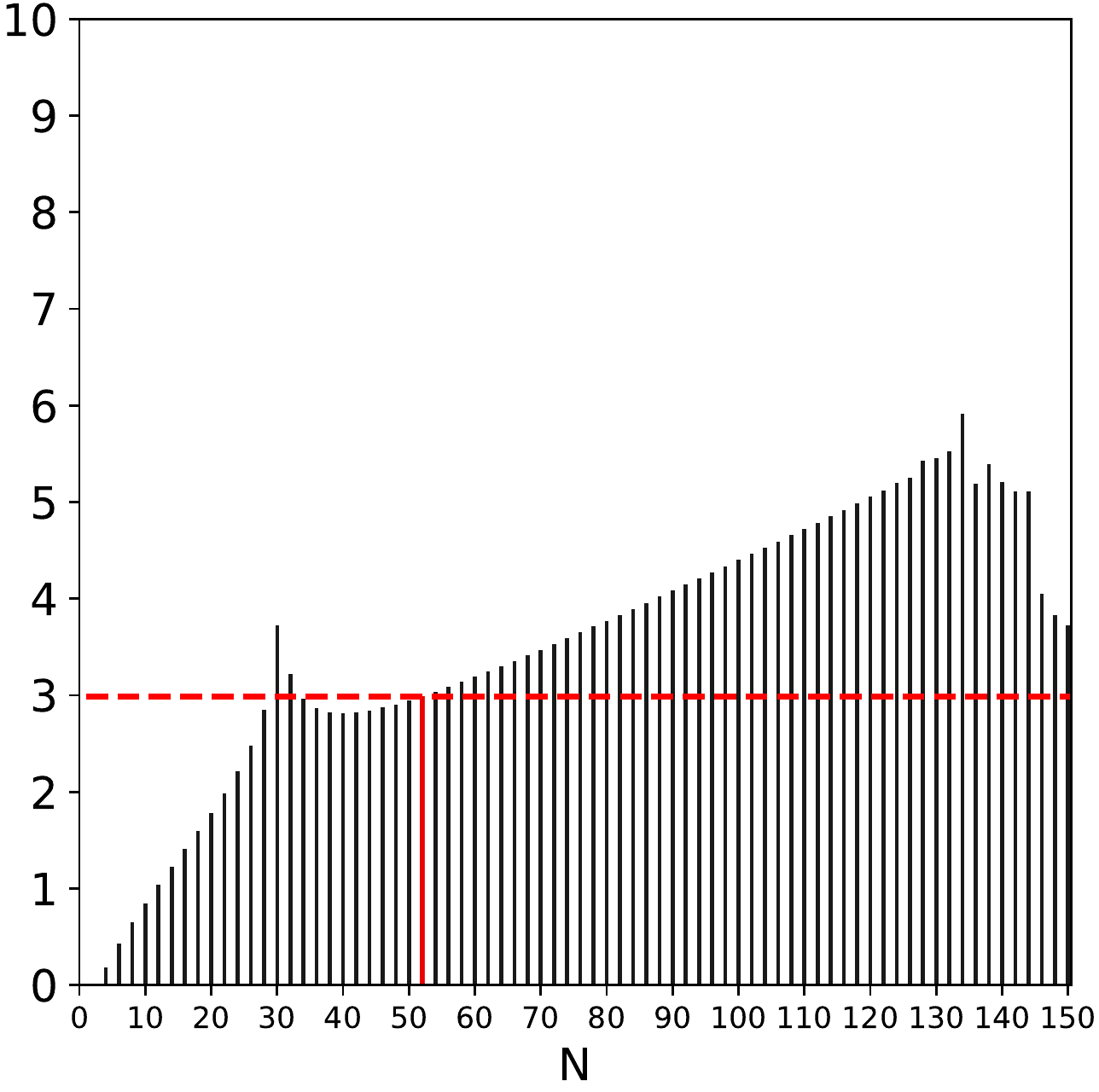} &
			\includegraphics[width=0.23\textwidth]{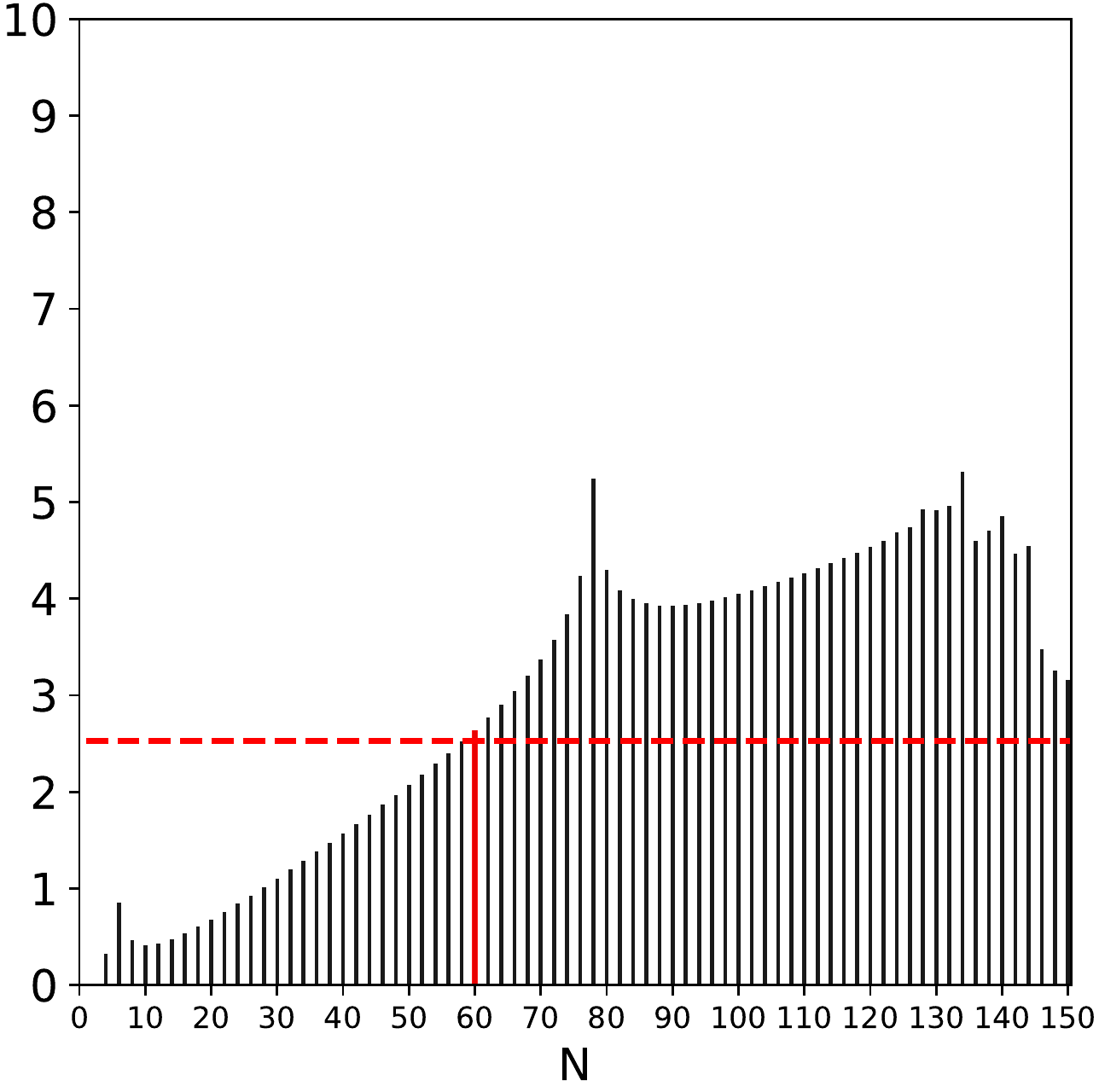}\\
			\begin{turn}{90}$\boldsymbol{b =6.0}$\end{turn}&
			\includegraphics[width=0.23\textwidth]{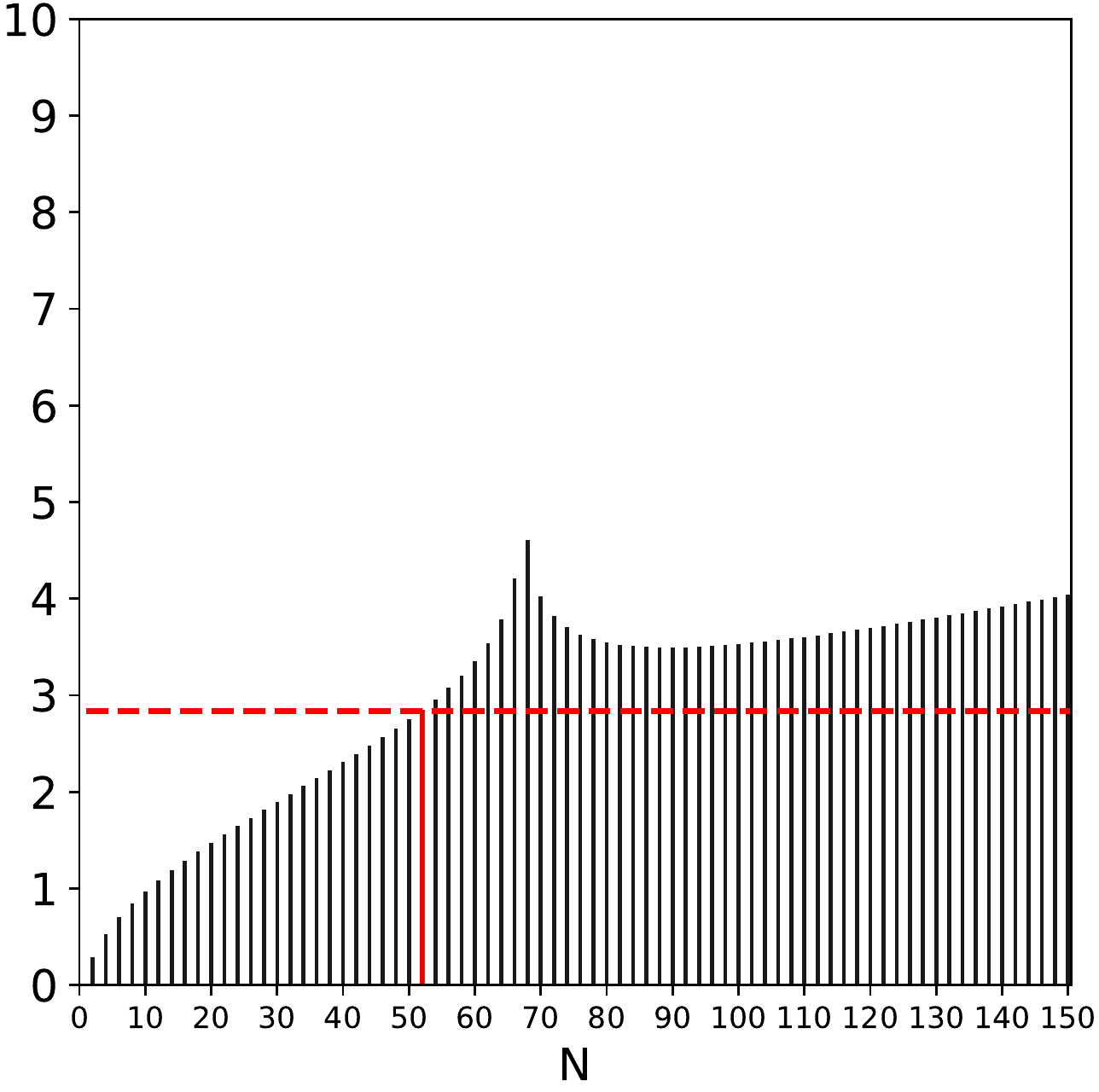}&
			\includegraphics[width=0.23\textwidth]{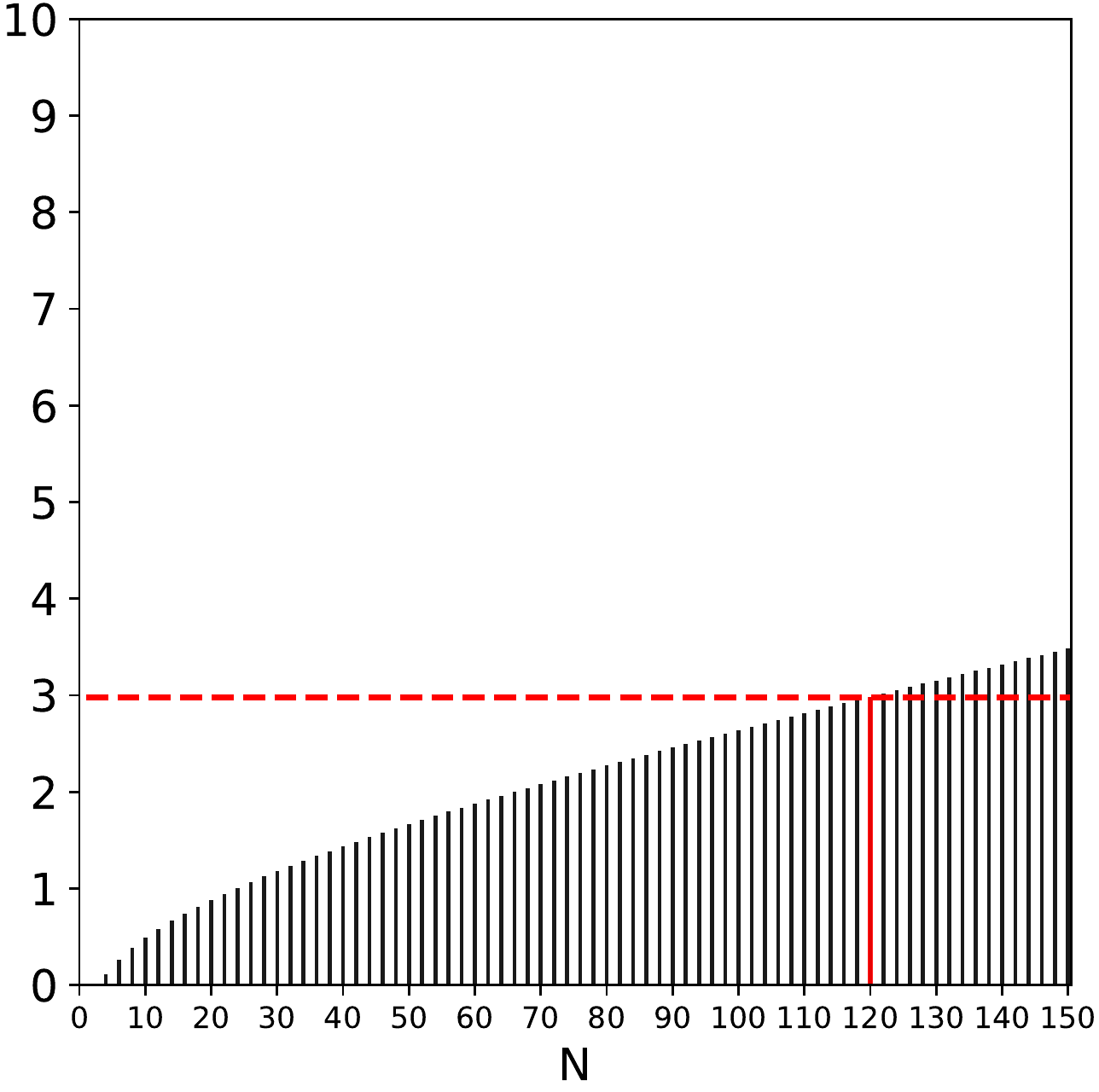} &	\includegraphics[width=0.23\textwidth]{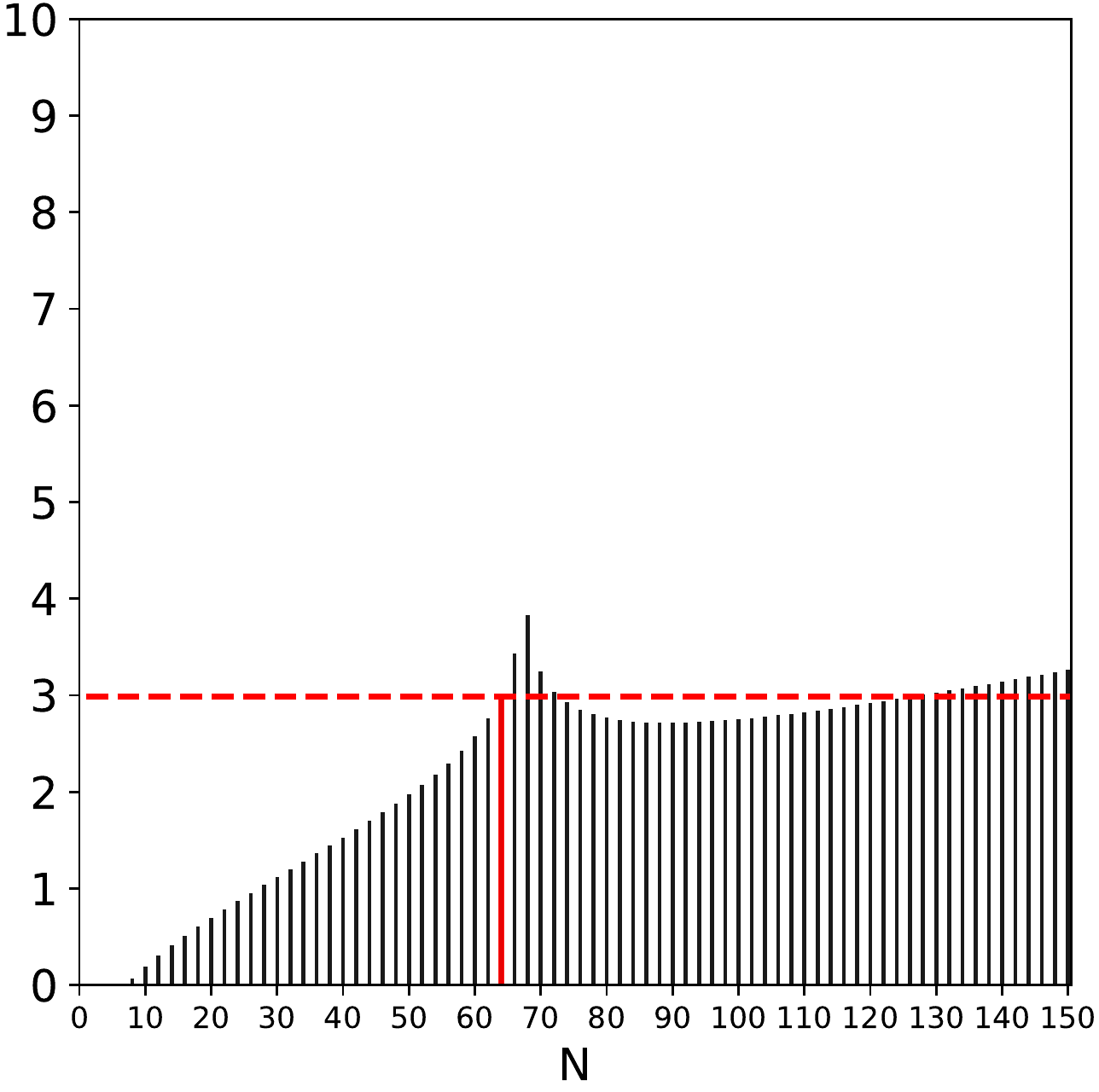} &
			\includegraphics[width=0.23\textwidth]{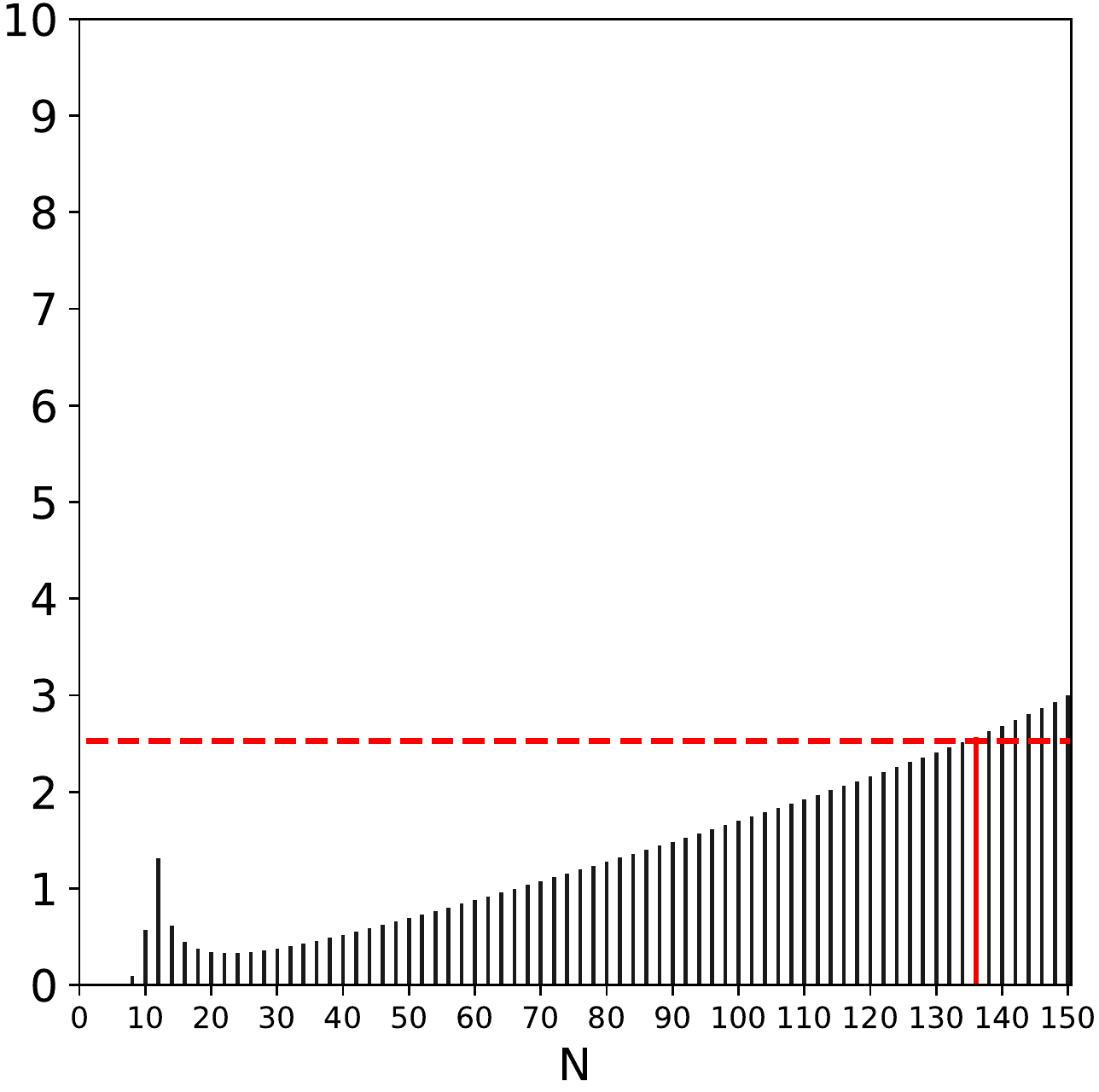}\\
		\end{tabular}
	}
	\caption{$\gamma_{a,b}^N$ as a function of $N$, when the underlying process is an OU with $\sigma_X(T;t)\approx 1.41$. The drift is $a= \mu_X(T;t)=2.0$. The dashed red horizontal lines indicate the accuracy of the MC method. The red vertical bars indicate when the Hermite series reaches the MC accuracy. \label{OUplot}}
\end{figure}

\begin{figure}[!tp]
	\setlength{\tabcolsep}{2pt}
	\resizebox{1\textwidth}{!}{
		\begin{tabular}{@{}>{\centering\arraybackslash}m{0.04\textwidth}@{}>{\centering\arraybackslash}m{0.24\textwidth}@{}>{\centering\arraybackslash}m{0.24\textwidth}@{}>{\centering\arraybackslash}m{0.24\textwidth}@{}>{\centering\arraybackslash}m{0.24\textwidth}@{}}
			& $\boldsymbol{K = 19.0}$&$\boldsymbol{K = 20.0}$ & $\boldsymbol{K = 21.0}$& $\boldsymbol{K = 22.0}$ \\
			\begin{turn}{90}$\boldsymbol{b =1.0}$\end{turn}&
			\includegraphics[width=0.23\textwidth]{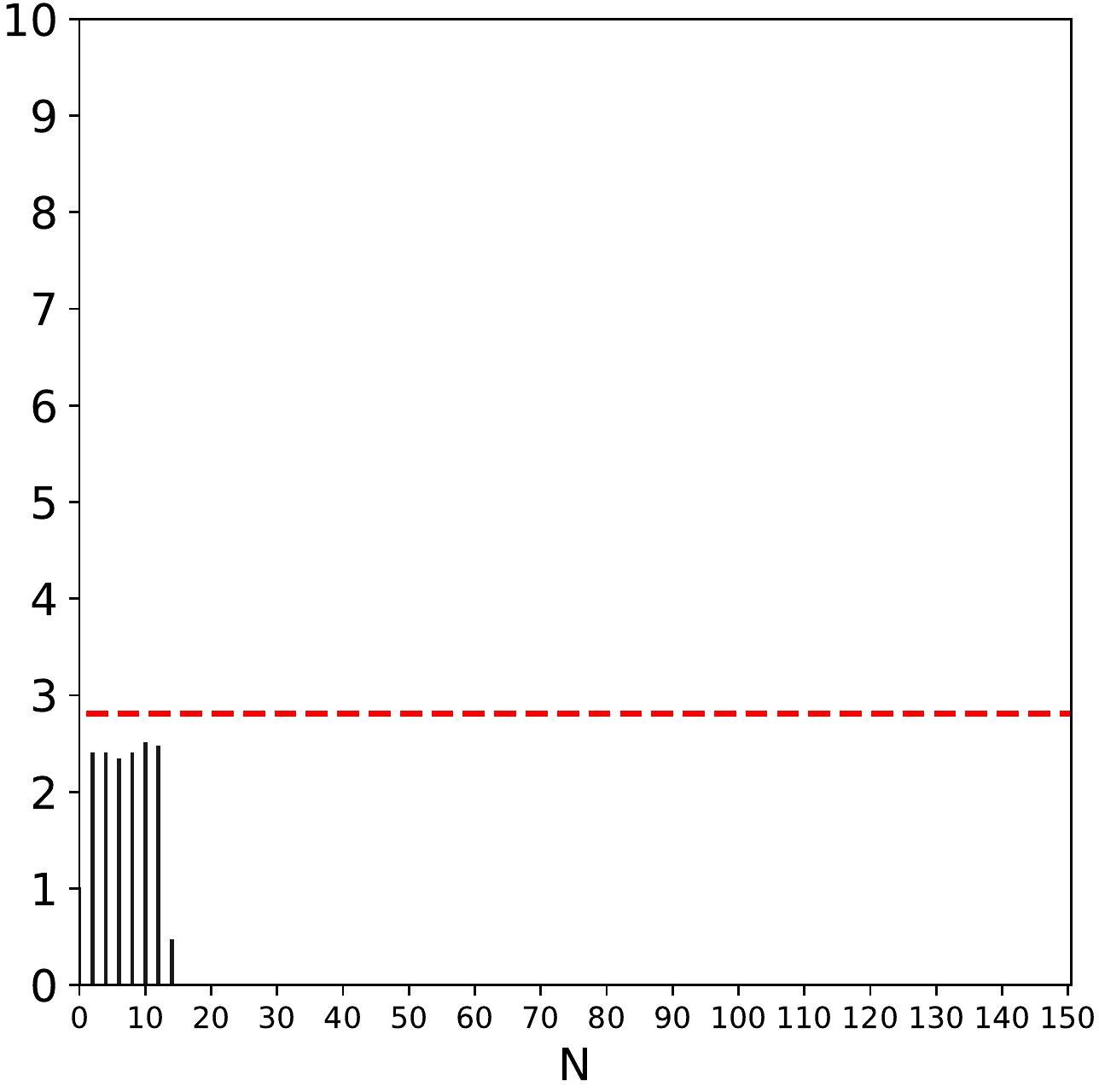}&
			\includegraphics[width=0.23\textwidth]{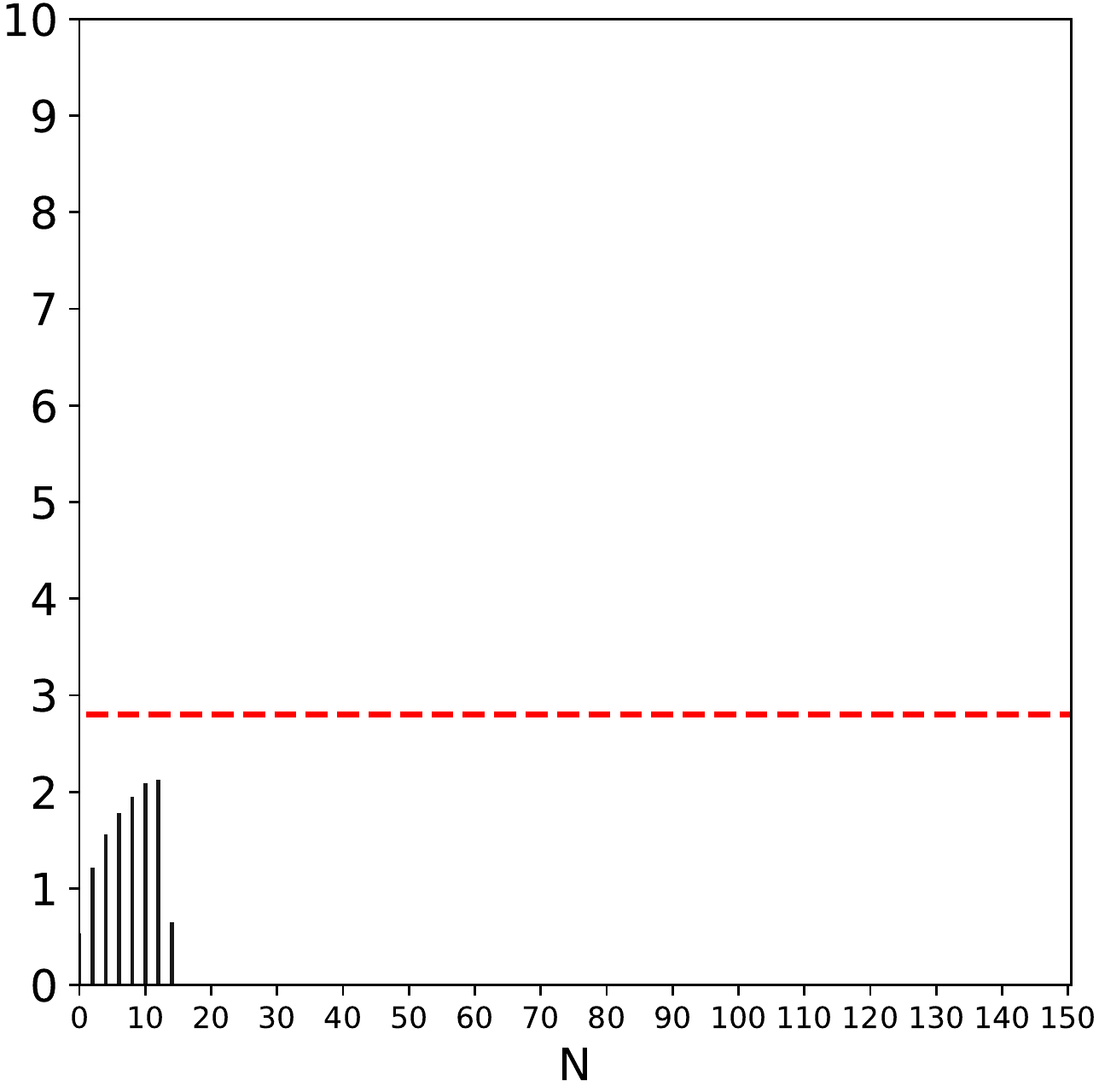} &	\includegraphics[width=0.23\textwidth]{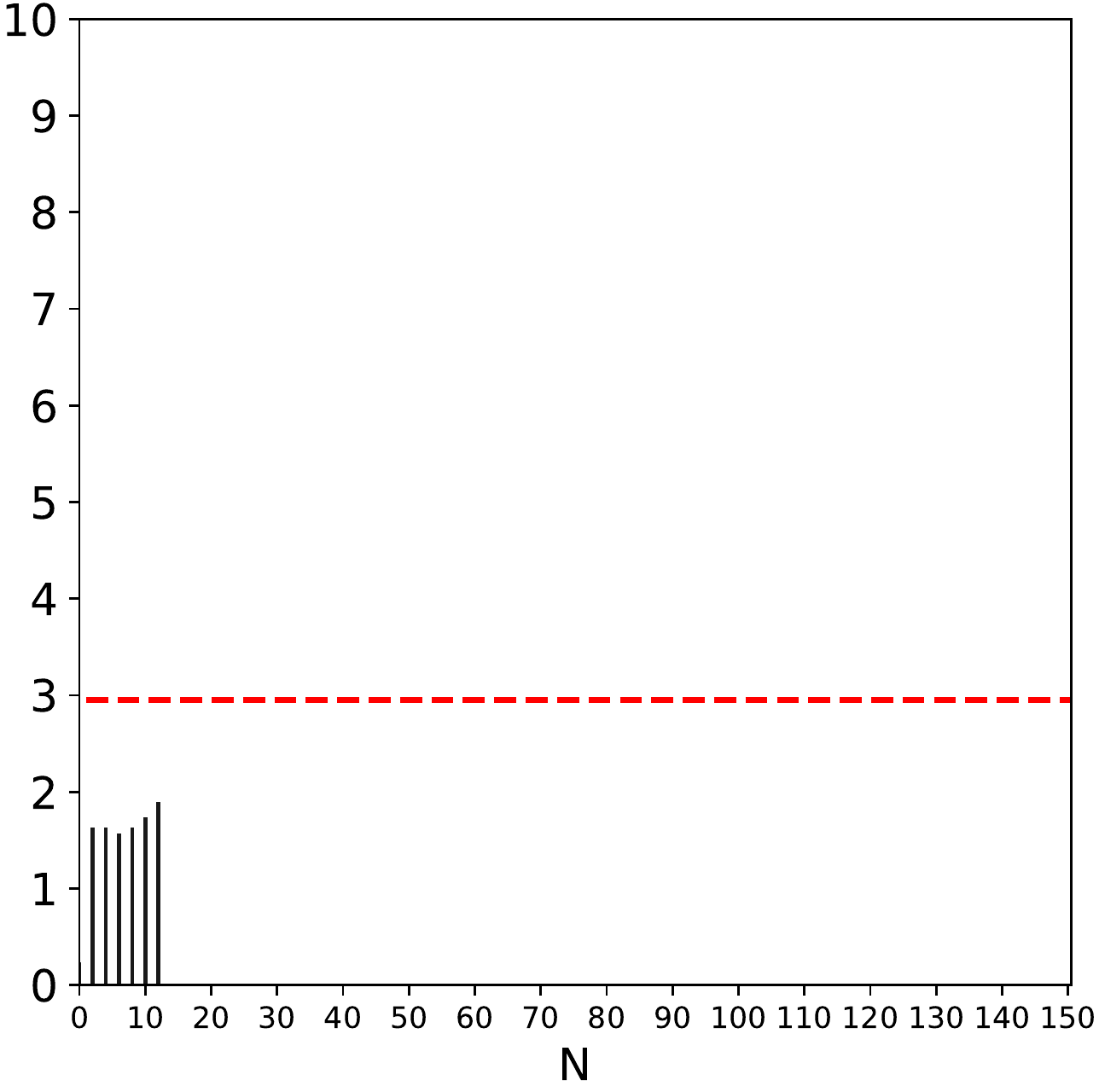} &
			\includegraphics[width=0.23\textwidth]{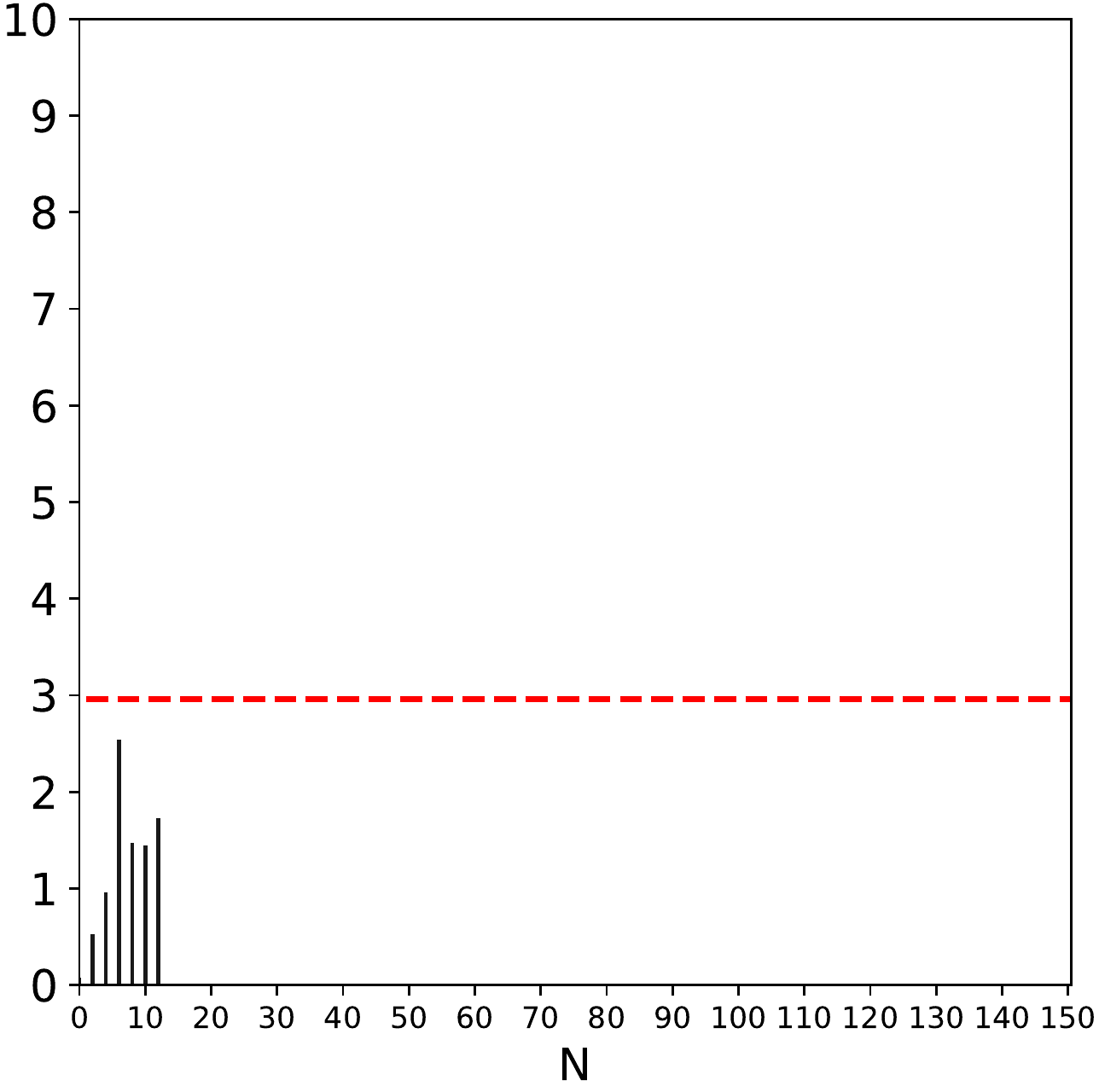}\\
			\begin{turn}{90}$\boldsymbol{b =1.2}$\end{turn}&
			\includegraphics[width=0.23\textwidth]{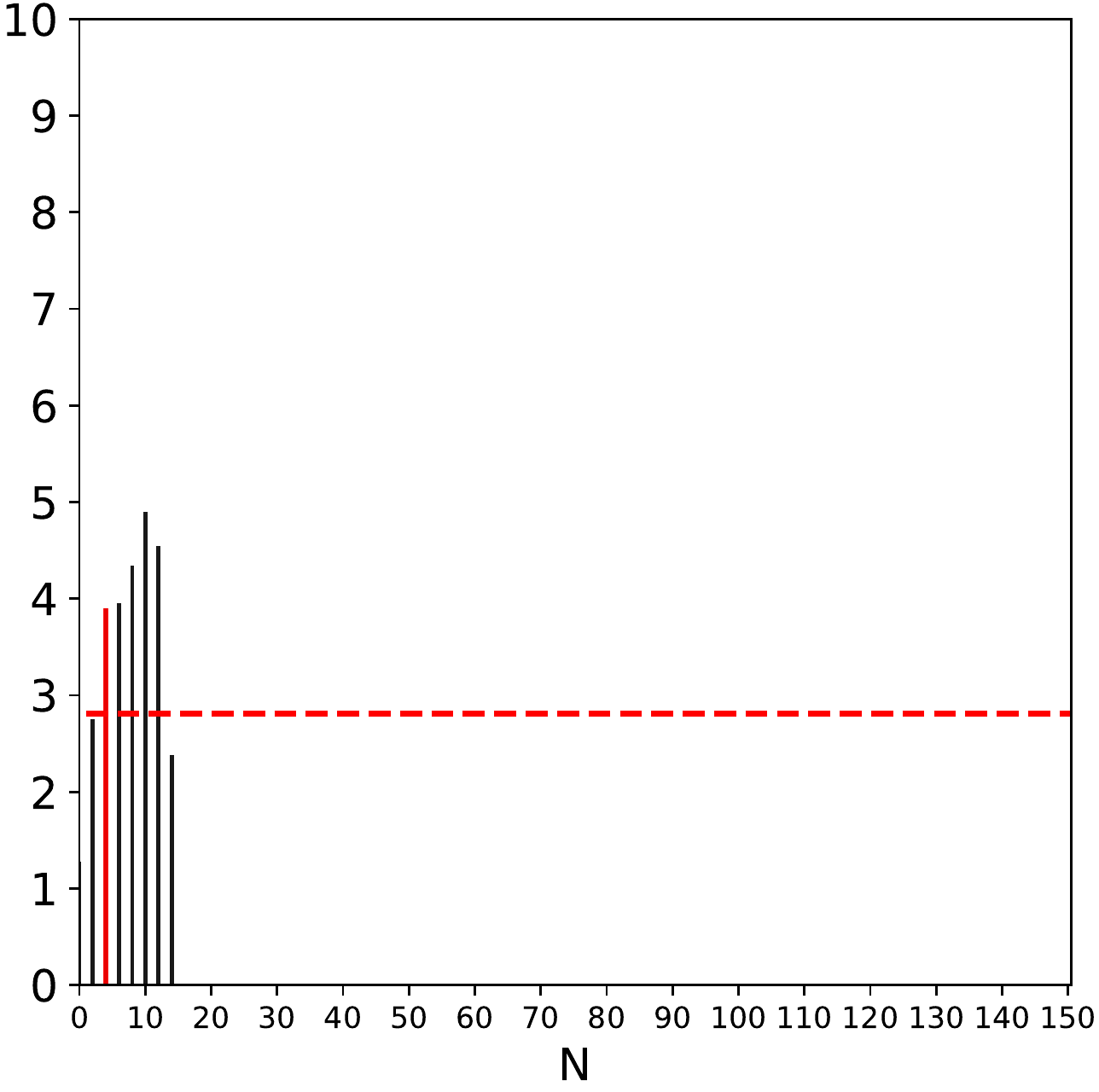}&
			\includegraphics[width=0.23\textwidth]{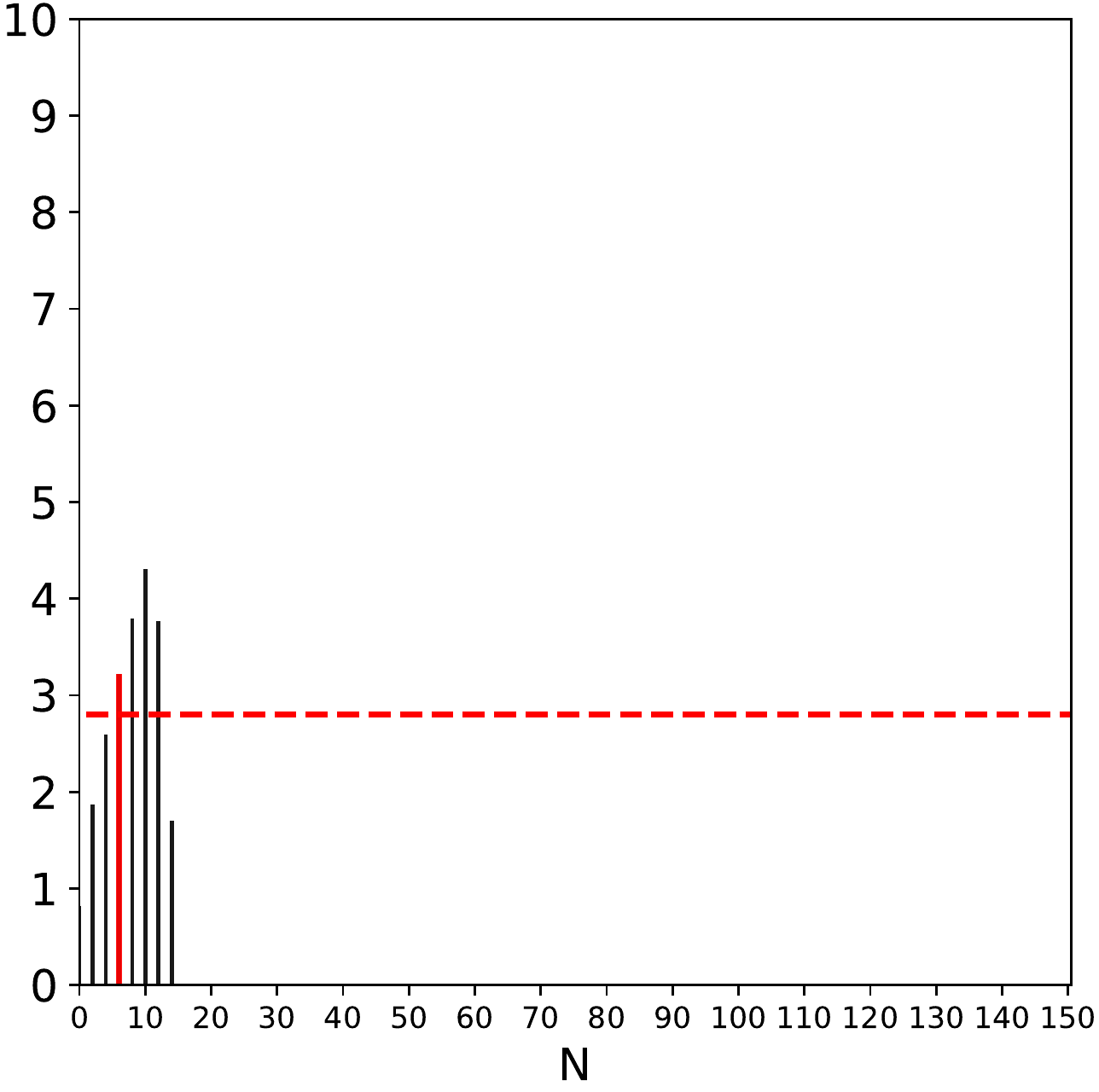} &	\includegraphics[width=0.23\textwidth]{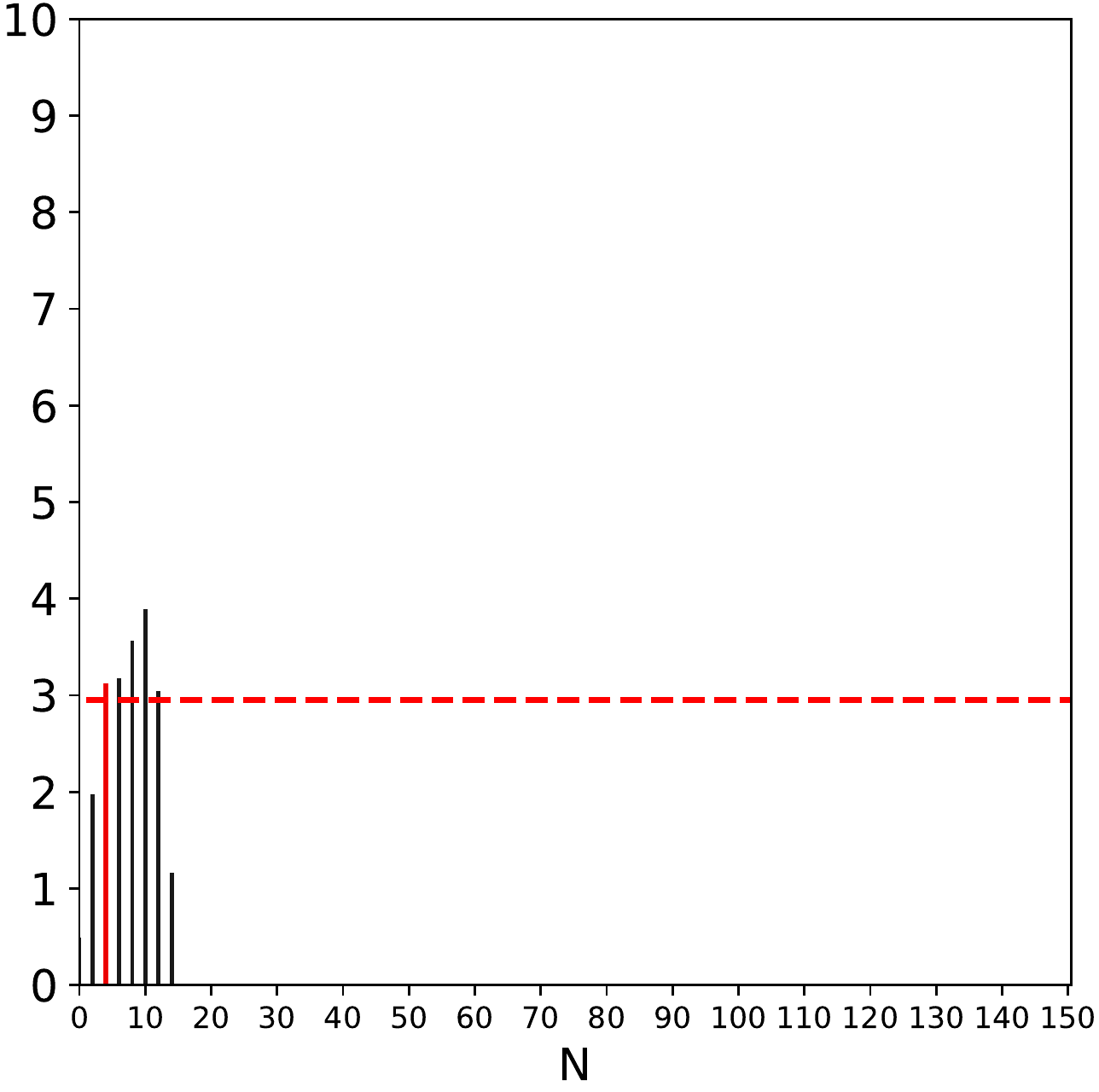} &
			\includegraphics[width=0.23\textwidth]{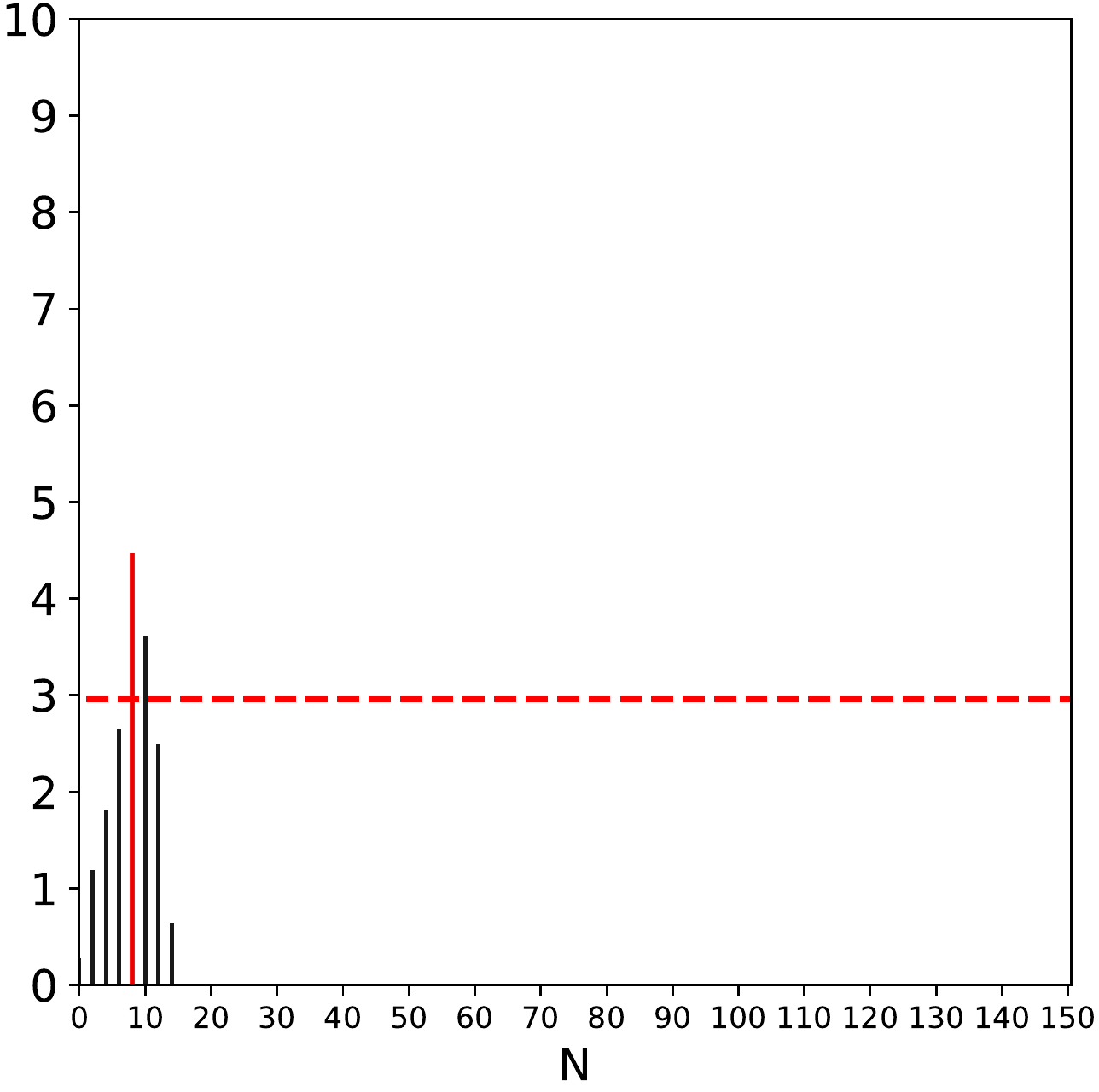}\\
			\begin{turn}{90}$\boldsymbol{b =2.0}$\end{turn}&
			\includegraphics[width=0.23\textwidth]{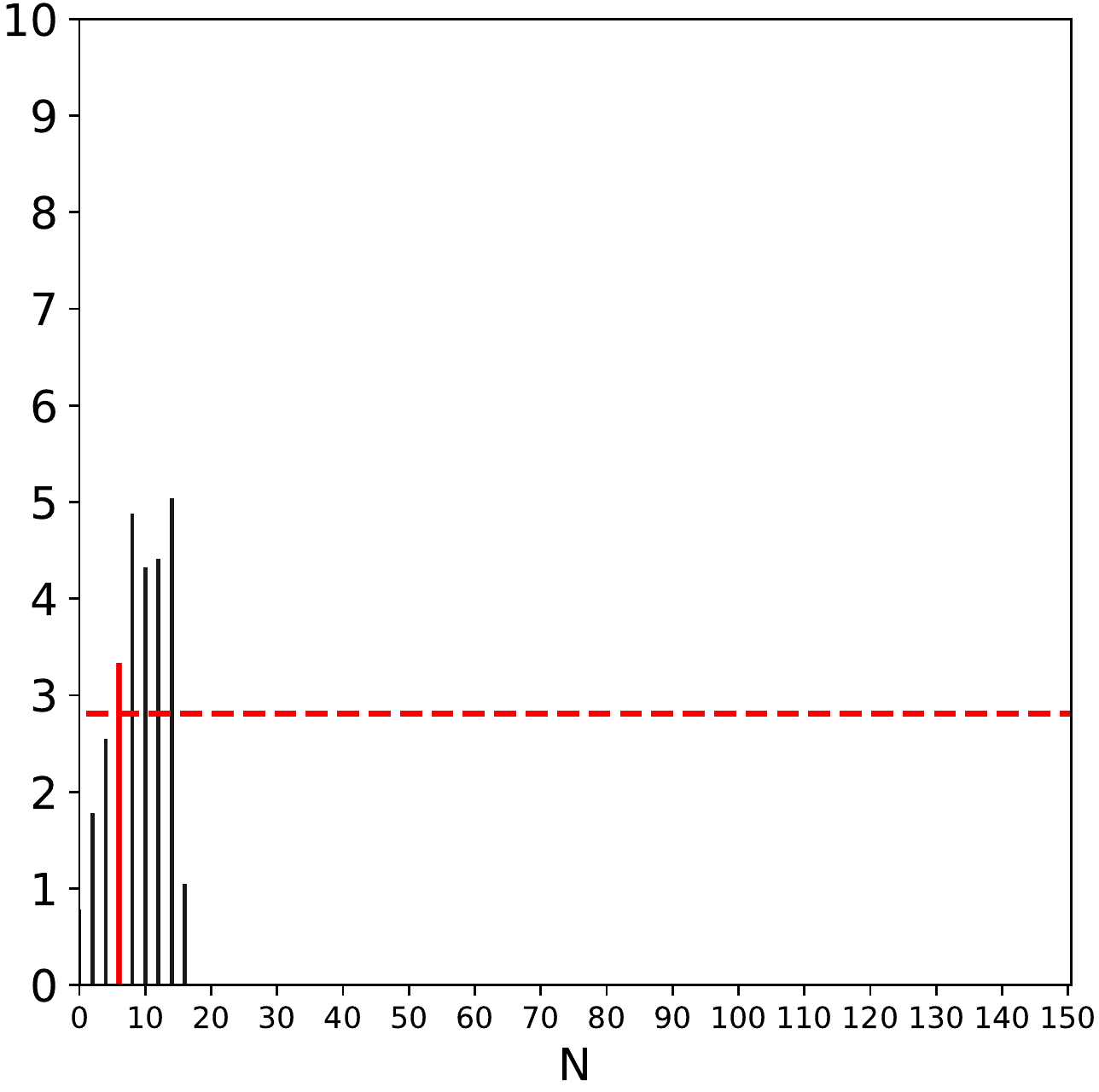}&
			\includegraphics[width=0.23\textwidth]{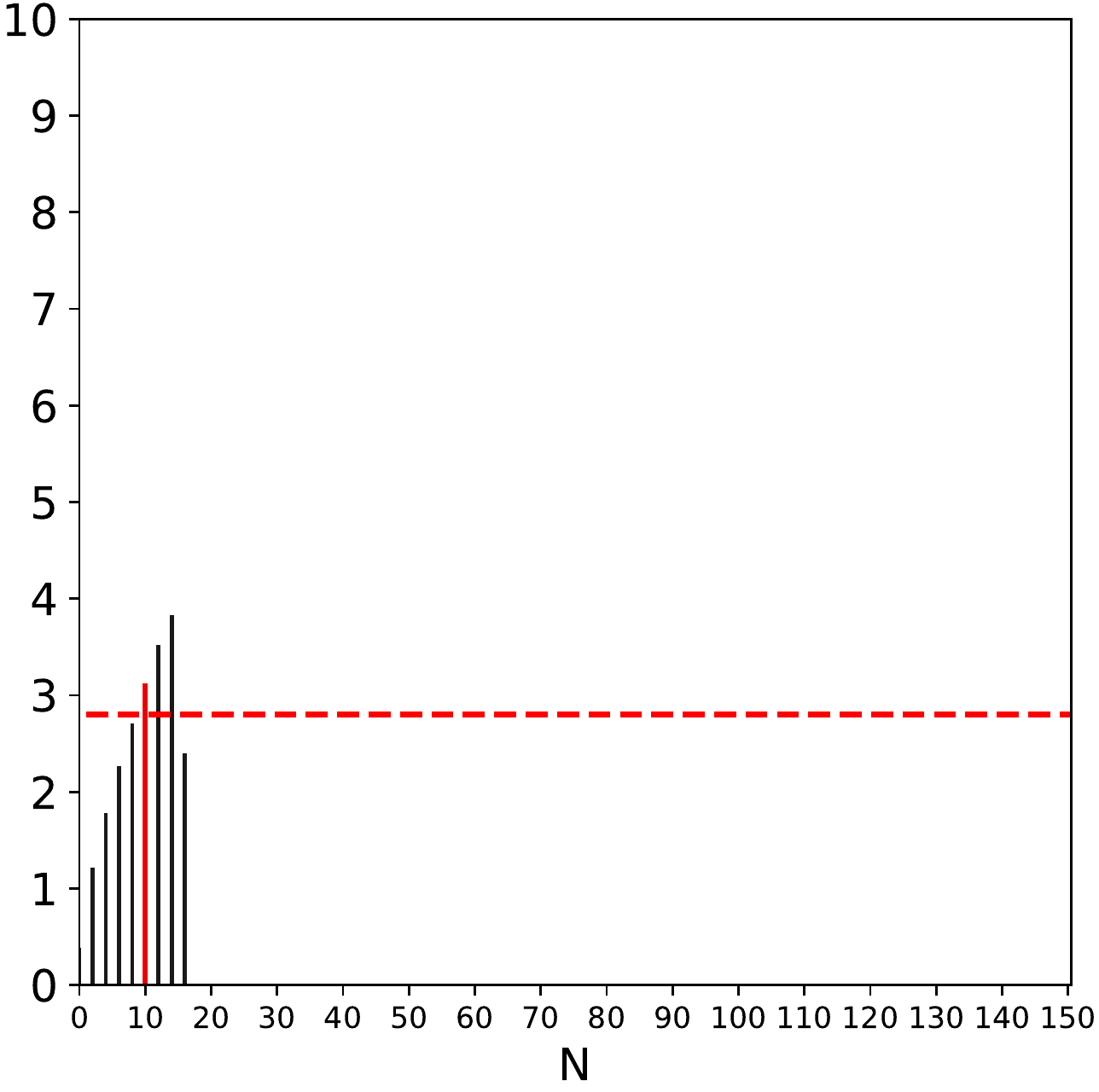} &	\includegraphics[width=0.23\textwidth]{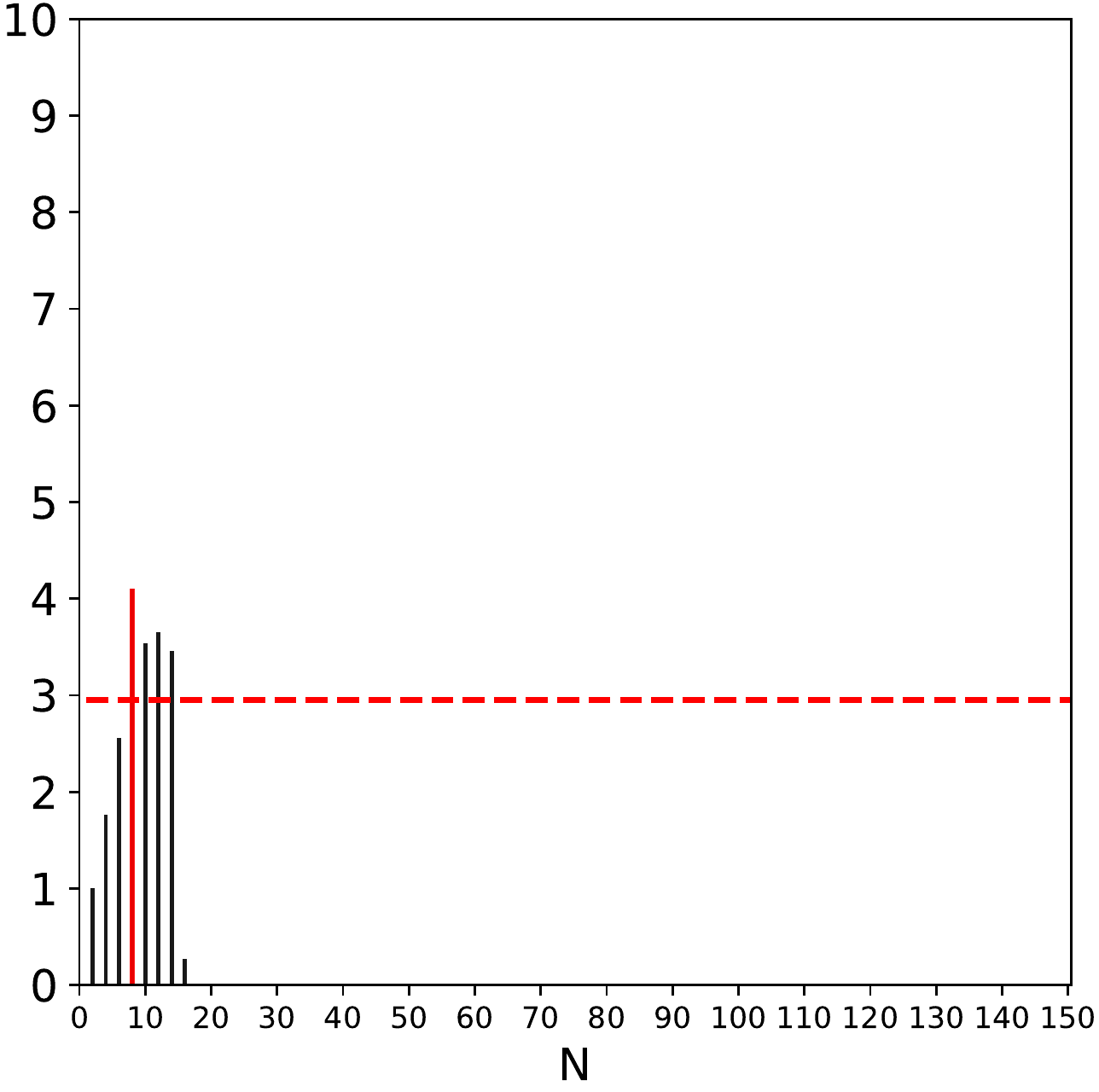} &
			\includegraphics[width=0.23\textwidth]{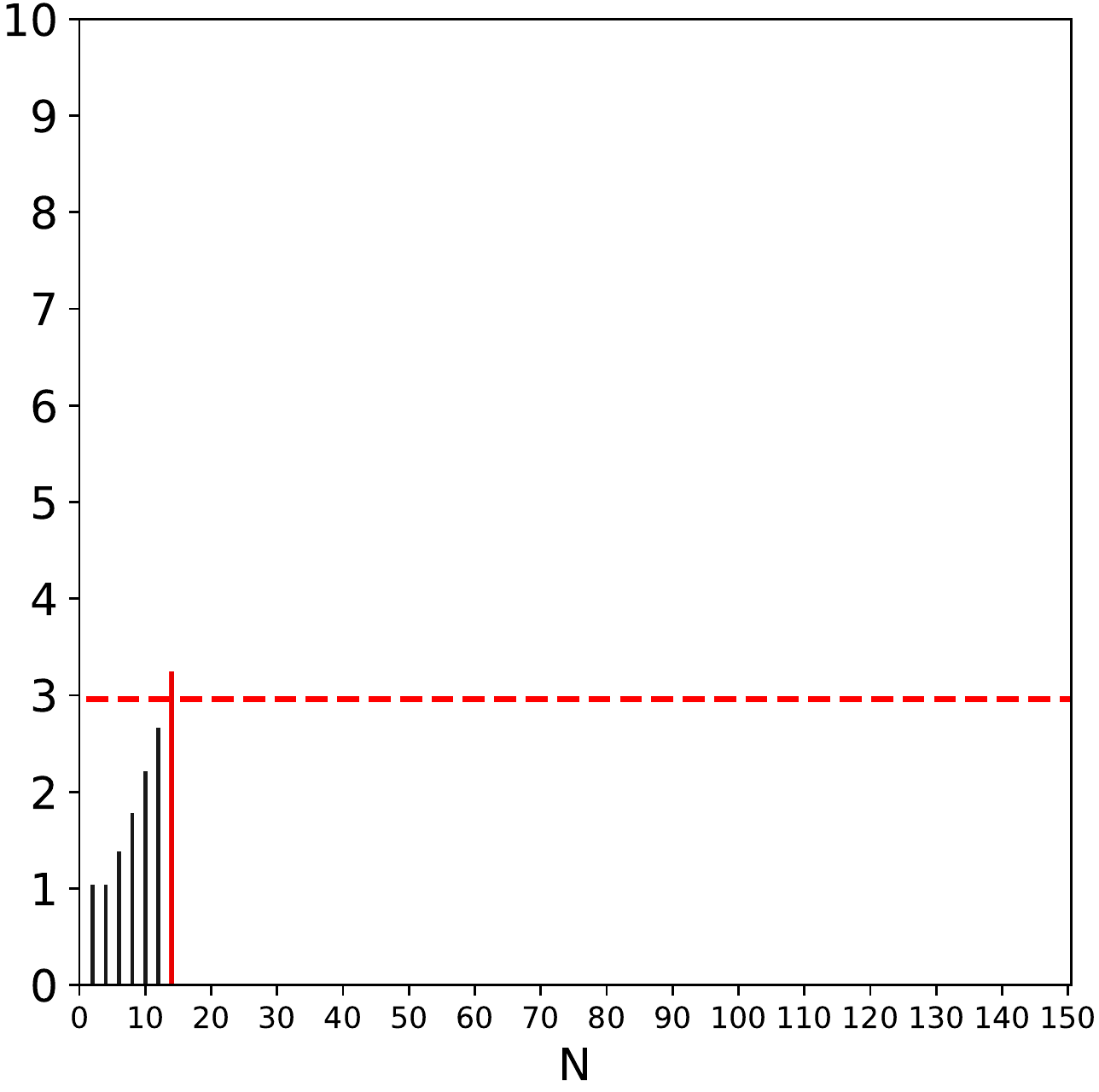}\\
			\begin{turn}{90}$\boldsymbol{b =4.0}$\end{turn}&
			\includegraphics[width=0.23\textwidth]{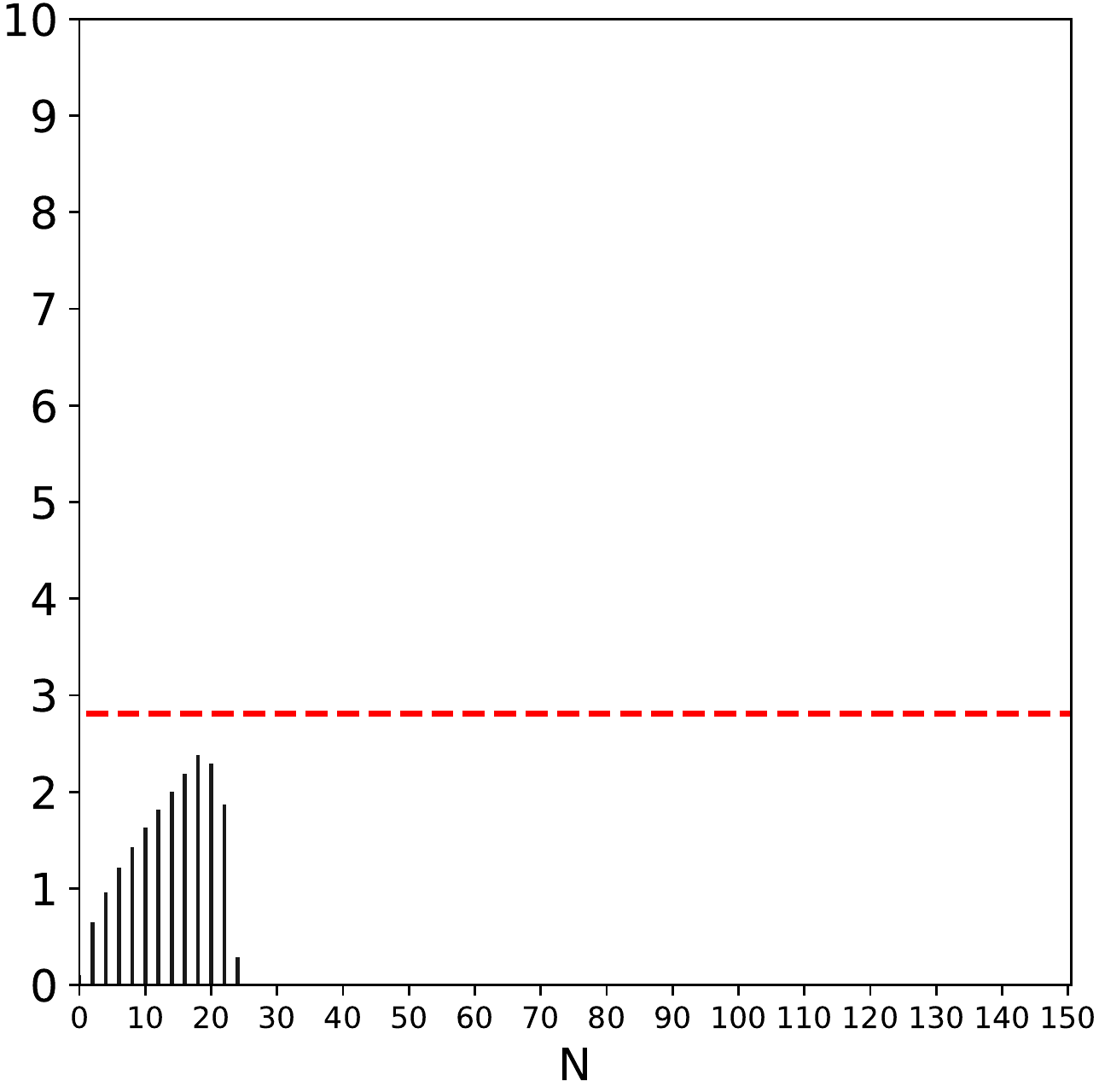}&
			\includegraphics[width=0.23\textwidth]{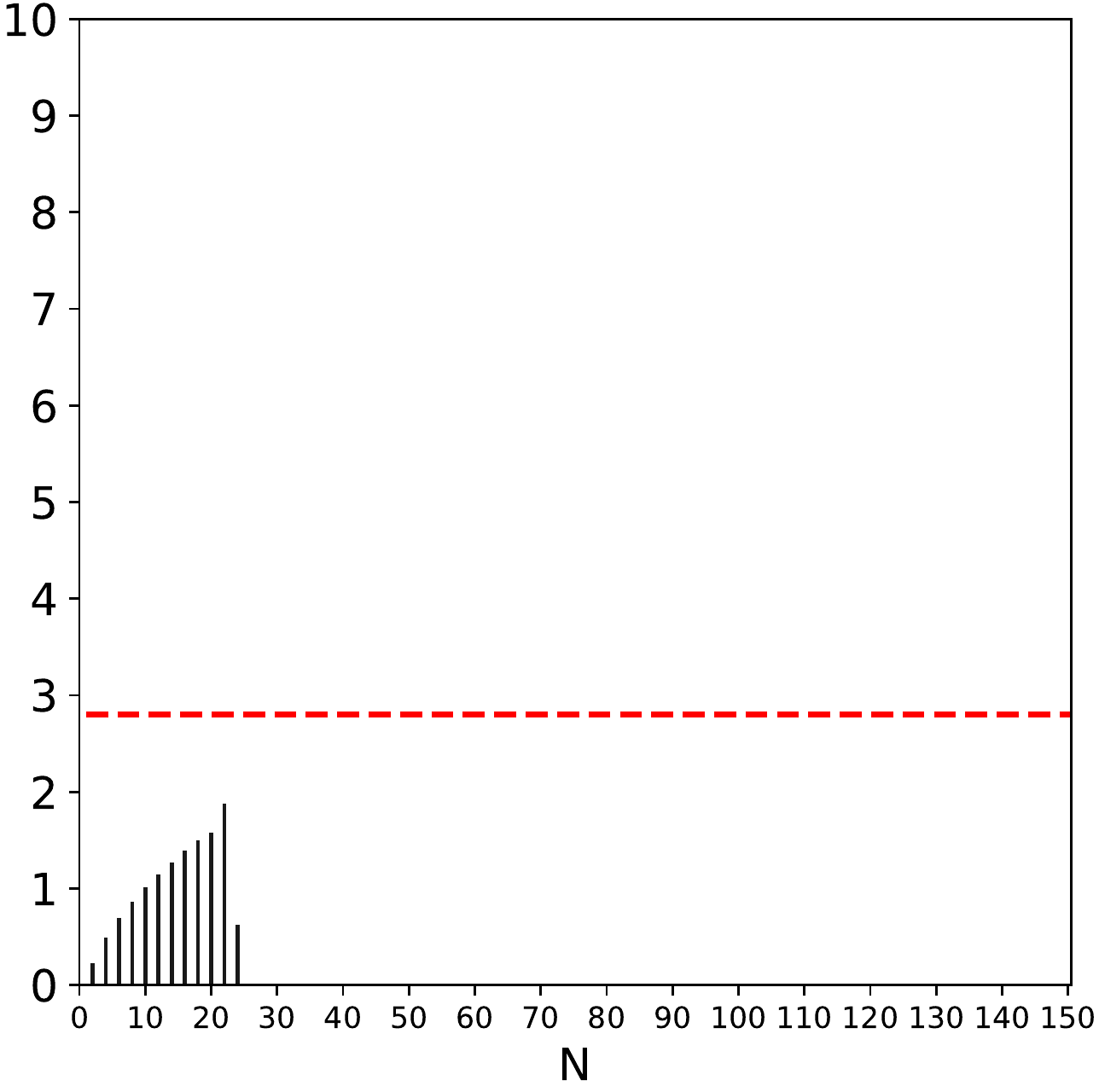} &	\includegraphics[width=0.23\textwidth]{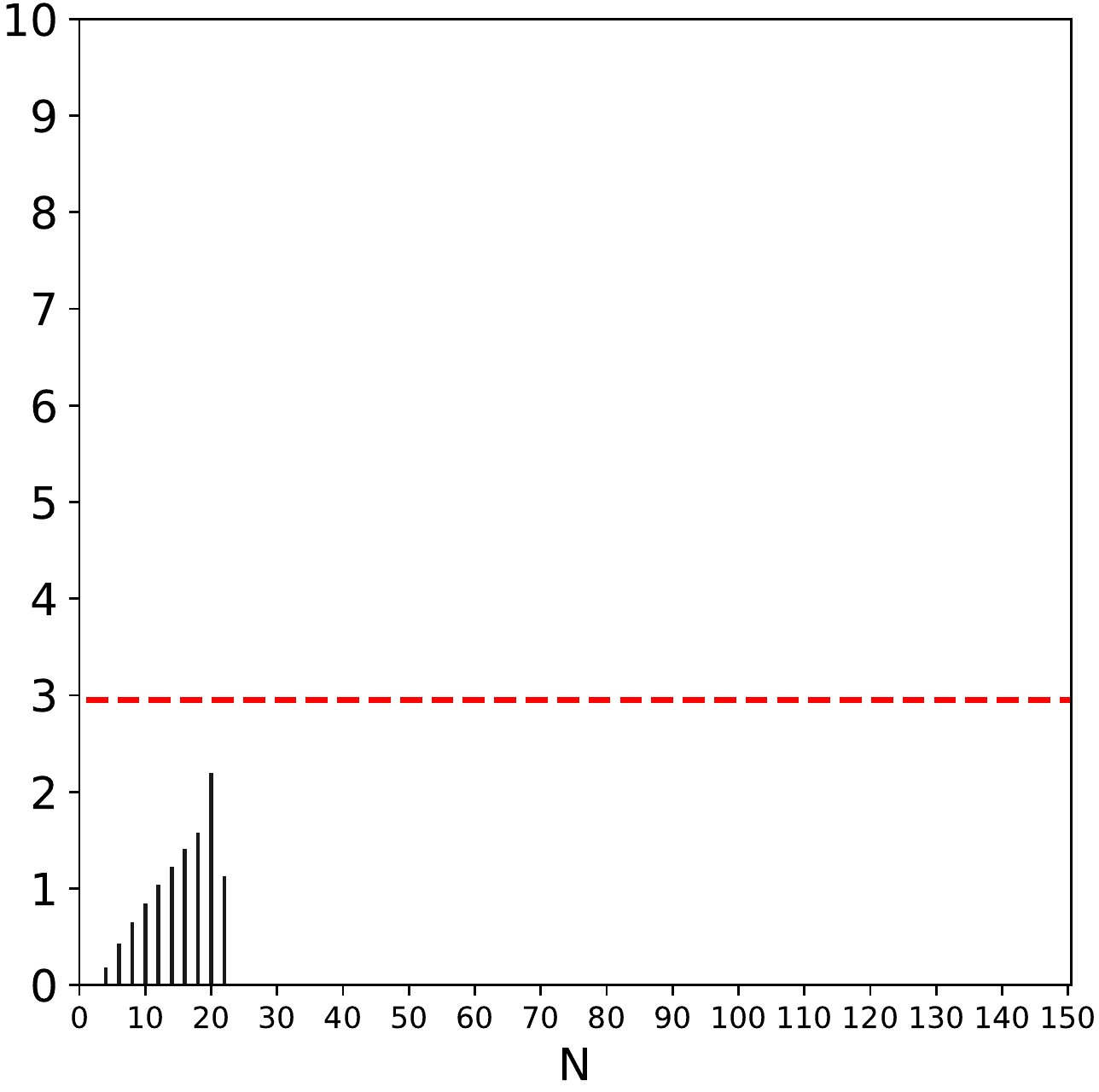} &
			\includegraphics[width=0.23\textwidth]{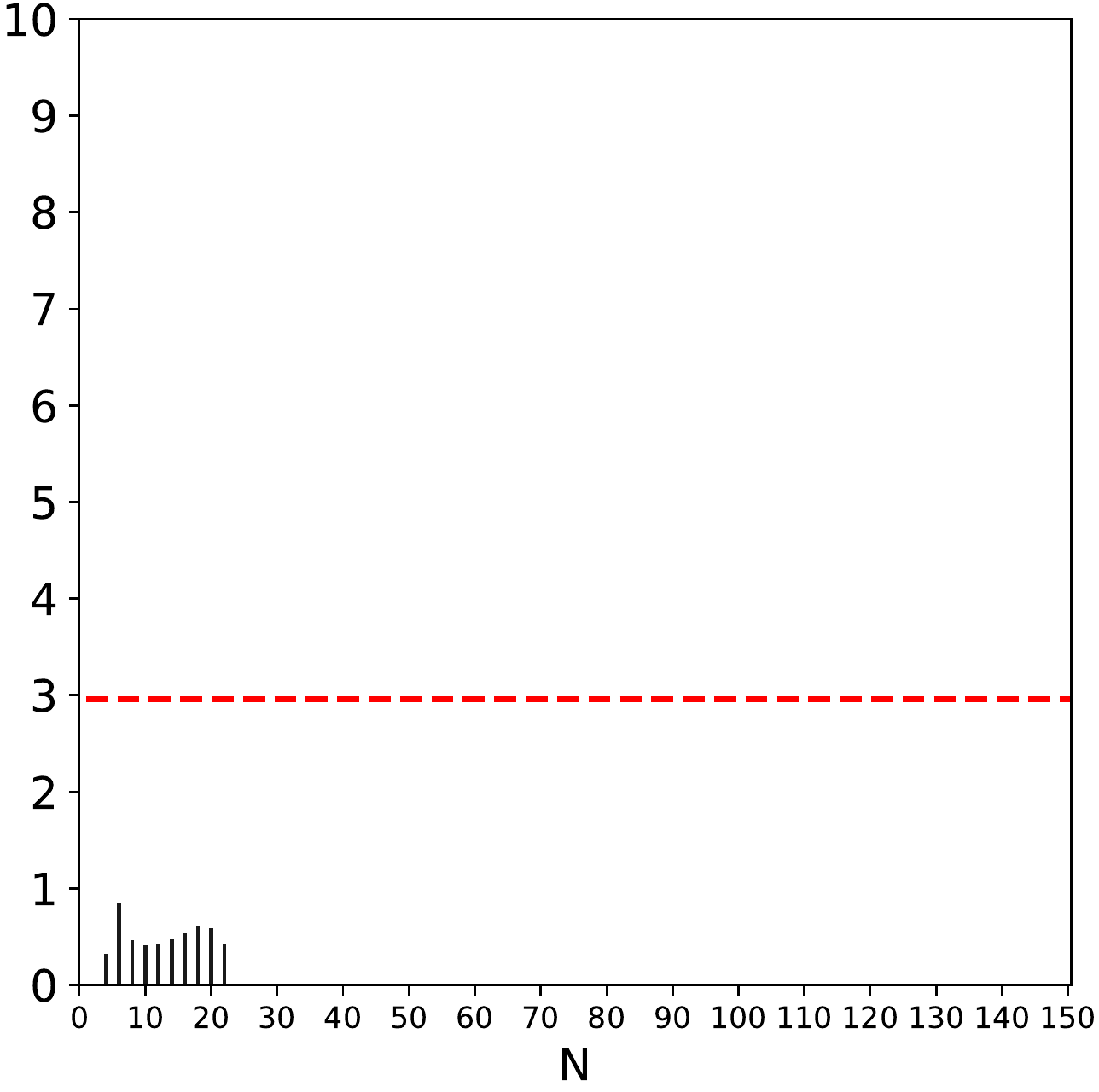}\\
			\begin{turn}{90}$\boldsymbol{b =6.0}$\end{turn}&
			\includegraphics[width=0.23\textwidth]{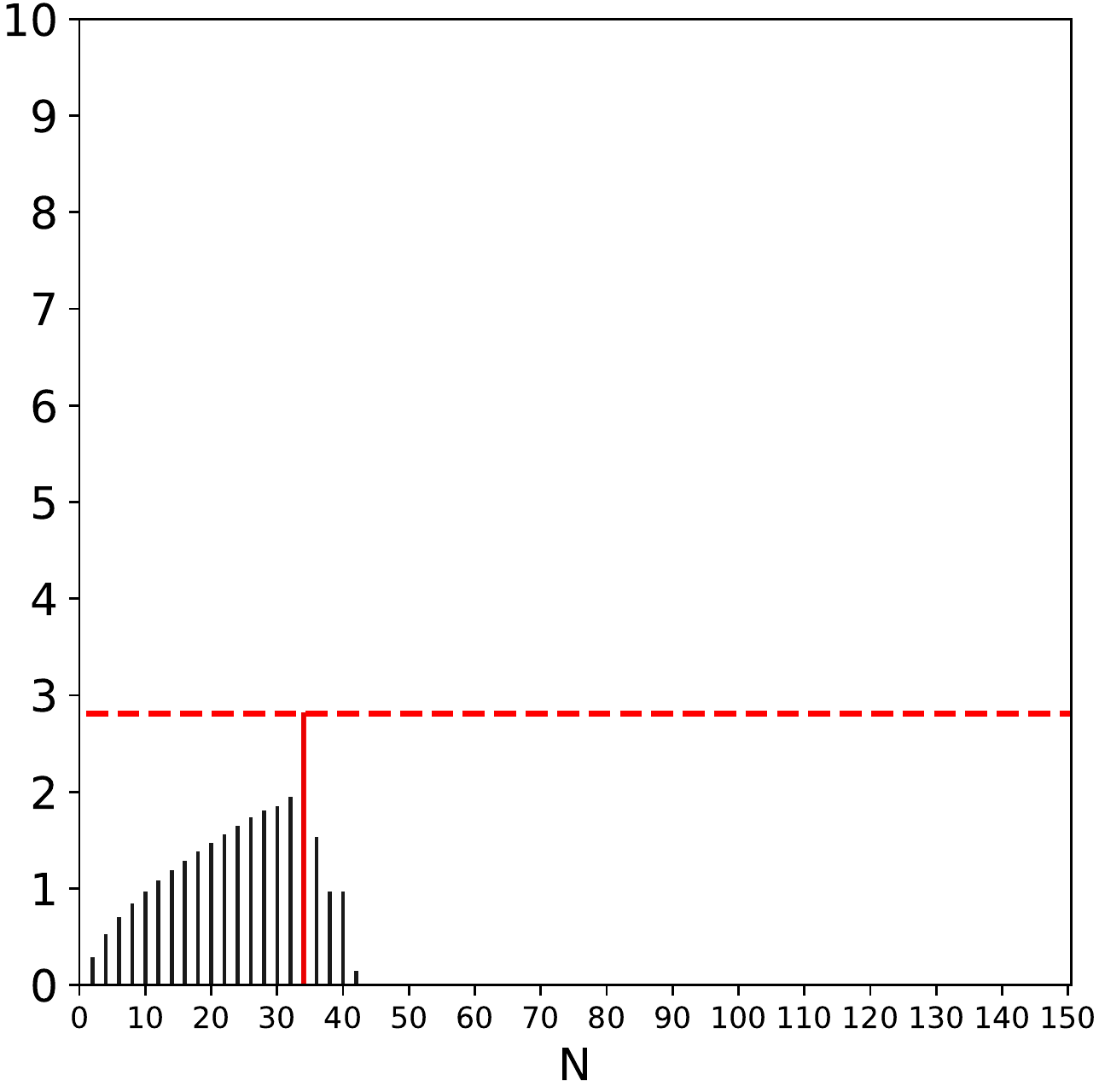}&
			\includegraphics[width=0.23\textwidth]{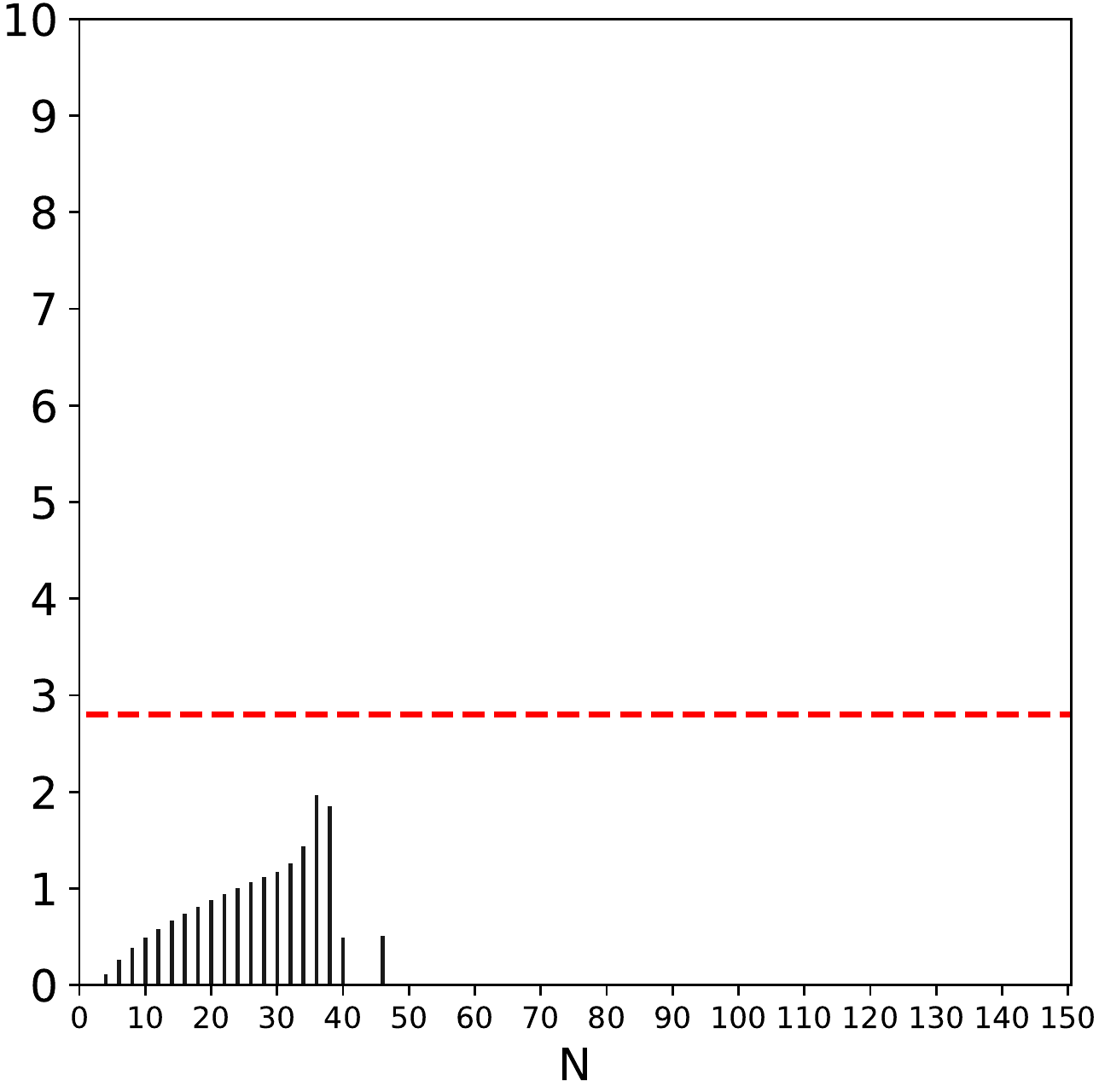} &	\includegraphics[width=0.23\textwidth]{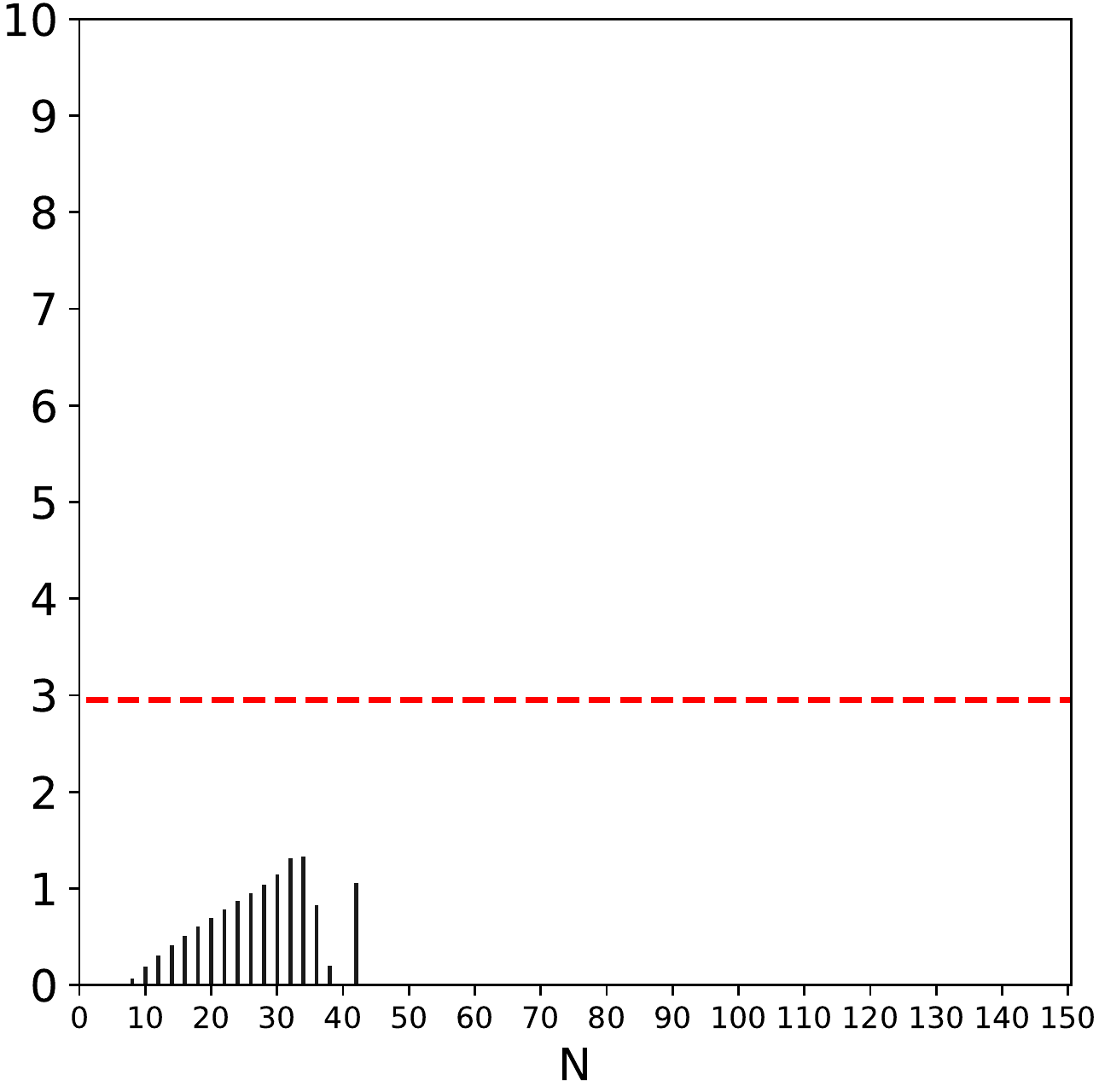} &
			\includegraphics[width=0.23\textwidth]{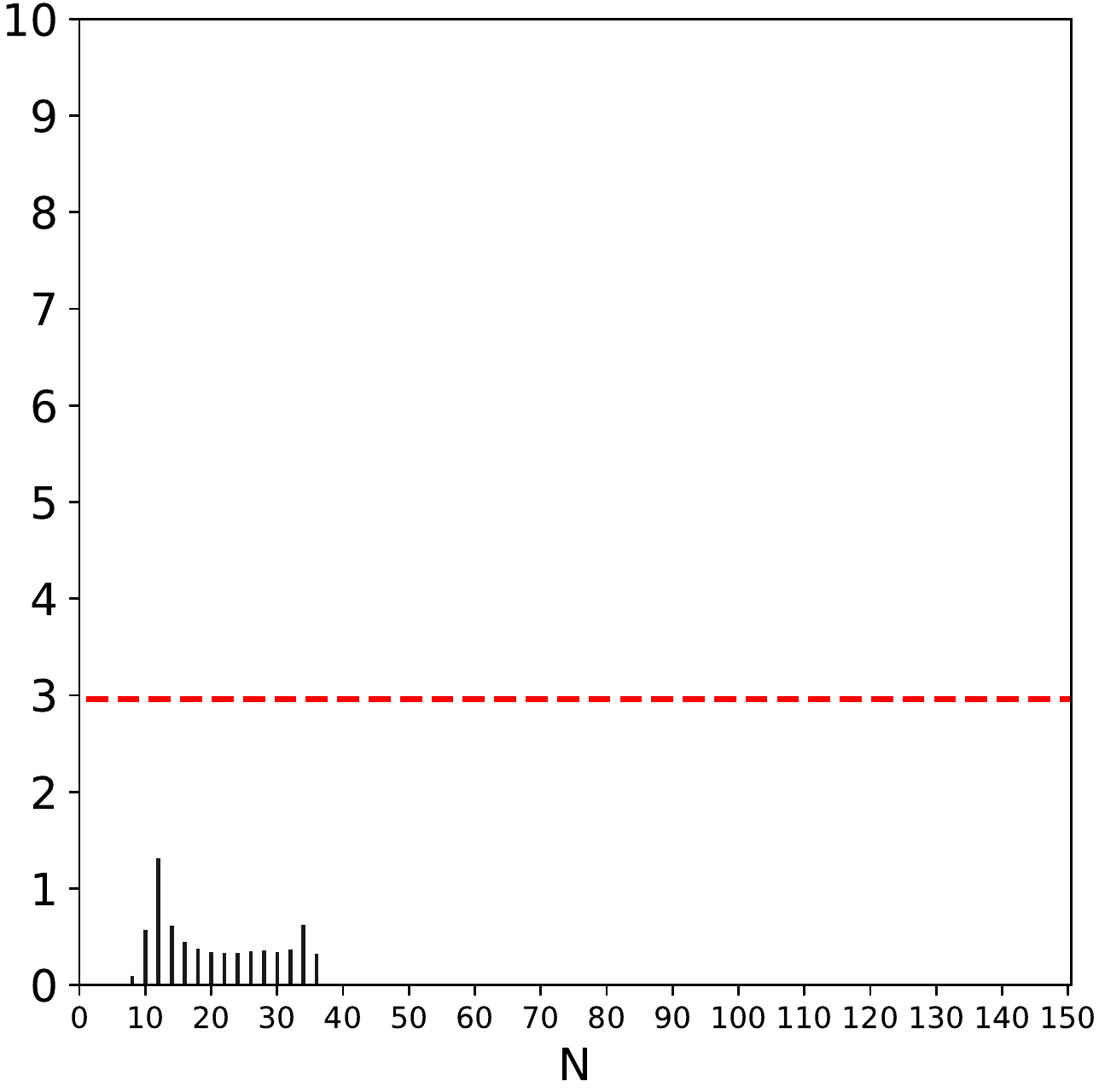}\\
		\end{tabular}
	}
	\caption{$\gamma_{a,b}^N$ as a function of $N$, when the underlying process is an OU with $\sigma_X(T;t)\approx 1.41$. The drift is $a= \mu_X(T;t)=20.0$. The dashed red horizontal lines indicate the accuracy of the MC method. The red vertical bars indicate when the Hermite series reaches the MC accuracy. \label{OUplot2}}
\end{figure}

\subsection{Polynomial jump-diffusion process}
We consider $X=Y$, where Y is the polynomial jump-diffusion process following the dynamics
\begin{equation*}
dY(t) = (b_0+b_1Y(t))\,dt + \sqrt{\sigma_0}\,dB(t)+\int_{\R}z\tilde{N}(dt,dz)
\end{equation*}
where $\tilde{N}(dt, dz)$ is a compensated Poisson random measure with compensator $\nu(dz)dt$. In this case, the jump measure $\ell(x, dz)$ is given by $\int_{\R}f(z)\ell(x,dz)= \int_{\R} f(\delta(x,z))\nu(dz)$, see \cite[Example 2.1]{benth2019} for details. Moreover, we consider $\nu$ to be the Lévy measure of a normal inverse Gaussian (NIG) process with
location parameter $\mu\in \R$, scale $\delta> 0$, asymmetry parameter $\beta$ and steepness parameter $\alpha$ (see \cite{barndorff1997}). 
Since $Y$ is a polynomial process, the moments of $X(T)$ are given in closed form by Corollary \ref{prop2}. However, there is no explicit price functional to be used as benchmark for the experiments. 

In Figure \ref{JDplot} we report the results for a jump-diffusion process with $(b_0, b_1, \sigma_0) =( -0.02, 0.01, 0.49)$, $(\alpha, \beta, \mu, \delta)= (1.0, 0, 0, 0.05)$, $T=2$, $t=0$ and initial condition $X(0) = 2.0$. The mean and standard deviation, calculated with the moment formula in Corollary \ref{prop2}, are $\mu_X(T;t)=2.00$ and $\sigma_X(T;t)=1.00$ respectively. Since no closed price formula is available in this case, we cannot access the accuracy of the approximation as before. The plots of Figure \ref{JDplot} show then the approximated price with generalized Hermite polynomials compared with the approximated price via Monte Carlo. For each experiment, we report two plots: one showing the convergence/non-convergence at large scale and one zoomed.

In particular, we use values for $b$ which have the same scale with respect to $\underline{b}_{\sigma}= \frac{\sqrt{\sigma}}{2}$ (for $\sigma = \sigma_X(T;t)$) of Figure \ref{BMplot}. Even if $\underline{b}_{\sigma}$ has been introduced in Proposition \ref{condb} for a Gaussian random variable, the results in Figure \ref{JDplot} are in line with the previous ones. Specifically, we see that for higher values of the scale the convergence is slower but more stable at the same time. For example, for $b = 2.0\underline{b}_{\sigma}$, the convergence is reached around $N=17$, but after $N=47$, due to numerical instabilities, the series starts to diverge. On the other hand, for $b = 6.0\underline{b}_{\sigma}$, we see that $N=60$ terms are not enough to reach convergence. Last, we notice that for $b = \underline{b}_{\sigma}$ there is no convergence, as expected in line with Proposition \ref{condb}. This is visible both for $K=1.0$ and $K=2.0$ in the zoomed plots: we see that the option price oscillates around the MC price, without reaching convergence. After a certain number of iterations (around $N=30$), because of numerical instabilities, the series starts to diverge.

\begin{figure}[!tp]
	\setlength{\tabcolsep}{2pt}
	\resizebox{1\textwidth}{!}{
		\begin{tabular}{@{}>{\centering\arraybackslash}m{0.04\textwidth}@{}>{\centering\arraybackslash}m{0.24\textwidth}@{}>{\centering\arraybackslash}m{0.24\textwidth}@{}>{\centering\arraybackslash}m{0.24\textwidth}@{}>{\centering\arraybackslash}m{0.24\textwidth}@{}}
			& $\boldsymbol{K = 1.0}$&$\boldsymbol{K = 1.0}$ (zoom) & $\boldsymbol{K = 2.0}$& $\boldsymbol{K = 2.0}$ (zoom) \\
			\begin{turn}{90}$\boldsymbol{b \!=1.0\, \underline{b}_{\sigma}\! \approx\!0.71}$\end{turn}&
			\includegraphics[width=0.23\textwidth]{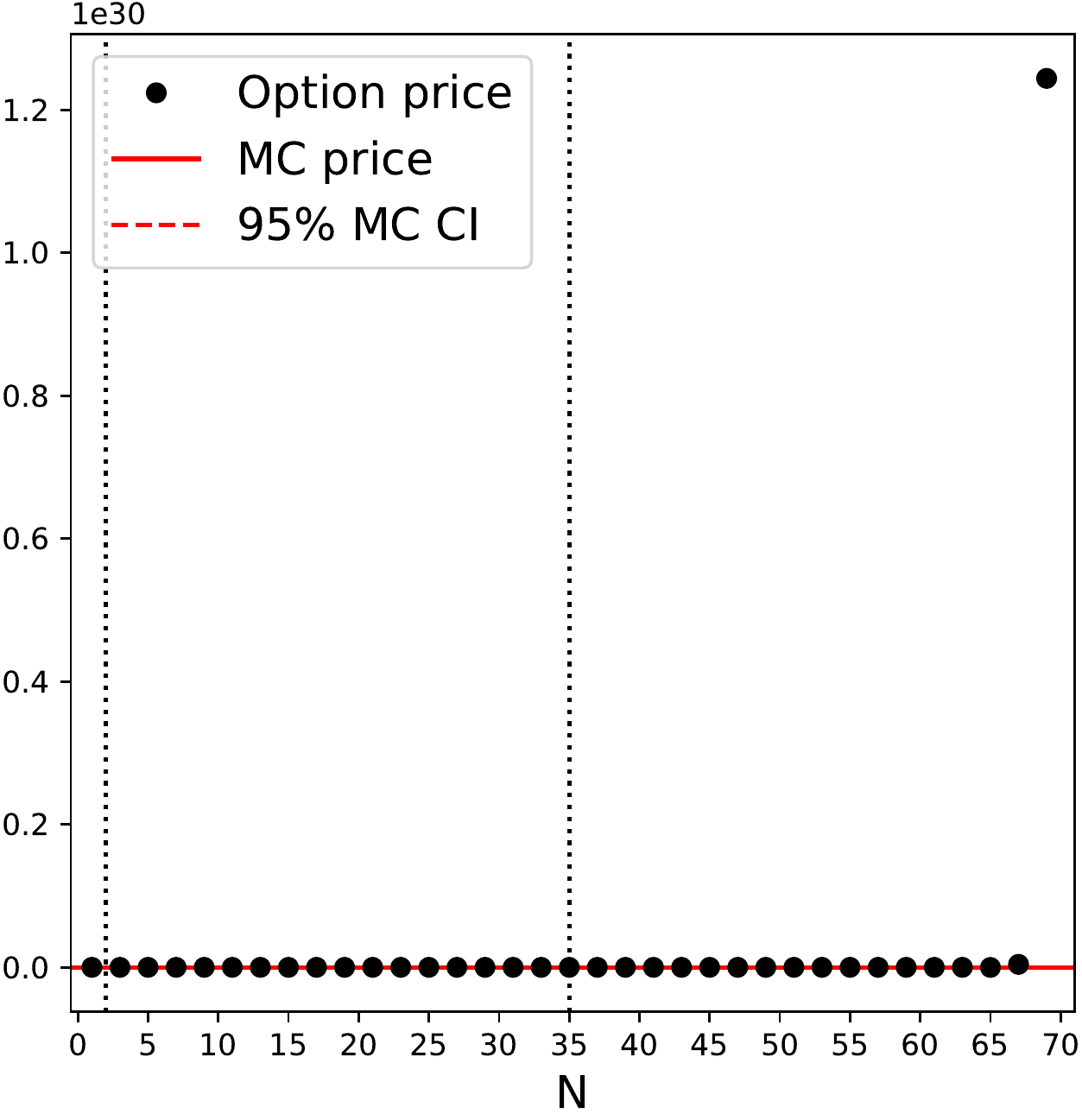}&
			\includegraphics[width=0.23\textwidth]{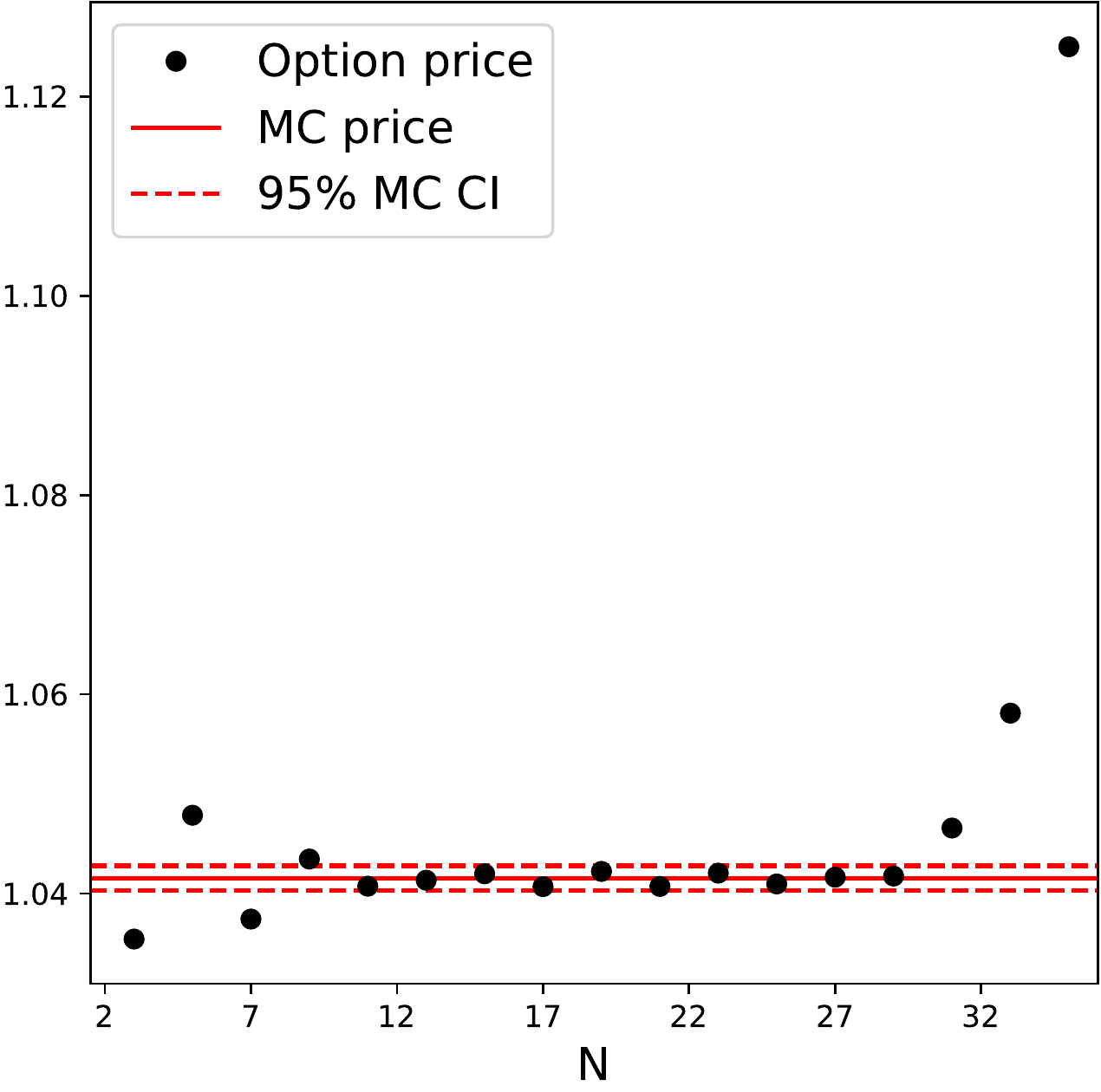} &	\includegraphics[width=0.23\textwidth]{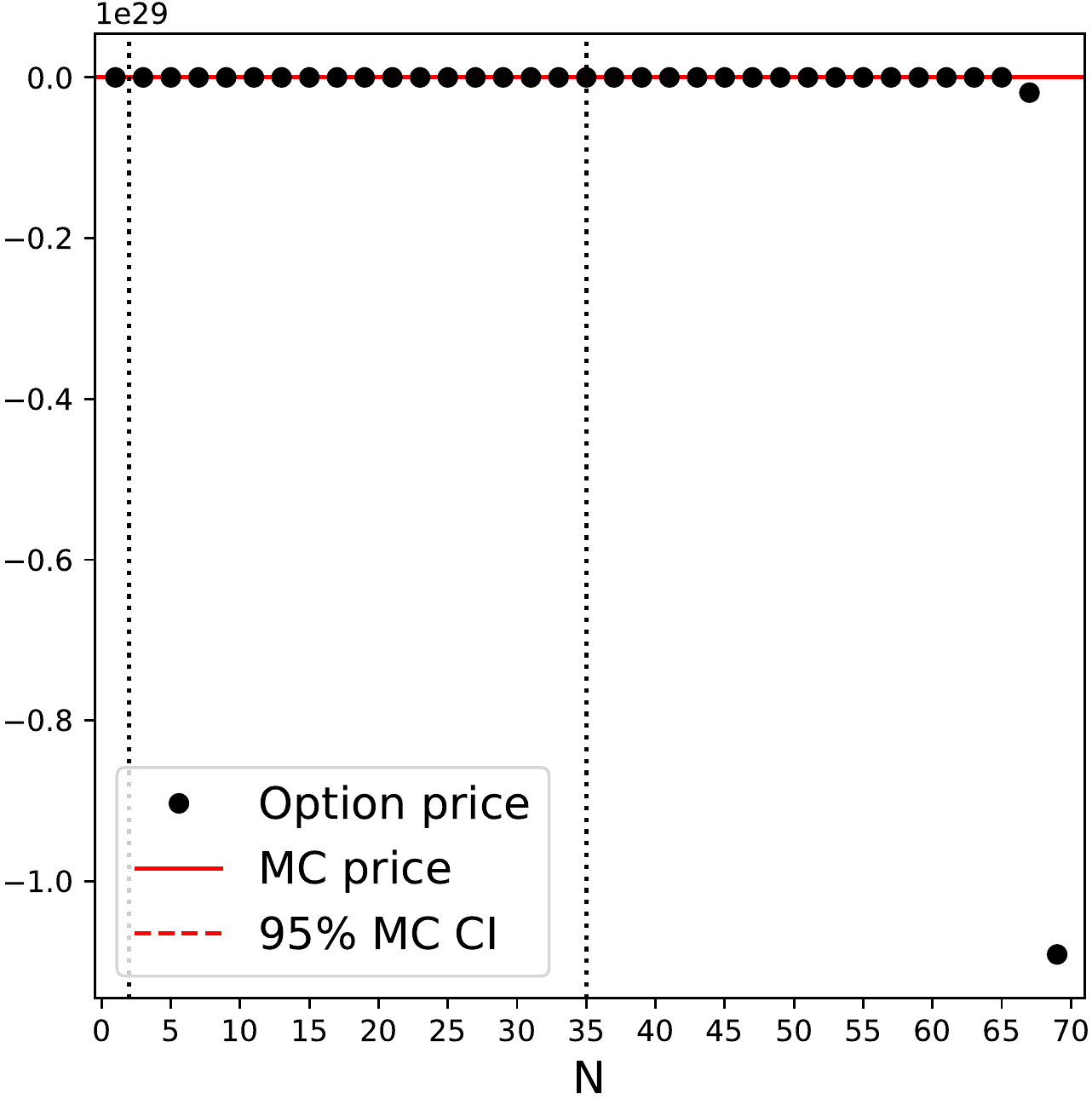} &
			\includegraphics[width=0.23\textwidth]{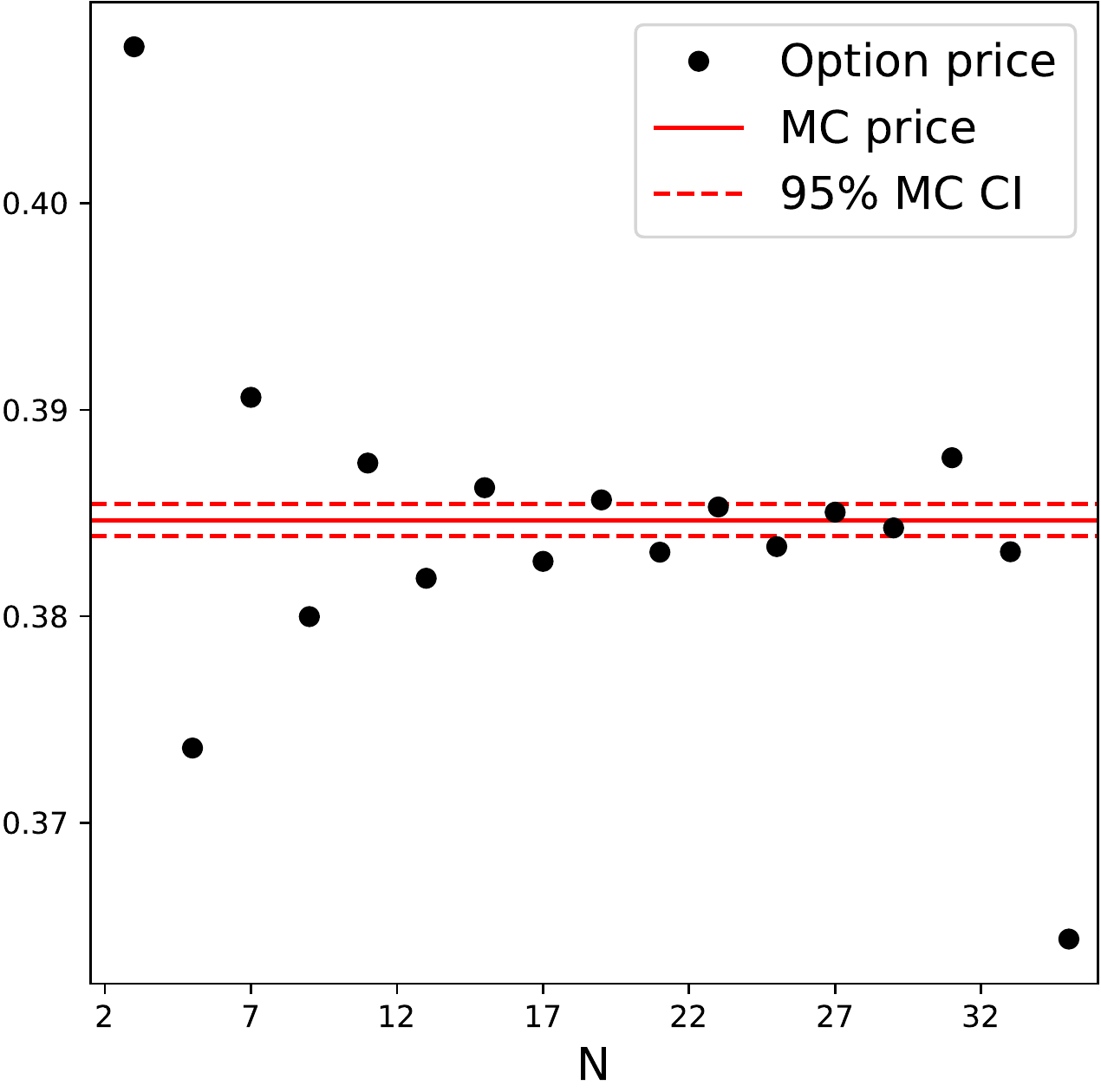}\\
			\begin{turn}{90}$\boldsymbol{b \!=1.2\, \underline{b}_{\sigma} \!\approx\!0.85}$\end{turn}&
			\includegraphics[width=0.23\textwidth]{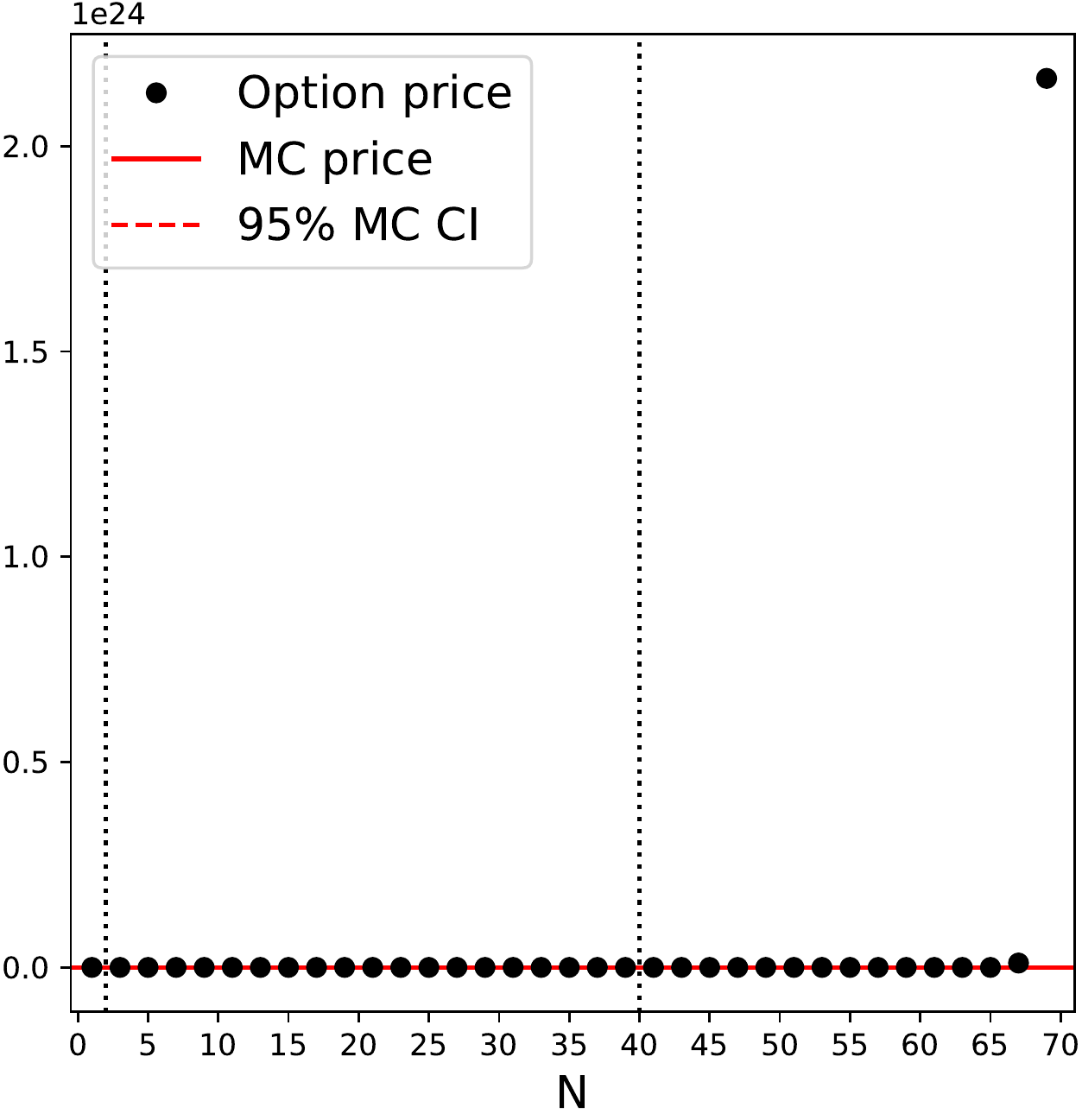}&
			\includegraphics[width=0.23\textwidth]{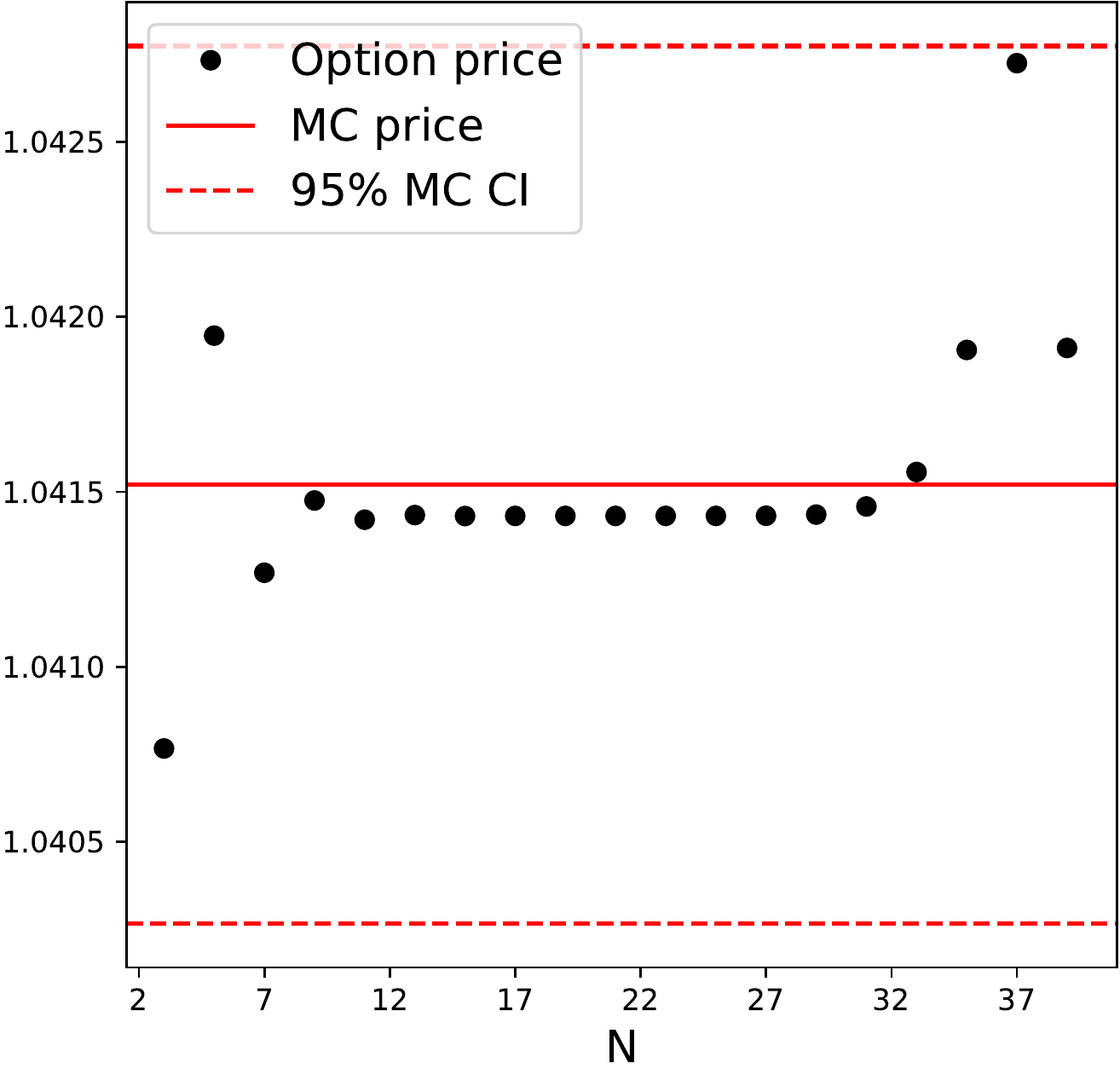} &	\includegraphics[width=0.23\textwidth]{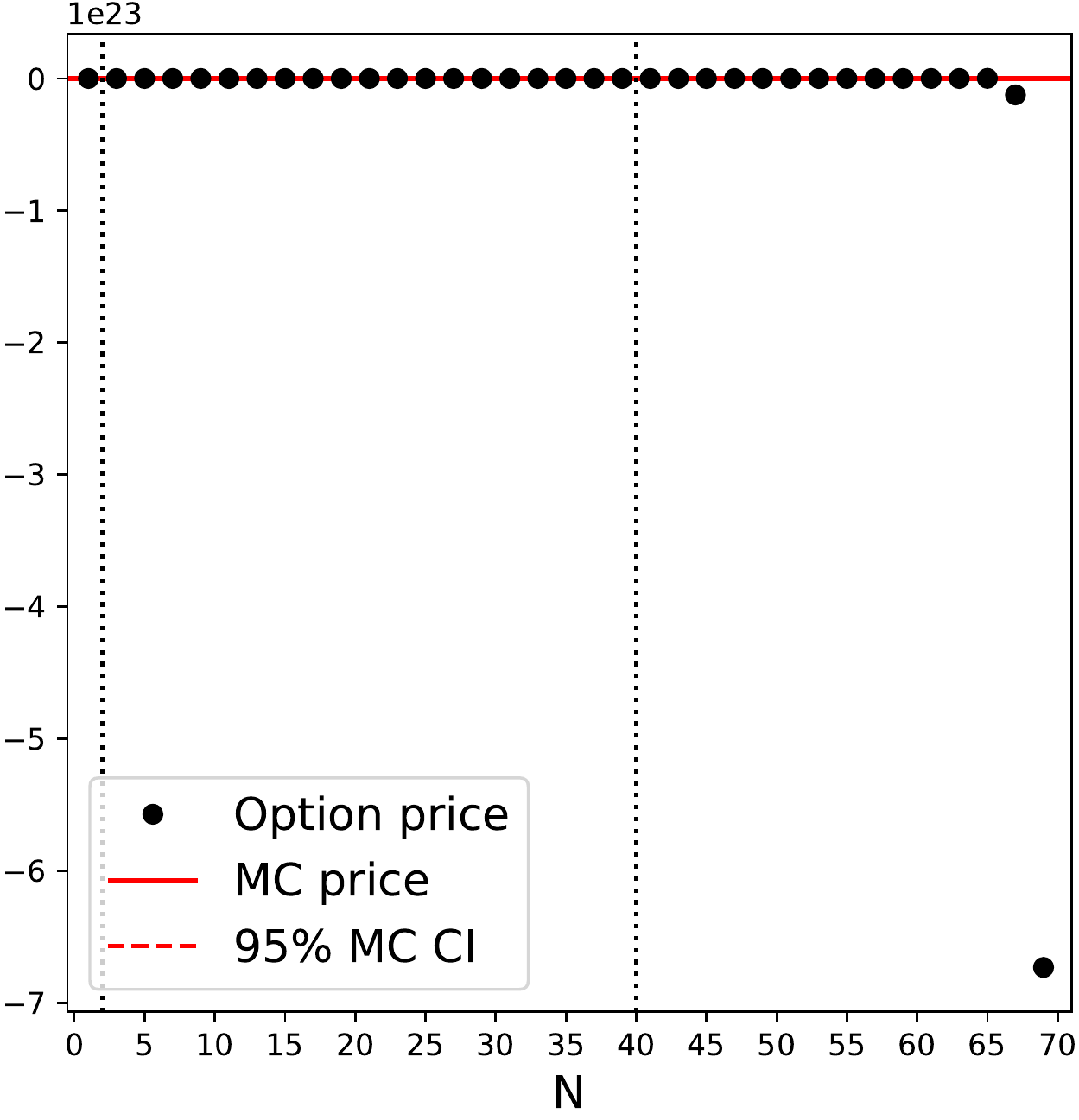} &
			\includegraphics[width=0.23\textwidth]{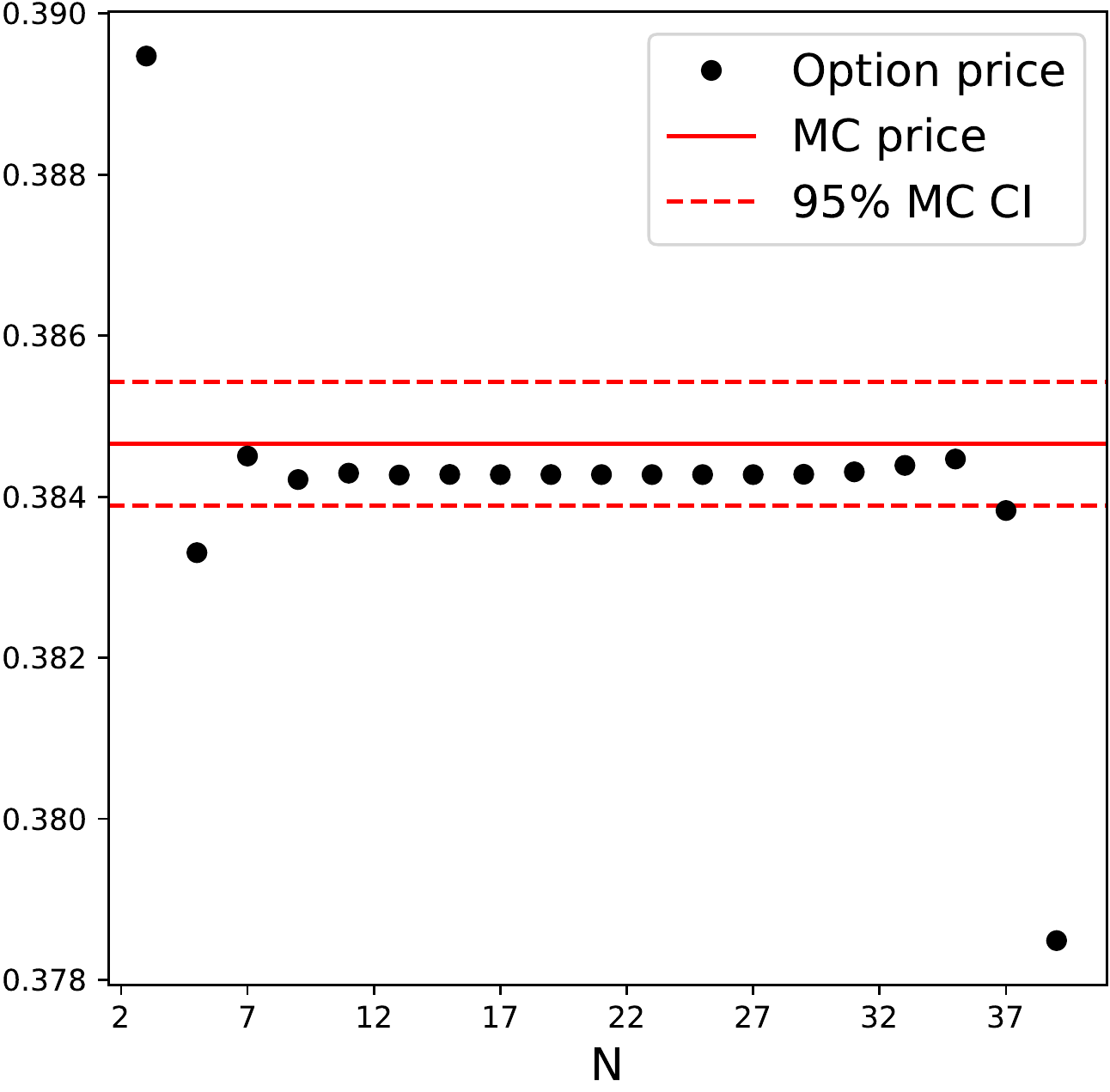}\\
			\begin{turn}{90}$\boldsymbol{b \!=2.0\, \underline{b}_{\sigma} \!\approx\!1.42}$\end{turn}&
			\includegraphics[width=0.23\textwidth]{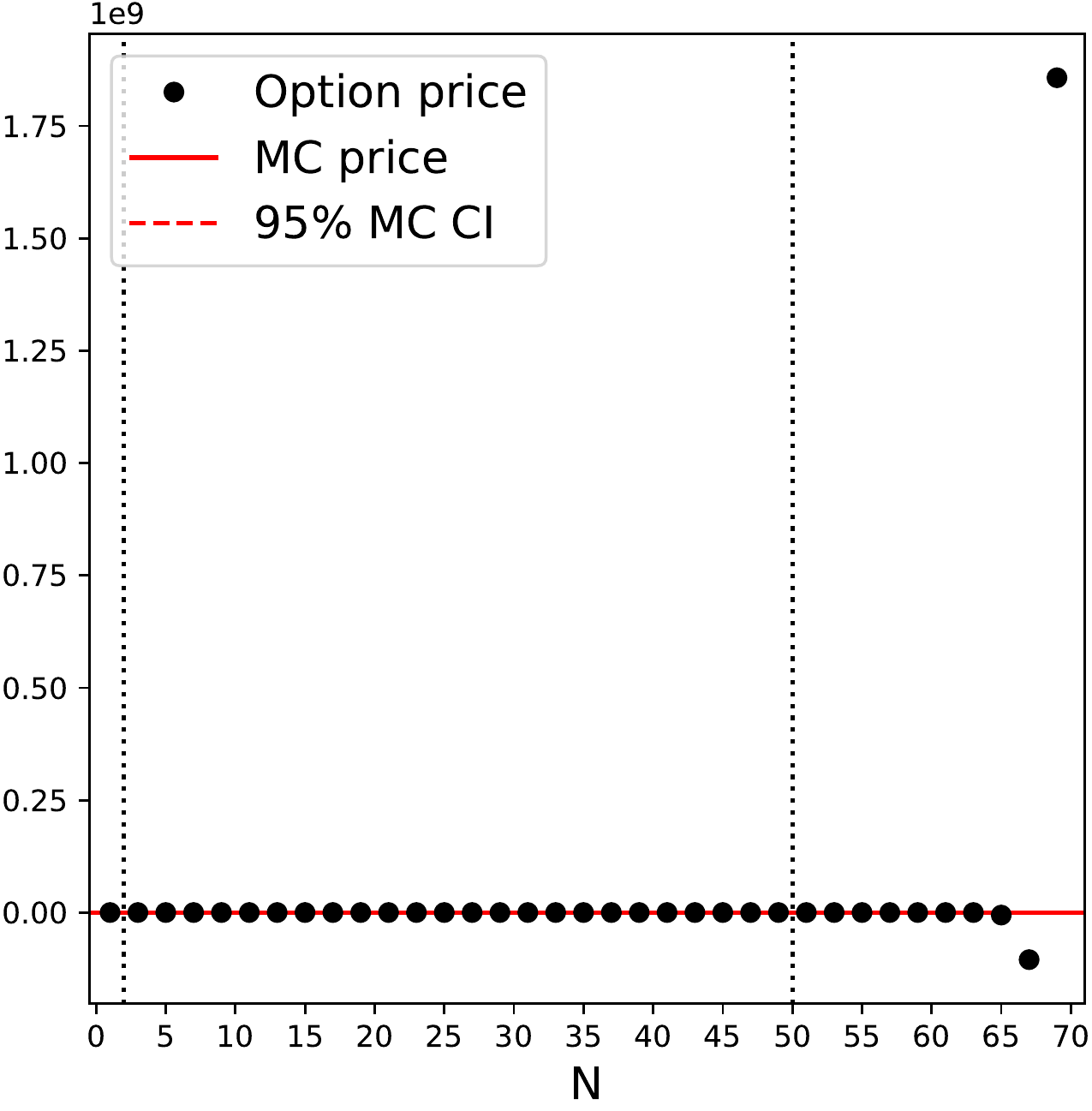}&
			\includegraphics[width=0.23\textwidth]{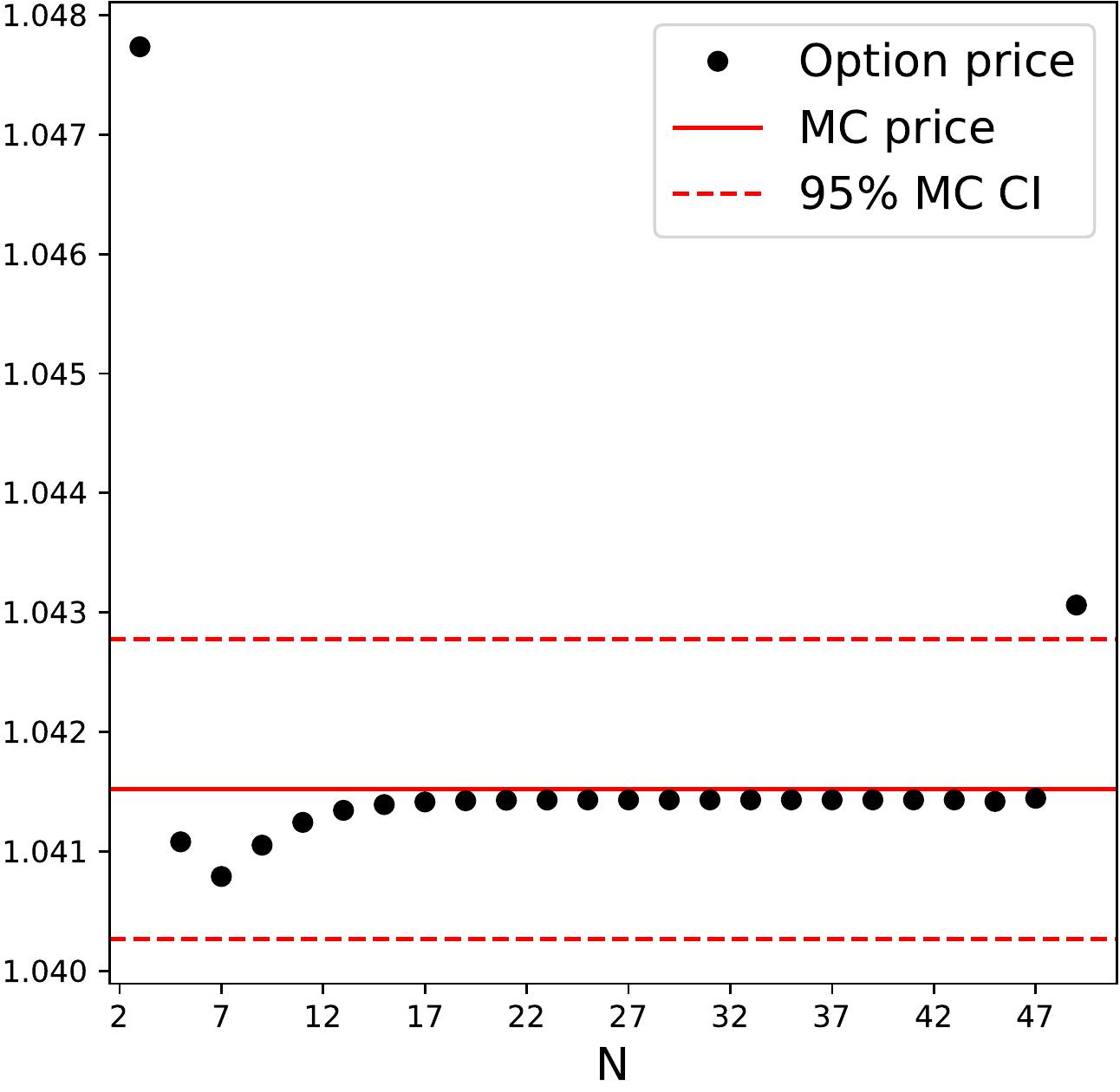} &	\includegraphics[width=0.23\textwidth]{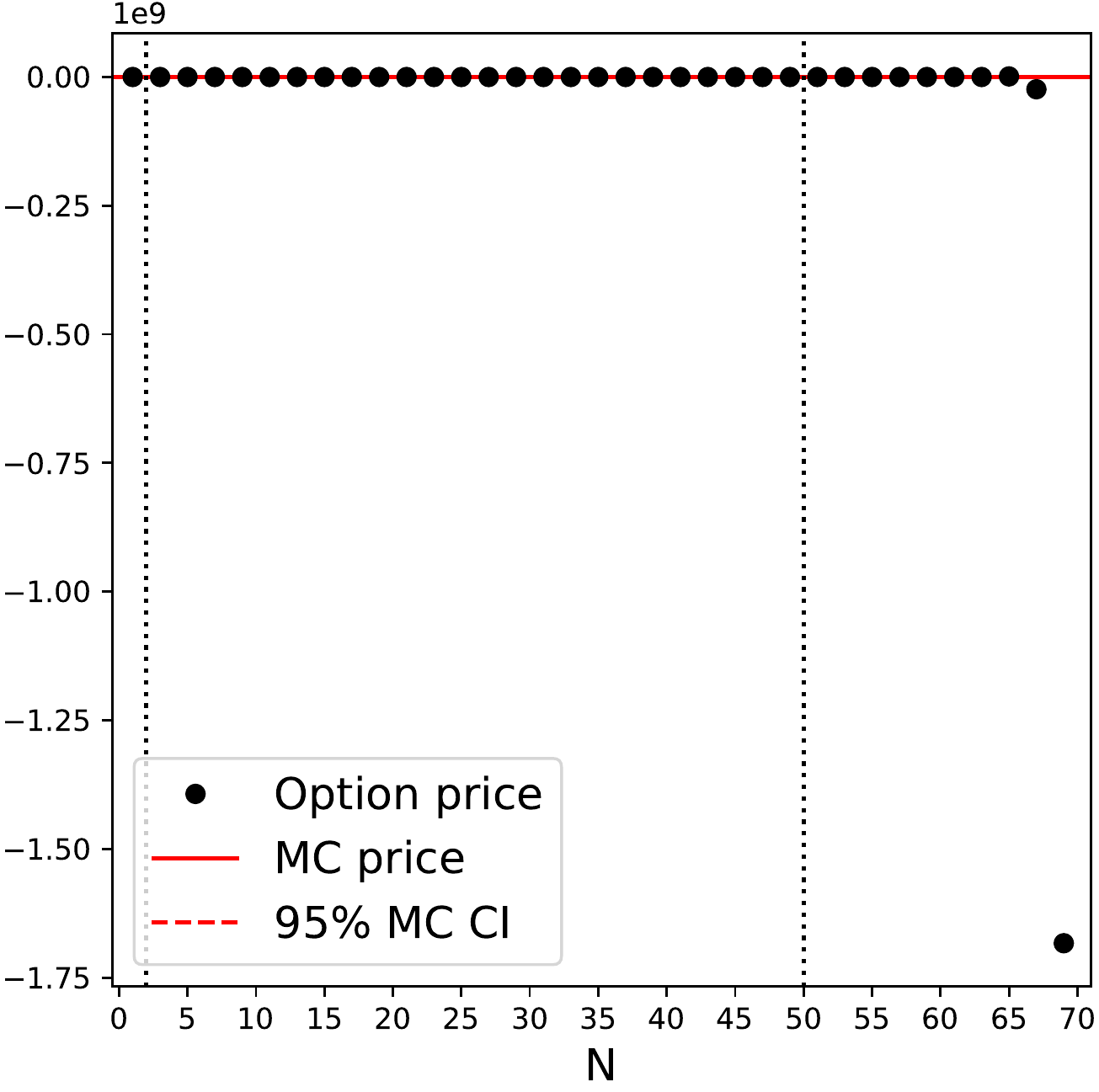} &
			\includegraphics[width=0.23\textwidth]{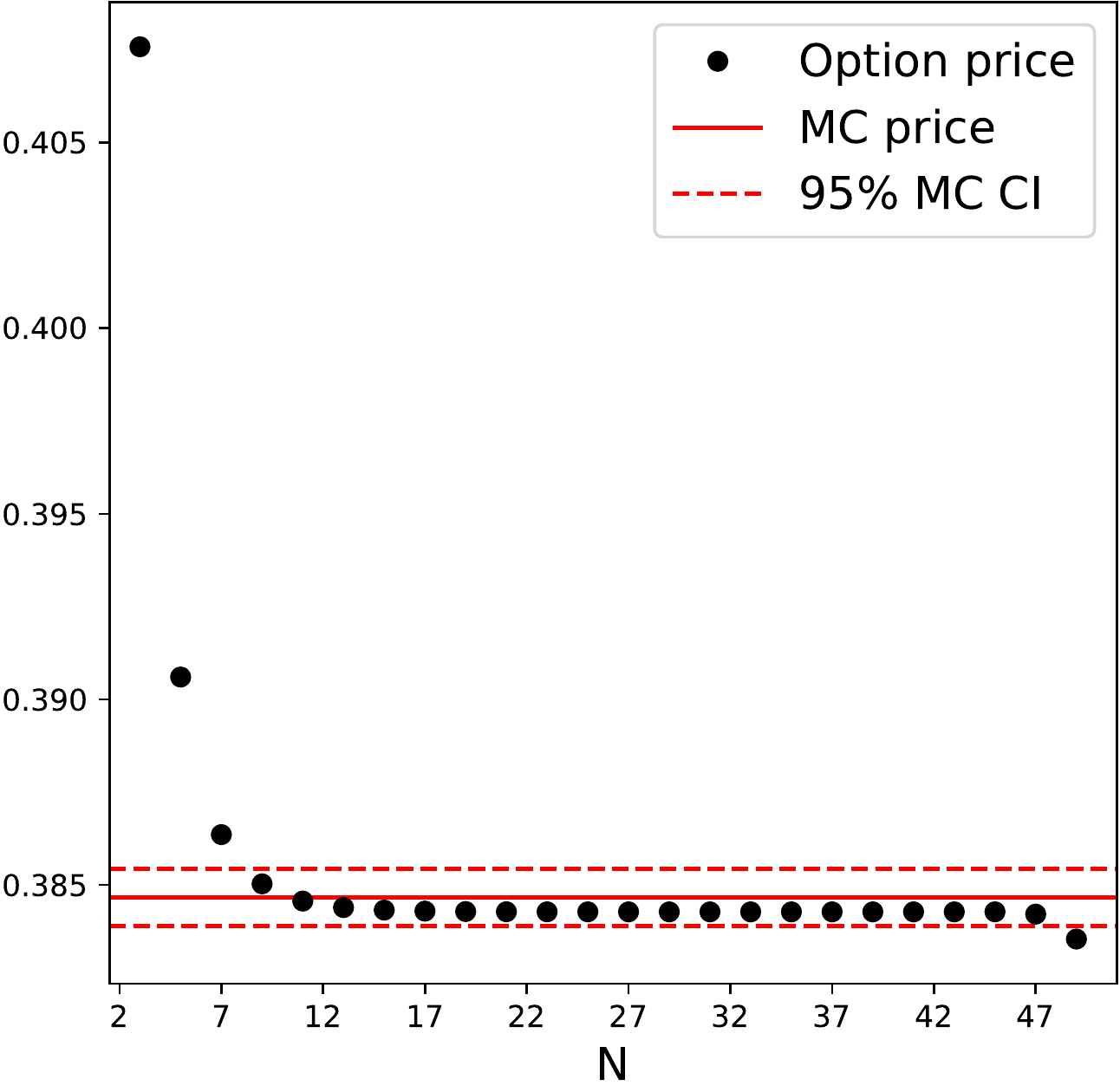}\\
			\begin{turn}{90}$\boldsymbol{b \!=4.0\, \underline{b}_{\sigma} \!\approx\!2.84}$\end{turn}&
			\includegraphics[width=0.23\textwidth]{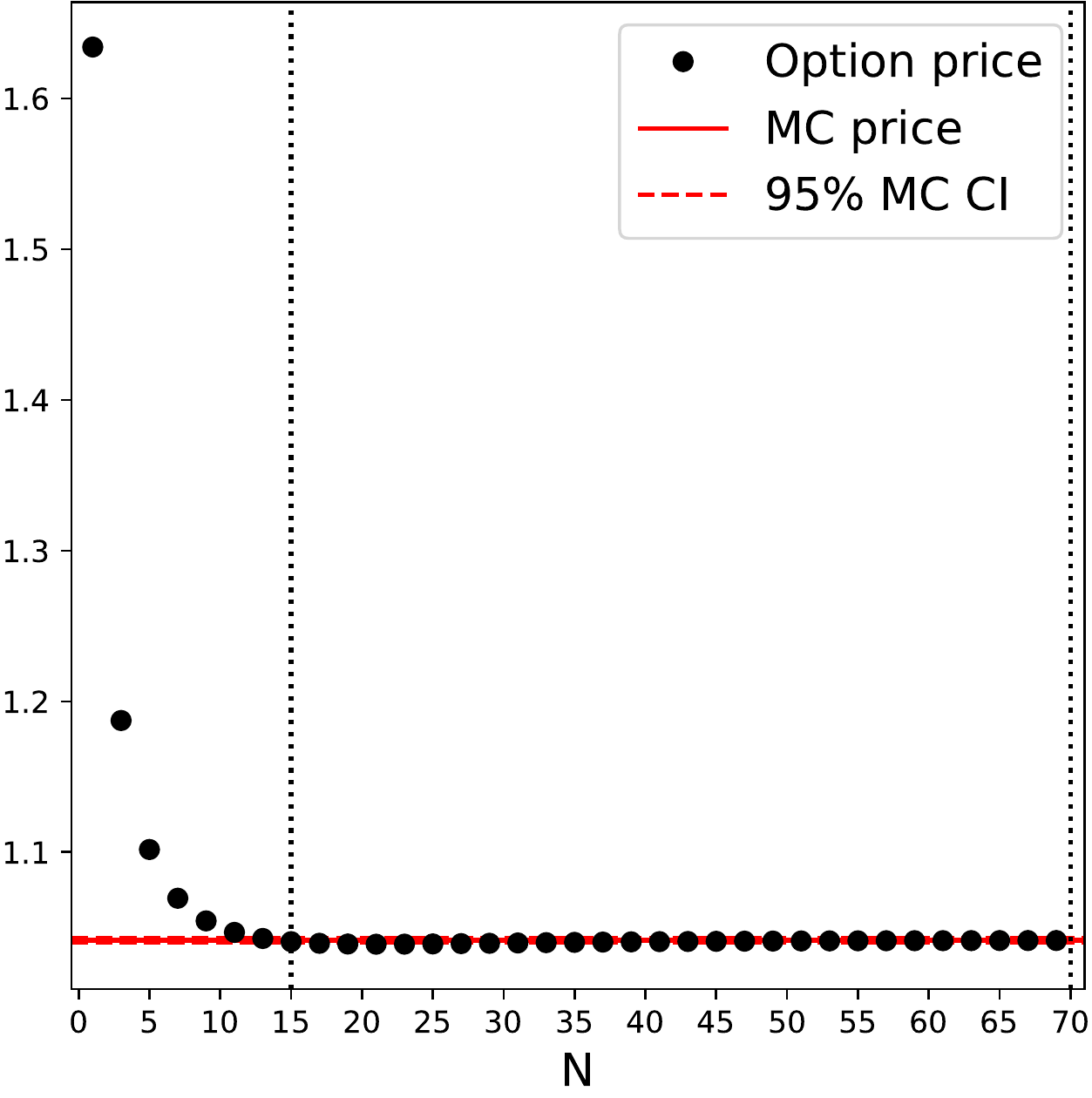}&
			\includegraphics[width=0.23\textwidth]{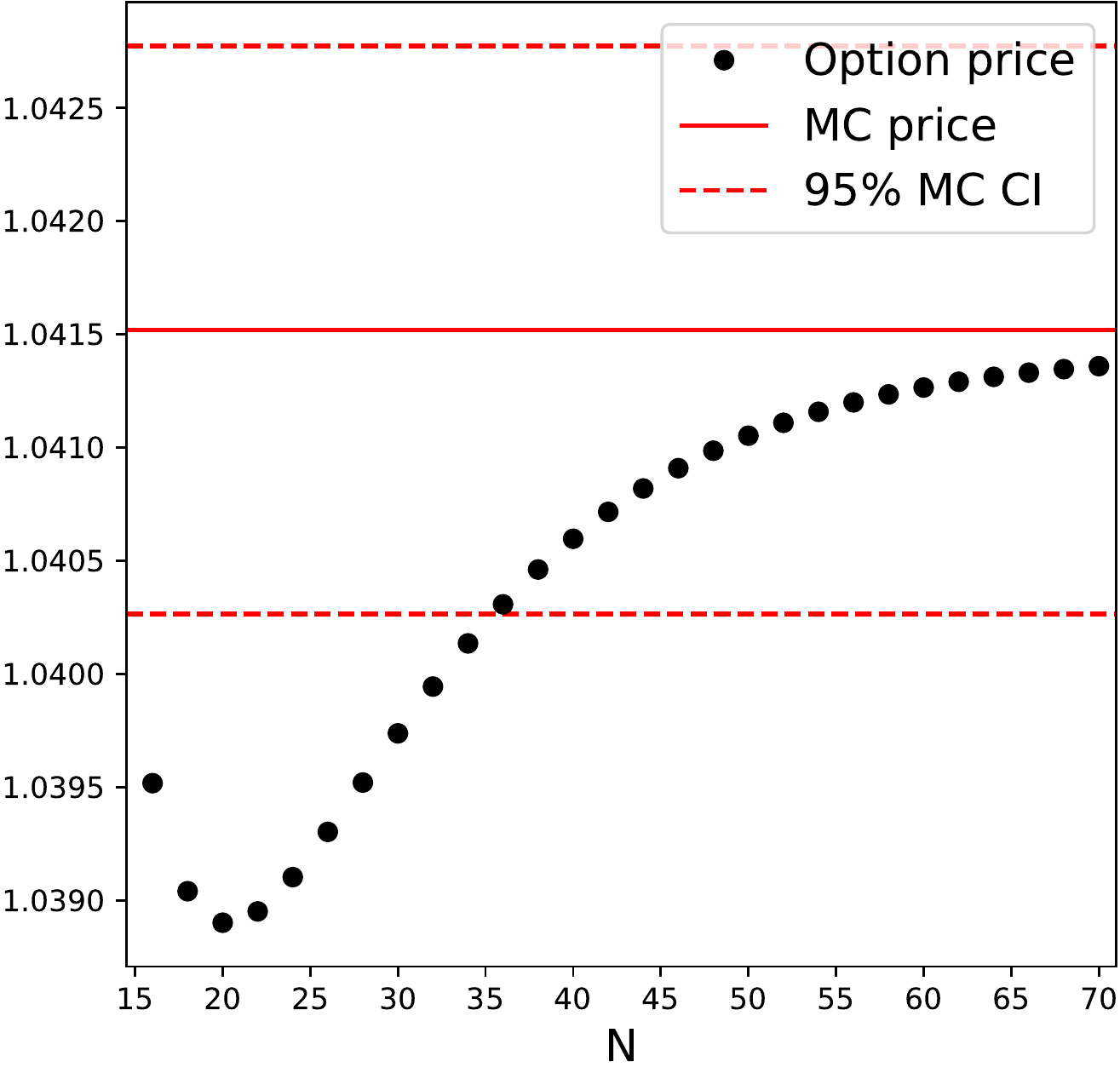} &	\includegraphics[width=0.23\textwidth]{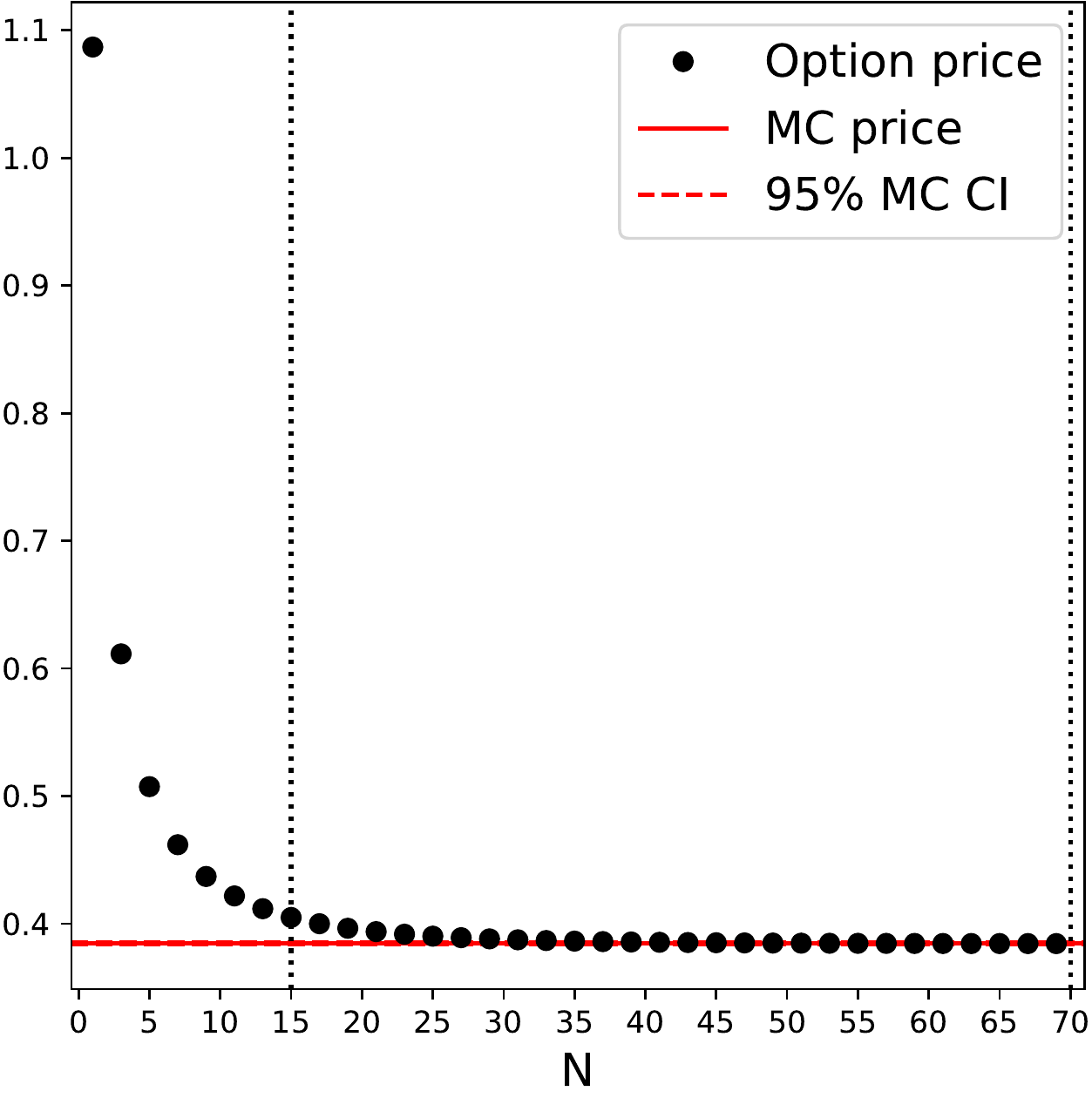} &
			\includegraphics[width=0.23\textwidth]{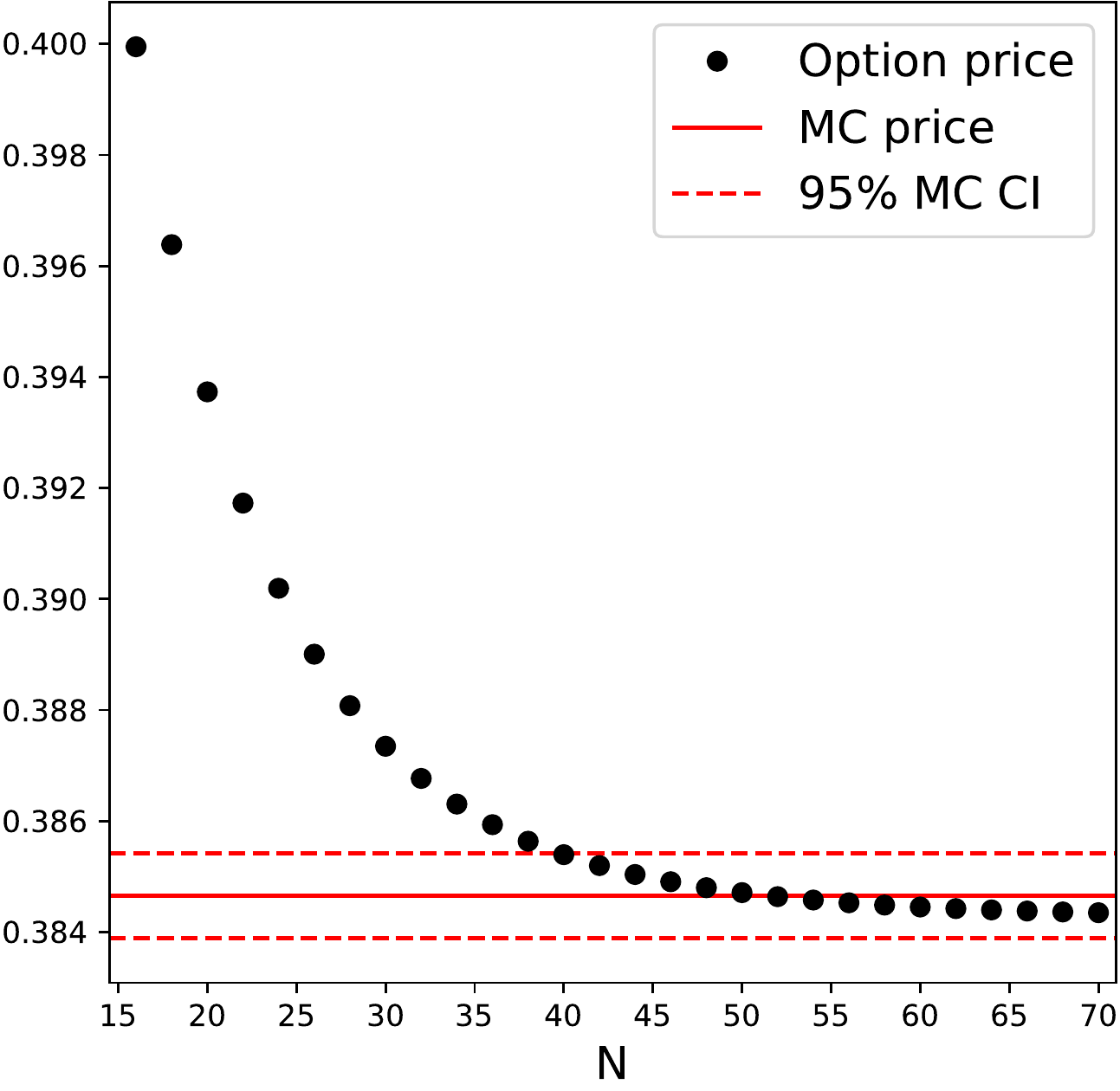}\\
			\begin{turn}{90}$\boldsymbol{b \!=6.0\, \underline{b}_{\sigma} \!\approx\!4.25}$\end{turn}&
			\includegraphics[width=0.23\textwidth]{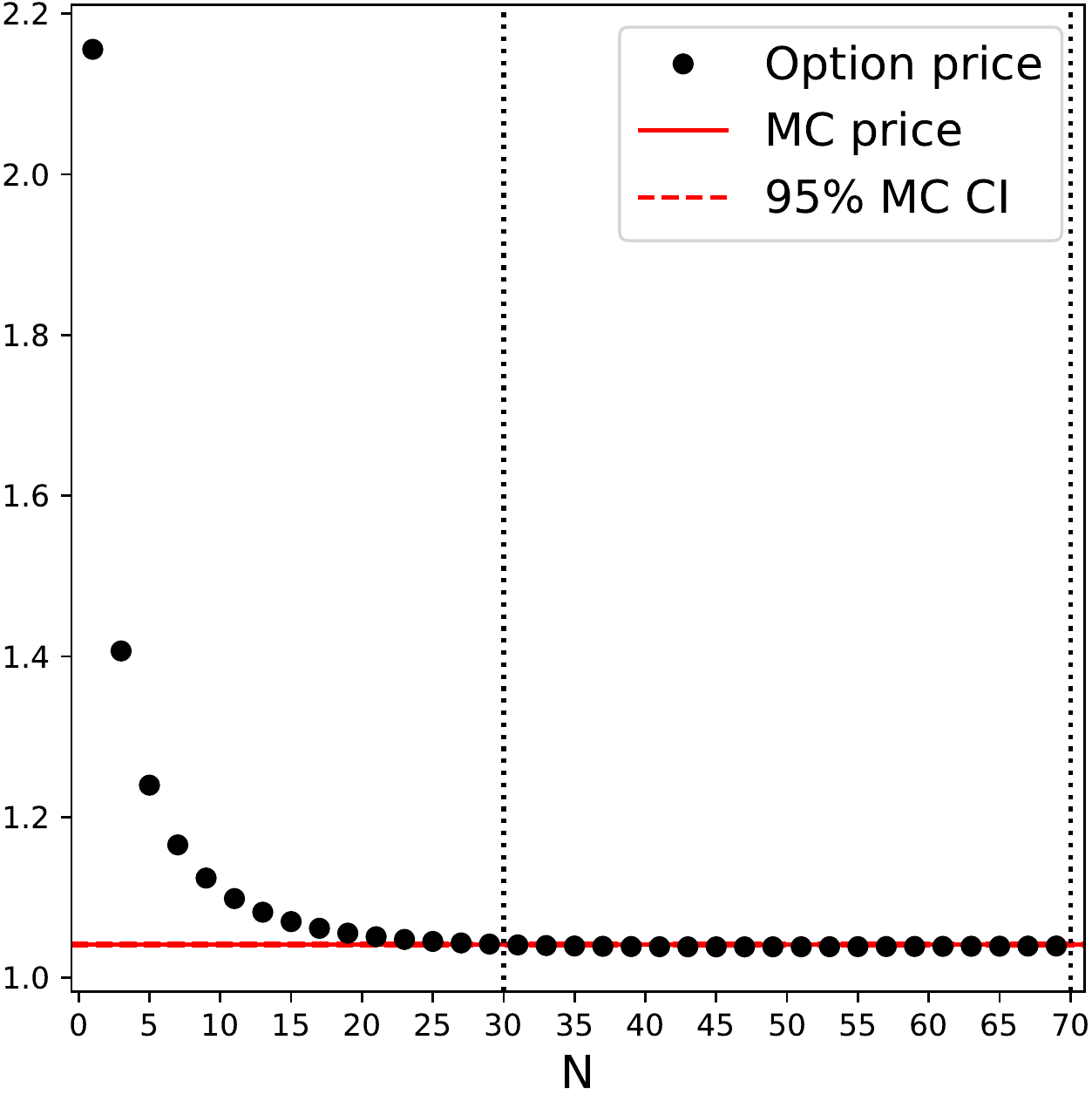}&
			\includegraphics[width=0.23\textwidth]{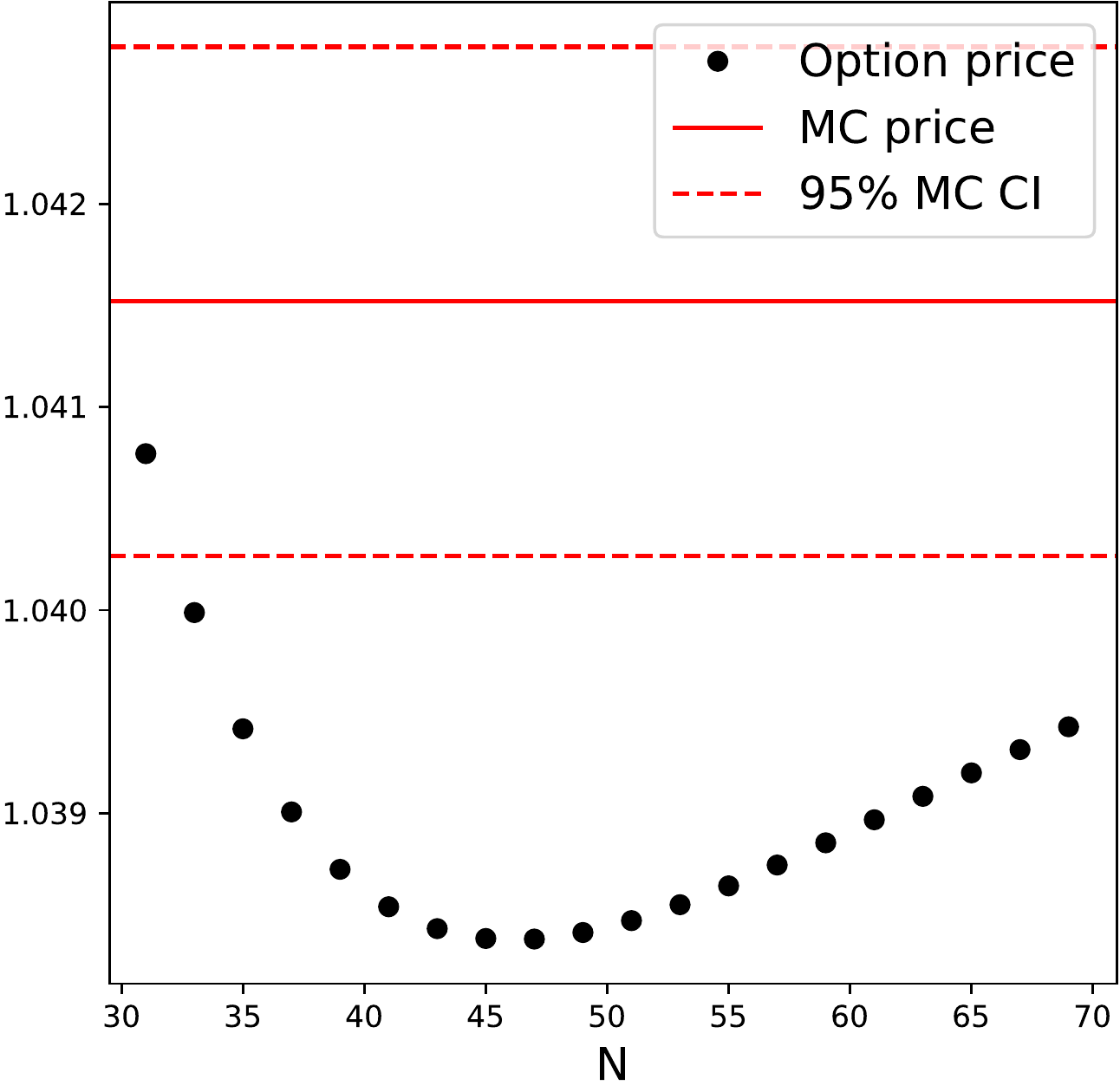} &	\includegraphics[width=0.23\textwidth]{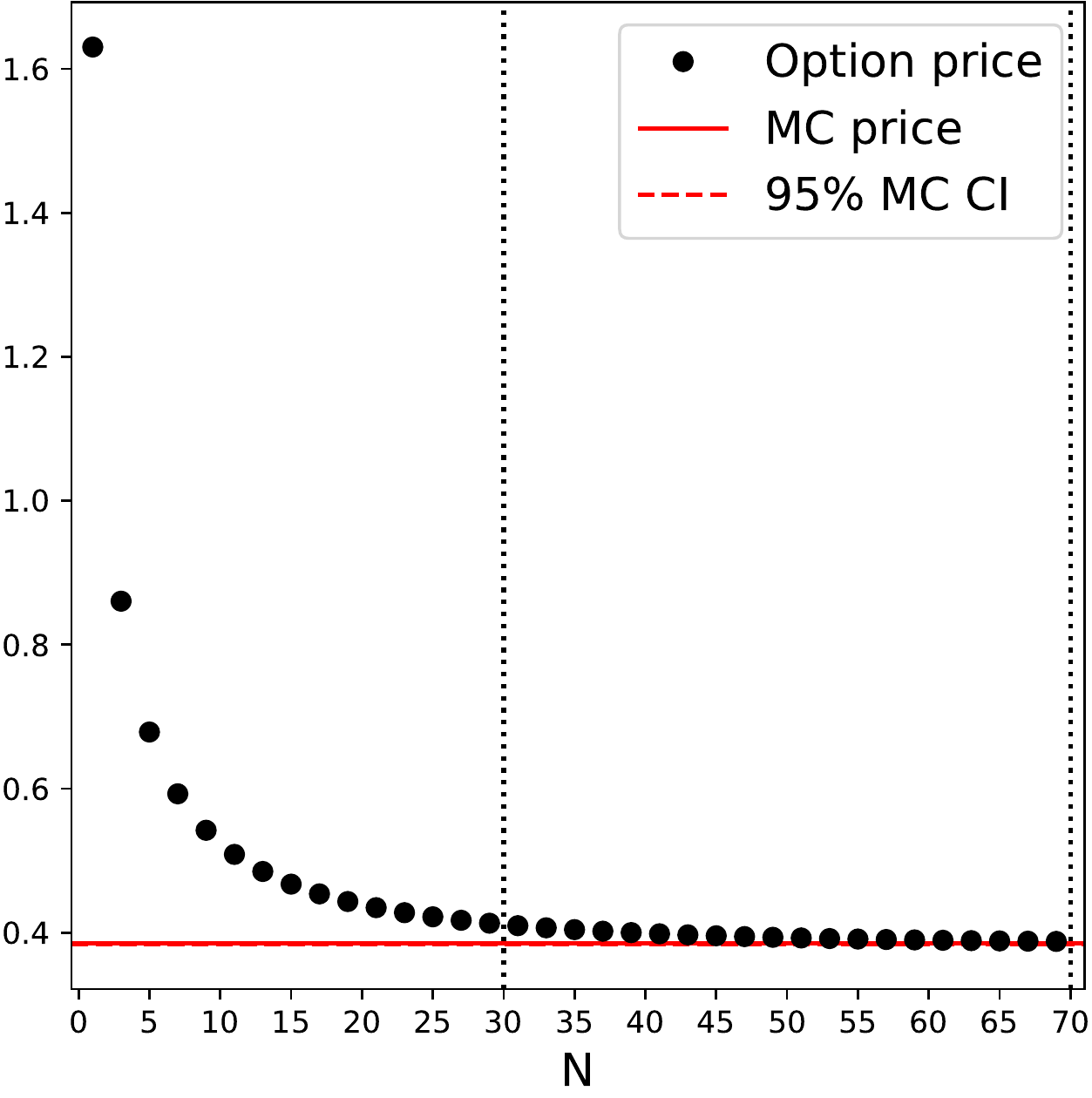} &
			\includegraphics[width=0.23\textwidth]{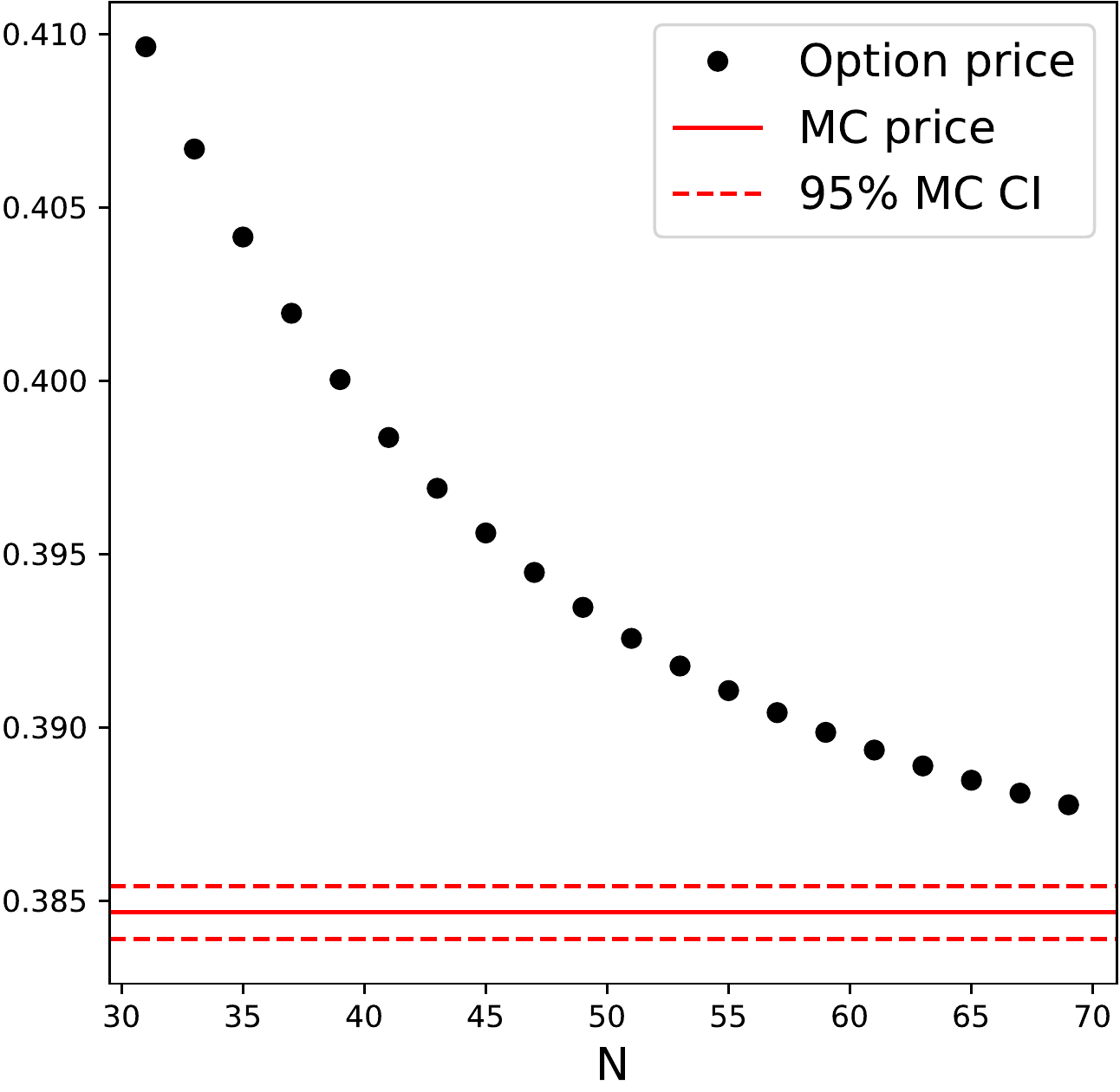}\\
		\end{tabular}
	}
	\caption{$\Pi_{K,N}^{a,b}(t)$ as a function of $N$, when the underlying process is a jump-diffusion with NIG measure and $\sigma_X(T;t)\approx 1.414$. The drift is $a=\mu_X(T;t)=2.0$. The black dots represent the option price, the solid red line the MC price, the two red dashed lines are the $95\%$ confidence interval for MC. The plots in the second and fourth columns are a zoomed subplot of the plots in the first and third columns. \label{JDplot}}
\end{figure}

\subsection{Asian options}
\label{examplesection}
We shall now test the option pricing formula with correlators of Theorem \ref{price2} for discretely sampled arithmetic Asian options. In particular, we shall make use of the insight learnt from the previous experiments and deal only with the Gaussian OU process and the polynomial-jump diffusion process as introduced above. For the Gaussian OU, closed price formula is available for benchmarking our approach, as we shall derive now. For the  polynomial-jump diffusion, no closed formula is available.

We first consider the OU process $Y$ introduced in equation \eqref{OU}. Then 
the average process $X$ is
\begin{equation*}
	X(T) 
	=  \frac{1}{m+1}\sum_{j=0}^m \left\{ Y(t)e^{b_1(s_j-t)}+\frac{b_0}{b_1}\left(e^{b_1(s_j-t)}-1\right)+\sqrt{\sigma_0}\int_{t}^{s_j}e^{b_1(s_j-v)}dB(v)  \right\}.
\end{equation*}
In particular, the random variables $\{Y(s_j)\}_{j=0}^m$ are not independent. We can however rewrite their sum as the sum of some other random variables $\{Z_j\}_{j=0}^m$ which are independent. 
\begin{proposition}
	\label{distribution}
	For $s_{-1}:=t$, the random variable $X(T)$ equals in distribution the weighted sum of $m+1$ independent random variables $\{Z_j\}_{j=0}^m$, namely $X(T)\mathop{=}\limits^d \frac{1}{m+1}\sum_{j=0}^m Z_j$,
	where $Z_j$ is defined by
	\begin{equation*}
		Z_j := Y(t)e^{b_1(s_j-t)}+\frac{b_0}{b_1}\left(e^{b_1(s_j-t)}-1\right)+\sqrt{\sigma_0}\int_{s_{j-1}}^{s_j}\left(\sum_{k=j}^m e^{b_1(s_k-v)} \right) dB(v) \qquad \mbox{for }j=0, \dots, m.
	\end{equation*}
\end{proposition}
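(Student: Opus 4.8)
The plan is to write out the explicit strong solution of the OU equation \eqref{OU}, insert it into the definition of $X(T)$, and then reorganise the stochastic integrals so that the overlapping Brownian increments are regrouped into disjoint pieces. Solving \eqref{OU} for $s_j\ge t$ gives
\begin{equation*}
	Y(s_j) = Y(t)e^{b_1(s_j-t)}+\frac{b_0}{b_1}\left(e^{b_1(s_j-t)}-1\right)+\sqrt{\sigma_0}\int_{t}^{s_j}e^{b_1(s_j-v)}dB(v),
\end{equation*}
which is exactly the summand displayed just before the proposition. Summing over $j$ and splitting each term into its $\F_t$-measurable part (the deterministic drift together with $Y(t)$) and its stochastic part, the deterministic part of $\sum_{j=0}^m Y(s_j)$ already coincides term-by-term with the deterministic part of $\sum_{j=0}^m Z_j$, so the whole statement reduces to an identity between the stochastic integrals.

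The key algebraic step is the rearrangement of these Itô integrals. Setting $s_{-1}=t$ and decomposing each integration window along the sampling grid, $\int_{t}^{s_j}=\sum_{i=0}^{j}\int_{s_{i-1}}^{s_i}$, I would obtain
\begin{equation*}
	\sum_{j=0}^m \int_{t}^{s_j}e^{b_1(s_j-v)}dB(v)
	= \sum_{j=0}^m\sum_{i=0}^{j}\int_{s_{i-1}}^{s_i}e^{b_1(s_j-v)}dB(v).
\end{equation*}
Interchanging the order of the two finite sums, so that for fixed $i$ the index $j$ now runs from $i$ to $m$, and using linearity of the Itô integral to pull the inner sum inside each fixed sub-interval, gives
\begin{equation*}
	\sum_{i=0}^m \int_{s_{i-1}}^{s_i}\left(\sum_{j=i}^{m}e^{b_1(s_j-v)}\right)dB(v),
\end{equation*}
which, after relabelling $i\mapsto j$ and $j\mapsto k$, is precisely the stochastic part of $\sum_{j=0}^m Z_j$. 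Hence in fact $\sum_{j=0}^m Y(s_j)=\sum_{j=0}^m Z_j$ almost surely, so in particular $X(T)\mathop{=}\limits^d\frac{1}{m+1}\sum_{j=0}^m Z_j$.

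It then remains to verify the independence of $\{Z_j\}_{j=0}^m$. Conditionally on $\F_t$ the contributions $Y(t)e^{b_1(s_j-t)}$ are constants, so the only randomness in $Z_j$ is the Wiener integral of a deterministic integrand over the interval $(s_{j-1},s_j]$. These intervals are pairwise disjoint by construction, and Wiener integrals of deterministic functions are jointly Gaussian; by the Itô isometry the covariance of the $Z_j$- and $Z_{j'}$-integrals equals the integral of the product of the two integrands over $(s_{j-1},s_j]\cap(s_{j'-1},s_{j'}]$, which is empty for $j\neq j'$. Zero covariance together with joint Gaussianity yields independence, so conditionally on $\F_t$ the $Z_j$ are independent Gaussian random variables. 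I expect no genuine analytic obstacle here: all integrands are deterministic and all sums are finite, so the only care required is the bookkeeping in the double-sum interchange and the explicit conditioning on $\F_t$ when asserting independence.
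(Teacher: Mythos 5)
Your proposal is correct and follows essentially the same route as the paper: the proof there consists precisely of splitting each Itô integral along the sampling grid, interchanging the two finite sums, and relabelling indices, yielding the almost-sure identity $\sum_{j=0}^m Y(s_j)=\sum_{j=0}^m Z_j$. Your added verification of the (conditional) independence of the $Z_j$ via disjoint integration intervals and joint Gaussianity is a detail the paper leaves implicit, but it is correct and does not change the nature of the argument.
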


As a direct consequence of Proposition \ref{distribution}, we find that $\left. X(T)\right|\F_t\sim \mathcal{N}(\mu_X(T;t), \sigma_X^2(T;t) )$ with
\begin{align*}
	&\mu_X(T;t) = \frac{1}{m+1} \sum_{j=0}^m \left(Y(t) \,e^{b_1(s_j-t)} + \frac{b_0}{b_1}\left(e^{b_1(s_j-t)}-1\right)\right) \quad \mbox{and }\\
	&\sigma_X^2(T;t) = \frac{\sigma_0}{(m+1)^2} \sum_{j=0}^m\sum_{k_1=j}^{m}\sum_{k_2=j}^{m} \frac{e^{b_1(s_{k_1}+s_{k_2}-2s_{j-1})}-e^{b_1(s_{k_1}+s_{k_2}-2s_{j})}}{2b_1},
\end{align*}
which, together with equation \eqref{closed_price}, gives us a benchmark for the experiments.

In Figure \ref{OUplot_corr} and \ref{JDplot_corr} we report the results for, respectively, the Gaussian OU and the polynomials jump-diffusion process with $a = \mu_X(T;t)$ and $b=2.0\underline{b}_{\sigma}$, since in the previous experiments this was the value of the scale performing the best. All the experiments are in line with the previous ones: the accuracy of the approximation increases with $N$ increasing, until a certain value after which it starts decreasing. We see that the Hermite approximation performs well also for a path-dependent option, whose evaluation requires the correlator formula instead of the moment formula for polynomial processes. There is no significant difference between $m=0$, $m=1$ and $m=2$. Unfortunately, due to computational constraints, it is not possible to test the Hermite approximation for $m>2$.

For practical purposes, one needs a way to understand when to truncate the series, i.e., how to choose the value for $N$. We then propose the following stopping criterion: for each $N$ one calculates
\begin{equation}
	\label{criterio}
	\tilde{\gamma}_{a,b}^N:=-\log\left(\frac{\left|\Pi_{K,N-1}^{a,b}(t) -\Pi_{K,N}^{a,b}(t) \right|}{\Pi_{K,N-1}^{a,b}(t)} \right).
\end{equation}
If $\tilde{\gamma}_{a,b}^N>4$, then the contribution of the $N$-th term to the price approximation is smaller than $10^{-4}$ and we truncate the series.  In Figure \ref{JDplot_corr}  the prices obtained by this criterion are marked with a red star.

\begin{figure}[!tp]
	\setlength{\tabcolsep}{2pt}
	\resizebox{1\textwidth}{!}{
		\begin{tabular}{@{}>{\centering\arraybackslash}m{0.04\textwidth}@{}>{\centering\arraybackslash}m{0.24\textwidth}@{}>{\centering\arraybackslash}m{0.24\textwidth}@{}>{\centering\arraybackslash}m{0.24\textwidth}@{}>{\centering\arraybackslash}m{0.24\textwidth}@{}}
			& $\boldsymbol{K = 1.0}$&$\boldsymbol{K = 2.0}$ & $\boldsymbol{K = 3.0}$& $\boldsymbol{K = 4.0}$ \\
			\begin{turn}{90}$\boldsymbol{m = 1}$\end{turn}&
			\includegraphics[width=0.23\textwidth]{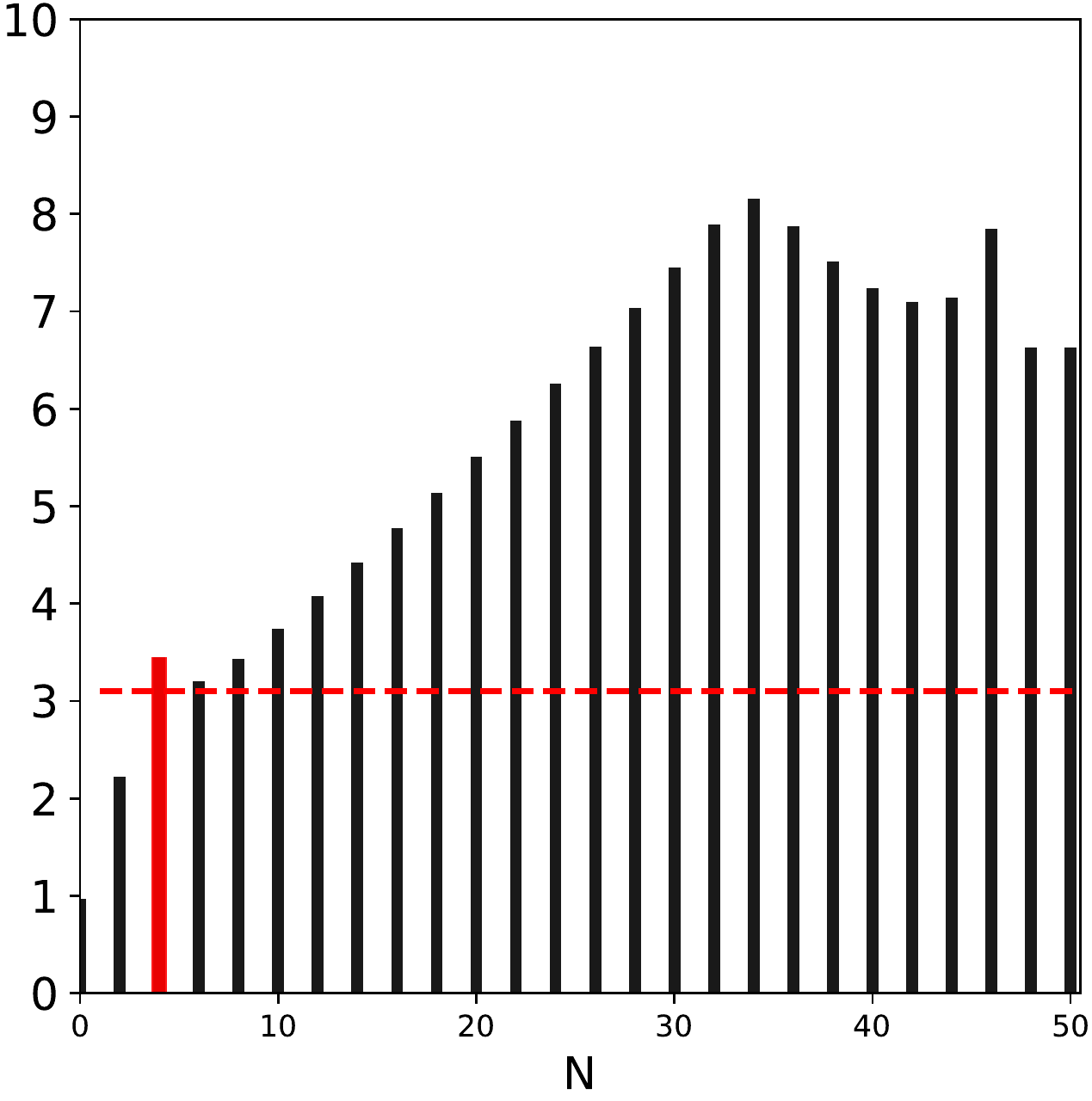}&
			\includegraphics[width=0.23\textwidth]{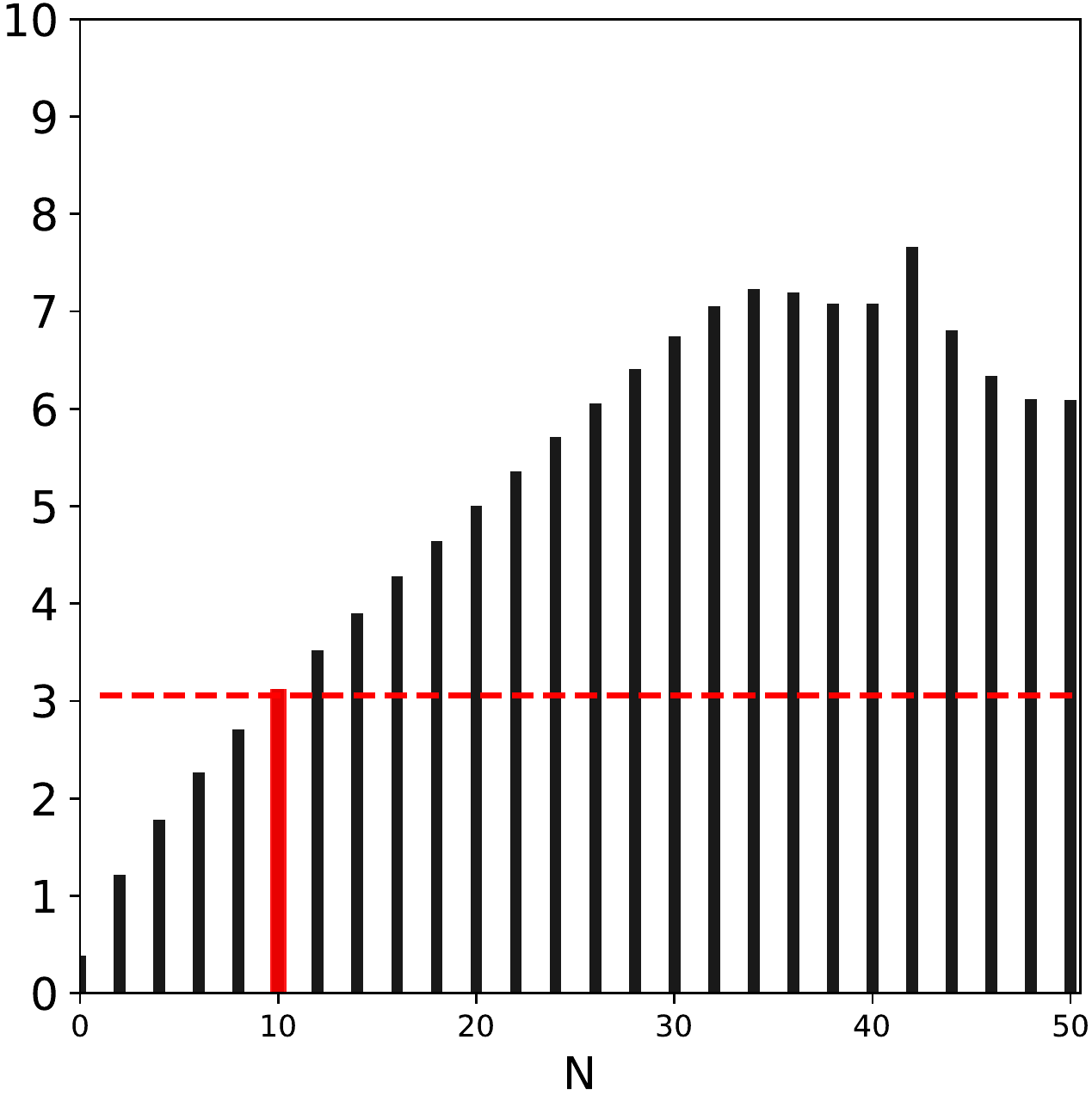} &	\includegraphics[width=0.23\textwidth]{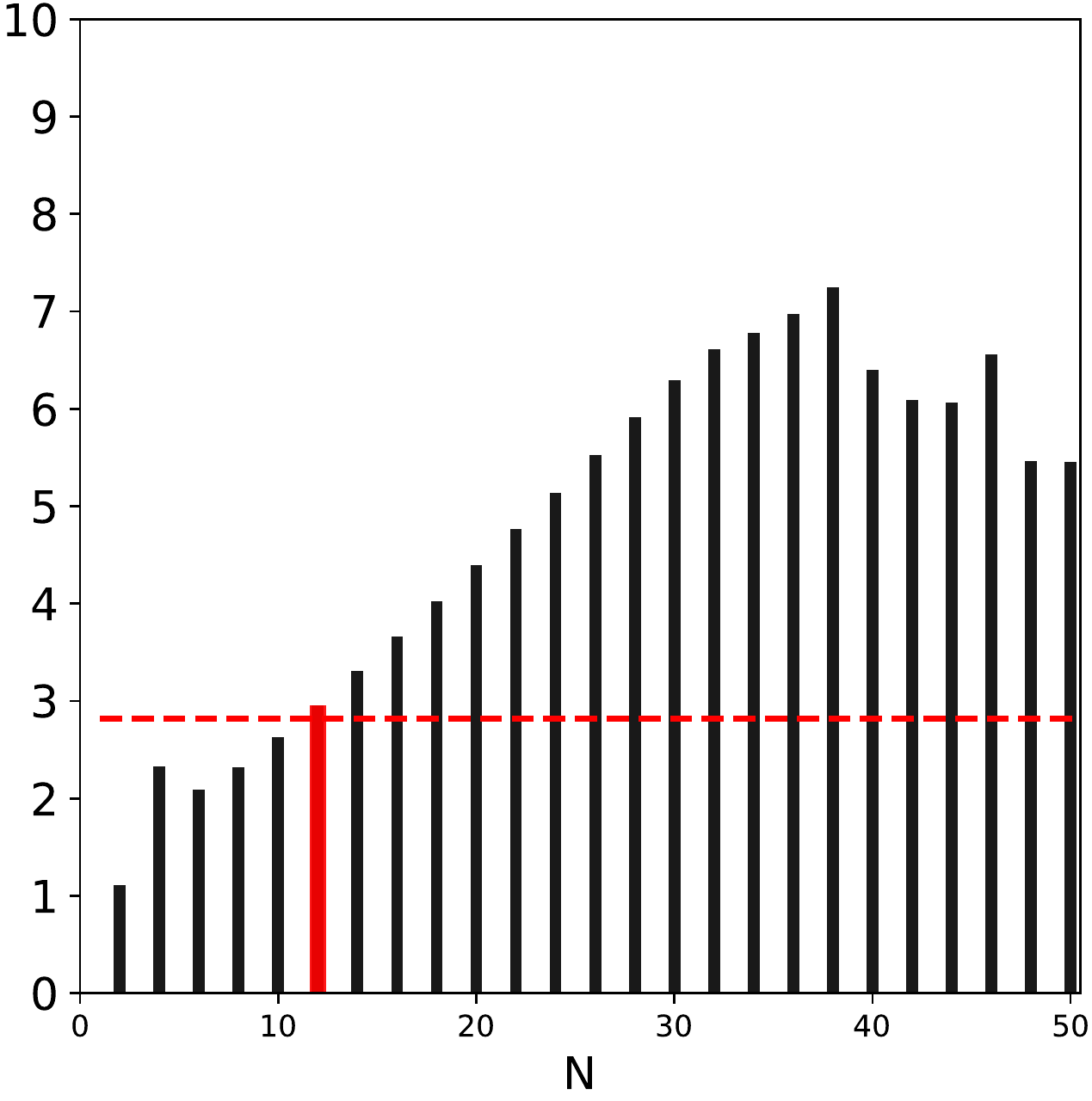} &
			\includegraphics[width=0.23\textwidth]{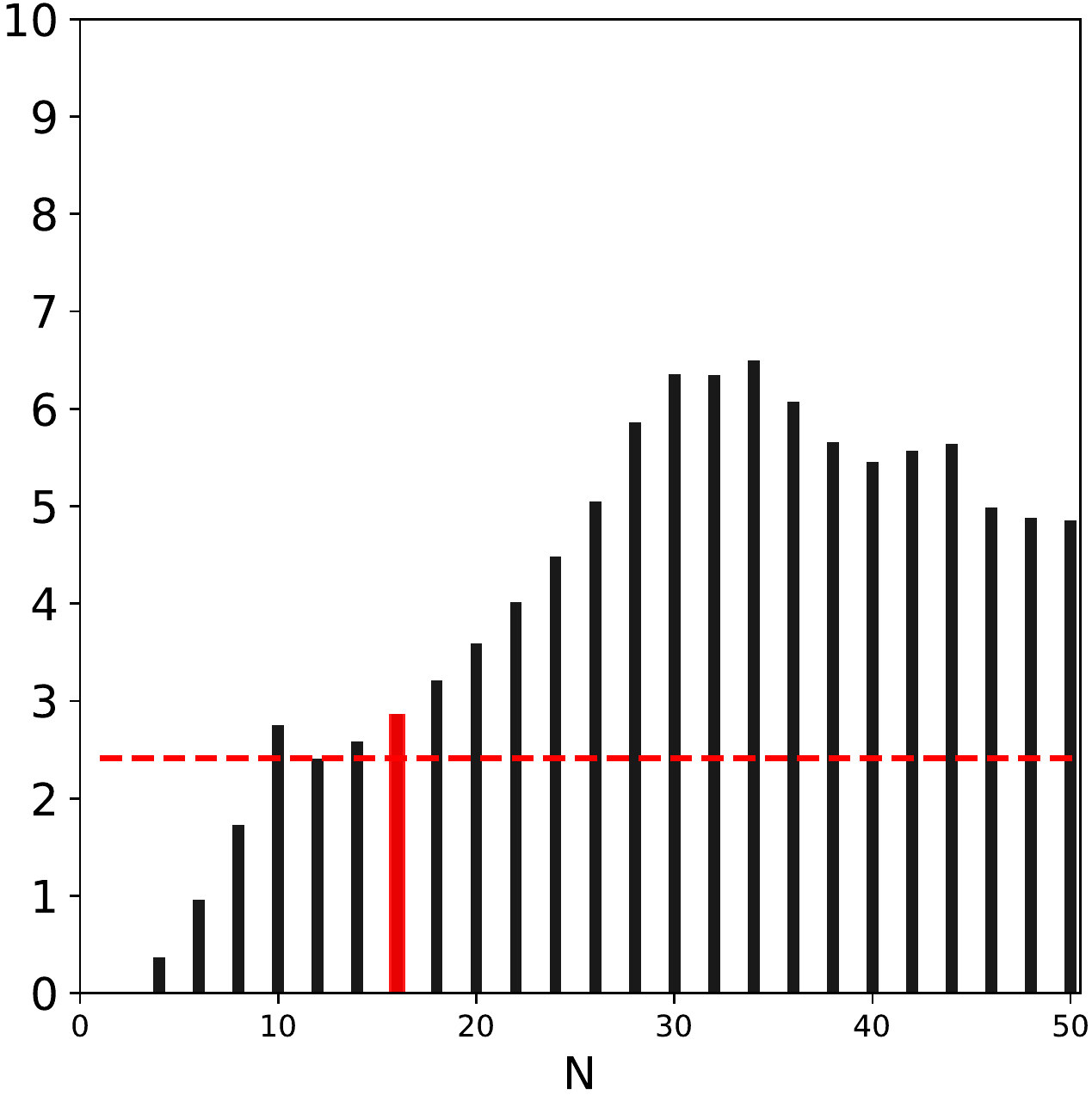}\\
			\begin{turn}{90}$\boldsymbol{m=2}$\end{turn}&
			\includegraphics[width=0.23\textwidth]{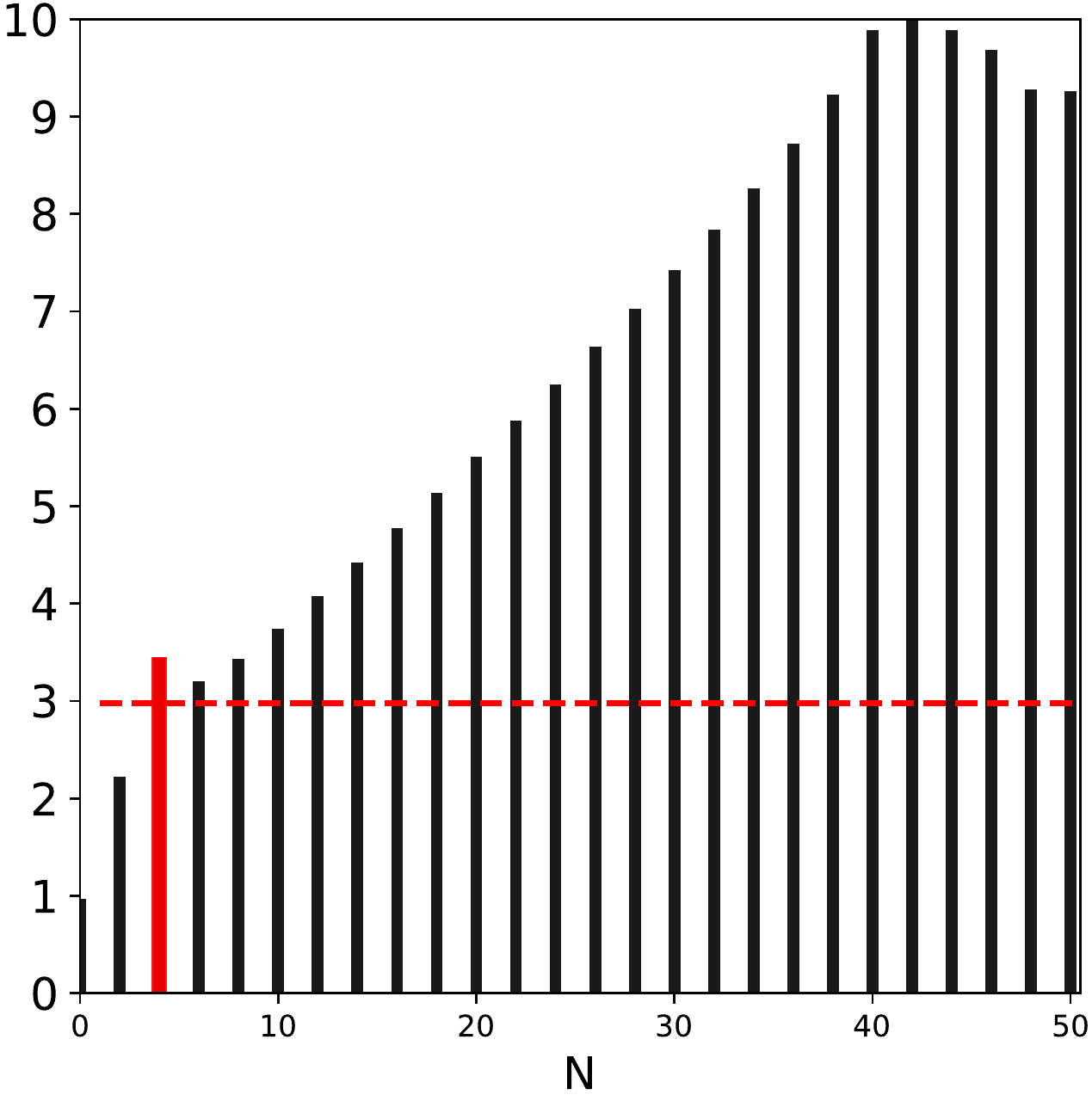}&
			\includegraphics[width=0.23\textwidth]{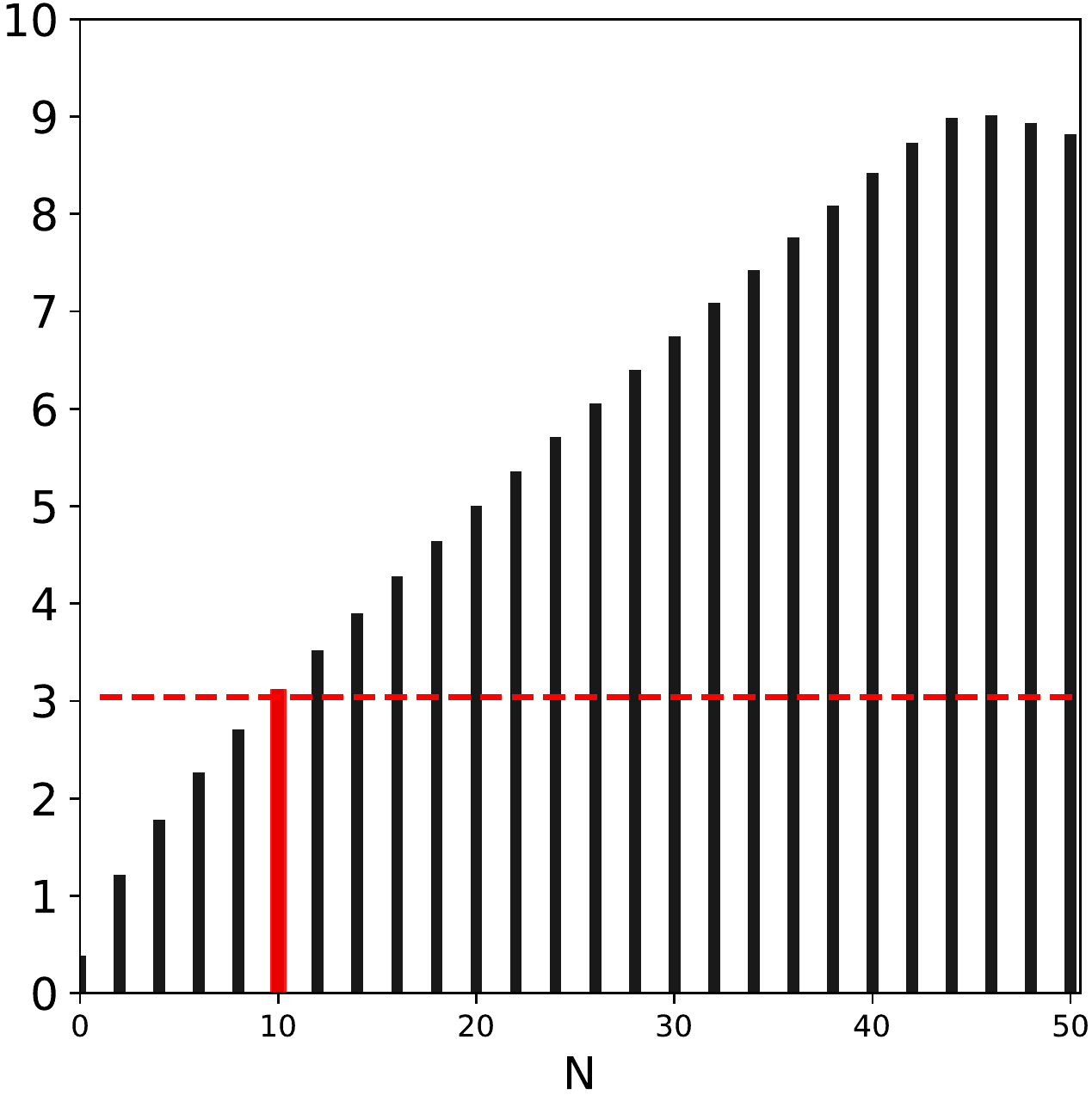} &	\includegraphics[width=0.23\textwidth]{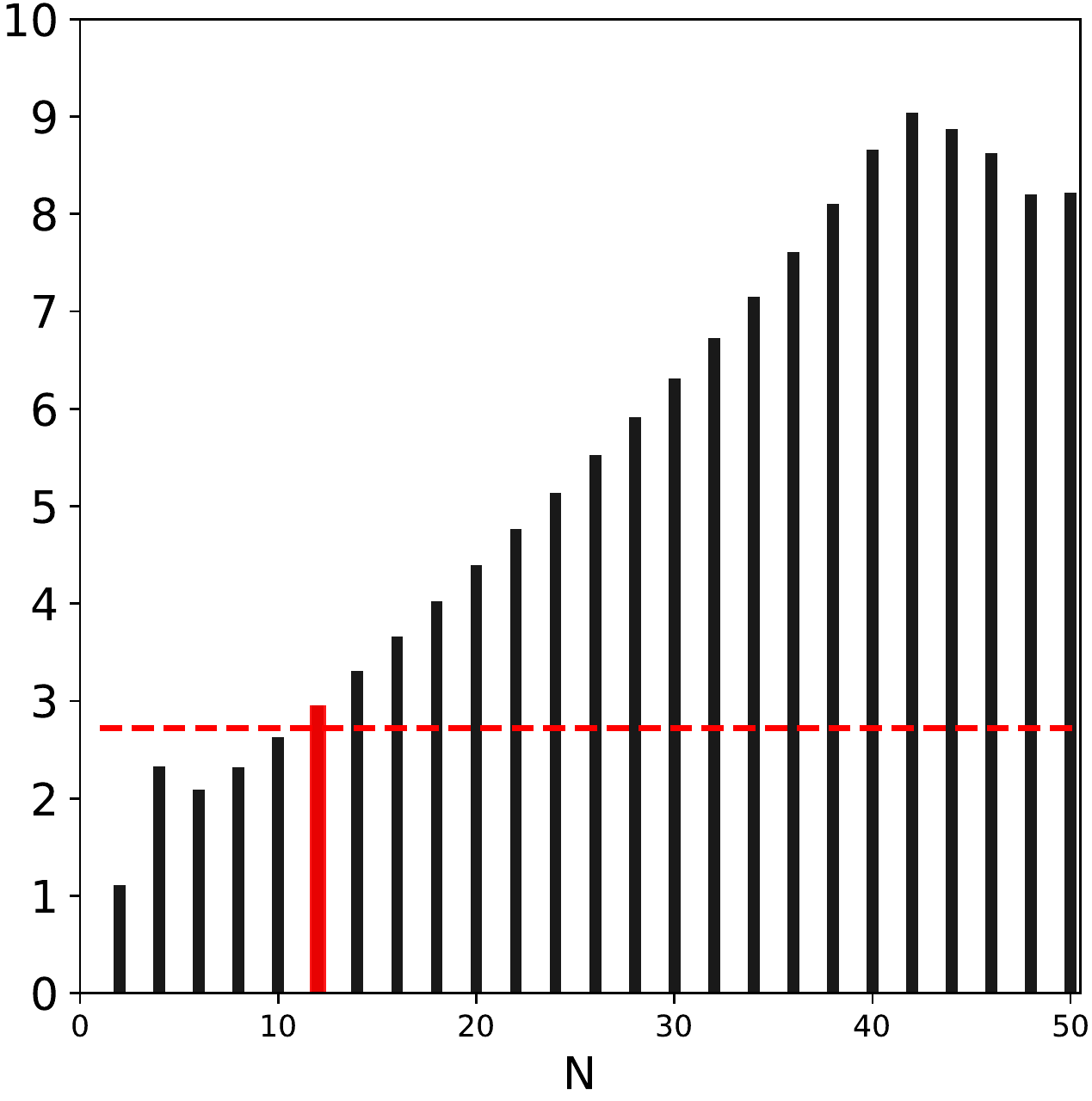} &
			\includegraphics[width=0.23\textwidth]{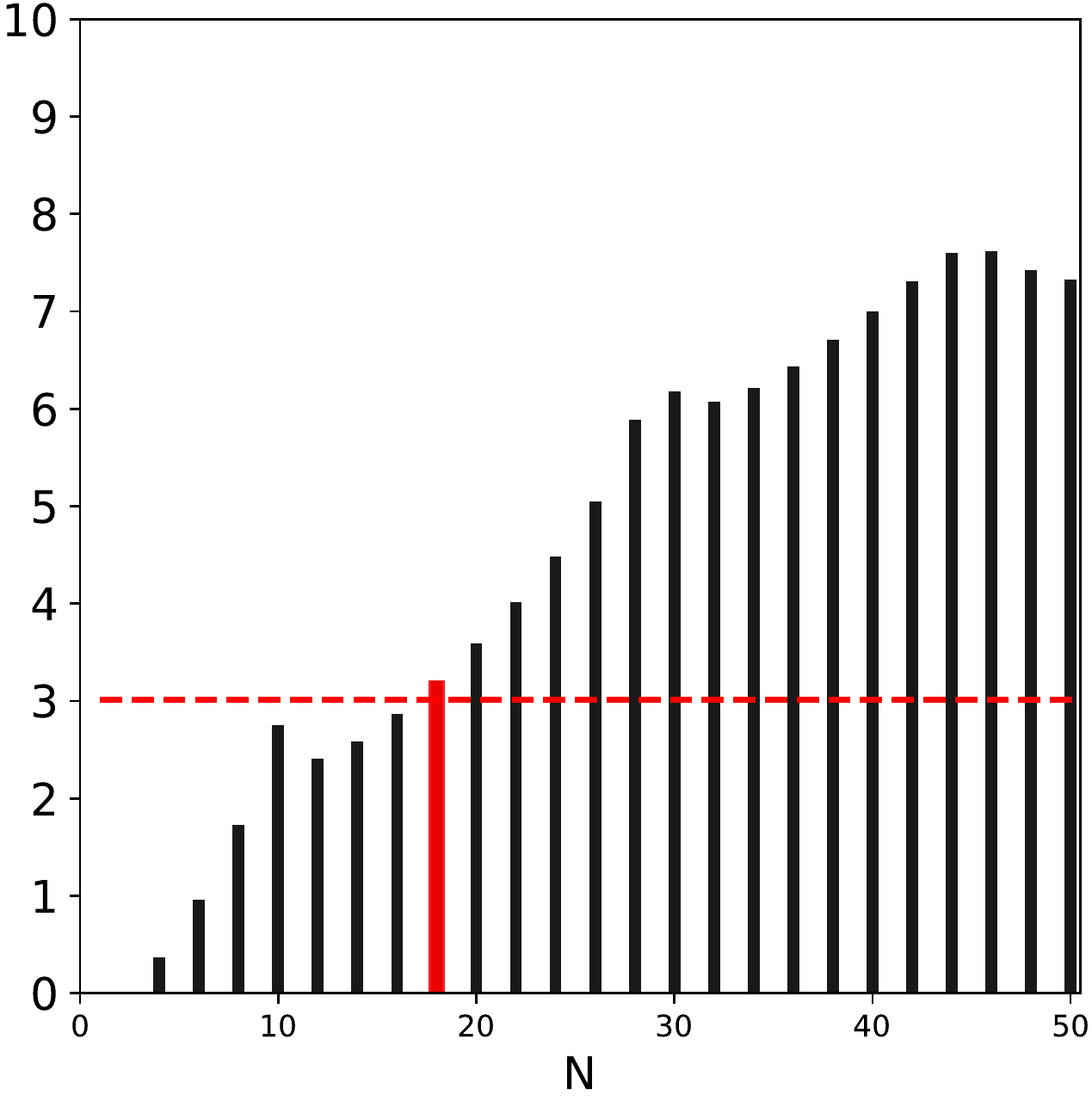}\\
		\end{tabular}
	}
	\caption{$\gamma_{a,b}^N$ as a function of $N$, when the underlying process is an $(m+1)$-point weighted OU with $\sigma_X(T;t)=1.0$. The drift is $a= \mu_X(T;t)=2.0$ and the scale is $b = 2.0 \underline{b}_{\sigma}$. The dashed red horizontal lines indicate the accuracy of the MC method. The red vertical bars indicate when the Hermite series reaches the MC accuracy. \label{OUplot_corr}}
\end{figure}

\begin{figure}[!tp]
	\setlength{\tabcolsep}{2pt}
	\resizebox{1\textwidth}{!}{
		\begin{tabular}{@{}>{\centering\arraybackslash}m{0.04\textwidth}@{}>{\centering\arraybackslash}m{0.24\textwidth}@{}>{\centering\arraybackslash}m{0.24\textwidth}@{}>{\centering\arraybackslash}m{0.24\textwidth}@{}>{\centering\arraybackslash}m{0.24\textwidth}@{}}
			& $\boldsymbol{K = 1.0}$&$\boldsymbol{K = 1.0}$ (zoom) & $\boldsymbol{K = 2.0}$& $\boldsymbol{K = 2.0}$ (zoom)\\
			\begin{turn}{90}$\boldsymbol{m = 1}$\end{turn}&
			\includegraphics[width=0.23\textwidth]{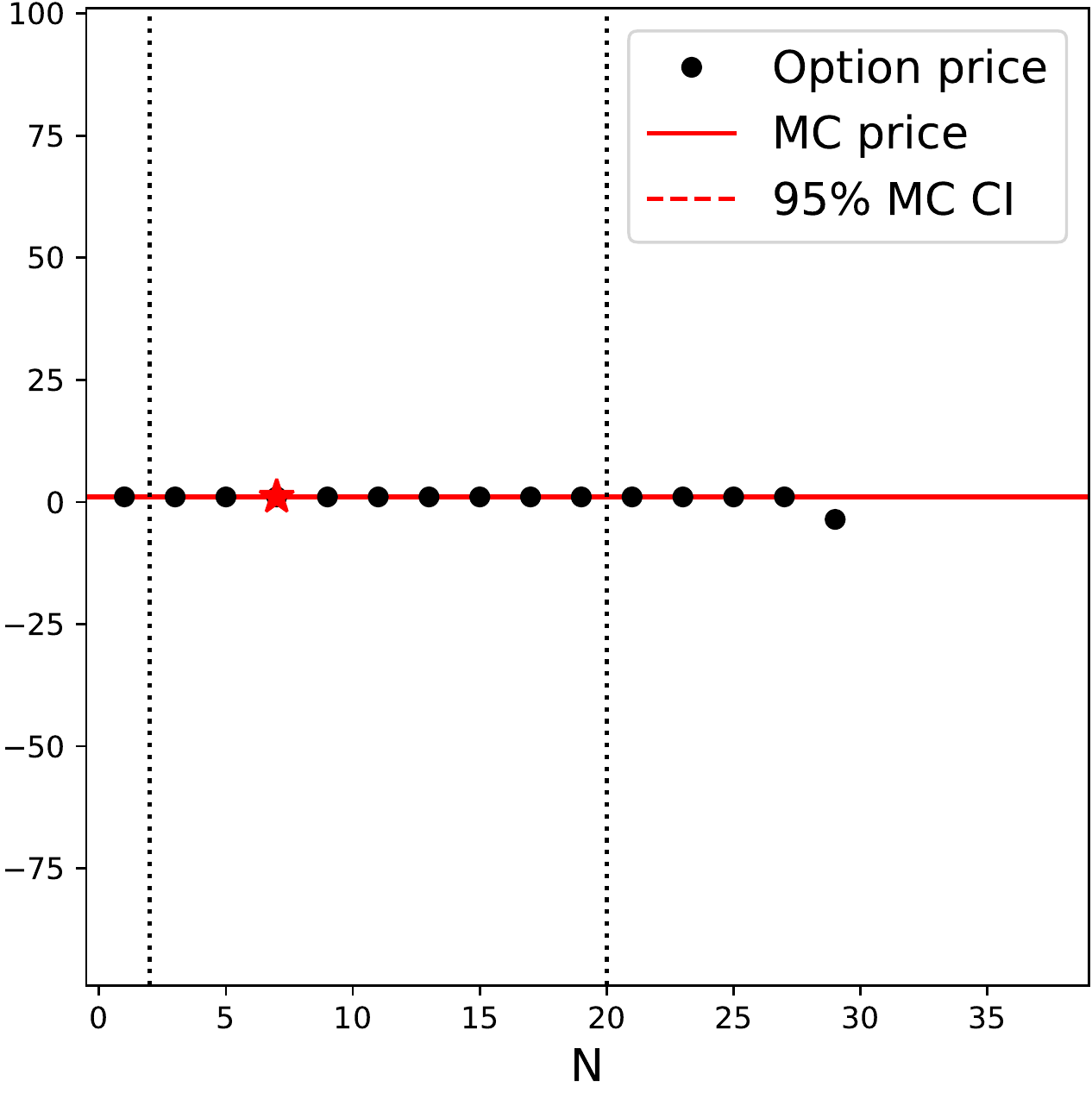}&
			\includegraphics[width=0.23\textwidth]{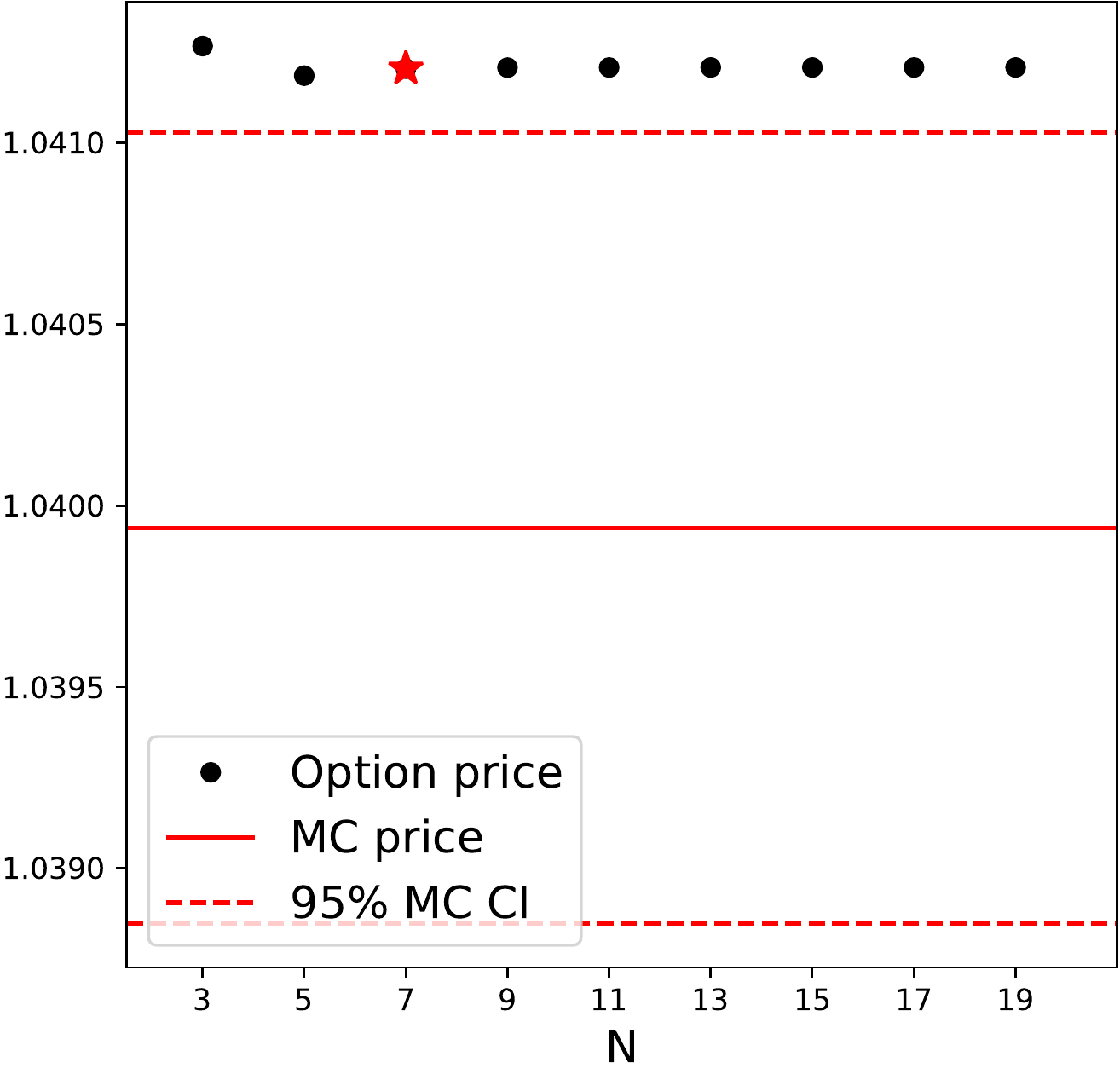} &	\includegraphics[width=0.23\textwidth]{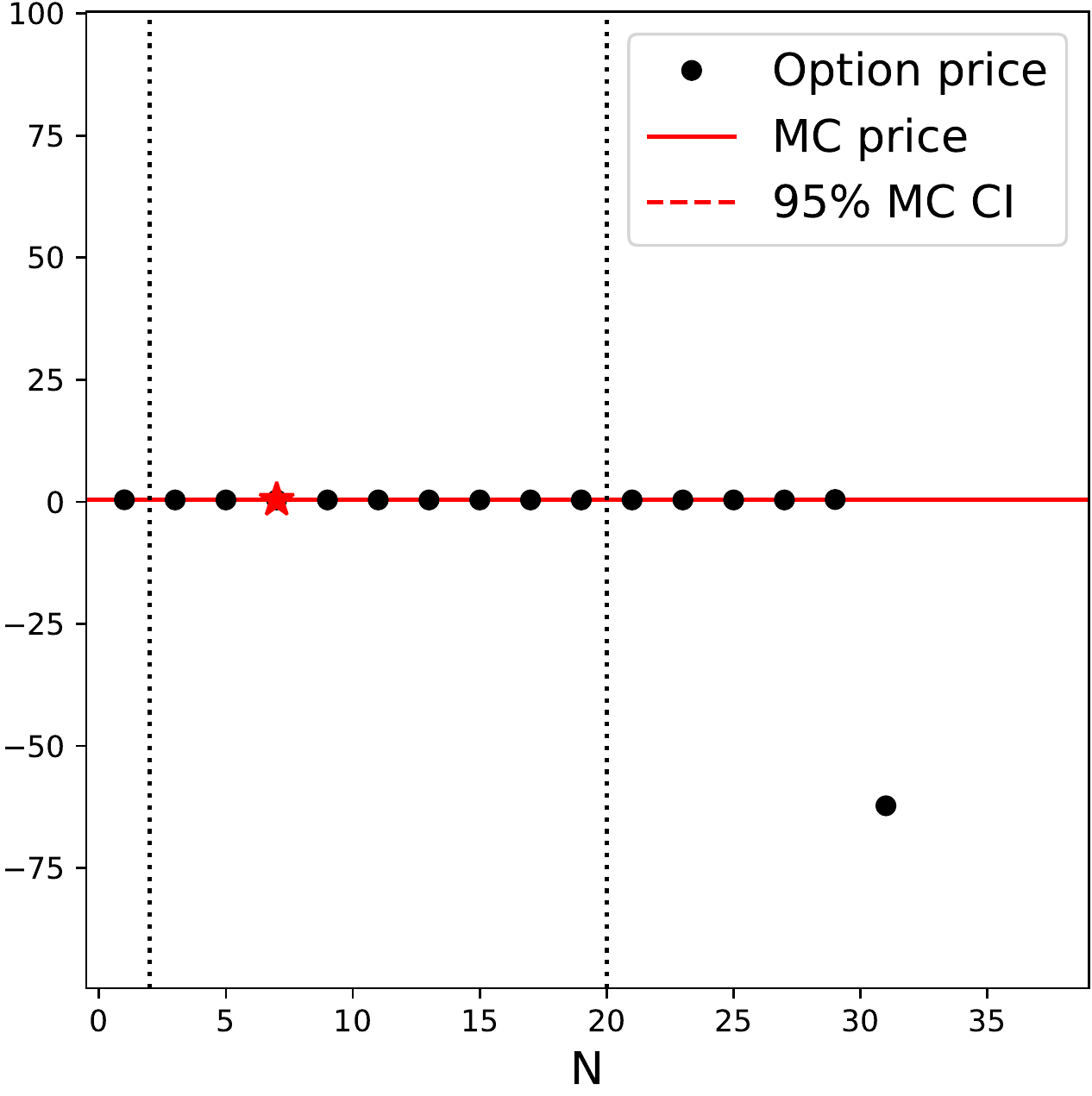} &
			\includegraphics[width=0.23\textwidth]{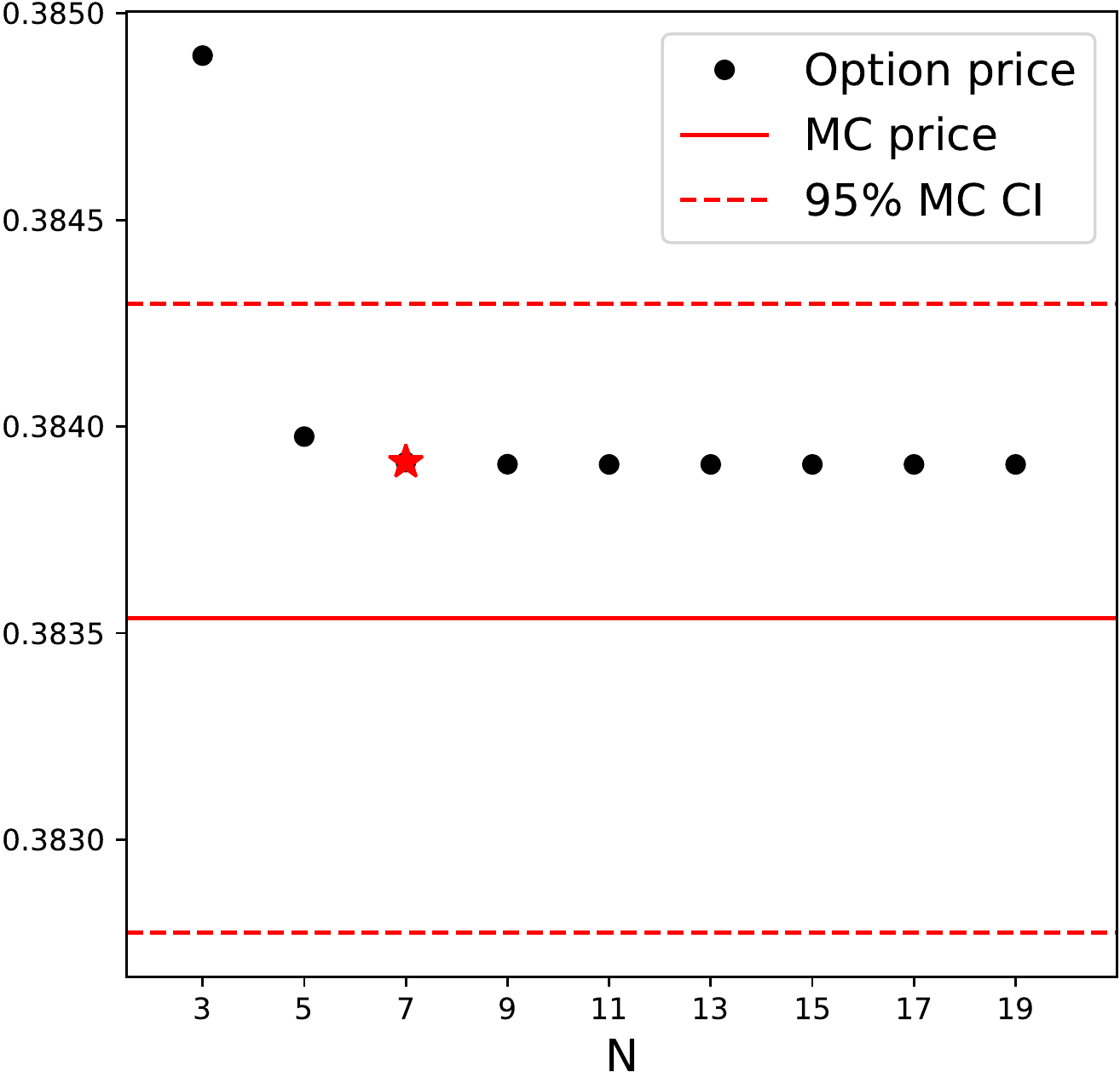}\\
			\begin{turn}{90}$\boldsymbol{m=2}$\end{turn}&
			\includegraphics[width=0.23\textwidth]{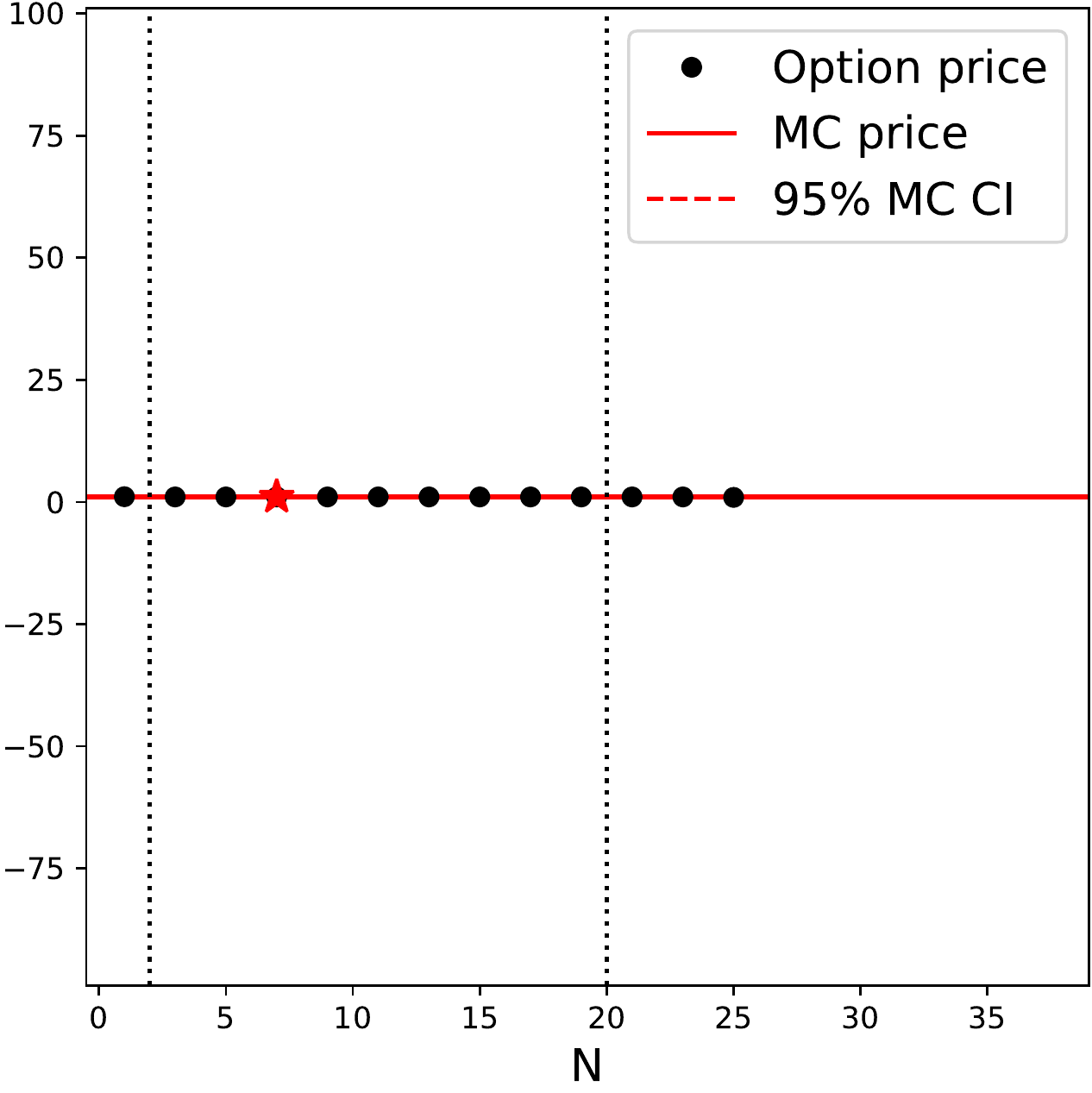}&
			\includegraphics[width=0.23\textwidth]{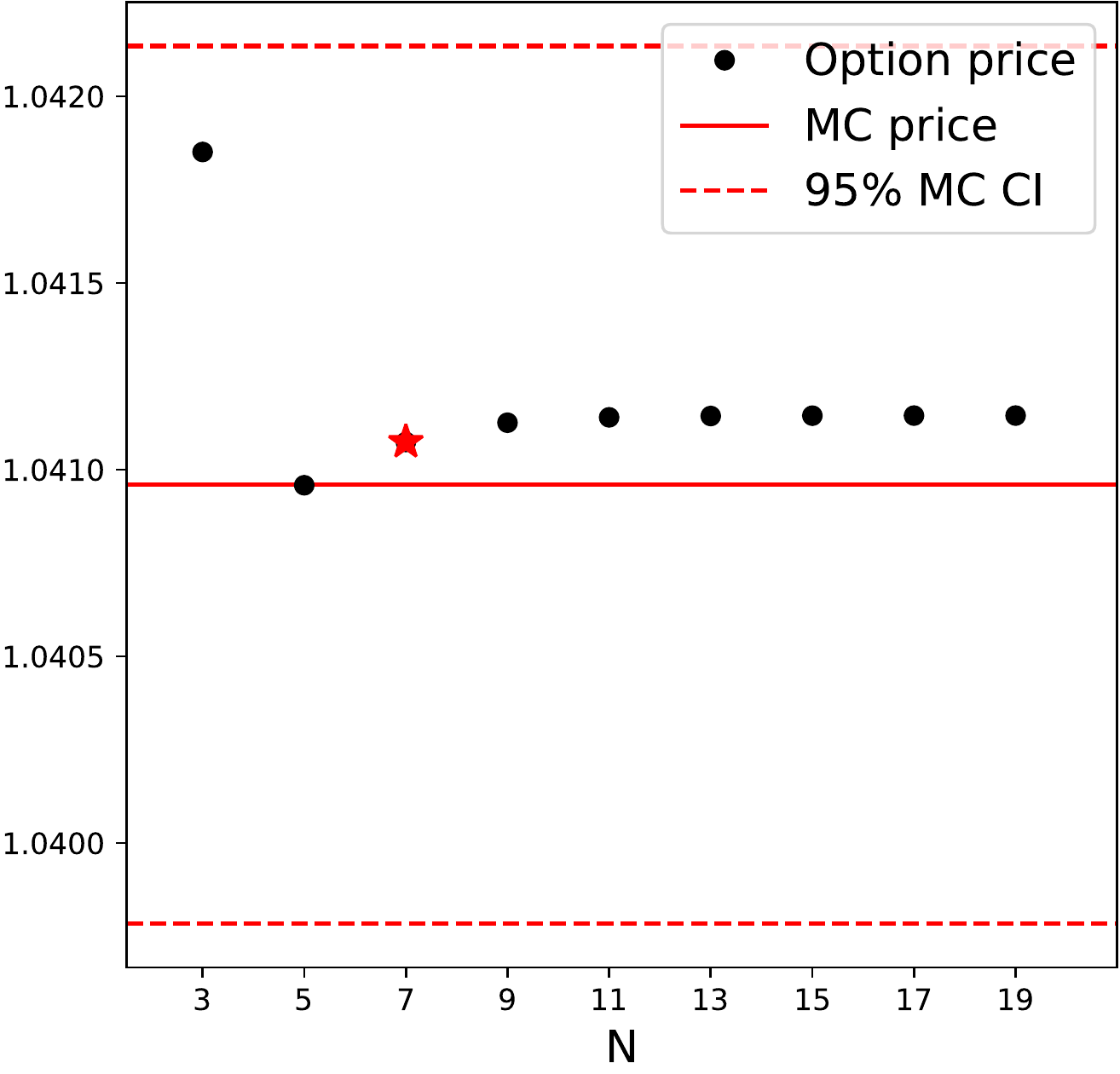} &	\includegraphics[width=0.23\textwidth]{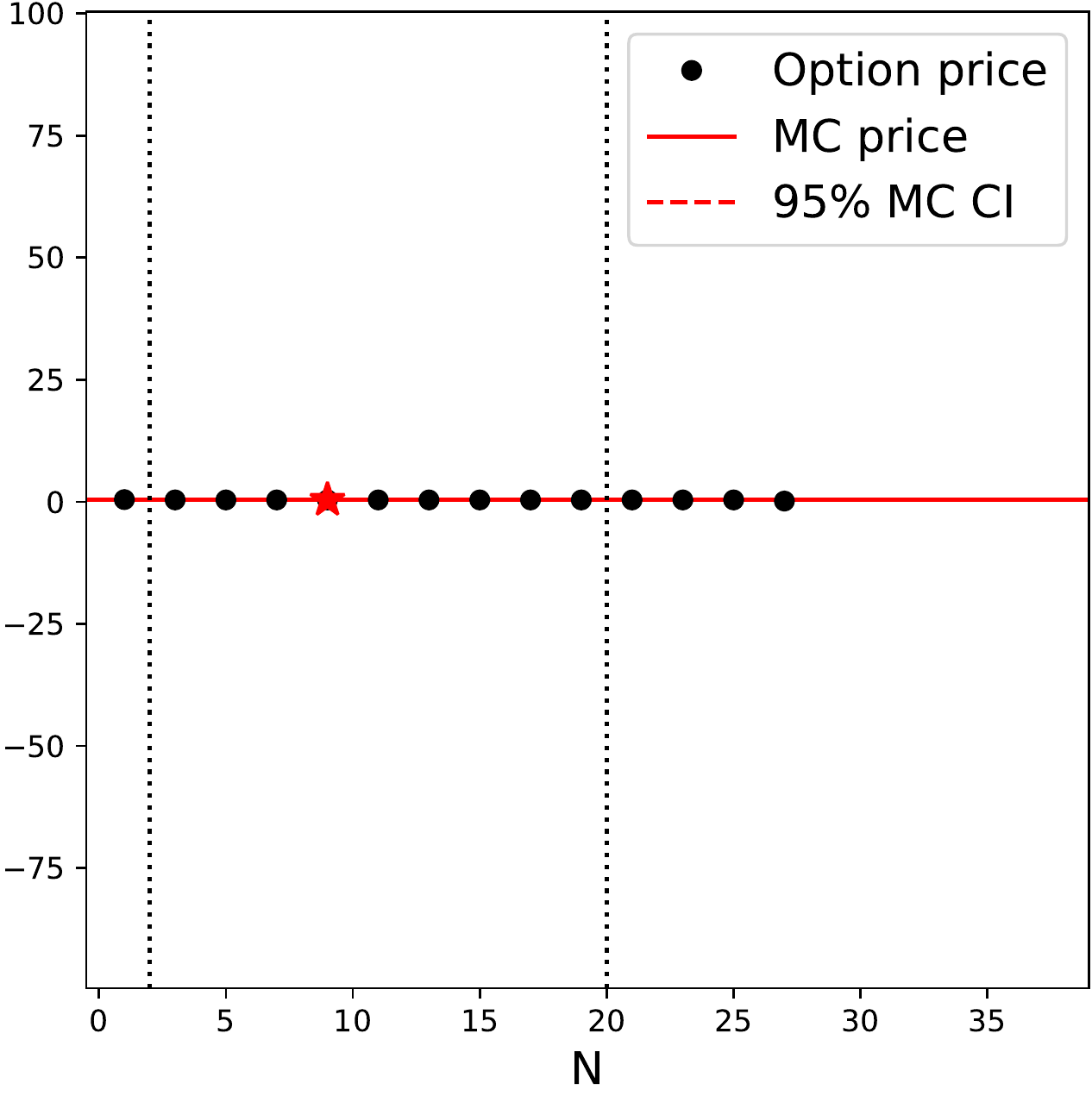} &
			\includegraphics[width=0.23\textwidth]{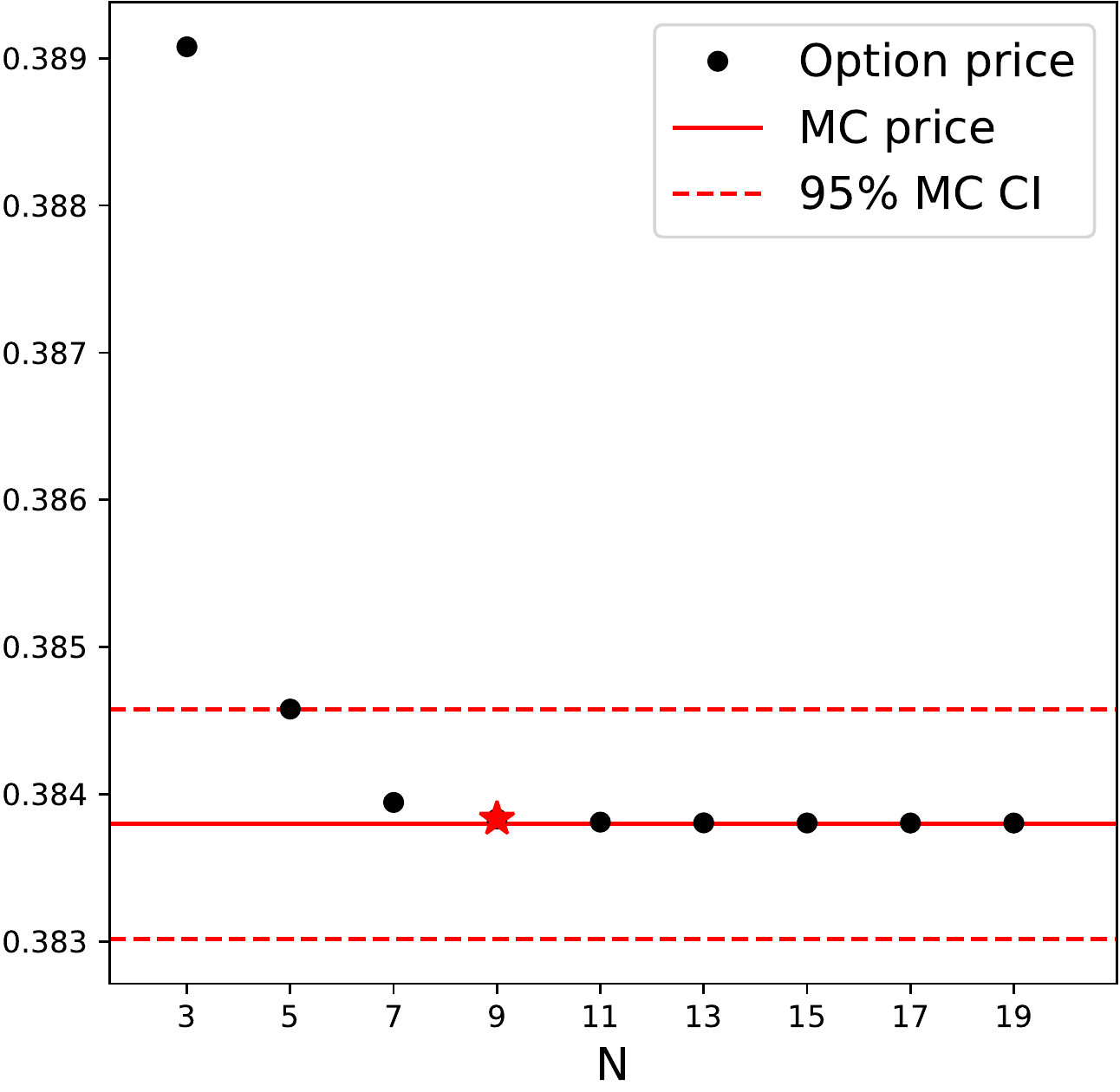}\\
		\end{tabular}
	}
	\caption{$\gamma_{a,b}^N$ as a function of $N$, when the underlying process is an $(m+1)$-point weighted jump-diffusion with NIG measure and $\sigma_X(T;t)=1.39$. The drift is $a= \mu_X(T;t)=2.0$ and the scale is $b = 1.2 \underline{b}_{\sigma}$. The black dots represent the option price, the solid red line the MC price, the two red dashed lines are the $95\%$ confidence interval for MC. The red stars denote the prices chosen with the stopping criterion \eqref{criterio}. The plots in the second and fourth columns are a zoomed subplot of the plots in the first and third columns. \label{JDplot_corr}}
\end{figure}

\section{Conclusions}
\label{conclusions}
We derive explicit pricing formulas for discrete-average arithmetic Asian options in the context of polynomial jump-diffusion processes. This can be easily extended to continuous-average Asian options by approximating the integral with a discrete sum and appropriate small time step. 

The approach is based on the approximation of the call-payoff function with generalized Hermite polynomials and can be easily extended to any other payoff function that is well approximated with polynomials. In particular, the generalized Hermite polynomials are defined in relation to a weight which is basically the density of a Gaussian random variable with mean $a$ (the drift) and standard deviation $b$ (the scale). Hence we get a family of approximations depending on the parameters $a$ and $b$.

By considering the underlying spot price process from the family of jump-diffusion polynomial processes, we then obtain a fully explicit expression for the price functional thanks to the well-known moment formula for polynomial processes and to the correlator formula derived in \cite{benth2019}. This has the advantage to allow for sensitivity analysis, since Greeks are within reach, as we show in Section \ref{greeks}.

From the numerical point of view, the most time consuming part is the computation of moments and correlators. However, since these do not depend, for example, on the strike price of the option, one can compute moments and correlators and use these values to evaluate, e.g., different options with different strike prices.

We can summarize the following findings:
\vspace{-1em}

\begin{itemize}[leftmargin=*]
	\setlength\itemsep{-0.2em}
	\item We provide a lower bound for the scale $b$ which is proportional to the standard deviation $\sigma$ of the underlying process. Values for $b$ smaller than this threshold do not guarantee convergence. On the other hand, big values for $b$ slow down the convergence rate. Moreover, despite the lower bound is proved for the case of a Gaussian underlying process, numerical experiments show similar behaviours also for non-Gaussian processes.
	
	\item We find analytically a relation between the behaviour of the series and the ratio $\frac{b}{\sigma}$ which is also confirmed by numerical results. For example, when doubling the value of $\sigma$, also $b$ must be doubled to get the same behaviour for the approximation. Indeed, in the experiments with the Brownian motion and the Gaussian Ornstein--Uhlenbeck process, by choosing the parameters in such a way that the two processes have the same standard deviation, we obtain two series with very similar behaviours. Despite this being proved for the Gaussian case, the numerical experiments for the non-Gaussian case are in line with this theory.
	
	\item Working with polynomial functions require the evaluation of high-order powers. In our context, this means the evaluation of high-order moments or correlators in the underlying process. The bigger is the initial value of the process, the higher is the value of its moments and correlators. High initial values coupled with high-order powers create numerical instabilities due to rounding errors. This is the main reason which makes the series to diverge after convergence. If the initial value of the process is too high, then the numerical instabilities might start even before reaching convergence.
	
	\item Numerical experiments with Gaussian underlying processes, for which closed price formulas are available for benchmarking, show that the Hermite price approximation can reach much higher accuracies than the MC approach, namely $10^{-10}$ against $10^{-3}$ for MC. However, for big values of the scale $b$, the convergence might be so slow that to reach such an accuracy level one needs many more terms than what numerically feasible. Despite we do not have closed price formulas for the jump-diffusion case, the plots obtained show that the Hermite series converges to a value very closed to the price value approximated via MC. Based on the experiments in the Gaussian case, we actually believe the Hermite series to be more accurate than the MC approximation.
\end{itemize}
\vspace{-1em}

The generalized Hermite polynomials can be replaced with any other family of orthogonal polynomials. In particular, Theorem \ref{error} sets a very strong condition on the tails behaviour for the distribution of the underlying process, which is required to vanish faster than a Gaussian density function. Despite this is usually not true for jump processes, such a condition is only sufficient, and not necessary, for convergence. The numerical results on the jump-diffusion process reveal indeed that convergence is reached even with a NIG jump measure. We also point out that condition \eqref{psicond} is obtained from a  weight function being the density of a Gaussian random variable. Thus, by considering a different weight function, one may obtain theoretical convergence results also for other families of distributions (other than the Gaussian one). For example, the Laguerre polynomials are orthogonal polynomials with respect to the weight function $w(x) = e^{-x}$, which would allow to prove convergence, e.g., for the Gamma distribution. Other examples can be found in \cite[Appendix B]{schoutens2012}. Similarly, \cite{willems2019} obtained convergence for a log-normal distribution by considering a log-normal density as the weight function.

We finally point out that working with the class of polynomial jump-diffusion processes is the key for getting fully explicit price formulas. However, the price approximation with Hermite polynomials might still be applied to other kind of processes. In these cases, one must then rely on Monte Carlo simulations for calculating moments and correlators. Similarly, remaining in the class of polynomial jump-diffusions, it is numerically possible to avoid the use of the correlator formula. Indeed, as pointed out in \cite{benth2019}, this can be numerically replaced by the iterative application of the moment formula combined with the tower rule. However, in both these cases, one looses the advantages of an explicit price functional. 

\appendix
\section{Proofs}
\label{proof}
This section contains the proofs of the main results together with an additional auxiliary result that is needed for the proofs.
\begin{lemma}
	\label{lem2}
	We introduce the map $t_{a,b}: \R \to \R, \; x \mapsto \frac{x-a}{b}$. Then:
	\begin{enumerate}
		\item The weight function $w_{a,b}$ is obtained by composing $w$ with $t_{a,b}$, namely $w_{a,b}(x) = w\left(\frac{x-a}{b}\right).$
		\item The generalized Hermite polynomial $q_n^{a,b}$ is obtained by composing the Hermite polynomial $q_n$ with $t_{a,b}$ and by scaling with the inverse of $b^n$, namely $q_n^{a,b}(x)= \frac{1}{b^n}q_n\left(\frac{x-a}{b}\right),$ $n\ge 0$.
	\end{enumerate}
\end{lemma}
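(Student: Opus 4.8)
The plan is to verify both claims by direct substitution into the definitions, the only non-routine ingredient being the chain rule applied to the affine map $t_{a,b}$. For the first claim I would simply evaluate $w(y)=e^{-y^2/2}$ at $y=t_{a,b}(x)=\frac{x-a}{b}$, which gives $e^{-\frac{1}{2}\left(\frac{x-a}{b}\right)^2}=e^{-\frac{(x-a)^2}{2b^2}}$, and this is precisely the definition of $w_{a,b}(x)$. This step is immediate and carries no difficulty.

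For the second claim the key observation is that $t_{a,b}$ is affine with constant derivative $t_{a,b}'(x)=\frac{1}{b}$. Consequently, for any smooth $f$ the chain rule yields $\frac{d^n}{dx^n}f\!\left(t_{a,b}(x)\right)=\frac{1}{b^n}f^{(n)}\!\left(t_{a,b}(x)\right)$, with no correction terms precisely because the inner derivative is constant; I would justify this cleanly by a one-line induction on $n$. Taking $f=w$ and using the first claim to write $e^{-\frac{(x-a)^2}{2b^2}}=w\!\left(t_{a,b}(x)\right)$, I then obtain $\frac{d^n}{dx^n}e^{-\frac{(x-a)^2}{2b^2}}=\frac{1}{b^n}w^{(n)}\!\left(\frac{x-a}{b}\right)$.

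To finish, I would rearrange the Rodrigues-type definition $q_n(y)=(-1)^ne^{y^2/2}w^{(n)}(y)$ into $w^{(n)}(y)=(-1)^ne^{-y^2/2}q_n(y)$, evaluate at $y=\frac{x-a}{b}$, and substitute into $q_n^{a,b}(x)=(-1)^ne^{\frac{(x-a)^2}{2b^2}}\frac{d^n}{dx^n}e^{-\frac{(x-a)^2}{2b^2}}$. The factor $\frac{1}{b^n}$ produced by the chain rule survives, the two exponentials cancel, and the sign collapses to $(-1)^{2n}=1$, leaving exactly $\frac{1}{b^n}q_n\!\left(\frac{x-a}{b}\right)$. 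There is no genuine obstacle in this lemma; the only point requiring care is the sign-and-exponential bookkeeping in this last cancellation, and the tidiest way to handle the $n$-fold differentiation remains the short induction noted above.
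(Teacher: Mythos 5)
Your proof is correct, but it takes a genuinely different route from the paper's. The paper proves the second claim by induction directly on the generalized Hermite polynomials: assuming $(-1)^n\frac{d^n}{dx^n}e^{-\frac{(x-a)^2}{2b^2}}=\frac{1}{b^n}e^{-\frac{(x-a)^2}{2b^2}}q_n\bigl(\frac{x-a}{b}\bigr)$, it differentiates once more and invokes the recurrence relation $q_{n+1}(y)=yq_n(y)-q_n'(y)$ for Hermite polynomials (cited from the literature) to identify the result as $\frac{1}{b^{n+1}}q_{n+1}\bigl(\frac{x-a}{b}\bigr)$. You instead isolate the generic calculus fact that an affine substitution with slope $\frac{1}{b}$ commutes with $n$-fold differentiation at the cost of a factor $b^{-n}$, i.e.\ $\frac{d^n}{dx^n}f\bigl(t_{a,b}(x)\bigr)=\frac{1}{b^n}f^{(n)}\bigl(t_{a,b}(x)\bigr)$, and then conclude by pure substitution into the Rodrigues-type definitions, with the exponentials cancelling and $(-1)^{2n}=1$. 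Your induction lives entirely in the chain-rule identity and uses no property of the Hermite polynomials themselves, which makes the argument more self-contained and makes transparent that the scaling factor $b^{-n}$ comes solely from the derivative of $t_{a,b}$; the paper's version, by contrast, stays closer to the recursive structure of the $q_n$ and leans on an external reference for the recurrence. Both are complete and correct; the bookkeeping you flag at the end (signs and exponentials) works out exactly as you describe.
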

\begin{proof}
	The first part of the lemma is easily verified. For the second part, we proceed by induction on the order $n\ge0$. Since for $n=0$, $q_0^{a,b}(x) = q_0(x) =1$ is trivial, we start from $n=1$.
	\begin{itemize}[leftmargin=*]
		\item $n=1$: one finds that $q_1(x) = x$ and $q_1^{a,b}(x) = \frac{x-a}{b^2} = \frac{1}{b^1}q_1\left(\frac{x-a}{b}\right)$, so the base case holds;
		\item $n \to n+1$: we assume the statement holds for $n$. This means that
		\begin{equation*}
			 (-1)^ne^{\frac{(x-a)^2}{2b^2}}\frac{d^n}{dx^n}e^{-\frac{(x-a)^2}{2b^2}}= \frac{1}{b^n}q_n\left(\frac{x-a}{b}\right) \quad \mbox{hence} \quad  (-1)^n\frac{d^n}{dx^n}e^{-\frac{(x-a)^2}{2b^2}}= \frac{1}{b^n}e^{-\frac{(x-a)^2}{2b^2}}q_n\left(\frac{x-a}{b}\right).
		\end{equation*}
		We now focus on $q_{n+1}^{a,b}$: by induction hypothesis
		\begin{align*}
			q_{n+1}^{a,b}(x)&= (-1)^{n+1}e^{\frac{(x-a)^2}{2b^2}}\frac{d^{n+1}}{dx^{n+1}}e^{-\frac{(x-a)^2}{2b^2}}
			 =   (-1)e^{\frac{(x-a)^2}{2b^2}}\frac{d}{dx}\left(\frac{1}{b^n}e^{-\frac{(x-a)^2}{2b^2}}q_n\left(\frac{x-a}{b}\right)\right)\\
			 & =\frac{1}{b^{n+1}}\left(\left(\frac{x-a}{b}\right)q_n\left(\frac{x-a}{b}\right)-q^{'}_n\left(\frac{x-a}{b}\right)\right) = \frac{1}{b^{n+1}}q_{n+1}\left(\frac{x-a}{b}\right),
		\end{align*}
		where the last equality is due to the recurrence relation for Hermite polynomials, namely $q_{n+1}(y) = yq_n(y)-q'_n(y)$, see \cite{djordjevic1996}. This concludes the proof.
	\end{itemize}
\end{proof}

\subsubsection*{Proof of Lemma \ref{prop}}
\begin{proof}
	From the definition of norm in $L^2_{a,b}$ and Lemma \ref{lem2}, we get
	\begin{equation*}
		\left \| q_n^{a,b}\right \| ^2_{L^2_{a,b}} = \int_{-\infty}^{\infty} \left(q_n^{a,b}(x)\right)^2 w_{a,b}(x)dx
		= \frac{b}{b^{2n}}\int_{-\infty}^{\infty} \left(q_n\left(y\right)\right)^2 w\left(y\right)dy = \frac{\sqrt{2\pi}n!}{b^{2n-1}},
	\end{equation*}
	where we have used the change of variables $y= \frac{x-a}{b}$ and equation \eqref{norm}.
\end{proof}

\subsubsection*{Proof of Proposition \ref{propphi}}
\begin{proof}
We start by proving that the value of the integrals in the series \eqref{phiK} is
\begin{equation}
\label{cases}
\int_{-\infty}^{\infty}\varphi_K(y)q_n^{a,b}(y)w_{a,b}(y)dy =\begin{cases}
b\sqrt{2\pi}\left(b\,\phi\left(\frac{K-a}{b}\right)+\left(a-K\right)\left(1-\Phi\left(\frac{K-a}{b}\right)\right)\right) & \mbox{ for } n=0\\
b\sqrt{2\pi}\left(1-\Phi\left(\frac{K-a}{b}\right)\right) & \mbox{ for } n=1\\
\sqrt{2\pi}\phi\left(\frac{K-a}{b}\right)q_{n-2}^{a,b}\left(K\right) & \mbox{ for } n\ge 2 
\end{cases}.
\end{equation}
The first two equalities are easily verified with $q_0^{a,b}(y)=1$ and $q_1^{a,b}(y)=\frac{y-a}{b^2}$ by, possibly, the change of variables $z = \frac{y-a}{b}$ and by integrating (twice) by parts. For $n\ge2$, we integrate by parts twice:
\begin{multline*}
	\int_{-\infty}^{\infty}\varphi_K(y)q_n^{a,b}(y)w_{a,b}(y)dy =(-1)^n \int_{K}^{\infty}(y-K)\frac{d^n}{dy^n}e^{-\frac{(y-a)^2}{2b^2}} dy = \left.(-1)^{n-2} \frac{d^{n-2}}{dy^{n-2}}e^{-\frac{(y-a)^2}{2b^2}}\right|_{y=K}  \\=e^{-\frac{(K-a)^2}{2b^2}}\left(\left.(-1)^{n-2} e^{\frac{(y-a)^2}{2b^2}}\frac{d^{n-2}}{dy^{n-2}}e^{-\frac{(y-a)^2}{2b^2}}\right)\right|_{y=K} =  \sqrt{2\pi}\phi\left(\frac{K-a}{b}\right)q_{n-2}^{a,b}\left(K\right).
\end{multline*}
From equations \eqref{phiK} and \eqref{cases}, the function $\varphi_K$ is then expressed in terms of the GHPs by
\begin{equation*}
	\varphi^{a,b}_K(x) = \sum_{n=0}^{\infty}
	\alpha_n^{a,b} q_n^{a,b}(x)\qquad  \mbox{ with } \alpha_n^{a,b} :=\begin{cases}
		b\,\phi\left(\frac{K-a}{b}\right)+\left(a-K\right)\left(1-\Phi\left(\frac{K-a}{b}\right)\right) & \mbox{ for } n = 0\\
		b^2\left(1-\Phi\left(\frac{K-a}{b}\right)\right) & \mbox{ for } n = 1\\
		\frac{b^{2n-1}}{n!} \phi\left(\frac{K-a}{b}\right)q_{n-2}^{a,b}\left(K\right)& \mbox{ for } n\ge 2
	\end{cases}.
\end{equation*}
The result then follows by Lemma \ref{lem2}.
\end{proof}

\subsubsection*{Proof of Theorem \ref{price}}
\begin{proof}
	Starting from equation \eqref{phiNK}, for every $N\ge 0$ we can express $\Pi^{a,b}_{K,N}(t)$ in matrix form by
	\begin{equation}
		\label{Piab}
		\Pi^{a,b}_{K,N}(t) 
		=\boldsymbol{\beta}^{a,b\,\top}_NM_N\boldsymbol{E}^X_N(T;t) \qquad \mbox{ where } \qquad  \boldsymbol{E}^X_N(T;t) := \E\left[  \left.H_N\left(\frac{X(T)-a}{b}\right)\right|\F_t\right].
	\end{equation}
	Thus we need to compute the entries of $\boldsymbol{E}^X_N(T;t)$. By the binomial theorem, its $k$-th component, $k=1,\dots, N+1$, is of the form
	\begin{equation*}
		\boldsymbol{E}^X_N(T;t)_k = 
		\frac{1}{b^{k-1}}\E\left[\left.\left(X(T)-a\right)^{k-1}\right|\F_t\right] = \frac{1}{b^{k-1}}\sum_{i=0}^{k-1} \binom{k-1}{i} (-a)^{k-1-i}\, \E\left[\left.X(T)^{i}\right|\F_t\right].
	\end{equation*}
	%
	Then, for $\hat{\boldsymbol{\beta}}_N^{a,b} := \boldsymbol{\beta}^{a,b \,\top}_N M_N$, we rewrite $\Pi_{K,N}^{a,b}(t)$ in equation \eqref{Piab} by expanding the matrix multiplication into the following sum
	\begin{equation*}
		\Pi_{K,N}^{a,b}(t) = \sum_{j=1}^{N+1}\hat{\boldsymbol{\beta}}_{N,j}^{a,b} \boldsymbol{E}^X_N(T;t)_j=\sum_{k=0}^N\hat{\boldsymbol{\beta}}_{N,k+1}^{a,b}\frac{1}{b^k}\sum_{i=0}^k \binom{k}{i} (-a)^{k-i}\, \E\left[\left.X(T)^{i}\right|\F_t\right],
	\end{equation*}
	where the last equality we used the change $k=j-1$. This concludes the proof.
\end{proof}

\subsubsection*{Proof of Theorem \ref{error}}
\begin{proof}
	By definition of $\Pi_K(t)$ and $\Pi^{a,b}_{K,N}(t)$, we can write that
	\begin{align*}
		&\left|\Pi_K(t)- \Pi^{a,b}_{K,N}(t)\right|  = \left|\E\left[ \left.\varphi_K(X(T))\right|\F_t\right] - \E\left[\left. \varphi_{K,N}^{a,b}(X(T))\right|\F_t\right]\right| \\
		&= \left|\E\left[\left. \varphi_K(X(T))- \varphi_{K,N}^{a,b}(X(T))\right|\F_t\right]\right| \le \E\left[\left. \left|\varphi_K(X(T))- \varphi_{K,N}^{a,b}(X(T))\right|\right|\F_t\right],
	\end{align*}
	were the last inequality is due to Jensen's inequality. By definition of conditional expectation, and being $\psi_{X(T)}$ the density function of $X(T)$, by the Cauchy–Schwarz inequality, this becomes
	\begin{multline*}
		\E\left[\left. \left|\varphi_K(X(T))- \varphi_{K,N}^{a,b}(X(T))\right|\right|\F_t\right] =\int_{\R}    \left|\varphi_K(x)- \varphi_{K,N}^{a,b}(x)\right|\psi_{X(T)}(x)dx \\
		\le \left( \int_{\R}    \left|\varphi_K(x)- \varphi_{K,N}^{a,b}(x)\right|^2\omega_{a,b}(x)dx\right)^{\frac{1}{2}} \left( \int_{\R}    \psi_{X(T)}^2(x)\omega_{a,b}^{-1}(x)dx\right)^{\frac{1}{2}},
	\end{multline*}
	which concludes the proof.
\end{proof}

\subsubsection*{Proof of Proposition \ref{condb}}
\begin{proof}
	From equation \eqref{Cab}, we see that  for $C_{a,b}$ to make sense, condition \eqref{psicondG} must hold. In particular, it is obvious that it cannot be $b < \underline{b}_{\sigma}$ because of the squared root, while the expression becomes singular for $b = \underline{b}_{\sigma}$, hence, in this case, we can expect instabilities in the approximation.
\end{proof}

\subsubsection*{Proof of Corollary \ref{cor}}
\begin{proof}
	By taking $a= \mu$ in equation \eqref{Cab}, we obtain that
	\begin{equation*}
		C_{a,b}^2 =  \frac{1}{\sqrt{2\pi\sigma^2}}\frac{b}{\sqrt{2b^2-\sigma^2}} \qquad \mbox{ and } \qquad \frac{\partial C_{a,b}^2}{\partial b} = - \frac{1}{\sqrt{2\pi\sigma^2}}\frac{\sigma^2}{\left(2b^2-\sigma^2\right)^{\frac{3}{2}}}<0,
	\end{equation*}
	hence $C_{a,b}^2$ is decreasing in $b$. The limit is easily obtained. 
\end{proof}

\subsubsection*{Proof of Proposition \ref{distribution}}
	\begin{proof}
	One can easily observe that 
	\begin{equation*}
		\int_{t}^{s_j}e^{b_1(s_j-v)}dB(v) = \sum_{k=0}^{j} \int_{s_{k-1}}^{s_k}e^{b_1(s_j-v)}dB(v), \quad \mbox{ for }  \quad j=0, \dots, m,
	\end{equation*}
	hence, by rearranging the terms in the two summations, we write that
	\begin{equation*}
		\sum_{j=0}^m \int_{t}^{s_j}e^{b_1(s_j-v)}dB(v) = \sum_{j=0}^m\sum_{k=0}^{j} \int_{s_{k-1}}^{s_k}e^{b_1(s_j-v)}dB(v)
		= \sum_{k=0}^m \int_{s_{k-1}}^{s_k}\left(\sum_{j=k}^{m}e^{b_1(s_j-v)}\right)dB(v).
	\end{equation*}
	By switching the role of $k$ and $j$, we get the result.
\end{proof}

\section{Some details on the correlator formula}
\label{appA}
In this appendix, we briefly report the definitions needed to understand the correlator formulas in Theorem \ref{theorem}. For more details and for the idea behind this construction, we refer the reader to \cite{benth2019}. 

\begin{definition}[Vectorization]
	\label{vec}
	Given a matrix $A \in \R^{n\times m}$ whose $j$-th column we denote by $A_{:j}$, we define the \emph{vectorization} of $A$ as the operator $vec: \R^{n\times m} \to \R^{nm\times 1}$ that associates to $A$ the $nm$-column vector $vec(A) = \begin{pmatrix}
		A_{:1}^{\top} & A_{:2}^{\top}&\cdots & A_{:m}^{\top}
	\end{pmatrix}^{\top}.$
\end{definition}

\begin{definition}[Inverse-vectorization]
	\label{ivec}
	Given a vector $v \in \R^{nm}$, we define the \emph{inverse-vectorization} of $v$ as the operator $vec^{-1}: \R^{nm} \to \R^{n\times  m}$ that associates to $v$ the $n\times m$ matrix
	$A = vec^{-1}(v)$ such that $[A]_{i,j} = v_{n(j-1)+i}$, for $i=1, \dots, n$ and $j=1, \dots, m$.
\end{definition}

\begin{definition}[L-vectorization]
	\label{vecL}
	Given a matrix $A \in \R^{n\times m}$ with elements $[A]_{i,j} = a_{i,j}$, for $1\le i \le n$ and $1 \le j \le m$, we define the \emph{L-vectorization} of $A$ as the operator $vecL: \R^{n\times m} \to \R^{n+m-1}$ that associates to $A$ the $(n+m-1)$-column vector obtained by selecting the first column and the last row of $A$, namely $\NiceMatrixOptions{transparent}
		vecL(A) = \begin{pmatrix}
			a_{1,1}&a_{2,1}&\cdots&a_{n,1}&a_{n,2}&\cdots&a_{n,m}
		\end{pmatrix}^{\top}.$
\end{definition}

\begin{definition}[Hankel matrix]
	\label{spaceA}
	We define $\mathcal{A}_{n,m}\subset \R^{n\times m}$ as the space of matrices whose elements on the same skew-diagonal coincide.  We call $A \in \mathcal{A}_{n,m}$ an  \emph{Hankel matrix} and write $\mathcal{A}_{n} := \mathcal{A}_{n,n}$.
\end{definition}

\begin{definition}[Kronecker product]
	\label{kron}
	The \emph{Kronecker product} of a matrix $A \in \R^{n\times m}$ with elements $[A]_{i,j} = a_{i,j}$, for $1\le i \le n$ and $1 \le j \le m$, and a matrix $B \in \R^{r\times s}$, is defined by
	\begin{equation*}
		\NiceMatrixOptions{transparent}
		A \otimes B = \begin{pmatrix}
			a_{1,1}B& \cdots &a_{1,m}B\\
			\vdots & \ddots& \vdots \\
			a_{n,1}B & \cdots & a_{n,m}B
		\end{pmatrix}\in \R^{nr \times ms}.
	\end{equation*}
\end{definition}

\begin{definition}[d-Kronecker product]
	\label{mkrondef}
	We define the \emph{d-Kronecker product} between $A \in \R^{n\times m}$ and $B \in \R^{r\times s}$, as the $d$-th Kronecker power of $A$ multiplied in the Kronecker sense with $B$, for $d\ge1$, or equal to $B$, for $d=0$, namely
	\begin{equation*}
		\begin{cases}
			A\otimes ^{d} B = A^{\otimes d} \otimes B & d\ge 1\\
			A\otimes ^{0} B = B & d = 0\\
		\end{cases}.
	\end{equation*}
\end{definition}

\begin{definition}[L-eliminating matrix]
	\label{Enmth}
	For $n,m\ge 1$ and $A \in \R^{n\times m}$, we define the L-eliminating matrix as the matrix $E_{n,m}\in \R^{(n+m-1)\times nm}$ such that $E_{n,m}vec(A) = vecL(A)$. We write $E_{n} := E_{n,n}$.
\end{definition}

\begin{definition}[L-duplicating matrix]
	\label{Dnth}
	For $n,m\ge1$ and $A \in \mathcal{A}_{n,m}$, we define the L-duplicating matrix as the matrix $D_{n,m}\in\R^{nm\times (n+m-1)}$ such that
	$D_{n,m}vecL(A) = vec(A).$ We write $D_{n} := D_{n,n}$.
\end{definition}

\noindent Let now $X_n^{(m)}(x):=H_n(x)^{\top} \otimes^mH_n(x)$. 

\begin{definition}[m-th L-eliminating matrix]
	\label{propE}
	For $n, m\ge 1$, we define the \emph{$m$-th L-eliminating matrix} as the matrix  $E_{n+1}^{(m)}\in \R^{(n(m+1)+1)\times(n+1)^{m+1}}$ such that $E_{n+1}^{(m)}vec(X_n^{(m)}(x)) = H_{n(m+1)}(x)$. In particular
	\begin{equation*}
		\begin{cases}
			E_{n+1}^{(1)} = E_{n+1}& m=1\\
			E_{n+1}^{(m)} = E_{nm+1, n+1}\left(I_{n+1} \otimes E_{n+1}^{(m-1)}\right) & m\ge 2
		\end{cases}.
	\end{equation*}
\end{definition}

\begin{definition}[m-th L-duplicating matrix]
	\label{propD}
	For $n,m\ge 1$, we define the \emph{$m$-th L-duplicating matrix} as the matrix $D_{n+1}^{(m)}\in \R^{(n+1)^{m+1}\times (n(m+1)+1)}$ such that $D_{n+1}^{(m)} H_{n(m+1)}(x)=vec(X_n^{(m)}(x))$. In particular
	\begin{equation*}
		\begin{cases}
			D_{n+1}^{(1)} = D_{n+1}& m=1\\
			D_{n+1}^{(m)} = \left(I_{n+1} \otimes D_{n+1}^{(m-1)}\right)D_{nm+1, n+1} & m\ge 2
		\end{cases}.
	\end{equation*}
\end{definition}


\begin{thebibliography}{30}


\bibitem{barndorff1997} Barndorff-Nielsen, Ole E. (1997). Processes of normal inverse Gaussian type. \emph{Finance and stochastics}, 2(1), 41-68.

\bibitem{benth2019} Benth, Fred E. and Silvia Lavagnini (2019). Correlators of Polynomial Processes. arXiv preprint arXiv:1906.11320.

\bibitem{carmona1997} Carmona, Philippe, Frédérique Petit and Marc Yor (1997). On the distribution and asymptotic results for exponential functionals of Lévy processes. \emph{Exponential functionals and principal values related to Brownian motion}, 73-121.


\bibitem{djordjevic1996} Djordjevic, Gospava (1996). On some properties of generalized Hermite polynomials. \emph{Fibonacci Quarterly}, 34, 2-6.

\bibitem{dufresne2000} Dufresne, Daniel (2000). Laguerre series for Asian and other options. \emph{Mathematical Finance}, 10(4), 407-428.

\bibitem{filipovic2020} Filipovic, Damir and Martin Larsson (2020). Polynomial Jump-Diffusion Models. \emph{Stochastic Systems}, 10(1), 71-97.

\bibitem{fusai2008} Fusai, Gianluca and Attilio Meucci (2008). Pricing discretely monitored Asian options under L{\'e}vy processes. \emph{Journal of banking \& finance}, 32(10), 2076-2088.

\bibitem{fusai2002} Fusai, Gianluca and Aldo Tagliani (2002) An accurate valuation of Asian options using moments. \emph{International Journal of Theoretical and Applied Finance}, 5(02), 147-169.

\bibitem{geman1993} Geman, Hélyette and Marc Yor  (1993). Bessel processes, Asian options, and perpetuities. \emph{Mathematical finance}, 3(4), 349-375.

\bibitem{kemna1990} Kemna, Angelien G.Z. and Ton A.C.F. Vorst (1990). A pricing method for options based on average asset values. 
\emph{Journal of Banking \& Finance}, 14(1), 113-129.

\bibitem{lapeyre2001} Lapeyre, Bernard and Emmanuel Temam (2001). Competitive Monte Carlo methods for the pricing of Asian options. \emph{Journal of computational finance}, 5(1), 39-58.

\bibitem{li2016} Li, Weiping and Su Chen (2016). Pricing and hedging of arithmetic Asian options via the Edgeworth series expansion approach. \emph{The Journal of Finance and Data Science}, 2(1), 1-25.



\bibitem{schoutens2012} Schoutens, Wim (2012). \emph{Stochastic processes and orthogonal polynomials.} Vol. 146. Springer Science \& Business Media.

\bibitem{turnbull1991} Turnbull, Stuart M. and Lee Macdonald Wakeman (1991). A quick algorithm for pricing European average options. \emph{Journal of financial and quantitative analysis}, 377-389.

\bibitem{willems2019} Willems, Sander (2019). Asian option pricing with orthogonal polynomials. \emph{Quantitative Finance}, 19(4), 605-618.

\bibitem{yor1992} Yor, Marc (1992). On some exponential functionals of Brownian motion. \emph{Advances in applied probability}, 24(3), 509-531.

\bibitem{weron2007} Weron, Rafal (2007). \emph{Modeling and Forecasting Electricity Loads and Prices: A Statistical Approach.} Vol. 403. John Wiley \& Sons.

\end{thebibliography}
\end{document}